\documentclass[a4paper, 12pt]{article}

\usepackage[utf8]{inputenc}
\usepackage[T1]{fontenc}
\usepackage{lmodern}

\usepackage{graphicx}
\usepackage{color}

\usepackage{amsmath}
\usepackage{amsfonts}
\usepackage{amsthm}
\usepackage{amssymb}
\usepackage{fullpage}
\usepackage{enumitem}
\usepackage{tikz}
\usetikzlibrary{arrows}
\usetikzlibrary{calc}
\usetikzlibrary{shapes}
\usetikzlibrary{decorations.pathmorphing}
\usetikzlibrary{decorations.pathreplacing}
\usetikzlibrary{positioning}
\usetikzlibrary{backgrounds}
\usetikzlibrary{arrows.meta}
\usepackage{url}
\usepackage{stmaryrd}
\usepackage{xparse}
\usepackage{bbm}
\usepackage{mathtools}
\mathtoolsset{centercolon}
\usepackage{chngcntr}
\usepackage[colorlinks]{hyperref}
\usepackage{subfig}
\usepackage[font=small]{caption}

\usepackage{xcolor}
\usepackage{url}

\usepackage{etoolbox}

\usetikzlibrary{decorations.pathmorphing}
\usetikzlibrary{decorations.pathreplacing}

\usepackage[colorlinks]{hyperref}
\definecolor{lightblue}{rgb}{0.5,0.5,1.0}
\definecolor{darkred}{rgb}{0.5,0,0}
\definecolor{darkgreen}{rgb}{0,0.5,0}
\definecolor{darkblue}{rgb}{0,0,0.5}

\hypersetup{colorlinks,linkcolor=darkred,filecolor=darkgreen,urlcolor=darkred,citecolor=darkblue}

\DeclareMathAlphabet{\mymathbb}{U}{bbold}{m}{n}


\newtheorem{theorem}{Theorem}
\newtheorem{definition}[theorem]{Definition}
\newtheorem{lemma}[theorem]{Lemma}
\newtheorem{corollary}[theorem]{Corollary}
\newtheorem{remark}[theorem]{Remark}
\newtheorem{claim}{Claim}


\newenvironment{claimproof}[1][\proofname]{\begin{claimprooftemp}[#1]}{\end{claimprooftemp}}

\newcommand{\ifmultiroweq}[1]{#1}
\newcommand{\qspace}{\mathchoice{\;}{\,}{}{}}

\newcommand{\primeSub}[2]{#1\mathrlap{'}_{#2}}
\renewcommand{\phi}{\varphi}
\newcommand{\nat}{\mathbb{N}}
\newcommand{\ZZ}{\mathbb{Z}}
\newcommand{\FF}{\mathbb{F}}

\newcommand{\defining}[1]{\textbf{#1}}

\newcommand{\iso}{\cong}

\newcommand{\disunion}{\mathbin{\uplus}}

\DeclarePairedDelimiter\set{\lbrace}{\rbrace}
\DeclarePairedDelimiterX\setcond[2]{\{}{\}}{\mathchoice{\,}{}{}{}#1 \;\delimsize\vert\; #2\mathchoice{\,}{}{}{}}

\newcommand{\numVarA}{\nu}
\newcommand{\numVarB}{\mu}
\newcommand{\uniVarA}{x}
\newcommand{\uniVarB}{y}
\newcommand{\varVec}[1]{\bar{#1}}
\newcommand{\numVarVA}{\varVec{\numVarA}}
\newcommand{\numVarVB}{\varVec{\numVarB}}
\newcommand{\uniVarVA}{\varVec{\uniVarA}}
\newcommand{\uniVarVB}{\varVec{\uniVarB}}

\newcommand{\termA}{s}
\newcommand{\termB}{t}
\newcommand{\termVA}{\varVec{\termA}}

\newcommand{\formA}{\Phi}
\newcommand{\formB}{\Psi}

\newcommand{\rel}{R}

\newcommand{\ifp}{\mathsf{ifp}}
\newcommand{\rk}{\mathsf{rk}}
\newcommand{\slv}{\mathsf{slv}}

\newcommand{\definedMatrix}[2]{\eval{#1}{M_{#2}}}

\newcommand{\tup}[1]{\bar{#1}}

\newcommand{\Struct}{\mathfrak{A}}
\newcommand{\StructV}{A}

\newcommand{\vertA}{u}
\newcommand{\vertB}{v}
\newcommand{\vertC}{w}

\newcommand{\bvertA}{x}
\newcommand{\bvertB}{y}
\newcommand{\bvertC}{z}

\newcommand{\PTime}{\textsc{Ptime}}
\newcommand{\CPT}{\ensuremath{\text{CPT}}}
\newcommand{\IFPC}{\ensuremath{\text{IFP+C}}}
\newcommand{\IFPR}{\ensuremath{\text{IFP+R}}}
\newcommand{\IFPS}{\ensuremath{\text{IFP+CS}}}
\newcommand{\IFPRP}[1]{\ensuremath{\IFPR{}_{#1}}}
\newcommand{\IFPSP}[1]{\ensuremath{\IFPS{}_{#1}}}

\newcommand{\RanFinArEquiv}[3]{\equiv^{#1,#2,#3}_{\mathcal{R}}}

\newcommand{\restrictVect}[2]{#1|_{#2}}

\DeclareMathSymbol{\shortminus}{\mathbin}{AMSa}{"39}
\newcommand{\inv}[1]{#1^{\shortminus 1}}

\newcommand{\hiddenEq}{\phantom{{}={}}}

\newcommand{\gblurelem}[2]{g^{#1,#2}}
\newcommand{\Sblurelem}[2]{S^{#1,#2}}



\newcommand\blfootnote[1]{%
	\begingroup
	\renewcommand\thefootnote{}\footnote{#1}%
	\addtocounter{footnote}{-1}%
	\endgroup
}

\title{Separating Rank Logic from Polynomial Time}

\author{Moritz Lichter\\
	RWTH Aachen University\\
	\texttt{lichter@lics.rwth-aachen.de}}

\begin{document}

\maketitle

\begin{abstract}
	In the search for a logic capturing polynomial time,
	the most promising candidates are Choiceless Polynomial Time (CPT)
	and rank logic.
	Rank logic extends fixed-point logic with counting
	by a rank operator over prime fields.
	We show that the isomorphism problem for CFI graphs
	over $\ZZ_{2^i}$ cannot be defined in rank logic,
	even if the base graph is totally ordered.
	However, CPT can define this isomorphism problem.
	We thereby separate rank logic from CPT
	and in particular from polynomial time.
	{\blfootnote{\noindent The research leading to these results has received funding from the European Research Council (ERC) under the European Union’s Horizon 2020 research and innovation programme (EngageS: grant agreement No.\ 820148).}}
\end{abstract}

\section{Introduction}

The quest for a logic capturing polynomial time (\PTime{}) is one of the central open questions in the field of descriptive complexity theory~\cite{Grohe2008}.
This question~\cite{ChandraHarel82} asks whether there is a logic within which we can define exactly the
polynomial-time computable properties of finite relational structures.
The two most promising candidates for such a logic
are Choiceless Polynomial Time and rank logic~\cite{GraedelGrohe15}.
In this article we rule out rank logic as a candidate.
We show that rank logic neither captures \PTime{} nor Choiceless Polynomial Time.

Rank logic was introduced in~\cite{DawarGHL09}
and extends fixed-point logic with counting (\IFPC{}) by a rank operator.
Using this rank operator, the rank of definable matrices can be accessed in the logic.
Multiple variants of rank logic were proposed.
In its first version~\cite{DawarGHL09},
rank logic comes with a rank operator $\rk_p$ for each prime $p$.
If the universe of a finite structure is $A$,
then an $A^k \times A^k$ matrix is defined by a term $\termA(\uniVarVA, \uniVarVB)$
by setting the entry indexed by $(\tup{u},\tup{v})$ to 
the value $\termA(\tup{u}, \tup{v})$,
to which~$\termA$ evaluates in the structure.
The rank operator~$\rk_p$
evaluates to the rank of said matrix over~$\FF_p$.
When considering $A^k \times A^k$ matrices, we call the rank operator $k$-ary.

Crucially, rank logic defines the isomorphism problem of the so-called CFI graphs.
These graphs were given by Cai,~Fürer,~and Immerman~\cite{CaiFI1992}
to separate \IFPC{} from \PTime{}.
From a base graph,
one obtains a CFI graph by replacing every vertex with a particular gadget and by connecting the gadgets of adjacent vertices.
A connection between two gadgets can either be straight or twisted.
The essential point of the construction is that two CFI graphs over the same base graph are isomorphic if and only if
they have the same parity of twisted connections.
In particular, for each base graph there is a pair of non-isomorphic CFI graphs.
The \emph{CFI query} is the task of defining whether the parity of twisted connection of a CFI graph is zero.
CFI graphs implicitly define a linear equation system over~$\FF_2$,
which is solvable if and only if the parity of twists is zero.
That is, the CFI query is decidable by checking these linear equations systems for solvability.
Given a CFI graph, the matrix corresponding to the linear equation system is definable in rank logic and, by considering ranks, also whether it is solvable.

The CFI construction is not restricted to the field~$\FF_2$
but can be generalized to other finite fields~$\FF_p$ or even groups (see e.g.~\cite{BerkholzG17,GradelPakusa19,NeuenSchweitzer17}).
In the case of $\FF_p$, there are $p$ many non-isomorphic CFI graphs for a given base graph.
Grädel and Pakusa~\cite{GradelPakusa19} used CFI graphs over prime fields to show
that extending \IFPC{} by the rank operators~$\rk_p$
is not sufficient to capture \PTime{}.
If the CFI graphs are defined over~$\FF_p$,
then the CFI query over~$\FF_p$ is not definable
only using rank operators~$\rk_q$ for other primes $q \neq p$.
This implies that there is no single formula of this variant of rank logic
defining the CFI query for all CFI graphs over an arbitrary prime field.
An alternative variant of rank logic was proposed in~\cite{GradelPakusa19,Holm11,Laubner2011,pakusa2015}.
It replaces the rank operators~$\rk_p$ for fixed fields
by a \emph{uniform} rank operator~$\rk$,
which defines the prime~$p$ using a formula,
i.e.,~$p$ depends on the structure on which the formula is evaluated.
This second variant of rank logic
defines the CFI query over all prime fields.

Another example that demonstrates the expressiveness of rank logic 
are multipedes.
These structures come also with an isomorphism problem,
which cannot be defined in \IFPC{} but in rank logic~\cite{GurevichS96, Holm11}.
Moreover, rank logic captures $\PTime$
on the class of structures with color class size two~\cite{AbuZaidGraedelGrohePakusa2014}.
An open question in~\cite{DawarGHKP2013} is
whether rank logic can express the solvability of linear equation systems over finite rings rather than only over finite fields.

In this article, we show that
rank logic fails to define the CFI query over the rings~$\ZZ_{2^i}$ for every $i \in \nat$.
As in the case for fields,
we consider the class of CFI graphs over all rings~$\ZZ_{2^i}$
and not a fixed one.
This result eliminates rank logic as a candidate for a logic capturing \PTime{}.
As for~$\FF_2$, the isomorphism problem for CFI graphs over~$\ZZ_{2^i}$
can be translated to a linear equation system over~$\ZZ_{2^i}$.
Hence, we also answer the question for solvability of linear equation systems
over finite rings, where the ring is part of the input and so encoded in a relational structure, in the negative.
Even more, we do not only separate rank logic from \PTime{}
but also from the logic of Choiceless Polynomial Time.

Choiceless Polynomial Time (CPT) was introduced in~\cite{bgs1999}.
It is a logic manipulating hereditarily finite sets
and expresses all common operations on finite sets.
The key point is,
that by definition of \CPT{}
it is impossible to pick an arbitrary element out of a set.
If one wants to process an element in a set, one has to process all of them.
This makes \CPT{} choiceless and thereby isomorphism-invariant.
\CPT{} defines the isomorphism problem of CFI graphs over $\FF_2$
if the base graph is totally ordered~\cite{DawarRicherbyRossman2008}
(which is sufficient to separate \IFPC{} from \PTime).
More generally, \CPT{} captures \PTime{} on the class of structures
with bounded color class size,
where the automorphism group of each color class is abelian~\cite{AbuZaidGraedelGrohePakusa2014}.
This result is established by showing that \CPT{}
defines the solvability problem of a certain class of linear equation systems.
Grädel and Grohe suggested that this class of equation systems
might be a candidate for separating \CPT{} from rank logic~\cite{GraedelGrohe15}.
The result on bounded abelian color classes in~\cite{AbuZaidGraedelGrohePakusa2014}
can be strengthened to not necessarily bounded color classes,
as long as a total order of the automorphism group of each color class is given~\cite{pakusa2015}.
Using this result we show that \CPT{} indeed defines the CFI query over~$\ZZ_{2^i}$ for every $i \in \nat$ for totally ordered base graphs.
Hence, we separate rank logic not only from \PTime{} but also from \CPT{}.

\paragraph{Our Techniques.}
We consider CFI graphs over $\ZZ_{2^i}$.
The automorphism groups of these graphs are $2$-groups.
We show that in this case
formulas containing the uniform rank operator~$\rk$
without a fixed prime
can be translated to formulas only using the~$\rk_2$ operator.
To do so, we use the result
of~\cite{GradelPakusa19}
stating that on CFI graphs,
whose automorphism groups are $p$-groups,
rank logic formulas only using rank operators~$\rk_q$ for $q\neq p$ can be simulated by \IFPC{}.
Hence, if there is a formula defining the CFI query over~$\ZZ_{2^i}$,
we can assume that it only uses the rank operator~$\rk_2$.

To prove that rank logic cannot define the CFI query over~$\ZZ_{2^i}$,
we use game-based methods.
For \IFPC{} there is the well-known Ehrenfeucht-Fraïssé-like pebble game,
called the bijective pebble game~\cite{Hella96}.
It can be used to show that a property is not \IFPC{} definable.
Such a game also exists for the extension by rank operators~\cite{DawarHolm12}.
During the game, ranks of matrices are computed,
which are defined over the two graphs, on which the game is played.
To show that a property is not definable in rank logic,
it must always be possible to play in a way that the ranks
of corresponding matrices for the two graphs are equal.

During the rank pebble game the two matrices (one for each graph) are partitioned into classes.
Then one has to consider all labelings
of the corresponding classes
with values ($0$ and~$1$ in the case of~$\FF_2$). 
This makes it in particular hard to prove that
for every such labeling the two matrices have the same rank.
To overcome this problem, we actually prove a stronger result
and consider matrix similarity.
Dawar and Holm~\cite{DawarHolm12} introduced the invertible-map game,
which requires matrix similarity instead of  matrix equivalence.
In fact, simultaneous similarity of two sequences of matrices is required:
each class in the partition gives rise to one matrix in the sequence,
which only labels that that particular class with~$1$ and all others with~$0$.
Every matrix obtained by labeling the classes
can be expressed as a sum of these matrices labeling exactly one class with~$1$.
Because matrix similarity implies matrix equivalence (and so equality of ranks),
this game is potentially more expressive
in the sense that it possibly distinguishes more structures. 

We use the invertible-map game to prove that there are non-isomorphic CFI graphs that cannot be distinguished by rank logic.
We partition the matrices into orbits
and show that the induced sequences of matrices are simultaneously similar.
Indeed, we show that for
every~$k$ there is an~$i$ such that
Duplicator has a winning strategy
in the $k$-ary invertible-map game played on
CFI graphs over~$\ZZ_{2^i}$
whenever the base graph is sufficiently connected
and its girth is sufficiently large.
Requiring large connectivity is common for these arguments~\cite{CaiFI1992, GradelPakusa19},
but the girth condition is specific to our construction.

The challenge for Duplicator in the invertible-map game
is to come up with an invertible matrix
proving simultaneous similarity of two sequences of matrices.
To construct such matrices, we use two ingredients.
The first consists of sets of local automorphisms we call \emph{blurrers}.
They satisfy some symmetry properties but contain a certain asymmetry.
We use this asymmetry to ``blur''
the twist between two non-isomorphic CFI graphs,
that is, we ``distribute'' it among multiple edges in the graphs.
Because of the symmetry of the blurrers,
it is ``hard'' to detect the blurred twist in the invertible-map game.
In particular, if we only consider $1$-ary rank operators,
blurrers suffice to construct a winning strategy for Duplicator.
Considering arity~$k$
becomes inherently more difficult.
While the argument for the $1$-ary case
is a local argument,
in the $k$-ary case we have to consider $k$\nobreakdash-tuples combining vertices scattered in the graph.
However, we can use the blurrers
such that only for $k$\nobreakdash-tuples
containing vertices of a single ``problematic'' gadget (and possibly other vertices far apart)
the blurrer is not sufficient to prove simultaneous similarity.
Here we use the second ingredient.
We make an arbitrary vertex of the problematic gadget a parameter.
This fixes the problematic gadgets and it suffices to consider
the $(k-1)$\nobreakdash\nobreakdash-tuples where the vertex of the problematic gadgets is removed.
We recurse on the arity
and obtain for every edge, between which we blurred the twist,
a similarity matrix for arity $(k-1)$.
Using the large girth of the graph,
these edges are chosen sufficiently far apart each other.
This is important to combine the $(k-1)$-ary similarity matrices with the blurrer to obtain a similarity matrix for the $k$-ary case.

\paragraph{Related Work.}
Hella~\cite{Hella96} showed that for generalized Lindström quantifiers
the expressiveness strictly increases with the arity.
A similar result can also be given for rank logic~\cite{DawarGHL09, Holm11,Laubner2011}.
In that light, the increased complexity
of our approach for the $k$-ary case
compared to the $1$-ary case is not surprising.

We use the already mentioned result of Grädel and Pakusa~\cite{GradelPakusa19}
to argue that it suffices to consider the rank operator~$\rk_2$ over~$\FF_2$ for CFI graphs over~$\ZZ_{2^i}$.
Consequently, we have to consider the invertible-map game~\cite{DawarHolm12}
for~$\FF_2$ only.
Indeed, it was shown by Dawar, Grädel, and Pakusa~\cite{DawarGraPak19}
that a similar result also holds for the invertible-map game and
the equally expressive linear-algebraic logic:
When considering CFI graphs over~$\FF_2$,
arbitrary linear-algebraic operators over~$\FF_p$ for $p\neq 2$
do not define the CFI query.
Recently, this result was combined with the results of this article
by Dawar, Grädel, and Lichter~\cite{DawarGraedelLichter22} to show that linear-algebraic logic does not capture \PTime{}, either.

Closely related to computing ranks is checking linear equation systems for solvability.
Atserias, Bulatov, and Dawar proved that
\IFPC{} does not define solvability of linear equation systems over finite rings~\cite{AtseriasBD09}.
Solvability of linear equation systems of prime-power fields
is definable in rank logic~\cite{Holm11}.
So, because prime-power fields reduce to prime fields in rank logic,
it is conceivable that a variant of rank operators using prime-power fields
does not define the CFI query over~$\ZZ_{2^i}$, either.

While \IFPC{} fails to capture \PTime{} for CFI graphs,
there are many other graph classes on which \IFPC{} captures \PTime{}.
These include, e.g.,~graphs with excluded minors~\cite{Grohe2010}
and graphs with bounded rank width~\cite{GroheN19}.
While showing that rank logic defines the CFI query for prime fields is rather simple~\cite{DawarGHL09},
this is a non-trivial result for \CPT{}.
The already mentioned result by
Dawar, Richerby, and Rossman~\cite{DawarRicherbyRossman2008}
uses deeply nested sets 
and is restricted to totally ordered base graphs.
This result was strengthened by Pakusa, Schalthöfer, and Selman~\cite{PakusaSchalthoeferSelman2016}
to base graphs with logarithmic color class size.
Recently,
the result for bounded abelian color classes by Abu Zaid, Grädel, Grohe, and Pakusa~\cite{AbuZaidGraedelGrohePakusa2014}
was extended by Lichter and Schweitzer~\cite{LichterS21}
to graphs with bounded color classes with dihedral colors
and also for certain structures of arity~$3$.

\paragraph{Structure of this Article.}
After providing some preliminaries in Section~\ref{sec:preliminaries},
we introduce variants of rank logic in Section~\ref{sec:rank-logc}
and the invertible-map game in Section~\ref{sec:invertible-map-game}.
Then we give a CFI construction suitable for our arguments in Section~\ref{sec:cfi-structures}.
Next, we discuss matrices defined over CFI structures in Section~\ref{sec:matrices-cfi}
and develop a criterion for invertibility of such matrices over $\FF_2$.
This will be used in Section~\ref{sec:1-ary-case},
where we treat the case of $1$-ary rank operators
and introduce blurrers.
Section~\ref{sec:active-region}
defines the notion of the active region of a matrix.
This is used in the case of general $k$-ary
rank operators in Section~\ref{sec:higher-arities}
to successfully combine the matrices obtained from recursion.
There we also generalize blurrers to the $k$-ary case.
Finally, we separate rank logic from \CPT{} in Section~\ref{sec:separating}
and conclude with a discussion in Section~\ref{sec:discussion}.

\newcommand{\comp}{C}

\newcommand{\eval}[2]{#2^{#1}}

\newcommand{\oneVec}{\mathbbm{1}}

\newcommand{\sig}{\tau}
\newcommand{\spleq}{\preceq}
\newcommand{\countingExt}[1]{#1^{\#}}

\newcommand{\textop}[1]{\mathsf{#1}} 

\newcommand{\neighbors}[2]{N_{#1}(#2)}
\newcommand{\neighborsK}[3]{N_{#1}^{#2}(#3)}
\newcommand{\autgrp}[1]{\textop{Aut}(#1)}

\newcommand{\auto}{\autoA}
\newcommand{\autoA}{\phi}
\newcommand{\autoB}{\psi}

\newcommand{\orig}[1]{\textop{orig}(#1)}

\newcommand{\minRad}{r}

\newcommand{\activeRegionSym}{\mathsf{A}}
\newcommand{\activeRegion}[1]{\mathsf{A}(#1)}
\newcommand{\inactiveRegionB}[2]{\mathsf{N}(#1, #2)}
\newcommand{\activeRegionB}[2]{\mathsf{A}(#1, #2)}

\newcommand{\activeRegionIdx}[3]{\mathsf{A}^{#1,#2}(#3)}
\newcommand{\inactiveRegionBIdx}[4]{\mathsf{N}^{#1,#2}(#3,#4)}
\newcommand{\activeRegionBIdx}[4]{\mathsf{A}^{#1,#2}(#3,#4)}

\newcommand{\charMat}[1]{\chi^{#1}}

\newcommand{\orbs}[2]{\textop{orbs}_{#1}(#2)}

\newcommand{\pVertA}[0]{p}
\newcommand{\pTupA}[0]{\tup{\pVertA}}

\newcommand{\kroneckerEq}[2]{\delta_{#1,#2}}
\newcommand{\spath}[0]{\tup{s}}
\newcommand{\distance}[3]{\textop{dist}_{#1}(#2,#3)}

\newcommand{\CFIop}{\mathsf{CFI}}

\newcommand{\CFIgraph}[3]{\CFIop_{#1}(#2,#3)}

\newcommand{\GraphClass}{\mathcal{K}}
\newcommand{\CFIClass}[2]{\CFIop_{#1}(#2)}
\newcommand{\CFITwoClass}[1]{\CFIop_{2^\omega}(#1)}

\newcommand{\StructA}{\mathfrak{A}}
\newcommand{\StructB}{\mathfrak{B}}
\newcommand{\StructC}{\mathfrak{H}}
\newcommand{\StructVA}{A}
\newcommand{\StructVB}{B}
\newcommand{\StructVC}{H}
\newcommand{\PartA}{\mathbf{P}}
\newcommand{\PartB}{\mathbf{Q}}
\newcommand{\PartC}{\mathbf{R}}
\newcommand{\BlockA}{P}
\newcommand{\BlockB}{Q}
\newcommand{\BlockC}{R}
\newcommand{\tupA}{\tup{u}}
\newcommand{\tupB}{\tup{v}}
\newcommand{\tupC}{\tup{w}}
\newcommand{\invertMapGame}[3]{\mathcal{M}^{#1,#2,#3}}
\newcommand{\invertibleMapEquiv}[3]{\equiv^{#1,#2,#3}_{\mathcal{M}}}
\newcommand{\rankPebbleEquiv}[3]{\equiv^{#1,#2,#3}_{\mathcal{R}}}
\newcommand{\invertMapGameBR}[4]{\mathcal{M}^{#1,#2,#3}_{#4}}
\newcommand{\invMapBij}[0]{\lambda}

\newcommand{\SymSetGroup}[1]{\textop{Sym}(#1)}
\newcommand{\group}{\Gamma}
\newcommand{\perm}{\sigma}

\newcommand{\twistVal}{\theta}
\newcommand{\blurrer}{\Xi}
\newcommand{\blurElem}{\xi}
\newcommand{\sblurElem}{\blurElem_{\mathsf{tw}}}
\newcommand{\invsblurElem}{\inv{\blurElem\mathrlap{_{\mathsf{tw}}}}}
\newcommand{\blurElemFix}{\blurElem_{\mathsf{twst}}}

\newcommand{\countVects}[3]{\#_{#1,#2}(#3)}

\newcommand{\occIndices}[1]{\textop{Occ}(#1)}
\newcommand{\fixTwistType}{\tau}
\newcommand{\fixTwistTypeAut}[2]{\autoA^{#1,#2}_{\tau}}

\newcommand{\pot}{q}

\newcommand{\gadgetiso}[2]{#2(#1)}
\newcommand{\pathiso}[2]{\vec{\pi}[#1, #2]}
\newcommand{\stariso}[2]{\pi^*[#1, #2]}

\newcommand{\idmat}{\mymathbb{1}}
\newcommand{\zeromat}{\mymathbb{0}}

\newcommand{\scenter}{z}
\newcommand{\stip}{t}

\newcommand{\ncomp}[2]{C_{#1}(#2)}
\newcommand{\remZ}[1]{#1^{\shortminus z}}

\section{Preliminaries}
	\label{sec:preliminaries}
	We denote the set $\set{1,\dots, k}$ by $[k]$.
	Let~$N$ and~$I$ be finite sets.
	The set of $I$-indexed tuples over~$N$ is denoted by $N^I$.
	For a tuple $\tup{t} \in N^I$ the entry for index $i \in I$ is written as $\tup{t}(i)$.
	For $k \in \nat$, $\tup{t} \in N^k=N^{[k]}$, and $i \leq k$ we also write~$t_i$
	for the $i$-th entry.
	The concatenation of two tuples  $\tup{s}\in N^k$ and $\tup{t}\in N^\ell$ is denoted by $\tup{s}\tup{t} \in N^{k + \ell}$.
	The restriction of $\tup{t} \in N^I$ to $K\subseteq I$ is denoted by $\restrictVect{\tup{t}}{K} \in N^{K}$.
	
	For two finite index sets~$I$ and~$J$,
	an $I \times J$ matrix~$M$ over~$N$ is a map $M \colon I \times J \to N$.
	We write $M(i,j)$ for the entry at position $(i,j)$.
	The identity matrix is denoted by~$\idmat$
	and the zero matrix by~$\zeromat$.
	
	We write~$\ZZ_j$ for the ring of integers modulo~$j$.
	Its elements are $\set{0,\dots, j-1}$.
	For a tuple $\tup{a} \in \ZZ_j^I$
	we set $\sum \tup{a} := \sum_{i \in I} \tup{a}(i)$
	and likewise for a function $f \colon I \to \ZZ_j$.
	
	A (relational) signature
	$\sig = \set{\rel_1, \dots, \rel_\ell}$ is a set of relation symbols
	with associated arities $r_i \in \nat$ for each $i \in [\ell]$.
	A~$\sig$-structure~$\Struct$ is a tuple
	${\Struct = (\StructV, \rel_1^\Struct, \dots, \rel_\ell^\Struct)}$
	where $\rel_i^\Struct \subseteq\StructV^{r_i}$ for all $i \in [\ell]$.
	The universe of~$\Struct$ is always denoted by~$\StructVA$.
	In this article, we only consider finite structures.
	A \defining{pebbled structure} is a pair $(\Struct, \tupA)$
	of a relational structure and a tuple $\tupA \in \StructV^k$.
	Two pebbled structures $(\StructA, \tupA)$ and $(\StructB, \tupB)$
	are isomorphic,
	if there is an isomorphism $\autoA \colon \StructVA \to \StructVB$
	such that $\autoA((\StructA, \tupA)) = (\StructB, \tupB)$.
	That is, every isomorphism has to map~$\tupA$ to~$\tupB$.
	An automorphism of $(\StructA, \tupA)$ is an isomorphism $(\StructA, \tupA) \to (\StructA, \tupA)$.
	The automorphism group of $(\StructA, \tupA)$ is denoted by $\autgrp{(\StructA, \tupA)}$.
	
	Let $G = (V,E)$ be a simple graph.
	For vertices $\bvertA,\bvertB \in V$
	we denote their distance in~$G$ by $\distance{G}{\bvertA}{\bvertB}$.
	For two sets $X,Y \subseteq V$ we set
	$\distance{G}{X}{Y} := \min_{\bvertA \in X,\bvertB \in Y} \distance{G}{\bvertA}{\bvertB}$
	and likewise $\distance{G}{\bvertA}{Y} := \distance{G}{\set{\bvertA}}{Y}$
	for a vertex~$\bvertA$ and a set $Y \subseteq V$.
	The set of neighbors of a vertex $\bvertA \in V$
	is denoted by $\neighbors{G}{\bvertA}$.
	The $k$-neighborhood of~$\bvertA$ in~$G$ is
	$\neighborsK{G}{k}{\bvertA} := \setcond{\bvertB \in V}{\distance{G}{\bvertA}{\bvertB} \leq k}$.
	The induced subgraph of $G$ by $W\subseteq V$ is $G[W]$.
	The graph~$G$ is \defining{$k$-connected},
	if $|V| \geq k$ and for every $V' \subseteq V$ of size at most $k-1$,
	$G \setminus V'$ is connected.
	That is, after removing $k-1$ vertices,~$G$ is still connected.
	The \defining{girth} of~$G$ is the length of the shortest cycle in~$G$.
	
	Let~$\group$ be a finite permutation group with domain~$N$
	and let~$p$ be a prime.
	If for every $\perm \in \group$ there is an~$\ell$
	such that~$\perm$ is of order $p^\ell$,
	i.e, $\perm^{(p^\ell)}$ is the identity,
	then~$\group$ is called a \defining{$p$-group}.
	The \defining{orbit} of $n \in N$ is the set $\setcond{\perm(n)}{\perm \in \group}$.
	In this way,~$N$ is partitioned into orbits.
	This notation generalizes to $k$\nobreakdash-tuples.
	A \defining{$k$-orbit} is a maximal set $\BlockA \subseteq N^k$,
	such that for every $\tup{n},\tup{m} \in \BlockA$,
	there is a $\perm \in \group$ such that $\perm(\tup{n}) = \tup{m}$.
	We write $\orbs{k}{\group}$ for the set of $k$-orbits of $\group$.
	The group $\group$ is \defining{transitive}
	if $|\orbs{1}{\group}|=1$.
	If additionally $|\group| = |N|$, then~$\group$ is called \defining{regular}.
	The $k$-orbits of a pebbled relational structure $(\StructA, \tupA)$
	are $\orbs{k}{(\StructA, \tupA)} := \orbs{k}{\autgrp{(\StructA, \tupA)}}$.

\section{Rank Logic}
\label{sec:rank-logc}
In this section we consider rank logic,
an extension of inflationary fixed-point logic with counting
by a rank operator.
Let $\Struct = (\StructV, \rel_1^\Struct, \dots, \rel_\ell^\Struct)$ be a relational $\sig$\nobreakdash-structure.
We set $\countingExt{\sig} := \sig \disunion \{\cdot, +, 0,1\}$
and
$\countingExt{\Struct} := (\StructV, \rel_1^\Struct, \dots, \rel_\ell^\Struct, \nat, \cdot, +, 0, 1)$
to be the two-sorted $\countingExt{\sig}$\nobreakdash-structure\footnote{This is the only non-finite structure in this article.}
that is the disjoint union of~$\Struct$ and~$\nat$.

\paragraph{Fixed-Point Logic with Counting.}
We introduce \IFPC{}, the fixed-point logic with counting (proposed in~\cite{immerman87}, also see~\cite{Otto1997}).
Let $\sig$ be a signature.
\IFPC{} is a two-sorted logic using the signature~$\countingExt{\sig}$
with \emph{element} variables ranging over the universe of the input structure
and \emph{number} variables ranging over the natural numbers.
We use the letters~$\uniVarA$ and~$\uniVarB$ for element variables,
Greek letters~$\numVarA$ and~$\numVarB$ for numeric variables,~%
$\formA$ and~$\formB$ for formulas,
and letters~$\termA$ and~$\termB$ for numeric terms.
For a tuple of variables or terms we write~$\uniVarVA$,~$\numVarVA$,
and~$\termVA$ respectively.

\IFPC{}-formulas are built from first-order formulas, a fixed-point operator, and counting terms.
To ensure polynomial-time evaluation, quantification over numeric variables
needs to be bounded:
Whenever~$\formA$ is an \IFPC{}-formula,~$\numVarA$ is a numeric, possibly free variable in~$\formA$ and~$\termA$ is a closed numeric term,
then 
\[Q \numVarA \leq \termA.\qspace\formA\]
is an \IFPC{}-formula, where $Q \in\set{\forall, \exists}$.
We now consider (inflationary) fixed-points. Let~$\rel$ be a relation symbol
to define using a fixed-point.
The relation~$\rel$ can contain both elements of the universe and numbers as follows.
Let~$\formA$ be an \IFPC{}-formula,~$\uniVarVA$ and~$\numVarVA$ be variables,
and~$\termVA$ be a tuple of~$|\numVarVA|$ many closed numeric terms that bound the values of~$\numVarVA$. Then
\[\left[\ifp \rel \uniVarVA\numVarVA \leq \termVA.\qspace\formA \right](\uniVarVA\numVarVA)\]
is an \IFPC{}-formula.
To relate element and numeric variables, \IFPC{} possesses counting terms
that count the number of different values for some variables satisfying a formula.
As before, let~$\formA$ be an \IFPC{}-formula,~$\uniVarVA$ and~$\numVarVA$ be variables,
and~$\termVA$ be a tuple of~$|\numVarVA|$ many closed numeric terms which bound the values of~$\numVarVA$. Then
\[ \#\uniVarVA\numVarVA \leq \termVA.\qspace\formA \]
is a numeric \IFPC{}-term.

Let~$\Struct$ be a $\sig$\nobreakdash-structure.
An \IFPC{}-formula (or term) is evaluated over $\countingExt{\Struct}$.
For a numeric term $\termA(\uniVarVA\numVarVA)$ we denote by
$\eval{\Struct}{\termA} \colon \StructV^{|\uniVarVA|} \times \nat^{|\numVarVA|} \to \nat$
the function that maps the possible values of the free variables of~$\termA$
to the value that~$\termA$ takes in~$\countingExt{\Struct}$.
Similarly, for a formula $\formA(\uniVarVA\numVarVA)$
we write $\eval{\Struct}{\formA} \subseteq \StructV^{|\uniVarVA|} \times \nat^{|\numVarVA|}$ for the set of values for the free variables
satisfying $\formA$.
Then, e.g., the evaluation of a counting term for a formula $\formA(\uniVarVB\uniVarVA\numVarVB\numVarVA)$ is defined as
\[\eval{\Struct}{(\#\uniVarVA\numVarVA \leq \termVA.\qspace\formA)}(\tupA\varVec{m}) := \left|\setcond*{\tupC\varVec{n} \in \StructV^{|\uniVarVA|} \times \nat^{|\numVarVA|}}{n_i \leq \termA_i^\StructA \text{ for all } i \in [|\numVarVA|] \text{ and } \tupA\tupC\varVec{m}\varVec{n} \in \eval{\Struct}{\formA}}\right|.\]

\paragraph{Rank  Logic.}
We now consider the extension of \IFPC{} by
the \emph{uniform} rank operator. We follow the definition in~\cite{Holm11}.
Let $\termA(\uniVarVA, \uniVarVB)$ be a numeric term such that $k := |\uniVarVA| = |\uniVarVB|$
and~$\termB$ be a closed numeric term.
Then
\[
	\rk (\uniVarVA, \uniVarVB).\qspace(\termA,\termB)
\]
is a numeric term.
We say for convenience that~$k$ is the \defining{arity} of the operator,
although it is actually~$2k$.
The logic \IFPR{} is the extension of \IFPC{} by the uniform rank operator~$\rk$.
We restricted the definition to square matrices,
but this does not limit the expressive power.
The rank operator is evaluated as follows.
Let~$\Struct$ be a $\sig$\nobreakdash-structure.
The term~$\termA$ defines an $\StructV^k \times \StructV^k$ matrix $\definedMatrix{\Struct}{\termA}$ over $\nat$:
\[\definedMatrix{\Struct}{\termA}(\tupA,\tupB) := \eval{\Struct}{\termA}(\tupA,\tupB).\]
Finally, we define 
$\eval{\Struct}{(\rk (\uniVarVA, \uniVarVB).\qspace(\termA,\termB))}$:
The rank operator evaluates the
rank of $(\definedMatrix{\Struct}{\termA}\bmod p)$ over~$\FF_p$
if $p := \eval{\Struct}{\termB}$ is prime
and to~$0$ otherwise.
We omitted parameters for readability.

For a set of prime numbers~$\Omega$,
we set $\IFPRP{\Omega} $
to be the variant of $\IFPR{}$,
in which we have instead of the uniform rank operator~$\rk$
a different rank operator~$\rk_p$
for every~$\FF_p$ such that $p \in \Omega$.
That is, we have to fix the field in the formula independently of the structure.
This is not the case for the operator~$\rk$,
where we can determine the value for~$p$
by another term that evaluates differently for different structures.

\paragraph{Choiceless Polynomial Time.}
Choiceless Polynomial Time (CPT) is a logic different from \IFPC{}.
CPT-formulas manipulate hereditarily finite sets.
They are choiceless in the sense that they either process all elements of such sets or none.
It is not possible to pick an arbitrary element from a set.
By these conditions, all sets constructed by a \CPT{}-term are closed under automorphisms of the input structure.
Evaluation in polynomial time is guaranteed by explicit polynomial bounds
on the number of steps and sizes of the constructed sets.
We omit a formal definition of \CPT{} here
because it is not needed in this article.
For a formal definition we refer to~\cite{GraedelGrohe15,pakusa2015}.

We review two results for \CPT{}:
A relational $\sig$\nobreakdash-structure~$\Struct$ has \defining{$q$-bounded colors},
if one relation ${\spleq} \in \sig$ is a total preorder
partitioning the universe into $\spleq$-equivalence classes,
called \defining{color classes}, of size at most~$q$.
The structure~$\Struct$ has \defining{abelian colors},
if the induced substructure of every color class 
has an abelian automorphism group.
\begin{theorem}[\cite{AbuZaidGraedelGrohePakusa2014}]
	\label{thm:CPT-bounded-abelian-colors}
	CPT captures \PTime{} on $q$-bounded relational structures
	with abelian colors.
\end{theorem}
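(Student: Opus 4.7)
The plan is to reduce capturing \PTime{} on this class to (i) computing a canonical labeling in CPT and (ii) solving certain linear equation systems over abelian groups in CPT, which is the technical heart of the argument. Once a canonical labeling is CPT-definable, the standard fact that \IFPC{} captures \PTime{} on ordered structures gives the theorem, since CPT subsumes \IFPC{} and the defined labeling yields a total order on the universe.

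First, I would analyze the structure locally. Because every color class $C$ has size at most $q$, its induced substructure and its automorphism group $\group_C \leq \SymSetGroup{C}$ can be enumerated and named in \IFPC{} using only the canonical numeric sort (there are at most $q!$ candidates). The hypothesis of abelian colors means each $\group_C$ is abelian, so the full stabilizer of the coloring acts as a subgroup $\group \leq \prod_C \group_C$ of a product of small abelian groups. Atomic cross-class relations impose ``compatibility'' constraints on tuples $(g_C)_C \in \prod_C \group_C$: for each tuple in a relation $\rel$, the set of $(g_C)_C$ preserving the relation is a union of cosets of a subgroup of $\prod_C \group_C$. Thus the set of global automorphisms, and more generally the set of isomorphisms between two such structures, is the solution set of a system of linear equations (in the sense of \cite{AbuZaidGraedelGrohePakusa2014}) over an abelian group of polynomially bounded total size.

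Second, I would invoke the key intermediate result of the same paper: CPT defines solvability of linear equation systems over finite abelian groups, given as hereditarily finite sets. With this in hand, a canonical labeling is constructed by iterative refinement. At each refinement step, all color classes whose stabilizer in the current automorphism group is nontrivial are detected; for each such class one enlarges the coloring by introducing a new ``sub-color'' per orbit on the class. Because orbits, being definable symmetrically, can be represented as sets in the CPT way, no arbitrary element needs to be chosen. Each refinement either strictly shrinks some color class or strictly shrinks the automorphism group; since both are polynomially bounded, after polynomially many iterations every color class is a singleton, at which point the preorder has become a total order and canonization is complete.

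The main obstacle, and the genuine technical content attributed to~\cite{AbuZaidGraedelGrohePakusa2014}, is the choiceless solvability step. Classical Gaussian elimination on equations over an abelian group requires picking a pivot equation and a pivot variable, which is forbidden in CPT. The required replacement is a symmetric elimination that operates simultaneously on the entire orbit of candidate pivots, using quotient operations that are themselves symmetric under the ambient automorphism action. Proving that such a procedure (a) terminates in polynomially many stages, (b) produces a CPT-definable witness to (un)solvability, and (c) scales to systems whose coefficient group is a product of the small abelian $\group_C$'s, is the delicate part; the bounded-color and abelian hypotheses are used precisely to keep the coefficient groups and orbit sizes under polynomial control so that the hereditarily finite sets constructed along the way stay within the CPT polynomial bounds.
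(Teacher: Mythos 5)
The statement you are proving is not established in this paper at all: it is imported verbatim as a black-box citation to Abu~Zaid, Gr\"adel, Grohe, and Pakusa~\cite{AbuZaidGraedelGrohePakusa2014}, and the present paper never reproduces or sketches its proof. So there is no ``paper's own proof'' to compare against. That said, your reconstruction is a fair high-level account of what the cited paper does: the reduction of canonization to choiceless solvability of linear equation systems whose coefficient domain is a product of the small abelian color-class groups is indeed the technical core, and the bounded-color and abelian hypotheses are used exactly as you say, to keep orbit sizes and the coefficient groups polynomially bounded so that the hereditarily finite sets constructed along the way respect the CPT resource bounds.

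One caution about the details, in case you intend to turn this into an actual proof: your ``iterative refinement'' loop, which detects classes with nontrivial stabilizer and splits them by orbit, presupposes that the orbit partition of the current automorphism group is already CPT-definable, which is itself nontrivial and is part of what the reduction to linear systems must deliver. The cited paper instead organizes the argument as an incremental canonization along the ordered sequence of color classes (extend a canonical form of the first $i$ classes to the first $i+1$), which avoids this apparent circularity; the linear-system solver is invoked to decide, at each extension step, whether a candidate partial labeling extends to an isomorphism. If you want a self-contained argument rather than a citation, you should restructure the refinement loop along those lines, or else explicitly justify that the orbit partition is CPT-definable before the loop uses it.
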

This result can be strengthened
from bounded color class size
to ordered colors.
A $\sig$-structure with \defining{ordered colors}
is a tuple $(\StructA, \mymathbb{\Gamma})$,
where~$\StructA$ is a relational $\sig$-structure
with color classes $C_1, \dots, C_n$
and $\mymathbb{\Gamma} = \setcond{(\Gamma_i, \leq_i)} {i \in [n]}$
is a family of ordered permutations groups
such that~$\Gamma_i$ is a transitive group with domain~$C_i$ for every $i \in[n]$.
Note that structures with ordered colors are, without further encoding,
not relational structures because~$\mymathbb{\Gamma}$ is a higher-order object.
However, we only define given a relational $\sig$-structure~$\Struct$
the family~$\mymathbb{\Gamma}$ of ordered permutation groups
in \CPT{} and thus can represent~$\mymathbb{\Gamma}$ as a hereditarily finite set.
\begin{theorem}[\cite{pakusa2015}]
	\label{thm:CPT-ordered-abelian-colors}
	\CPT{} captures \PTime{} on structures with ordered abelian colors.
\end{theorem}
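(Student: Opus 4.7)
The plan is to reduce to the bounded case of Theorem~\ref{thm:CPT-bounded-abelian-colors} by using the total order on the automorphism groups as a choiceless substitute for brute-force enumeration. In the bounded case, one can afford to iterate over all automorphisms of each color class explicitly and thereby carry out a symmetry-respecting version of Gaussian elimination; in the ordered case, the total order on $\autgrp{\Struct[C]}$ supplies a canonical pivot/representative whenever the computation would otherwise have to make an arbitrary choice. So the overall strategy is: (i) show that the isomorphism and canonization problem on such structures reduces to a problem in abelian linear algebra over the color-class automorphism groups; (ii) show that \CPT{} can solve this algebraic problem when each abelian group comes with a definable total order.

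For step (i), I would proceed class-by-class. Fix a color class $C$; since $\Struct[C]$ has abelian automorphism group $\group_C := \autgrp{\Struct[C]}$, the action of $\group_C$ on $C$ is (in each orbit) regular, so after choosing a base point the class $C$ can be identified with a subgroup-coset structure of $\group_C$. Rather than picking such a base point (which would be non-choiceless), I would form the \CPT{}-definable set of all admissible identifications, which is an $\group_C$\nobreakdash-torsor. Relations between color classes $C$ and $C'$ then translate, after this identification, to affine conditions in $\group_C \times \group_{C'}$. Iterating across all color classes yields a system of affine equations over a direct product of finite abelian groups, and the solvability/canonization question for $\Struct$ is equivalent to finding a canonical solution of this system.

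For step (ii), I would mimic the algorithm used in the proof of Theorem~\ref{thm:CPT-bounded-abelian-colors} but replace every step that picks an element of a group (a generator, a pivot, a coset representative) by the minimum under the given total order. This is sound because the order is part of the structure, so choosing the minimum preserves isomorphism-invariance and fits into \CPT{}'s choiceless regime. A Smith-normal-form style reduction then diagonalizes the system; at each elimination step the total order produces the pivot, and the remaining steps consist of standard arithmetic that \CPT{} can already perform. Polynomial length and size bounds follow because the number of equations and unknowns is polynomial in $|\StructV|$ and each group element can be represented by its rank under the given order in a polynomially-sized hereditarily finite set.

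The main obstacle is step (ii) when color classes are unbounded: one cannot enumerate $\group_C$ element by element, and \CPT{} is notoriously sensitive to the size of the activated sets. The total order is exactly the tool needed here, but care is required to ensure that all intermediate sets constructed during the elimination stay polynomial in $|\StructV|$ rather than in $|\group_C|$; this forces the argument to represent group elements symbolically (as minimizers of \CPT{}-definable predicates) rather than explicitly. A secondary difficulty is ensuring that the canonization output is genuinely invariant: the identification torsors from step~(i) must be transported coherently across color classes, which is where the abelian hypothesis is used to avoid non-commuting reparameterizations.
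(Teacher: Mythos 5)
The paper does not prove this theorem; it states it as a known result and cites Pakusa's thesis \cite{pakusa2015}. There is therefore no proof in the paper to compare your sketch against --- the theorem appears in Section~2 as background, and the text immediately before it only explains \emph{that} the bounded result of Theorem~\ref{thm:CPT-bounded-abelian-colors} ``can be slightly strengthened \ldots{} to ordered colors,'' not \emph{how}.

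That said, a few remarks on the sketch itself. The overall direction is consistent with what is known: as the paper notes in the introduction, the bounded abelian result is ``established by showing that CPT defines the solvability problem of a certain class of linear equation systems,'' and the ordered-colors extension is obtained by pushing that machinery through when one has a definable order on each $\autgrp{\Struct[C]}$. But your outline is too coarse at precisely the points where the real argument requires work. First, the linear-algebraic target is not an arbitrary affine system over a product of abelian groups; the reduction in \cite{AbuZaidGraedelGrohePakusa2014} is to a specific class of structured systems (``cyclic'' equation systems), and making your ``identification torsor'' picture yield that form is not automatic --- in particular, the abelian hypothesis is doing more than ``avoiding non-commuting reparameterizations''; it is what makes the orbit structure tractable enough for the reduction to exist at all. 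Second, your fix for the unbounded case --- ``represent group elements symbolically as minimizers of CPT-definable predicates'' --- names the problem rather than solving it. The genuine content of the ordered-colors extension is showing that the order lets CPT carry out the elimination using objects of size polynomial in $|\StructVA|$ (e.g.\ by encoding group elements via their position in the order, so arithmetic in $\group_C$ becomes arithmetic on indices), and that every intermediate subgroup/coset produced during elimination admits such a representation. Replacing ``pick a pivot'' by ``take the minimum'' is necessary for choicelessness but does not by itself give the size bounds, which is where the actual technical effort lies. So the proposal is plausible in spirit but, as written, it asserts rather than establishes the two steps that carry the weight.
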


\section{The Invertible-Map Game}
\label{sec:invertible-map-game}
Undefinability results for~\IFPC{}
are often achieved by the embedding of \IFPC{} into infinitary bounded variable counting logic (see~\cite{Otto1997})
and exploiting an equally expressive Ehrenfeucht-Fraïssé-like pebble game.
This game, called the bijective $k$-pebble game~\cite{Hella96},
is a game between two players called Spoiler and Duplicator
played on two pebbled structures.
The aim of Spoiler is to prove that the two structures can be distinguished 
in infinitary $k$-variable counting logic,
where Duplicator tries to show the converse.
Such a game also exists for rank logic (called matrix-equivalence game in~\cite{DawarHolm12}).
It extends the bijective pebble game with ranks.
Instead of looking at this pebble game,
we consider the invertible-map game~\cite{DawarHolm12}.
Its distinguishing power is at least as strong as the one of the rank-pebble game
in the sense that if Duplicator has a winning strategy in the invertible-map game,
then Duplicator has a winning strategy in the rank-pebble game, too.
Hence, to show that rank logic cannot distinguish two structures,
it suffices to show that Duplicator has a winning strategy in the invertible-map game.
The game is defined as follows:

Let~$k$ and~$m$ be two positive integers such that $2k \leq m$ and
let~$\Omega$ be a finite and nonempty set of primes.
The \defining{invertible-map game} $\invertMapGame{m}{k}{\Omega}$  is played on two pebbled structures
$(\StructA, \tupA)$ and $(\StructB, \tupB)$
with $|\tupA| = |\tupB| \leq m$
of the same signature. For each structure there are~$m$ pebbles labeled
with $1,\dots,m$, where on~$\vertA_i$ and~$\vertB_i$
there are pebbles with the same label for all $i \in[|\tupA|]$.
That is, if $|\tupA| < m$, some of the pebbles are not used.
There are two players called Spoiler and Duplicator.
If $|\StructVA| \neq |\StructVB|$, then Spoiler wins the game.
Otherwise a round of the game proceeds as follows:
\begin{enumerate}
	\item Spoiler chooses a prime $p \in \Omega$ and picks up~$2k$ many pebbles
	from~$\StructA$ and the corresponding pebbles (with the same labels) from~$\StructB$.
	\item Duplicator picks a partition~$\PartA$ of $\StructVA^k \times \StructVA^k$ and another one~$\PartB$ of $\StructVB^k \times \StructVB^k$
	such that $|\PartA| = |\PartB|$.
	Furthermore, Duplicator picks an invertible $\StructVA^k \times \StructVB^k$ matrix~$S$ over~$\FF_p$,
	such that the matrix induces a total and bijective  map $\invMapBij\colon\PartA \to \PartB$
	defined by $\BlockA \mapsto \BlockB$ if and only if $ \charMat{\BlockA} = S \cdot \charMat{\BlockB} \cdot \inv{S}$.
	Here~$\charMat{\BlockA}$ (respectively~$\charMat{\BlockB}$) is the characteristic $\StructVA^k \times \StructVA^k$ matrix over~$\FF_p$ of~$\BlockA$
	(respectively the $\StructVB^k \times \StructVB^k$ matrix over~$\FF_p$ of~$\BlockB$)
	which satisfies that $\charMat{\BlockA}(\tupA',\tupB') = 1$ if $\tupA'\tupB' \in \BlockA$ and $\charMat{\BlockA}(\tupA',\tupB') = 0$ otherwise.
	To say it differently, 
	Duplicator has to pick a bijection $\invMapBij \colon \PartA \to \PartB$
	and an invertible $\StructVA^k \times \StructVB^k$ matrix~$S$
	satisfying $\charMat{\BlockA}  = S \cdot \charMat{\invMapBij(\BlockA)} \cdot \inv{S}$
	for all $\BlockA \in \PartA$,
	i.e., the characteristic matrices of~$\PartA$ and~$\PartB$ are simultaneously similar.
	\item Spoiler chooses a block $\BlockA \in \PartA$, a tuple $\tupC \in \BlockA$, and a tuple $\tupC' \in \invMapBij(\BlockA)$.
	Then for each $i \in [2k]$ Spoiler
	places a pebble on~$\vertC_i$
	and the corresponding pebble on~$\vertC'_i$.
\end{enumerate}
After a round, Spoiler wins the game if the pebbles do not define a local isomorphism or if Duplicator was not able to respond with a matrix satisfying the conditions above.
Duplicator wins the game if Spoiler forever fails to win.
Spoiler has a winning strategy if Spoiler can win the game starting at $(\StructA, \tupA)$ and $(\StructB, \tupB)$ in any case independently of the actions of Duplicator.
Likewise, Duplicator has a winning strategy, if Duplicator can always win the game.
In that case, we write $(\StructA, \tupA) \invertibleMapEquiv{m}{k}{\Omega} (\StructB, \tupB)$.
Finally,
we consider the game with a bounded number of rounds:
The $\ell$-round invertible-map game $\invertMapGameBR{m}{k}{\Omega}{\ell}$
proceeds exactly as $\invertMapGame{m}{k}{\Omega}$
but stops after~$\ell$ rounds.
Duplicator wins, if Spoiler did not win in these rounds.
In the following, we use the invertible-map game
instead of the rank-pebble game
because it allows us to prove a stronger result
and simplifies proofs.

\begin{lemma}[\cite{DawarHolm12}]
	\label{lem:invertible-map-refines-rank-logic}
	Let $\GraphClass$ be a class of finite $\sig$-structures
	and $P \subseteq \GraphClass$.
	If for every $k,m \in \nat$ with $2k \leq m$
	and every finite and nonempty set of primes~$\Omega$,
	there is a pair of structures $(\StructA, \StructB)$,
	such that $\StructA \in P$,
	$\StructB \notin P$,
	and $\StructA \invertibleMapEquiv{m}{k}{\Omega} \StructB$,
	then~$P$ is not $\IFPRP{\mathbb{P}}$ definable,
	where~$\mathbb{P}$ is the set of all primes.
\end{lemma}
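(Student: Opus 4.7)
The plan is to prove the contrapositive: assume $P$ is defined by some $\IFPRP{\mathbb{P}}$ formula $\formA$ and extract from it three parameters, namely $m$ (the number of variables occurring in $\formA$, so that any quantifier challenge uses at most $m$ pebbles), $k$ (the maximal arity of any rank term $\rk_p(\uniVarVA,\uniVarVB).(\termA,\termNCB)$ in $\formA$), and a \emph{finite} set $\Omega \subseteq \mathbb{P}$ (the primes $p$ actually indexing some $\rk_p$ in $\formA$). The claim is that for these choices of $m,k,\Omega$, any two pebbled structures with $(\StructA,\tupA) \invertibleMapEquiv{m}{k}{\Omega} (\StructB,\tupB)$ agree on $\formA$; contraposed, this is exactly the statement of the lemma.

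The proof proceeds by simultaneous induction on the subformulas and subterms of $\formA$, showing that Duplicator's winning strategy preserves the value of every subterm and the truth of every subformula. Atomic formulas are immediate, because after each round the pebbles must define a partial isomorphism. Boolean connectives are trivial. Quantifiers over element variables are handled in the usual way via Spoiler's pebble move in Step~3 of the game (with $k=0$, so the trivial partition into a single singleton and the $1 \times 1$ identity matrix work). Bounded numeric quantification and counting terms are handled by the fact that simultaneous similarity of characteristic matrices of partitions forces corresponding blocks to have equal size, so counts of satisfying tuples on both sides coincide. Inflationary fixed-points are handled by the standard stage-comparison argument: one shows by induction on the stage that corresponding stages of the fixed-point define, on the two structures, relations respected by Duplicator's strategy, using the bounded number of rounds version $\invertMapGameBR{m}{k}{\Omega}{\ell}$ inside the induction if one wants fully explicit round bounds.

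The main obstacle, and the whole point of the game, is the rank operator. Consider a subterm $\rk_p(\uniVarVA,\uniVarVB).(\termA,\termNCB)$ of arity $k$ with $p \in \Omega$, and suppose by induction that the partition $\PartA$ of $\StructVA^k \times \StructVA^k$ into fibers of $\definedMatrix{\StructA}{\termA} \bmod p$ corresponds, block by block, to the analogous partition $\PartB$ of $\StructVB^k \times \StructVB^k$ via some bijection $f$ induced by Duplicator's strategy. Duplicator's move provides an invertible matrix $S$ over $\FF_p$ with $\charMat{\BlockA} = S \cdot \charMat{f(\BlockA)} \cdot \inv{S}$ for every $\BlockA \in \PartA$. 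Writing the two definable matrices in terms of the value-weighted sums of characteristic matrices of their fibers,
\[
\definedMatrix{\StructA}{\termA} \bmod p \;=\; \sum_{v \in \FF_p} v \cdot \charMat{\BlockA_v} \;=\; \sum_{v \in \FF_p} v \cdot S \cdot \charMat{f(\BlockA_v)} \cdot \inv{S} \;=\; S \cdot (\definedMatrix{\StructB}{\termA} \bmod p) \cdot \inv{S},
\]
so the two matrices are similar over $\FF_p$ and in particular have the same rank. This is where the game's demand for simultaneous similarity (not merely equivalence) pays off: similarity of the building blocks immediately lifts to similarity of any linear combination, so the rank value computed by the term is the same on both structures.

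The delicate point throughout is the bookkeeping: one has to set up the induction so that, when Spoiler's next move after evaluating a rank term pebbles a $2k$-tuple inside one of the blocks, Duplicator's response continues to preserve all subterms simultaneously. This is exactly why the game is formulated with an adversarial choice of partition for Duplicator rather than a fixed one, so Duplicator may always choose a partition that refines all the coarser partitions demanded by the finitely many subterms of $\formA$; the arity bound $k$ and the prime bound $\Omega$ extracted from $\formA$ ensure that every rank subterm's requirement is answerable within a single round of $\invertMapGame{m}{k}{\Omega}$.
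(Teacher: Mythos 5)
The main plan here — extract $m$, $k$, $\Omega$ from the formula and show by induction on subterms and subformulas that a Duplicator winning strategy in $\invertMapGame{m}{k}{\Omega}$ preserves their values — is a valid direct route, and it is genuinely different from the route the paper sketches (which cites \cite{DawarHolm12} and goes through the intermediate rank-pebble (matrix-equivalence) game plus a separate refinement argument). Your treatment of the rank operator itself is correct: writing the definable matrix as a $\FF_p$-linear combination of characteristic matrices of blocks and observing that $\charMat{\BlockA} = S\charMat{f(\BlockA)}\inv{S}$ lifts to similarity of any such linear combination is exactly the right move, and it gives equal ranks.

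There is, however, a genuine gap in the IFPC part: you claim that ``simultaneous similarity of characteristic matrices of partitions forces corresponding blocks to have equal size,'' and use this to discharge counting terms. That claim is false. From $\charMat{\BlockA} = S\charMat{f(\BlockA)}\inv{S}$ for all blocks one only derives $|\BlockA| \equiv |f(\BlockA)| \pmod p$ (since summing over blocks gives $SJ = JS$, hence $S\mathbbm{1} = \alpha\mathbbm{1}$ and $\mathbbm{1}^T S = \alpha\mathbbm{1}^T$, and then $\mathbbm{1}^T \charMat{\BlockA}\mathbbm{1} = \mathbbm{1}^T\charMat{f(\BlockA)}\mathbbm{1}$ over $\FF_p$), not exact equality. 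A concrete counterexample: over $\FF_2$ with $N=4$, take
\[
\charMat{\BlockB_1} = \begin{pmatrix} 0 & 1 & 1 & 1 \\ 0 & 0 & 0 & 0 \\ 0 & 0 & 0 & 0 \\ 0 & 0 & 0 & 0 \end{pmatrix},
\qquad
S = \begin{pmatrix} 1 & 1 & 1 & 0 \\ 0 & 1 & 1 & 1 \\ 0 & 1 & 0 & 0 \\ 0 & 0 & 1 & 0 \end{pmatrix},
\]
which is invertible, commutes with $J$, and satisfies $S\charMat{\BlockB_1}\inv{S} = E_{1,2}$; completing to the two-block partitions $\PartA = \{\{(1,2)\},\,\text{rest}\}$ and $\PartB = \{\BlockB_1,\,\text{rest}\}$ gives simultaneous similarity with block sizes $\{1,15\}$ versus $\{3,13\}$. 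So the abstract similarity condition alone does not give count preservation; for a fixed finite $\Omega$ it cannot even determine counts modulo integers not built from $\Omega$. One also cannot circumvent this by expressing $\#y.\phi$ as a rank of a diagonal matrix, because the numeric term used to define such a matrix would itself require a counting term, reintroducing the circularity. The missing ingredient is the nontrivial fact (established in \cite{DawarHolm12}) that a Duplicator winning strategy in the invertible-map game actually yields a winning strategy in the bijective game / rank-pebble game, which is what delivers exact count preservation; this uses the full strength of the game continuing over many rounds, not a single-round similarity condition. Your parenthetical ``with $k=0$'' for element quantifiers is also off (the game is played for a fixed $k$, and one simulates one-pebble moves by wasting pebbles), but that is a minor point compared to the counting gap.
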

If we fix a finite set of primes~$\Omega$ in Lemma~\ref{lem:invertible-map-refines-rank-logic},
then~$P$ is not $\IFPRP{\Omega}$ definable~(see~\cite{DawarHolm12})
because a $P$-defining $\IFPRP{\Omega}$-formula
implies a winning strategy of Spoiler in the $\invertMapGame{m}{k}{\Omega}$ game.
Lemma~\ref{lem:invertible-map-refines-rank-logic}
is proved in~\cite{DawarHolm12} for the
$(m,k,\Omega)$-rank-pebble game,
which induces the equivalence~$\rankPebbleEquiv{m}{k}{\Omega}$.
Then the authors show
that~$\invertibleMapEquiv{m}{k}{\Omega}$ refines $\rankPebbleEquiv{m}{k}{\Omega}$.
It is an open problem
whether the equivalence~$\invertibleMapEquiv{m}{k}{\Omega}$
strictly refines $\RanFinArEquiv{m}{k}{\Omega}$.

\section{CFI Structures}
\label{sec:cfi-structures}
In this section we define a variant of the well-known CFI graphs.
Starting from a so-called base graph,
for every vertex in the base graph a gadget is constructed.
In the seminal paper of Cai,~Fürer,~and Immerman~\cite{CaiFI1992}
these gadgets consist of inner and outer vertices,
where the latter are pairs of vertices.
Each outer vertex pair induces the automorphism group~$\ZZ_2$
and the inner vertices realize
the automorphism group $\setcond{\tup{a} \in \ZZ_2^d}{\sum \tup{a} = 0}$
of its~$d$ adjacent outer vertex pairs.
Whenever two vertices in the base graph are adjacent,
the two corresponding outer vertex pairs of the two gadgets are connected.
Such a connection can either be ``straight'' or ``twisted''.
This construction generalizes to other groups than~$\ZZ_2$.
In~\cite{NeuenSchweitzer17} a construction of gadgets for
general abelian groups can be found.
We are interested in cyclic groups~$\ZZ_{2^\pot}$.
The following construction only uses the inner vertices
and directly connects the inner vertices of two gadgets.
For~$\ZZ_2$, this approach is given in \cite{Furer2001}.

A \defining{base graph} is a simple, connected, and totally ordered graph.
Let $G=(V,E,\leq)$ be a base graph.
Consider the additive group of $\ZZ_{2^\pot}$.
For each vertex $\bvertA \in V$  we define a gadget
consisting of vertices~$\StructV_\bvertA$ and two families of relations:
\begin{align*}
	\StructV_\bvertA &:= \setcond*{\tup{a} \in \ZZ_{2^\pot}^{\neighbors{G}{\bvertA}}}{\sum \tup{a} = 0}, & \bvertA \in V,\\
	I_{\bvertA,\bvertB} &:= \setcond*{
	(\tup{a},\tup{b}) \in \StructV_\bvertA^2} {\tup{a}(\bvertB) = \tup{b}(\bvertB)}, &\bvertA \in V, \bvertB\in \neighbors{G}{\bvertA},\\
	C_{\bvertA,\bvertB} &:= \setcond*{(\tup{a},\tup{b}) \in \StructV_\bvertA^2}{\tup{a}(\bvertB)+1=\tup{b}(\bvertB)}, &\bvertA \in V, \bvertB\in \neighbors{G}{\bvertA}.
\end{align*}
Consider the sets $\StructV_{\bvertA,\bvertB,c} := \setcond{\tup{a} \in \StructV_\bvertA}{\tup{a}(\bvertB) = c}$ for $\bvertB \in \neighbors{G}{\bvertA}$
and $c \in \ZZ_{2^\pot}$.
The relation~$I_{\bvertA,\bvertB}$ realizes these sets
by disjoint cliques, one for each $\StructV_{\bvertA,\bvertB,c}$.
The relation $C_{\bvertA,\bvertB}$ induces a directed cycle 
$\StructV_{\bvertA,\bvertB,c}, \StructV_{\bvertA,\bvertB,c+1}, \dots,  \StructV_{\bvertA,\bvertB,c+2^\pot-1}$
on these sets for a fixed~$\bvertB$
by adding directed complete bipartite graphs between subsequent cliques.
In that way, the relation $C_{\bvertA,\bvertB}$ realizes the group~$\ZZ_{2^q}$
on the sets $\StructV_{\bvertA,\bvertB,c}$.
By the condition $\sum \tup{a} = 0$ on the vertices in~$\StructV_\bvertA$,
a gadget thereby has an automorphism group isomorphic to
$\setcond{\tup{a} \in \ZZ_{2^q}^d}{\sum\tup{a} = 0}$
where~$d$ is the degree of~$\bvertA$.

Now we connect gadgets.
We first extend the order~$\leq$ to the lexicographical order on tuples of vertices of~$G$ and further to sets of such tuples.
Let $g\colon E \to \ZZ_{2^\pot}$ be a function
defining the values by which the edges are twisted.
For every edge $\set{\bvertA,\bvertB} \in E$ we connect the gadgets of the incident vertices.
We obtain the CFI structure
\begin{align*}
	\CFIgraph{2^\pot}{G}{g} &:= (\StructV, \rel_I, \rel_C, \rel_{E,0},\dots,\rel_{E,2^\pot-1}, \spleq)
\intertext{as follows:}
	E_{\set{\bvertA,\bvertB},c} &:= \setcond*{\set{\tup{a},\tup{b}}}{\tup{a} \in \StructV_\bvertA, \tup{b} \in \StructV_\bvertB, \tup{a}(\bvertB) + \tup{b}(\bvertA) =  c },\qquad\qquad \set{\bvertA,\bvertB} \in E, c \in \ZZ_{2^\pot},\\
	\spleq &:= \setcond*{(\tup{a},\tup{b})}{\tup{a} \in \StructV_\bvertA, \tup{b} \in \StructV_\bvertB, \bvertA \leq \bvertB},\\
	\rel_I &:= \setcond*{(\tup{a},\tup{b}, \tup{a}',\tup{b}')}{
		\setcond{(\bvertA,\bvertB)}{(\tup{a},\tup{b}) \in I_{\bvertA,\bvertB}}
		\leq 
		\setcond{(\bvertA',\bvertB')}{(\tup{a}',\tup{b}') \in I_{\bvertA',\bvertB'}}
	},\\
	\rel_C &:= \setcond*{(\tup{a},\tup{b}, \tup{a}',\tup{b}')}{
		\setcond{(\bvertA,\bvertB)}{(\tup{a},\tup{b}) \in C_{\bvertA,\bvertB}}
		\leq 
		\setcond{(\bvertA',\bvertB')}{(\tup{a}',\tup{b}') \in C_{\bvertA',\bvertB'}}
},
\end{align*}\vspace{-20pt}
\begin{align*}
	\StructV &:= \bigcup_{\bvertA \in V} \StructV_\bvertA,&
	\rel_{E,c} &:= \bigcup_{e \in E} E_{e,c + g(e)}.
\end{align*}
The unions above are meant to be disjoint.
The relations~$I_{\bvertA,\bvertB}$ (and similarly~$C_{\bvertA, \bvertB}$)
are encoded by~$\rel_I$ (and $\rel_C$) as follows:
All edges $(\tup{a}, \tup{b}) \in \StructVA_{\bvertA}^2$
are partitioned according to the set of base vertices~$\bvertB$
such that $(\tup{a}, \tup{b}) \in I_{\bvertA,\bvertB}$.
The partition is given by
the equivalence classes of~$\rel_I$
(seen as equivalence on pairs).
The relations $I_{\bvertA,\bvertB}$ (respectively $C_{\bvertA, \bvertB}$)
are unions of $\rel_I$-equivalence classes (respectively $\rel_C$-equivalence classes).

\begin{definition}[Origin]We say that the vertices $\tup{a} \in \StructV_\bvertA$
\defining{originate} from~$\bvertA$
or that their \defining{origin} is~$\bvertA$
and write $\orig{\tup{a}} := \bvertA$.
We extend this to tuples
and define the \defining{origin} of $(\tup{a}_1,\dots,\tup{a}_j) \in \StructV^j$ as
$\orig{(\tup{a}_1,\dots,\tup{a}_j)} := (\orig{\tup{a}_1},\dots,\orig{\tup{a}_j})$.
We will often view $\orig{(\tup{a}_1,\dots,\tup{a}_j)}$ as the set $\set{\orig{\tup{a}_1}, \dots, \orig{\tup{a}_j}}$
and write $\bvertA \in \orig{(\tup{a}_1, \dots, \tup{a}_j)}$.
If~$M$ is a set of tuples of the same origin,
we set $\orig{M} := \orig{\tup{a}}$ for some (and thus all) $\tup{a}\in M$.
For a set $W \subseteq V$,
we define the \defining{origin induced substructure}
\[\CFIgraph{2^\pot}{G}{g}[W] := \CFIgraph{2^\pot}{G}{g}[\setcond{\tup{a}}{\orig{\tup{a}} \in W}]\]
to be the substructure induced by all vertices whose origin is contained in~$W$.
\end{definition}
It will be always clear from the context
whether we refer to the origin induced substructure (or just a standard induced substructure).
In that case~$W$ and the universe of the CFI structure are disjoint.

For CFI structures it is well-known
that $\CFIgraph{2^\pot}{G}{g} \iso \CFIgraph{2^\pot}{G}{f}$
if and only if $\sum_{e \in E} g(e) = \sum_{e \in E} f(e)$.
That is, there are up to isomorphism~$2^\pot$ many CFI structures
of the base graph~$G$.

\begin{lemma}
	\label{lem:cfi-autgroup-abelian-two-group}
	The automorphism group $\autgrp{\CFIgraph{2^\pot}{G}{g}}$\hspace{-1pt} of $\CFIgraph{2^\pot}{G}{g}$\hspace{-1pt}
	is an abelian \hspace{-0.3pt}$2$\nobreakdash-group.
\end{lemma}
\begin{proof}
	Every automorphism of  $\CFIgraph{2^\pot}{G}{g}$ is origin-respecting, i.e.,
	it maps a vertex to a vertex of the same origin,
	because the preorder~$\spleq$ on the vertices is obtained from the total order~$\leq$ on~$G$.
	It follows that $\autgrp{\CFIgraph{2^\pot}{G}{g}}$
	is a subgroup of the direct product of the automorphism groups of every gadget.
	Because the automorphism group of each degree $d$ gadget
	is the abelian $2$-group $\setcond{\tup{a} \in \ZZ_{2^q}^d}{\sum\tup{a} = 0}$ as argued before,
	so is the direct product of them and in particular $\autgrp{\CFIgraph{2^\pot}{G}{g}}$.
\end{proof}

\paragraph{Other CFI Constructions.}
We compare our CFI construction to other version in the literature.
The classical construction in \cite{CaiFI1992} uses inner and outer vertices,
while we only use inner ones.
The sets $\StructV_{\bvertA,\bvertB,c}$ in our construction
correspond to the outer vertices in the classical construction.
We only use one type of vertices
to avoid case distinctions between inner and outer vertices in the following.

Another approach to avoid this case distinction is to only use outer vertices
and to replace the inner vertices by relations of higher arity (see, e.g., \cite{Hella96}).
The arity of the relations corresponds to the degree of a vertex in the base graph.
While this construction is more elegant,
it is restricted to base graphs of bounded degree to obtain structures of a fixed signature.
However, our argument separating rank logic and \CPT{}
requires base graphs of unbounded degree.
Our construction always yields structures of arity~$4$,
but the number of relations varies with the group~$\ZZ_{2^\pot}$.
Of course, we could use a single relation to encode the relations~$\rel_{E,c}$.
But in fact,
it suffices only to use~$\rel_{E,0}$
to obtain a structure with the same automorphism group.
Then all~$\rel_{E,c}$ are actually definable in $3$-variable counting logic.
We include all~$\rel_{E,c}$ in the structure for convenience.

In general, most properties of the structures transfer
between the different constructions (with some quite obvious adaptations).

\subsection{Isomorphisms of CFI Structures}
\label{sec:cfi-isomorphisms}
In this section we consider two classes of isomorphisms between
CFI structures.
They get important later in Section~\ref{sec:higher-arities}.
Let $q \in \nat$ and $G = (V,E,\leq)$ be a base graph.
In the following,
we denote for every $f \colon E \to \ZZ_{2^q}$
by~$\StructA_f$ the CFI structure $\CFIgraph{\ZZ_{2^\pot}}{G}{f}$.
These structures have by definition the same universe~$\StructVA$ for every 
$f \colon E \to \ZZ_{2^q}$.

\begin{definition}[Twisted Edge]
	Two functions  $f,g \colon E \to \ZZ_{2^\pot} $ \defining{twist} an edge $e \in E$ if $f(e) \neq g(e)$.
	We also say that~$e$ is
	\defining{twisted} by~$f$ and~$g$.
	For a set $W \subseteq V$ we say
	that~$f$ and~$g$ \emph{do not twist}~$W$
	if no edge in $G[W]$ is twisted by~$f$ and~$g$.
\end{definition}
We omit~$f$ and~$g$ if they are clear from the context.
Let $\bvertA \in V$
and $\tup{a} \in \ZZ_{2^\pot}^{\neighbors{G}{\bvertA}}$
satisfy $\sum \tup{a} = 0$.
We identify~$\tup{a}$ with a permutation of vertices with origin~$\bvertA$
as follows:
if~$\vertA$ has origin~$\bvertA$ (in some CFI structure over~$G$),
then $\gadgetiso{\vertA}{\tup{a}} := \vertB$ such that $\vertB(\bvertB) = \vertA(\bvertB) + \tup{a}(\bvertB)$ for all $\bvertB \in \neighbors{G}{\bvertA}$.
Because $\sum \tup{a} = 0$, $\gadgetiso{\vertA}{\tup{a}}$ is indeed
a vertex with origin $\bvertA$, too.

\begin{definition}[Path Isomorphism]
	\label{def:pathiso}
Let $c \in \ZZ_{2^\pot}$ and $\spath = (\bvertA_1, \dots, \bvertA_n)$ be a simple path in $G$.
For every $1 < i < n$, let $\tup{a}_i \in \ZZ_{2^\pot}^{\neighbors{G}{\bvertA_i}}$ 
such that $\tup{a}_i(\bvertA_{i-1}) = c$, $\tup{a}_i(\bvertA_{i+1}) = -c$,
and $\tup{a}_i(\bvertB) = 0$
for all other $\bvertB \in \neighbors{G}{\bvertA_i}$.
The \defining{path isomorphism} $\pathiso{c}{\spath}$ is defined by
\[
\pathiso{c}{\spath}(\vertA) := \begin{cases}
	\gadgetiso{\vertA}{\tup{a}_i} & \text{if } \orig{\vertA} = \bvertA_i \text{ and } 1 < i < n\\
	\vertA & \text{otherwise.}
\end{cases}
\]
\end{definition}

\begin{lemma}
	\label{lem:pathiso}
	Let $f,g\colon E \to \ZZ_{2^\pot}$,
	$\spath = (\bvertA_1, \dots, \bvertA_n)$ be a simple path in~$G$, 
	$e_1 = \set{\bvertA_1,\bvertA_2}$, and $e_2 = \set{\bvertA_{n-1},\bvertA_{n}}$.
	If no edge apart from~$e_1$ and~$e_2$ is twisted by~$f$ and~$g$,
	$g(e_1) = f(e_1) + c$,
	and $g(e_2) = f(e_2) - c$,
	then $\pathiso{c}{\spath}$ is an isomorphism $(\StructA_f,\pTupA) \to (\StructA_g,\pTupA)$
	for every tuple $\pTupA \in \StructVA^m$ satisfying $\distance{G}{\orig{\pTupA}}{\set{\bvertA_1, \dots, \bvertA_{n}}} > 1$.
\end{lemma}
The proof of Lemma~\ref{lem:pathiso}
is an obvious adaptation of the proof of Lemma~3.11 in~\cite{GradelPakusa19}.
This lemma uses a variant of CFI structures with outer vertices and relations,
but the arguments are similar.
We additionally require that the tuple~$\pTupA$ is fixed,
but because its distance to the path~$\spath$ is greater than~$1$,
it is not affected by the path isomorphism at all,
i.e., $\pathiso{c}{\spath}(\pTupA) = \pTupA$.
Isomorphisms between CFI structures satisfying $\sum f = \sum g$,
in which more than two edges are twisted,
can be composed out of multiple path isomorphisms.
The following special case of such isomorphisms will play an important role later:

\begin{definition}[Star Isomorphism]
	\label{def:stariso}
Let $\bvertC \in V$ be of degree $d$, $\ell \leq d$,
$\spath_1, \dots, \spath_\ell$ be simple paths,
$\spath_i = (\bvertA^i_1,\dots,\bvertA^i_{\ell_i})$,
$\bvertA^i_{\ell_i}=\bvertC$ for all $i \in [\ell]$,
and the~$\spath_i$ be disjoint apart from~$\bvertC$.
We call $\spath_1, \dots, \spath_\ell$ a \defining{star}
and $\bvertC$ the \defining{center of the star}.
For $\tup{c} \in \ZZ_{2^\pot}^\ell$ satisfying $\sum \tup{c} = 0$,
we define the \defining{star-isomorphism} $\stariso{\tup{c}}{\spath_1, \dots, \spath_\ell}$ via
\[
\stariso{\tup{c}}{\spath_1, \dots, \spath_\ell}(\vertA) := \begin{cases}
	\gadgetiso{\vertA}{\tup{c}'} &\text{if } \orig{\vertA} = \bvertC,\\
	\pathiso{c_i}{\spath_i}(\vertA) & \text{if } \orig{\vertA} \neq \bvertC \text{ and } \orig{\vertA} \text{ is contained in } \spath_i,\\
	u & \text{otherwise,}
\end{cases}
\]
where $\tup{c}' \in \ZZ_{2^\pot}^{\neighbors{G}{\bvertC}}$ such that
$\tup{c}'(\bvertA^i_{\ell_i -1}) = c_i$ for all $i \in [\ell]$
and $\tup{c}'(\bvertB) = 0$ for all other $\bvertB \in \neighbors{G}{\bvertC}$.
\end{definition}

\begin{lemma}
	\label{lem:stariso}
	Let $f,g\colon E \to \ZZ_{2^\pot}$,
	$\spath_1, \dots, \spath_\ell$ be a star in~$G$,
	$\spath_i = (\bvertA^i_1,\dots,\bvertA^i_{\ell_i})$ for all $i \in [\ell]$,
	and $\tup{c} \in \ZZ_{2^\pot}^\ell$ such that $\sum \tup{c} = 0$.
	If no edge apart from the edges $e_i = \set{x^i_1,x^i_2}$ for every $i \in [\ell]$ is twisted by~$f$ and~$g$ and
	$g(e_i) = f(e_i) + c_i$ for all $i \in [\ell]$,
	then
	$\stariso{\tup{c}}{\spath_1, \dots, \spath_\ell}$ is an isomorphism $(\Struct_f, \pTupA) \to (\Struct_g, \pTupA)$
	for every tuple $\pTupA\in \StructVA^m$ satisfying $\distance{G}{\orig{\pTupA}}{\setcond{\bvertA^i_j}{i \in [\ell],j \in [\ell_i]}} > 1$.
\end{lemma}
\begin{proof}
	Let $\pTupA\in \StructVA^m$ satisfy $\distance{G}{\orig{\pTupA}}{\setcond{\bvertA^i_j}{i \in [\ell],j \in [\ell_i]}} > 1$
	and let $\bvertC$ be the center of the star $\spath_1, \dots, \spath_\ell$.
	For every $i \in [\ell-1]$,
	let~$\spath_i'$ be 
	the $\bvertA^i_1-\bvertA^{i+1}_1$-path
	obtained by stitching~$\spath_i$ and~$\spath_{i+1}$
	together at $\bvertC := \bvertA^i_{\ell_i}$
	(that is, the path~$\spath_{i+1}$ is attached in reversed direction).
	Furthermore, for every $i \in [\ell-1]$,
	set $\autoA_i := \pathiso{\sum _{j \in [i]} c_j}{\spath_{i}'}$,
	and let $f_i \colon E \to \ZZ_{2^\pot}$
	be the function defined via $f_i(e_j) = f(e_j) + c_j$ for every $j \in [i]$,
	$f_i(e_{i+1}) = f(e_{i+1}) - \sum_{j \in [i]} c_j$, 
	and $f_i(e) = f(e)$ otherwise.
	Applying Lemma~\ref{lem:pathiso} inductively shows that
	$\autoA_1 \circ \cdots \circ \autoA_i$ is an isomorphism
	$(\StructA_f, \pTupA) \to (\StructA_{f_i}, \pTupA)$:
	For $i = 1$,
	the only twisted edges are~$e_1$ and~$e_2$ satisfying
	$f_1(e_1) = f(e_1)+c_1$
	and $f_1(e_2) = f(e_2) - c_1$
	and $\autoA_1 \colon (\Struct_f, \pTupA) \to (\StructA_{f_1}, \pTupA)$ is an isomorphism  by Lemma~\ref{lem:pathiso}.
	For every $2 \leq i  \leq \ell-2$,
	exactly the edges~$e_{i+1}$ and~$e_{i+2}$ are twisted by~$f_i$ and~$f_{i+1}$.
	It holds that
	\begin{align*}
		f_{i+1}(e_{i+1}) &= f(e_{i+1}) + c_{i+1} =  f_i(e_{i+1}) + \sum_{j \in [i+1]} c_j,\\
		f_{i+1}(e_{i+2}) &= f(e_{i+2}) - \sum_{j \in [i+1]} c_i = f_i(e_{i+2}) - \sum_{j \in [i+1]} c_i.
	\end{align*}
	Thus,~$\autoA_{i+1}$ is an isomorphism $(\StructA_{f_i}, \pTupA) \to (\StructA_{f_{i+1}}, \pTupA)$ by Lemma~\ref{lem:pathiso}
	and $\autoA_1 \circ \cdots \circ \autoA_{i+1}$ is an isomorphism
	$(\StructA_{f}, \pTupA) \to (\StructA_{f_{i+1}}, \pTupA)$ by induction.

	Now let $\autoB =\autoA_1 \circ \cdots \circ \autoA_{\ell}$.
	To prove the claim it suffices to show that $f_{\ell-1} = g$
	and that $\autoB = \stariso{\tup{c}}{\spath_1, \dots, \spath_\ell}$.
	The former holds because $\sum \tup{c} = 0$.
	To show the latter,
	first consider the path~$\spath_1$.
	On vertices with origin in~$\spath_1$ different from~$\bvertC$
	the action of~$\autoB$ is equal to the action of~$\autoA_1$.
	This exactly equals the definition of $\stariso{\tup{c}}{\spath_1, \dots, \spath_\ell}$.
	For vertices with origin in~$\spath_\ell$ different from~$\bvertC$
	the argument is similar
	and the action of~$\autoB$ is equal to the action of~$\autoA_{\ell-1}$.
	The isomorphism~$\autoA_{\ell-1}$ twists the edge $\set{\bvertA^\ell_1, \bvertA^\ell_2}$ by $-\sum_{j \in [\ell-1]} c_j$,
	which by assumption is equal to~$c_\ell$ because $\sum \tup{c} = 0$.
	Now consider vertices with origin in~$\spath_i$ different from~$\bvertC$ for~$i \notin \set{1, \ell}$.
	Here the action of~$\autoB$ equals the action of $\autoA_{i-1}\circ\auto_{i}$.
	The isomorphism~$\autoA_{i-1}$ twists the edge $\set{\bvertA^i_1, \bvertA^i_2}$
	by $-\sum_{j \in [i-1]} c_j$ and~%
	$\auto_{i}$ twists the same edge by $\sum_{j \in [i]} c_j$.
	Note that~$\spath_i'$ contains the vertices of~$\spath_{i+1}$ in reversed order, so on all the vertices with origin different from $\bvertC$
	the action of $\autoA_{i-1}\circ\auto_{i}$ becomes equal
	to the action of the path isomorphism $\pathiso{c_i}{\spath_i}$.
	Finally, by a similar argument,
	the action of~$\autoB$ on vertices with origin~$\bvertC$ equals the action of~$\tup{c}'$ defined as in Definition~\ref{def:stariso}.
\end{proof}

\subsection{Orbits of CFI Structures}
\label{sec:cfi-orbits}

In this section we analyze the structure of $k$-orbits of CFI structures
for highly connected base graphs.
Let $\pot, k, m \in \nat$ and
$G=(V,E,\leq)$ be a $(k + m + 1)$-connected base graph.
We denote again for every $f \colon E \to \ZZ_{2^q}$
by~$\StructA_f$ the CFI structure $\CFIgraph{\ZZ_{2^\pot}}{G}{f}$
with universe~$\StructVA$.
Let $\pTupA \in \StructVA ^ m$ be arbitrary but fixed.
We consider the $k$-orbits of 
pebbled structures $(\StructA_f, \pTupA)$,
i.e.,~orbits of $k$\nobreakdash-tuples.
Recall that $\autgrp{(\StructA_f,\pTupA)}$ is the automorphism group of $(\StructA_f,\pTupA)$
and that $\orbs{k}{(\StructA_f, \pTupA)}$
is the set of all $k$-orbits (cf.~Section~\ref{sec:preliminaries}).

\begin{definition}[Type of a Tuple]
	\label{def:tuple-type}
	The \defining{isomorphism type} of a pebbled structure
	is the class of all isomorphic structures.
	For $f \colon E \to \ZZ_{2^\pot}$
	the \defining{type} of a tuple $\tupA \in \StructVA^k$ in $(\StructA_f,\pTupA)$
	is the pair $(\orig{\tupA}, T)$,
	where $T$ is the isomorphism type of 
	$(\StructA_f[\orig{\pTupA\tupA}], \pTupA\tupA)$.
\end{definition}
We omit the pebbled structure $(\StructA_f,\pTupA)$ 
if it is clear from the context.
Including $\orig{\tupA}$ in the type is needed
because the isomorphism type~$T$
respects the relative order of
the gadgets in~$\spleq$ only.
If~$\StructA_f$ was vertex-colored instead, this would not be a problem.
We have to consider the origin induced substructure of $\orig{\pTupA\tupA}$
and not of~$\pTupA\tupA$ because only then 
the relations~$I_{\bvertA, \vertB}$ and~$C_{\bvertA, \vertB}$
can be recovered from~$\rel_I$ and~$\rel_C$.
Here, an edge coloring would resolve this issue.

\begin{lemma}
	\label{lem:cfi-same-type-automorphism}
	For every $f \colon E \to \ZZ_{2^\pot}$
	and every $\tupA,\tupB \in \StructVA^k$
	there is an automorphism $\autoA \in \autgrp{(\StructA_f,\pTupA)}$
	such that $\autoA(\tupA) = \tupB$
	if and only if~$\tupA$ and~$\tupB$ have the same type.
\end{lemma}
\begin{proof}
	A similar argument to the following can be found in~Lemma~3.15 in~\cite{GradelPakusa19}.
	Let $f \colon E \to \ZZ_{2^\pot}$ and $\tupA,\tupB \in \StructVA^k$.
	If $\autoA(\tupA) = \tupB$ for some automorphism $\autoA \in \autgrp{(\StructA_f,\pTupA)}$,
	then surely~$\tupA$ and~$\tupB$ have the same type.
	
	For the other direction, assume that~$\tupA$ and~$\tupB$ have the same type.
	Then, by definition, there is an isomorphism 
	$\autoA\colon(\StructA_f[\orig{\pTupA\tupA}], \pTupA\tupA) \to (\StructA_f[\orig{\pTupA\tupB}], \pTupA\tupB)$.
	Because~$\tupA$ and~$\tupB$ have the same type,
	it follows that $\orig{\pTupA\tupA}=\orig{\pTupA\tupB}$
	and in particular
	that~$\autoA$ is an automorphism of $(\StructA_f[\orig{\pTupA\tupA}],\pTupA)$.
	We show that this local automorphism 
	extends to an automorphism of $(\StructA_f, \pTupA)$.
	
	We extend~$\autoA$ by the identity map on all vertices
	with origin not in $\orig{\pTupA\tupA}$.
	Then~$\autoA$ is an isomorphism between~$(\StructA_f,\pTupA)$
	and another CFI structure,
	where all twisted edges $e_1, \dots, e_\ell$
	leave $\orig{\tupA}$ and are not incident to $\orig{\pTupA}$
	(edges incident to $\orig{\pTupA}$ cannot be twisted because~$\autoA$ fixes~$\pTupA$).
	Let~$N$ be the neighborhood of $\orig{\tupA}$
	(and thus of $\orig{\tupB}$).
	Because~$G$ is $(k+m+1)$-connected,
	there is an $\bvertA$-$\bvertB$-path
	not using $\orig{\pTupA\tupA}$
	for every $\bvertA,\bvertB \in N$
	because $G \setminus \orig{\pTupA\tupA}$ is still connected
	when removing at most $|\pTupA\tupA| = k + m < k + m + 1$ many vertices.
	Hence, we can use path isomorphisms
	to move the twists at every~$e_i$ all to~$e_1$.
	But because~$\autoA$ was an automorphism of $(\StructA_f[\orig{\pTupA\tupA}],\pTupA)$,
	the sum of the twists is~$0$.
	Hence, composing~$\autoA$ and the mentioned path isomorphisms
	forms an automorphism $\autoB \in \autgrp{(\StructA_f,\pTupA)}$.
	Because the selected paths do not use $\orig{\pTupA\tupA}$,
	we still have $\autoB(\pTupA\tupA) = \pTupA\tupB$.
\end{proof}
\begin{corollary}
	\label{cor:orbit-types-unique}
	For every $f \colon E \to \ZZ_{2^\pot}$
	and every $\BlockA \in \orbs{k}{(\StructA_f, \pTupA)}$
	there is a type such that~$\BlockA$ contains exactly the tuples of that type.
\end{corollary}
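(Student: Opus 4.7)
The plan is to derive Corollary~\ref{cor:orbit-types-unique} as a direct consequence of the preceding lemma, since that lemma already characterizes the orbit relation as equality of type. Specifically, the lemma states that $\tupA$ and $\tupB$ lie in the same $\autgrp{(\StructA,\pTupA)}$-orbit if and only if they have the same type. The relation ``having the same type'' is evidently an equivalence relation on $\StructVA^k$ (equality of a pair, together with equality of isomorphism class), so its equivalence classes partition $\StructVA^k$. By the lemma, this partition coincides with the partition of $\StructVA^k$ into $k$-orbits under $\autgrp{(\StructA,\pTupA)}$.

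From this I would conclude the corollary in one line: fix any $\tupA \in \BlockA$ and let $T$ be its type; the lemma immediately gives $\BlockA = \{\tupB \in \StructVA^k \mid \tupB \text{ has type } T\}$. The ``$\subseteq$'' direction follows because every $\tupB \in \BlockA$ is the image of $\tupA$ under some automorphism in $\autgrp{(\StructA,\pTupA)}$, and automorphisms preserve origins (they permute each $\StructVA_\bvertA$ within itself, since $G$ has only the trivial automorphism, as used already in Lemma~\ref{lem:cfi-autgroup-abelian-two-group}) as well as the induced substructure on $\orig{\pTupA\tupA}$ together with the pebbled tuple. The ``$\supseteq$'' direction is again immediate from the lemma.

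There is no real obstacle here, as the corollary is a formal consequence of the lemma once one notes that having the same type is well-defined as a property of a tuple rather than of a pair of tuples. The only thing worth spelling out is that the notion of type is preserved by $\autgrp{(\StructA,\pTupA)}$, which reduces to the observation that automorphisms of $(\StructA,\pTupA)$ fix the origin map (since the only automorphism of the base graph $G$ is the identity by totally-orderedness) and restrict to isomorphisms of the induced substructure on $\orig{\pTupA\tupA}$ sending $\pTupA\tupA$ to $\pTupA\tupB$. Hence each orbit is precisely the set of tuples realising some fixed type, as claimed.
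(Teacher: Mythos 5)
Your proposal is correct and is exactly the intended argument: the corollary is an immediate consequence of the preceding lemma, which identifies the orbit relation with ``having the same type,'' so each orbit is precisely a type-equivalence class. The paper states the corollary without proof for this reason, and your extra remark that automorphisms preserve the origin map and hence the type is the right observation to make the derivation fully explicit.
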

\begin{definition}[Type of an Orbit]
	For $f \colon E \to \ZZ_{2^\pot}$ the \defining{type} of a $k$-orbit in $(\StructA_f,\pTupA)$
	is the type of its contained tuples.
\end{definition}
\begin{corollary}
	\label{cor:cfi-same-autgroup-and-orbits}
	For every pair $f,g \colon E \to \ZZ_{2^\pot}$
	that does not twist $\orig{\pTupA}$,
	it holds that
	\begin{align*}
		\orbs{k}{(\StructA_f,\pTupA)} &= \orbs{k}{(\StructA_g,\pTupA)} \text{ and}\\
		\autgrp{(\StructA_f,\pTupA)} &= 
		\autgrp{(\StructA_g,\pTupA)}.
	\end{align*}
\end{corollary}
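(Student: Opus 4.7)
The plan is to reduce both equalities to the stronger statement that the unpebbled automorphism group $\autgrp{\CFIgraph{2^\pot}{G}{g}}$, viewed as a subgroup of the permutation group of the common universe $\StructVA = \bigcup_{\bvertA \in V} \StructVA_\bvertA$, does not depend on the twist function $g$ at all. Once this is in hand, intersecting with the pointwise stabilizer of $\pTupA$ yields the second equality, and since $k$-orbits are by definition the orbits of the group action on $\StructVA^k$, the first equality follows. The hypothesis that $g$ and $h$ agree on edges inside $\orig{\pTupA}$ is in fact not needed for the argument sketched below, but it is the natural condition under which the induced substructures on $\orig{\pTupA}$ literally coincide, aligning the statement with the type-theoretic characterization from Corollary~\ref{cor:orbit-types-unique}.

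First I would analyze the shape of an arbitrary automorphism $\psi$. Because $\spleq$ transfers the total order of $V$ to the gadgets, $\psi$ must fix each $\StructVA_\bvertA$ setwise. Inside a single gadget, $\rel_I$ and $\rel_C$ recover the equivalence relations $I_{\bvertA,\bvertB}$ and the directed cycles $C_{\bvertA,\bvertB}$ for every neighbor $\bvertB$, so $\psi$ restricted to $\StructVA_\bvertA$ must act as a shift $\gadgetiso{\cdot}{\tup{a}_\bvertA}$ for some $\tup{a}_\bvertA \in \ZZ_{2^\pot}^{\neighbors{G}{\bvertA}}$ with $\sum \tup{a}_\bvertA = 0$, exactly as in the sketch of Lemma~\ref{lem:cfi-autgroup-abelian-two-group}.

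The main step is then a short calculation on cross-gadget edges. For $e = \set{\bvertA,\bvertB} \in E$, $\tup{a} \in \StructVA_\bvertA$, and $\tup{b} \in \StructVA_\bvertB$, after applying the shifts the quantity $\tup{a}(\bvertB) + \tup{b}(\bvertA)$ changes by exactly $\tup{a}_\bvertA(\bvertB) + \tup{a}_\bvertB(\bvertA)$, with no contribution from $g(e)$ entering the change. Membership in $\rel_{E,c}^g$ is therefore preserved for every $c$ if and only if $\tup{a}_\bvertA(\bvertB) + \tup{a}_\bvertB(\bvertA) = 0$ on each edge, a condition that involves only the base graph. Hence a collection of shifts is an automorphism of $\CFIgraph{2^\pot}{G}{g}$ if and only if it is an automorphism of $\CFIgraph{2^\pot}{G}{h}$, proving equality of the two permutation groups on $\StructVA$ and thus the corollary. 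I do not anticipate any serious obstacle; the only care required is in the gadget-rigidity step that rules out automorphisms which permute elements of a gadget in non-shift ways, which is a standard feature of CFI constructions and is handled cleanly by $\rel_I, \rel_C$, and $\spleq$.
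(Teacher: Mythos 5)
Your proof is correct, and you have identified something the paper leaves implicit: the unpebbled automorphism group $\autgrp{\CFIgraph{2^\pot}{G}{g}}$, as a subgroup of $\SymSetGroup{\StructVA}$, does not depend on $g$ at all. Your computation on cross-gadget edges is exactly right: since $\rel_{E,c} = \bigcup_{e} E_{e,c+g(e)}$ and an automorphism is gadget-preserving, preserving all $\rel_{E,c}$ is equivalent to preserving all $E_{e,b}$, and the change in $\tup{a}(\bvertB)+\tup{b}(\bvertA)$ under the shift is $\tup{a}_\bvertA(\bvertB)+\tup{a}_\bvertB(\bvertA)$, with $g(e)$ cancelling. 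Passing to the stabilizer of the (common) tuple $\pTupA$ then gives both equalities. The paper places this as an unproved corollary after the type-based orbit characterization (Corollary~\ref{cor:orbit-types-unique}), so the implicitly intended route is via types; your argument is more elementary and more transparent, reducing everything to a one-line group-theoretic observation. It also shows, correctly, that the hypothesis ``does not twist an edge contained in $\orig{\pTupA}$'' is superfluous for this particular corollary --- that hypothesis is genuinely needed only for the surrounding results about types agreeing across structures (e.g.\ Lemma~\ref{lem:cfi-occuring-orbit-types-equal}), not for equality of the orbit partitions or of the stabilizer groups. The one point you rightly flag as requiring care --- that gadget-internal automorphisms are forced to be shifts by $\rel_I$, $\rel_C$, and $\spleq$ --- is indeed standard and is the same rigidity fact the paper invokes in the sketch of Lemma~\ref{lem:cfi-autgroup-abelian-two-group}.
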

While the orbit partitions of $(\StructA_f,\pTupA)$
and $(\StructA_g,\pTupA)$ are equal,
it is in general not true that 
an orbit $\BlockA \in \orbs{k}{(\StructA_f,\pTupA)}$
has the same type in 
$(\StructA_f,\pTupA)$
and in $(\StructA_g,\pTupA)$.

\begin{lemma}
	\label{lem:cfi-occuring-orbit-types-equal}
	Suppose the functions $f,g \colon E \to \ZZ_{2^\pot}$
	do not twist $\orig{\pTupA}$.
	Then for every $k$-orbit
	$\BlockA \in \orbs{k}{(\StructA_f,\pTupA)}$
	there is a
	$\BlockB \in \orbs{k}{(\StructA_g,\pTupA)}$
	that has the same type.
\end{lemma}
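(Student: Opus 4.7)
Fix some $\tupA \in \BlockA$ and let $S := \orig{\pTupA\tupA}$. The plan is to produce a twist function $h' \colon E \to \ZZ_{2^\pot}$ that agrees with $g$ on every edge of $E(G[S])$ together with an isomorphism $\psi \colon \CFIgraph{2^\pot}{G}{h} \to \CFIgraph{2^\pot}{G}{h'}$ fixing $\pTupA$ pointwise. Then $\tupB := \inv{\psi}(\tupA)$ and its orbit $\BlockB$ in $(\CFIgraph{2^\pot}{G}{h},\pTupA)$ will do: the relations $\rel_{E,c}$ restricted to vertices with origin in $S$ depend only on the twist values on $E(G[S])$, and the remaining relations are twist-independent, so $\CFIgraph{2^\pot}{G}{h'}[S]$ and $\CFIgraph{2^\pot}{G}{g}[S]$ coincide as relational structures. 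Restricting $\psi$ therefore yields an isomorphism $(\CFIgraph{2^\pot}{G}{h}[S],\pTupA\tupB) \cong (\CFIgraph{2^\pot}{G}{g}[S],\pTupA\tupA)$, and because gadget shifts preserve origins, $\orig{\tupB} = \orig{\tupA}$. Hence $\tupB$ has the same type in $(\CFIgraph{2^\pot}{G}{h},\pTupA)$ as $\tupA$ has in $(\CFIgraph{2^\pot}{G}{g},\pTupA)$, and Corollary~\ref{cor:orbit-types-unique} attaches this common type to $\BlockB$.

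I would construct $\psi$ iteratively, eliminating one discrepancy between $g$ and $h$ on $E(G[S])$ at a time by composing path isomorphisms from Lemma~\ref{lem:pathiso}. Given an edge $e = \set{\bvertA,\bvertB} \in E(G[S])$ where the current twist function differs from $g$, the hypothesis that $g$ and $h$ agree on edges contained in $\orig{\pTupA}$ forces at least one endpoint outside $\orig{\pTupA}$; say $\bvertB$. I would then pick a simple path $\spath = (\bvertA_1,\dots,\bvertA_n)$ with $\bvertA_1 = \bvertA$, $\bvertA_2 = \bvertB$, all internal vertices in $V \setminus \orig{\pTupA}$, and $\bvertA_n \in V \setminus S$. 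Setting $c := g(e) - h(e)$, Lemma~\ref{lem:pathiso} then provides an isomorphism $\pathiso{c}{\spath}$ into a CFI structure whose twist function differs from the current one only on $e$, where the new value is $g(e)$, and on the terminal edge $\set{\bvertA_{n-1},\bvertA_n}$, which lies outside $E(G[S])$ and hence does not undo any previous correction. Because the path isomorphism is the identity on the two endpoint gadgets of $\spath$ and no internal gadget lies in $\orig{\pTupA}$, it fixes $\pTupA$. Composing such path isomorphisms over all discrepant edges of $E(G[S])$ produces the desired $\psi$.

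The main obstacle is producing a path $\spath$ with the prescribed properties, and this is where the $(k+m+1)$-connectivity of $G$ is used. Removing the at most $m+1$ vertices of $\orig{\pTupA} \cup \set{\bvertA}$ from $G$ leaves a subgraph that is still connected (since $m+1 \leq k+m$ whenever $k \geq 1$) and contains $\bvertB$. Moreover, $|V| \geq k+m+1$ while $|S| \leq k+m$, so at least one vertex of $V \setminus S$ survives in this punctured graph. Any simple path inside the punctured graph from $\bvertB$ to such a target vertex, prepended by $\bvertA$, yields the required $\spath$.
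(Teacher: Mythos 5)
Your proof is correct and follows essentially the same approach as the paper's sketch: push every discrepancy between $g$ and $h$ out of $E(G[\orig{\pTupA\tupA}])$ via path isomorphisms that fix $\pTupA$, so that the two structures coincide on the induced substructure over $\orig{\pTupA\tupA}$ and the type of $\tupA$ transports to $\tupB := \inv{\psi}(\tupA)$. The paper reduces to the case of a single twisted edge and composes implicitly, whereas you carry out that composition and the path-existence argument explicitly, which is somewhat more careful but not a genuinely different route.
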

\begin{proof}
	It suffices to consider the case that exactly one edge $e=\set{\bvertA,\bvertB}$ is twisted
	because isomorphisms preserve types
	and because no edge contained in $\orig{\pTupA}$ is twisted,
	all twists can be moved to a single edge using isomorphisms.
	
	Let $\BlockA \in \orbs{k}{(\StructA_f,\pTupA)}$.
	If $\set{\bvertA,\bvertB} \not\subseteq \orig{\BlockA}$,
	then $\BlockA$ has the same type in $(\StructA_f,\pTupA)$
	and in $(\StructA_g,\pTupA)$.
	Otherwise, let $\set{\bvertA,\bvertB} \subseteq \orig{\BlockA}$
	and assume w.l.o.g.~that $\bvertB \notin \orig{\pTupA}$
	(if $\set{\bvertA,\bvertB} \subseteq \orig{\pTupA}$,
	then $\set{\bvertA,\bvertB} \not\subseteq \orig{\BlockA}$
	because the twisted edge is not contained in $\orig{\pTupA}$).
	Furthermore, choose a path $\spath = (\bvertA,\bvertB,\dots,\bvertC)$,
	such that $\bvertC \notin \orig{\BlockA}$ and the path, possibly apart from $\bvertA$,
	is disjoint from $\orig{\pTupA}$.
	Such a path exists,
	because  $G \setminus \orig{\pTupA} \setminus \orig{\BlockA}$ is connected 
	(at most $m + k < m + k + 1$ many vertices are removed)
	and $\bvertB \notin \orig{\pTupA}$ by assumption.
	So we can pick some vertex~$\bvertC$ not contained in $\orig{\BlockA}$
	and in $\orig{\pTupA}$.
	Now, we move the twist to an edge incident to~$\bvertC$
	with the path isomorphism $\autoA := \pathiso{g(e)-f(e)}{\spath}$.
	Then~$\BlockA$ has the same type in $(\StructA_f,\pTupA)$
	as in $\autoA((\StructA_g,\pTupA)) = (\autoA(\StructA_g),\pTupA)$
	because $\StructA_f[\orig{\pTupA}\cup\orig{\BlockA}] = \autoA(\StructA_g)[\orig{\pTupA}\cup\orig{\BlockA}]$.
	Because isomorphisms preserve types,
	there is an orbit $\BlockB \in \orbs{k}{(\StructA_g,\pTupA)}$
	with the same type in $(\StructA_g,\pTupA)$ as~$\BlockA$ has in $(\StructA_f,\pTupA)$.
\end{proof}

\begin{lemma}
	\label{lem:cfi-orbit-auto-group-regular}
	Let $f \colon E \to \ZZ_{2^\pot}$ and
	$\BlockA \in \orbs{k}{(\StructA_f, \pTupA)}$.
	Then the permutation group~$\group$ on~$\BlockA$
	induced by $\autgrp{(\StructA_f, \pTupA)}$
	is a regular and abelian $2$-group.
\end{lemma}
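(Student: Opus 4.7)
The plan is to factor the statement through two standard ingredients: (i) $\autgrp{\StructA}$ is an abelian $2$-group by Lemma~\ref{lem:cfi-autgroup-abelian-two-group}, and (ii) a faithful transitive action of an abelian group is automatically free, hence regular. This lets me avoid any direct analysis of gadget automorphisms beyond what is already packaged in Lemma~\ref{lem:cfi-autgroup-abelian-two-group}.

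First I would observe that $\group$ is, by its very definition, the image of the restriction homomorphism $\autgrp{(\StructA, \pTupA)} \to \SymSetGroup{\BlockA}$ sending $\phi \mapsto \phi|_{\BlockA}$. Since $\autgrp{(\StructA, \pTupA)}$ is a subgroup of $\autgrp{\StructA}$, Lemma~\ref{lem:cfi-autgroup-abelian-two-group} forces it to be an abelian $2$-group, and both properties are preserved under homomorphic images, so $\group$ is also an abelian $2$-group. Transitivity of $\group$ on $\BlockA$ is immediate from the definition of a $k$-orbit: any two tuples in $\BlockA$ are linked by some automorphism of $(\StructA, \pTupA)$, whose restriction therefore witnesses transitivity in $\group$.

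For regularity, since by construction $\group \leq \SymSetGroup{\BlockA}$ and hence acts faithfully, it suffices to show that the stabilizer of some $\tupA \in \BlockA$ is trivial. Suppose $\sigma \in \group$ fixes $\tupA$. For any $\tupC \in \BlockA$, transitivity provides $\tau \in \group$ with $\tau(\tupA) = \tupC$, and commutativity of $\group$ yields $\sigma(\tupC) = \sigma\tau(\tupA) = \tau\sigma(\tupA) = \tau(\tupA) = \tupC$. Hence $\sigma$ fixes every point of $\BlockA$ and must be the identity in $\group$, so by orbit-stabilizer $|\group| = |\BlockA|$, which is precisely the definition of regularity. There is essentially no obstacle beyond invoking Lemma~\ref{lem:cfi-autgroup-abelian-two-group}; the main point of care is to carry out the argument inside $\group$, where faithfulness is built in, so that "$\sigma$ fixes $\BlockA$ pointwise" immediately gives $\sigma = 1$.
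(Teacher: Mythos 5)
Your proof is correct, but it takes a genuinely different (and arguably cleaner) route to regularity than the paper. The paper argues concretely at the level of CFI gadgets: it first verifies that a single gadget's automorphism group is regular by explicitly computing that the gadget has $(2^\pot)^{d-1}$ vertices and that its automorphism group $\setcond{\tup{a} \in \ZZ_{2^\pot}^d}{\sum\tup{a}=0}$ has the same cardinality; then for a $k$-orbit it invokes the CFI-specific fact (Lemma~3.13 of Grädel–Pakusa) that fixing a single vertex of a gadget shatters that gadget into singleton orbits, so that fixing $\tupA$ forces every $\phi$ in the stabilizer to act trivially on all of $\BlockA$. You instead deduce regularity from the purely group-theoretic fact that a faithful transitive action of an abelian group has trivial point stabilizers (via the commutation argument $\sigma(\tupC)=\sigma\tau(\tupA)=\tau\sigma(\tupA)=\tupC$), bootstrapping the abelian $2$-group property from Lemma~\ref{lem:cfi-autgroup-abelian-two-group} and the fact that both properties pass to subgroups and homomorphic images. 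Your argument buys greater generality and avoids any gadget-level bookkeeping; the paper's approach, by working through the concrete orbit structure, also surfaces the singleton-orbit fact about fixed gadget vertices, which it reuses elsewhere (e.g.\ in Lemma~\ref{lem:k-orbit-fix-vertex}). Both proofs ultimately rest on commutativity, but you exploit it abstractly while the paper exploits it structurally.
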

\begin{proof}
	We first argue that the automorphism group of a gadget is a regular abelian $2$\nobreakdash-group.
	Recall that the vertices of a gadget for the vertex $\tup{\bvertA} \in V$
	are defined as ${\StructV_\bvertA = \setcond{a \in \ZZ_{2^\pot}^{\neighbors{G}{\bvertA}}}{\sum a = 0}}$.
	So $|\StructV_\bvertA| = (2^\pot)^{d-1}$,
	where~$d$ is the degree of~$\bvertA$.
	We saw in Section~\ref{sec:cfi-isomorphisms}
	that the automorphism group of a gadget is transitive.
	We already argued that the automorphism group is isomorphic to
	$\setcond{\tup{a} \in \ZZ_{2^\pot}^d}{\sum{\tup{a}} = 0}$.
	Thus, the automorphism group is a $2$-group and has order $(2^\pot)^{d-1}$.
	Hence, it is a regular abelian $2$-group.
	
	The claim for $k$-orbits
	follows from the case of a gadget:~%
	$\group$ is a subgroup of the direct product of the automorphism groups
	of the gadgets of $\orig{\BlockA}$.
	That is,~$\group$ is an abelian $2$-group.
	By definition of a $k$-orbit,~$\group$ is transitive.
	For regularity, note that a gadget is partitioned into singleton orbits
	once one vertex of the gadget is fixed (cf.~Lemma~3.13 in~\cite{GradelPakusa19}).
	So if we fix a $\tupA \in \BlockA$,
	all gadgets in the origin of~$\tupA$ are fixed.
	So, if an automorphism~$\autoA$ maps~$\tupA$ to~$\tupB$,
	then its action on~$\BlockA$ is fixed,
	i.e., there is exactly one permutation in~$\group$
	that
	maps~$\tupA$ to~$\tupB$.
	Hence, $|\group| = |\BlockA|$ and~$\group$ is regular.
\end{proof}

\subsection{Composition of Orbits}
Composing $k$-orbits out of $k'$-orbits for $k' < k$ plays a special role later.
We further analyze the structure of $k$-orbits and identify cases
in which such a composition in possible.
As in the previous section,
let $\pot, k, m \in \nat$ and
$G=(V,E,\leq)$ be a $(k + m + 1)$-connected base graph,
denote for $f \colon E \to \ZZ_{2^q}$
by $\StructA_f$ the CFI structure $\CFIgraph{\ZZ_{2^\pot}}{G}{f}$
with universe~$\StructV$,
and let $\pTupA \in \StructVA ^ m$.

Let $\tupA \in \StructVA^k$ and
$\orig{\tupA}$ (viewed as a set)
be partitioned into~$M$ and~$N$.
We now introduce notation for splitting~$\tupA$
into its parts belonging to~$N$ and~$M$
and for recovering $\tupA$ from these two parts again.
\begin{enumerate}
\item The tuple $\tupA_{N}$ obtained from~$\tupA$
by deleting all entries whose origin is not in~$N$ (respectively for~$M$), is \[\tupA_{N} := \tupA_{\setcond{i \in [k]} {\orig{\vertA_i} \in N}}.\]
\item We define  a concatenation operation for a permutation~$\perm$ of $[k]$ as follows:
\[\tupA_{N} \cdot_\perm \tupA_{M} := \perm(\tupA_{N} \tupA_{M}).\]
For a suitable~$\perm$ we have
$\tupA = \tupA_{N} \cdot_\perm \tupA_{M}$.
In this article we are only interested in permutations
satisfying the former equation.
Then~$\perm$ is almost always fixed by the context and we use juxtaposition
$\tupA_{N}\tupA_{M}$.
\textit{It is never the case that we refer with $\tupA_{N}\tupA_{M}$
to ordinary concatenation.}
\item We define similar operations for orbits:
For $\BlockA \in \orbs{k}{(\StructA,\pTupA)}$
we set
\begin{align*}
	\restrictVect{\BlockA}{N} &:=\setcond[\big]{\tupA_{N}}{\tupA\in \BlockA},\\
	\restrictVect{\BlockA}{N} \times_\perm \restrictVect{\BlockA}{M} &:= \setcond[\big]{\tupA_N \cdot_\perm \tupA_M}{\tupA_N \in \restrictVect{\BlockA}{N}, \tupA_M \in \restrictVect{\BlockA}{M}},
\end{align*}
and leave out~$\perm$ if clear from the context.
This intentionally overloads notation. Because the tuples in~$\BlockA$
	are indexed by $[k]$,
	$\restrictVect{\BlockA}{N}$ and $\restrictVect{\BlockA}{K}$
	for  $N \subseteq \orig{\BlockA}\subseteq V$ and $K \subseteq [k]$
	can always be distinguished.
\end{enumerate}
We also use this notation 
if~$N$ and~$M$ are sets of sets,
such that $\orig{\tupA}$ is partitioned into $\bigcup N$ and $\bigcup{M}$.

\begin{definition}[Components of Tuples and Orbits]
	\label{def:component}
	Let $f\colon E \to \ZZ_{2^\pot}$,
	$\tupA \in \StructVA ^k$,
	and $N \subseteq \orig{\tupA}$.
	We call~$N$ a \defining{component} of~$\tupA$
	if~$N$ is a connected component of $G[\orig{\tupA}]$.
	We call~$\tupA$ \defining{disconnected}
	if it has more than one component.
	
	Likewise, a $k$-orbit $\BlockA \in \orbs{k}{(\StructA_f, \pTupA)}$ is disconnected
	if~$\BlockA$ contains some (and thus only) disconnected tuples.
	A set $N \subseteq \orig{\BlockA}$  is a component of~$\BlockA$
	if~$N$ is a connected component of $G[\orig{\BlockA}]$.
\end{definition}
If a $k$-orbit~$\BlockA$ is disconnected,
then we can split~$\BlockA$ into multiple $k'$-orbits for $k' < k$ as follows.
\begin{lemma}
	\label{lem:split-discon-orbits}
	Let $f\colon E \to \ZZ_{2^\pot}$, $\BlockA \in \orbs{k}{(\StructA_f, \pTupA)}$,
	and the components of~$\BlockA$ be partitioned into~$M$ and~$N$.
	Then $\BlockA =  \restrictVect{\BlockA}{M} \times \restrictVect{\BlockA}{N}$,
	$\restrictVect{\BlockA}{M} \in \orbs{k_M}{(\StructA_f, \pTupA)}$,
	and $\restrictVect{\BlockA}{N} \in \orbs{k_N}{(\StructA_f, \pTupA)}$
	for suitable $k_M$ and $k_N$ such that $k_M + k_N = k$.
\end{lemma}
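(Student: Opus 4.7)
The plan is to establish two facts in turn: (a) $\restrictVect{\BlockA}{M}$ and $\restrictVect{\BlockA}{N}$ are single orbits, of arities $k_m$ and $k_n$ summing to $k$, in $(\StructA, \pTupA)$; and (b) $\BlockA = \restrictVect{\BlockA}{M} \times \restrictVect{\BlockA}{N}$. For (a) I exploit the total order on $G$: the relation $\spleq$ of $\StructA$ is a total preorder whose equivalence classes are exactly the gadgets, so every automorphism in $\autgrp{(\StructA, \pTupA)}$ preserves origins entry-wise. All tuples in $\BlockA$ therefore share the same origin, and the projections $\tupA \mapsto \restrictVect{\tupA}{M}$ and $\tupA \mapsto \restrictVect{\tupA}{N}$ are well-defined $\autgrp{(\StructA, \pTupA)}$-equivariant maps whose images are single orbits; the arities $k_m, k_n$ are the numbers of indices of $[k]$ whose origin lies in $M$ respectively $N$, and they sum to $k$. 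The inclusion $\BlockA \subseteq \restrictVect{\BlockA}{M} \times \restrictVect{\BlockA}{N}$ is then immediate from the decomposition $\tupA = \restrictVect{\tupA}{M} \cdot \restrictVect{\tupA}{N}$.

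For the reverse inclusion, I fix $\tupA, \tupB \in \BlockA$ and aim to prove $\restrictVect{\tupA}{M} \cdot \restrictVect{\tupB}{N} \in \BlockA$. I start with an automorphism $\phi \in \autgrp{(\StructA, \pTupA)}$ satisfying $\phi(\tupA) = \tupB$. The strategy is to apply the orbit-type characterization (the lemma immediately preceding Corollary~\ref{cor:orbit-types-unique}) to the augmented pebbled structure $(\StructA, \pTupA\restrictVect{\tupA}{M})$, showing that $\restrictVect{\tupA}{N}$ and $\phi(\restrictVect{\tupA}{N}) = \restrictVect{\tupB}{N}$ have the same type there. This yields an automorphism $\psi \in \autgrp{(\StructA, \pTupA\restrictVect{\tupA}{M})}$ sending $\restrictVect{\tupA}{N}$ to $\restrictVect{\tupB}{N}$; since $\psi$ automatically fixes $\pTupA$ and $\restrictVect{\tupA}{M}$ pointwise, $\psi(\tupA) = \restrictVect{\tupA}{M} \cdot \restrictVect{\tupB}{N}$, which lies in $\BlockA$ as required. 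The connectivity hypothesis of the orbit-type lemma at this level, $k_n + (m + k_m) + 1 = k + m + 1$, matches the standing assumption on $G$.

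The main obstacle is to justify the type equality of $\restrictVect{\tupA}{N}$ and $\restrictVect{\tupB}{N}$ in $(\StructA, \pTupA\restrictVect{\tupA}{M})$. Both tuples share the origin $N$ and hence induce the same substructure $\StructA[W]$ for $W := \orig{\pTupA} \cup M \cup N$, so it suffices to construct an isomorphism $\rho \colon (\StructA[W], \pTupA\restrictVect{\tupA}{M}\restrictVect{\tupA}{N}) \to (\StructA[W], \pTupA\restrictVect{\tupA}{M}\restrictVect{\tupB}{N})$. I take $\rho$ to be the automorphism determined by the truncated gadget-twist data $\tup{c}^\rho_\bvertA := \tup{c}^\phi_\bvertA$ for $\bvertA \in N$ and $\tup{c}^\rho_\bvertA := 0$ for $\bvertA \in W \setminus N$; by construction $\rho$ fixes $\pTupA$ and $\restrictVect{\tupA}{M}$ pointwise and maps $\restrictVect{\tupA}{N}$ to $\restrictVect{\tupB}{N}$. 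The key check is that $\rho$ respects the edge constraints within $\StructA[W]$: no edge of $G$ connects $M$ with $N$, since such an edge would lie in $G[\orig{\BlockA}]$ and contradict the separation of $M$ and $N$ into distinct components, making the $M$-$N$ constraints vacuous; edges between $N$ and $\orig{\pTupA}$ require $\tup{c}^\phi_\bvertA(\bvertB) = 0$ for $\bvertA \in N$, $\bvertB \in \orig{\pTupA}$, which follows from the edge anti-symmetry of $\phi$ together with $\tup{c}^\phi_\bvertB = 0$ on $\orig{\pTupA}$; remaining edges are either within $N$ (inherited from $\phi$) or among zero-twist gadgets (trivial), and the gadget sum conditions are immediate. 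This produces the desired $\rho$ and completes the proof.
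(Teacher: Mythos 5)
Your proof is correct and rests on the same observation as the paper's one‑sentence sketch, namely that the type of $\tupA$ decomposes across the disjoint components in $M$ and $N$. The paper simply asserts that the type of $\tupA$ is the "disjoint union" of the types of $\restrictVect{\tupA}{M}$ and $\restrictVect{\tupA}{N}$ and stops there; you make this precise by routing the argument through the augmented pebbled structure $(\StructA,\pTupA\restrictVect{\tupA}{M})$, checking the required connectivity bound $k_n+(m+k_m)+1=k+m+1$, and constructing the isomorphism $\rho$ explicitly from the gadget-twist data of $\phi$ (zeroed outside $N$), including the edge-consistency check across $N$--$\orig{\pTupA}$ edges via the anti-symmetry of $\phi$'s twist data and the fact that $\phi$ fixes $\pTupA$ pointwise. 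This is a more detailed and self-contained verification of exactly the fact the paper takes for granted; it buys rigor at the cost of length.
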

\begin{proof}
	This can easily be seen by Corollary~\ref{cor:orbit-types-unique}.
	Because~$M$ and~$N$ are sets of components,
	the type of $\tupA \in \BlockA$
	is given by the disjoint union of the types of~$\tupA_M$
	and~$\tupA_N$ (even if $\orig{\pTupA}$ overlaps with~$M$ and~$N$ because~$\pTupA$ has to be fixed by every automorphism).
\end{proof}

Next, we show how to obtain $k'$-orbits from $k$-orbits with $k'<k$ 
by fixing a vertex.
\begin{lemma}
	\label{lem:k-orbit-fix-vertex}
	Let $f \colon E\to\ZZ_{2^\pot}$,
	$\BlockA \in \orbs{k}{(\StructA_f, \pTupA)}$,
	$K \subseteq [k]$,
	and $\orig{\restrictVect{\BlockA}{K}} = \set{\bvertC}$.
	For every $\tupB \in \StructVA^{|K|}$ and $\vertC \in \StructVA$
	such that $\orig{\tupB} = \set{\bvertC}$ and $\orig{\vertC} = \bvertC$,
	the set
	\begin{align*}
		\BlockB &:=\setcond*{\restrictVect{\tupA}{[k]\setminus K}}{\tupA \in \BlockA, \restrictVect{\tupA}{K} =\tupB } \text{ satisfies }\\
		\BlockB &\in \orbs{k-|K|}{(\StructA_f,\pTupA\vertC)} \cup  \set{\emptyset}.
	\end{align*}
	If~$\tupB$ has the same type as  $\restrictVect{\tupA}{K}$ for some (and thus every) $\tupA \in \BlockA$,
	then $\BlockB \neq \emptyset$.
\end{lemma}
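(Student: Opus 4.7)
The plan is to establish both directions of the orbit identity by leveraging the regularity of the gadget automorphism group from Lemma~\ref{lem:cfi-orbit-auto-group-regular}: an automorphism in $\autgrp{(\StructA, \pTupA)}$ that fixes any single vertex of $\StructV_\bvertC$ must act as the identity on the entire gadget at $\bvertC$. Since $\orig{\restrictVect{\BlockA}{K}} = \set{\bvertC}$ forces $K$ to be nonempty, $\tupB$ always contains at least one vertex of $\StructV_\bvertC$. The case $\BlockB = \emptyset$ is immediate, so I would dispose of it at the start and assume $\BlockB$ is nonempty throughout.

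For the forward direction, I would take $\tupC, \tupC' \in \BlockB$ and let $\tupA, \tupA' \in \BlockA$ be the (unique) $k$-tuples whose $K$-part is $\tupB$ and whose $[k]\setminus K$-part is $\tupC$ and $\tupC'$, respectively. Since $\BlockA$ is a $k$-orbit of $(\StructA, \pTupA)$, there is a $\phi \in \autgrp{(\StructA, \pTupA)}$ with $\phi(\tupA) = \tupA'$, so $\phi$ fixes every entry of $\tupB$. Because $\spleq$ distinguishes gadgets and $G$ has only the trivial automorphism, $\phi$ preserves $\StructV_\bvertC$ setwise and its restriction is an element of the regular gadget automorphism group; having at least one fixed point in $\StructV_\bvertC$, this restriction must be the identity. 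In particular $\phi(\vertC) = \vertC$, so $\phi \in \autgrp{(\StructA, \pTupA\vertC)}$ witnesses that $\tupC$ and $\tupC'$ lie in the same $(k - |K|)$-orbit of $(\StructA, \pTupA\vertC)$.

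For the converse, let $\psi \in \autgrp{(\StructA, \pTupA\vertC)}$ and $\tupC \in \BlockB$, with corresponding $\tupA \in \BlockA$. Since $\psi$ also lies in $\autgrp{(\StructA, \pTupA)}$, we have $\psi(\tupA) \in \BlockA$. Moreover $\psi$ fixes $\vertC \in \StructV_\bvertC$, so the same regular-action argument forces $\psi$ to act trivially on $\StructV_\bvertC$; hence $\psi$ fixes $\tupB$ pointwise, i.e.\ $\psi(\tupA)_K = \tupB$, and therefore $\psi(\tupC) = \psi(\tupA)_{[k]\setminus K} \in \BlockB$. Together with the forward direction this yields $\BlockB \in \orbs{k - |K|}{(\StructA, \pTupA\vertC)}$.

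The only real point requiring care, and the main (modest) obstacle, is transferring $\phi$ from an automorphism of $(\StructA, \pTupA)$ to one of $(\StructA, \pTupA\vertC)$. Everything hinges on correctly invoking the regular action of the gadget group from Lemma~\ref{lem:cfi-orbit-auto-group-regular}, which needs both that $\phi$ stabilizes $\StructV_\bvertC$ setwise (ensured by the preorder $\spleq$ and the rigidity of $G$) and that $\phi$ has a fixed point in it (supplied by the nonemptiness of $\tupB$).
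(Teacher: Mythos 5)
Your proposal is correct and takes essentially the same approach as the paper: both proofs hinge on the fact that any automorphism of $(\StructA, \pTupA)$ which fixes a single vertex of the gadget $\StructV_\bvertC$ must act as the identity on the whole gadget, so $\autgrp{(\StructA,\pTupA\tupB)} = \autgrp{(\StructA,\pTupA\vertC)}$. The paper derives this by observing that fixing $\vertC$ puts every vertex of origin $\bvertC$ into a singleton orbit (via distances in the $C_{\bvertA,\bvertB}$ relation, citing an external lemma), while you derive it from the regularity of the gadget group action; one small point worth making explicit is that Lemma~\ref{lem:cfi-orbit-auto-group-regular} is stated for the induced action on an orbit, whereas you really want that the induced action on the full gadget $\StructV_\bvertC$ is free, which follows because it is a subgroup of the full gadget automorphism group acting regularly on $\StructV_\bvertC$.
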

\begin{proof}
	We assume w.l.o.g.~up to reordering that $K = [|K|]$.
	Let $\tupB\in \StructVA^{|K|}$ such that $\orig{\tupB} = \set{\bvertC}$.
	Every vertex~$\vertB_i$ forms a singleton orbit in $\orbs{1}{(\StructA_f, \vertC)}$
	and in particular in $\orbs{1}{(\StructA_f, \pTupA\vertC)}$
	because~$\vertB_i$ and~$\vertC$ have the same origin~$\bvertC$
	(all vertices with origin~$\bvertC$ can be distinguished by their distances to~$\vertC$ in the~$C_{\vertA,\vertB}$ relation, cf.~Lemma~3.13 in~\cite{GradelPakusa19}).
	So it holds that
	$\autgrp{(\StructA_f,\pTupA\tupB)}= \autgrp{(\StructA_f,\pTupA\vertC)}$.
	Assume that $\BlockB \neq \emptyset$.
	Because~$\BlockA$ is an orbit,
	if $\tupB\tupA, \tupB\tupA' \in \BlockA$,
	then there is an automorphism $\autoA \in \autgrp{(\StructA_f, \pTupA)}$
	such that $\autoA(\tupB\tupA) = \tupB\tupA'$.
	That is, $\autoA \in \autgrp{(\StructA_f, \pTupA\tupB)} = \autgrp{(\StructA_f, \pTupA\vertC)}$ and thus~$\BlockB$ is a subset of an orbit in $\orbs{k-|K|}{(\StructA_f,\pTupA\vertC)}$.
	To show that~$\BlockB$ is indeed an orbit,
	assume that $\tupA \in \BlockB$ and $\autoA \in \autgrp{(\StructA_f,\pTupA\vertC)} = \autgrp{(\StructA_f,\pTupA\tupB)}$.
	Because $\tupA \in \BlockB$, $\tupB\tupA \in \BlockA$
	and $\autoA(\tupB\tupA) = \tupB\autoA(\tupA)  \in \BlockA$.
	Hence, $\autoA(\tupA) \in \BlockB$
	and so $\BlockB \in \orbs{k-|K|}{(\StructA_f,\pTupA\vertC)}$.
	
	Now assume that there is some $\tupA \in \BlockA$
	such that  $\restrictVect{\tupA}{K}$ has the same type as~$\tupB$.
	That is, there is an automorphism $\autoA \in \autgrp{(\StructA_f,\pTupA)}$
	such that $\restrictVect{\autoA(\tupA)}{K} = \tupB$ (Lemma~\ref{lem:cfi-same-type-automorphism}).
	Hence, $\autoA(\tupA) \in \BlockA$ 
	and $\restrictVect{\autoA(\tupA)}{[k]\setminus K} \in \BlockB$.
\end{proof}
Note that~$\BlockB$ is independent of~$\vertC$,
but not the type of~$\BlockB$ in $(\StructA_f,\pTupA\vertC)$.

\begin{corollary}
	\label{cor:k-orbit-fix-vertex}
	Let $f \colon E \to\ZZ_{2^\pot}$,
	$\BlockA \in \orbs{k}{(\StructA_f, \pTupA)}$,
	$i \in [k]$,
	$\orig{\restrictVect{\BlockA}{\set{i}}} = \set{\bvertC}$,
	and let $\distance{G}{\bvertC}{\orig{\pTupA}} > 1$.
	For every $\vertB, \vertC \in \StructVA$ 
	such that $\orig{\vertB} = \orig{\vertC} = \bvertC$
	it holds that $\setcond{\restrictVect{\tupA}{[k]\setminus \set{i}}}{\tupA \in \BlockA, \vertA_i =\vertB } \in \orbs{k-1}{(\StructA_f,\pTupA\vertC)}$.
\end{corollary}
\begin{proof}
	We apply Lemma~\ref{lem:k-orbit-fix-vertex}:
	Because $\distance{G}{\bvertC}{\orig{\pTupA}} > 1$,
	the type of~$\vertC$ is the same as the type of every~$\vertB$
	with origin~$\bvertC$,
	in particular the same as~$\vertB_i$ for every $\tupB \in \BlockA$.
\end{proof}

\subsection{Rank Logic on CFI Structures}
In this section we refine a result of~\cite{GradelPakusa19}
and show that on CFI structures over~$\ZZ_{2^\pot}$
the uniform rank logic \IFPR{} has the same expressiveness as 
the rank logic \IFPRP{\set{2}} only with rank operators over $\FF_2$.

\begin{definition}
	\label{def:cfi-graph-class}
	For a class of base graphs $\GraphClass$,
	\[\CFITwoClass{\GraphClass} := \setcond[\big]{\CFIgraph{2^\pot}{G}{f}}{\pot\in \nat, G=(V,E,\leq) \in \GraphClass, f \colon E \to \ZZ_{2^\pot}}\]
	is the class of all CFI structures over $\GraphClass$.
\end{definition}

\begin{lemma}
	\label{lem:IFPR-to-IFPR2}
	Let~$\GraphClass$ be a class of base graphs.
	For every \IFPR{}-formula~$\formA$
	there is an $\IFPRP{\set{2}}$-formula~$\formB$
	that is equivalent to~$\formA$ on $\CFITwoClass{\GraphClass}$.
\end{lemma}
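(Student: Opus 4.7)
The proof proceeds by structural induction on the $\IFPR$ formula (or term) $\varphi$. All atomic formulas, Boolean connectives, quantifiers, counting terms, fixed-point operators and arithmetic constructors of $\IFPC$ translate verbatim into $\IFPRP{\{2\}}$, so the only nontrivial case is a subterm of the form $\rk(\uniVarVA,\uniVarVB).(\termA,\termNCB)$. By the induction hypothesis there are $\IFPRP{\{2\}}$ counterparts $\termA'$ and $\termNCB'$ of $\termA$ and $\termNCB$, respectively, and it suffices to produce an $\IFPRP{\{2\}}$ numeric term computing the rank of $\definedMatrix{\Struct}{\termA'}$ over $\FF_p$ for $p := \eval{\Struct}{\termNCB'}$, whenever $\Struct\in\CFITwoClass{\GraphClass}$.

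The key point is to perform an $\IFPC$ case distinction on the definable number $p$, which is polynomially bounded by the bound in the rank operator. If $p$ is not prime, the rank is $0$ by definition. If $p=2$, we can replace $\rk$ by $\rk_2$ directly. The remaining case, where $p$ is an odd prime, is the essential one. Here we use that $\Struct = \CFIgraph{2^\pot}{G}{g}$ has an abelian $2$-group as its automorphism group $\group$ (Lemma~\ref{lem:cfi-autgroup-abelian-two-group}) and that $\gcd(|\group|,p)=1$. The definable matrix $\definedMatrix{\Struct}{\termA'}$ is $\group$-invariant (its entries are constant on $2k$-orbits), and by the same reasoning as in~\cite{GradelPakusa15,GradelPakusa19}, on structures whose automorphism group is a $2$-group the $\FF_p$-rank of any $\IFPC$-definable matrix for odd $p$ is already $\IFPC$-definable (hence $\IFPRP{\{2\}}$-definable). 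The idea behind this rank computation is the standard Maschke-type decomposition: because $p\nmid|\group|$, averaging over $\group$ provides a canonical $\group$-invariant decomposition of $\FF_p^{\StructV^k}$ whose blocks are in bijection with the $k$-orbits, and rank decomposes along this sum. The orbit partition and the induced block structure are $\IFPC$-definable (this is where we use that $\group$ is abelian so that the orbit structure is well-behaved and algebraic operations on orbit representatives can be simulated by counting), so the rank is computed as a sum over a polynomial number of small ranks, all expressible by means available in $\IFPC$.

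The main obstacle is the uniformity in $p$: unlike the setting of~\cite{GradelPakusa15}, where the offending prime $q\neq 2$ is fixed by the operator $\rk_q$, here $p$ is determined by the structure via the term $\termNCB'$ and is a priori unbounded. This is handled by treating $p$ as a numeric parameter throughout the $\IFPC$ construction: since rank is bounded by $|\StructV|^k$, a single $\IFPC$ term using $p$ in arithmetic modulo $p$ (which is available through $\countingExt{\sig}$-arithmetic bounded by the cardinality of the structure once we cap $p$ at, say, $|\StructV|^k+1$) suffices. Gluing the three branches of the case distinction together yields the required $\IFPRP{\{2\}}$ term, and propagating the translation back through the induction produces the formula $\formB$.
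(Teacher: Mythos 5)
Your plan is essentially the approach the paper takes: induction on the formula, a definable case distinction on the prime~$p$ given by~$\termNCB$, keeping $\rk_2$ when $p=2$, and using the Grädel–Pakusa orbit-based argument (which relies precisely on the automorphism group being an abelian $2$-group and on $\gcd(|\group|,p)=1$ for odd $p$) to eliminate the rank operator for odd primes. You also correctly identify the uniformity-in-$p$ issue and that it is handled by treating $p$ as a numeric parameter bounded by the closed term in the rank operator.

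Two points where your write-up papers over what the paper must spell out. First, the paper does not directly translate $\rk_p$ for odd $p$ into $\IFPC$: it goes through solvability logic as an intermediate, using the Grädel–Pakusa equivalence $\IFPRP{\Omega}=\IFPSP{\Omega}$ on $\CFITwoClass{\GraphClass}$ (rank is reduced to solvability, then $\slv_p$ for odd $p$ is shown $\IFPC$-definable). Your sketch skips this detour and cites the end product; that is fine as a black box, but it is the content of the cited lemma, not something that falls out of a one-line argument. Second, your ``Maschke averaging provides a decomposition whose blocks are in bijection with the $k$-orbits'' is not how the rank/solvability computation is actually done — the orbit permutation modules are not irreducible, and one cannot just sum ranks of blocks. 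The Grädel–Pakusa technique instead constructs an $\IFPC$-definable matrix $E$ (depending only on the orbit partition) such that $Mx=\mathbb{1}$ is solvable iff $MEx=\mathbb{1}$ is, and $ME$ has a totally ordered column set so Gaussian elimination can be simulated in $\IFPC$; Maschke/coprimality is the reason this works, but the mechanism is an ordering trick, not a direct sum decomposition. Neither issue is fatal given that you invoke the reference for the hard step, but a complete proof needs to (i) state the reduction to solvability (or verify the direct rank reduction from the cited results), and (ii) check — as the paper does — that the Grädel–Pakusa arguments, which were formulated for CFI over $\FF_2$, apply verbatim to the present $\ZZ_{2^\pot}$ construction, since they only use that the automorphism group is an abelian $2$-group and that $k$-orbits are $\IFPC$-definable and orderable.
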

\begin{proof}
	Let \defining{solvability logic} $\IFPS$ be the extension by $\IFPC$
	by the uniform solvability quantifier~$\slv$~\cite{GradelPakusa19}.
	If~$\termA(\uniVarVA, \uniVarVB)$ is a numeric term
	and~$\termB$ is a closed numeric term, then
	\[\slv (\uniVarVA, \uniVarVB).\qspace(\termA,\termB)\]
	is a formula.
	Similar to the rank operator~$\rk$, the numeric term~$\termB$ defines a number~$p$.
	If~$p$ is prime, then the solvability quantifier is satisfied if the linear system  $\definedMatrix{\Struct}{\termA}x=1$ is solvable over~$\FF_p$.
	If otherwise~$p$ is not prime, then the operator is not satisfied.
	Let $\IFPSP{\Omega}$ be the extension of $\IFPC$ with
	solvability quantifiers~$\slv_p$ for each fixed field~$\FF_p$ with $p \in \Omega$ similar.
	We again left out parameters for readability.
	
	Grädel and Pakusa~\cite{GradelPakusa19} give a translation of
	$\IFPRP{\Omega}$-formulas to $\IFPC$-formulas
	equivalent on CFI structures over~$\FF_2$ for every set of primes~$\Omega$ satisfying $2 \notin \Omega$.
	The crucial point in their proofs is that the automorphism groups of these CFI structures are abelian $2$-groups
	and that their $k$-orbits can be defined and ordered in \IFPC{},
	that is, there is an \IFPC{}-definable total preorder on all $k$\nobreakdash-tuples
	whose equivalence classes coincide with the $k$-orbits
	(their construction is not specific to~$\FF_2$ but generally for~$\FF_p$ whenever $p \notin \Omega$ and~$p$ is the characteristic of the CFI structures).
	These assumptions are made explicit in Section~3.2 in~\cite{GradelPakusa19}.
	Hence, the arguments work for CFI structures over $\ZZ_{2^\pot}$
	instead of~$\FF_2$, too.
	In~\cite{GradelPakusa19}, the authors use solvability logic as an intermediate step
	and first show that for all sets of primes~$\Omega$
	(even with $2 \in \Omega$) it holds that
	$\IFPRP{\Omega} = \IFPSP{\Omega}$ on $\CFITwoClass{\GraphClass}$ (Lemma~3.7 in~\cite{GradelPakusa19}).
	This reduction works as well for the uniform case and shows
	$\IFPR = \IFPS$ on $\CFITwoClass{\GraphClass}$.
	
	The second step in~\cite{GradelPakusa19} is a recursive translation
	of $\IFPRP{\Omega}$-formulas to $\IFPC$-formulas if $2 \notin \Omega$ (Lemmas~3.4 to~3.6 in~\cite{GradelPakusa19}).
	For every \IFPC{}-term~$\termA$ the solvability quantifier $\formB = \slv_p (\uniVarVA, \uniVarVB).\qspace\termA$
	over~$\FF_p$ can be simulated in \IFPC{}
	by computing the rank of the matrix $M := \definedMatrix{\Struct}{\termA}$ orbit-wise.
	This is expressible in \IFPC{} because the automorphism group is a $2$-group and $p \neq 2$. This process works as follows:
	There is an \IFPC{}-formula
	that for every prime~$p$ and every term~$\termA$ exploits the orbits of the structure
	to define a matrix~$E$
	such that $Mx = 1$ is solvable if and only if $(M \cdot E)x = 1$ is solvable
	(Lemma~3.6 in~\cite{GradelPakusa19}).
	Now,~$E$ is defined such that the columns of $M \cdot E$ are totally ordered
	and thus the solution can be obtained in $\IFPC$.
	
	Now, we translate an $\IFPSP{\Omega}$-formula (respectively term) with $2 \in \Omega$
	recursively into an $\IFPSP{\set{2}}$-formula (respectively term).
	Again consider a solvability quantifier $\formB = \slv_p (\uniVarVA, \uniVarVB).\qspace\termA$.
	If $p = 2$, then we recurse on~$\termA$ but do not replace the solvability quantifier.
	If otherwise $p \neq 2$,
	then we recurse on~$\termA$
	and obtain an $\IFPSP{\set{2}}$-term equivalent to~$\termA$,
	define the matrix~$E$ with the $\IFPC{}$-formula from above,
	and construct a formula defining whether 
	$M \cdot E = 1$ is solvable.
	Because this check can be done in $\IFPC$
	and~$M$ is defined by an $\IFPSP{\set{2}}$-term,
	we obtain an $\IFPSP{\set{2}}$-formula equivalent to~$\formB$.
	
	We finally deal with the case of an $\IFPS$ formula,
	where the prime is defined by a numeric term~$\termB$.
	Checking an ordered equation system for solvability
	is $\IFPC$-definable when the prime is given by a term, too.
	Let $\formB = \slv (\uniVarVA, \uniVarVB).~(\termA,\termB)$ be a uniform solvability quantifier.
	Let $\formB_2$ be the formula obtained for $\slv_2 (\uniVarVA, \uniVarVB).~\termA$ in the former case
	and $\formB_{\neq2}$ be the formula for the case $p \neq 2$,
	where we already use~$\termB$ to obtain the prime.
	Indeed, $\formB_{\neq2}$ is independent of~$p$
	because defining the matrix~$E$ is independent of~$p$ and
	checking the linear equation system for consistency is already done using the prime-defining term~$\termB$.
	Then the uniform solvability quantifier~$\formB$ is
	equivalent to the $\IFPSP{\set{2}}$-formula
	$(\termB = 2 \rightarrow \formB_2)  \land  (\termB \neq 2 \rightarrow \formB_{\neq2})$.
	Obviously, an $\IFPSP{\set{2}}$-formula can be translated
	back into an  $\IFPRP{\set{2}}$-formula.
\end{proof}

\section{Matrices over CFI Structures}
\label{sec:matrices-cfi}
In the invertible-map game,
Duplicator has to partition the $2k$\nobreakdash-tuples of CFI structures
and to provide a similarity matrix.
For our arguments, we want that Duplicator plays with the
$2k$-orbit partitions.
To construct the required similarity matrices,
we develop a criterion for invertibility
of matrices over~$\FF_2$
and show that this criterion is preserved by matrix multiplication.

Let $\pot, k, m \in \nat$ and 
$G=(V,E,\leq)$ be a $(k + m + 1)$-connected base graph.
The connectivity is needed to apply the lemmas of Section~\ref{sec:cfi-orbits}.
Again, we denote for a function $f \colon E \to \ZZ_{2^q}$
by~$\StructA_f$ the CFI structure $\CFIgraph{\ZZ_{2^\pot}}{G}{f}$
with universe~$\StructVA$ (which is equal for every $f \colon E \to \ZZ_{2^q}$).
Let $\pTupA \in \StructVA^m$ be arbitrary but fixed in this section.

\begin{definition}[Blurring the Twist]
	\label{def:blur-twist}
	For $f,g \colon E \to \ZZ_{2^\pot}$ not twisting $\orig{\pTupA}$,
	an $\StructVA^k \times \StructVA^k$ matrix~$S$ over~$\FF_2$
	\defining{$k$-blurs the twist} between $(\StructA_f,\pTupA)$ and $(\StructA_g,\pTupA)$ if~$S$ is invertible and
	$\charMat{\BlockA} \cdot S = S \cdot \charMat{\BlockB}$
	for every $\BlockA \in \orbs{2k}{(\StructA_f,\pTupA)}$ and
	$\BlockB \in \orbs{2k}{(\StructA_g,\pTupA)}$
	that are of the same type.
\end{definition}
Note that by Corollary~\ref{cor:orbit-types-unique}
two different orbits have different types
and that by Lemma~\ref{lem:cfi-occuring-orbit-types-equal}
for each $\BlockA \in \orbs{2k}{(\StructA_f,\pTupA)}$
there is a $\BlockB \in \orbs{2k}{(\StructA_g,\pTupA)}$
of the same type.
So we indeed get a bijection between the orbits
and Duplicator can use the matrix~$S$ in the invertible-map game.
Because~$S$ is invertible,
$\charMat{\BlockA} \cdot S = S \cdot \charMat{\BlockB}$
is equivalent to
$\charMat{\BlockA} = S \cdot \charMat{\BlockB} \cdot \inv{S}$.
Showing the former has the benefit that we do not need the inverse~$\inv{S}$.

\begin{lemma}\label{lem:blur-twist-mult}
	Let $f,g,h\colon E \to \ZZ_{2^\pot}$ pairwise not twist $\orig{\pTupA}$
	and $S,T$ be $\StructVA^k \times \StructVA^k$ matrices over~$\FF_2$.
	If~$S$ blurs the twist between
	$(\StructA_f, \pTupA)$ and  $(\StructA_g, \pTupA)$
	and~$T$ blurs the twist between
	$(\StructA_g, \pTupA)$ and  $(\StructA_h, \pTupA)$,
	then $S \cdot T$ blurs the twist
	between $(\StructA_f, \pTupA)$ and $(\StructA_h, \pTupA)$.
\end{lemma}
\begin{proof}
	Let $\BlockA \in \orbs{2k}{(\StructA_f,\pTupA)}$, $\BlockB \in \orbs{2k}{(\StructA_g,\pTupA)}$, and $\BlockC \in \orbs{2k}{(\StructA_h,\pTupA)}$ be of the same type.
	Recall that given~$\BlockA$, the orbits~$\BlockB$ and~$\BlockC$ are determined uniquely (Corollary~\ref{cor:orbit-types-unique}).
	Then $\charMat{\BlockA} \cdot S \cdot T = S \cdot \charMat{\BlockB} \cdot T = S \cdot T \cdot \charMat{\BlockC}$.
\end{proof}

Now we want to develop combinatorial conditions for an $\StructVA^k \times \StructVA^k$ matrix~$S$ over~$\FF_2$,
which guarantee that~$S$ is invertible.
The $k$-orbits (for given $f,g\colon E\to\ZZ_{2^\pot}$) partition~$S$ into a block matrix.
Each $\BlockA \in \orbs{k}{(\StructA_f,\pTupA)}$ corresponds to a subset of the rows of~$S$
and each $\BlockB \in \orbs{k}{(\StructA_g,\pTupA)}$ corresponds to a subset of the columns of~$S$.
We denote by $S_{\BlockA\times\BlockB}$ the corresponding submatrix of~$S$.

\begin{definition}[Orbit-Diagonal Matrix]
	For $f,g \colon E \to \FF_2$ not twisting $\orig{\pTupA}$,
	we call an $\StructVA^k \times \StructVA^k$ matrix $S$ over $\FF_2$ \defining{orbit-diagonal} over $(\StructA_f, \pTupA)$ and $(\StructA_g, \pTupA)$,
	if for every $\BlockA\in \orbs{k}{(\StructA_f,\pTupA)} $ and every
	$\BlockB\in\orbs{k}{(\StructA_g,\pTupA)}$
	it holds that if $S_{\BlockA\times\BlockB} \neq \zeromat$,
	then~$\BlockA$ has the same type in $(\StructA_f, \pTupA)$
	as~$\BlockB$ has in $(\StructA_g, \pTupA)$.
\end{definition}
We have seen that for every $\BlockA \in \orbs{k}{(\StructA_f,\pTupA)}$
there is exactly one $\BlockB \in \orbs{k}{(\StructA_g,\pTupA)}$
of the same type.
So orbit-diagonal matrices are block-diagonal matrices,
where orbits of the same type form the nonzero blocks.
A permutation~$\perm$ of~$\StructVA^k$
is applied to an $\StructVA^k \times \StructVA^k$ matrix~$S$ in the natural way:
$(\perm(S))(\tupA,\tupB) = S(\perm(\tupA), \perm(\tupB))$.
Of particular interest are automorphisms.
\begin{definition}[Orbit-Invariant Matrix]
	\label{def:orbit-invariant-matrix}
	For $f,g \colon E \to \ZZ_{2^\pot}$ that do not twist $\orig{\pTupA}$,
	an $\StructVA^k\times\StructVA^k$ matrix~$S$ over~$\FF_2$
	is called \defining{orbit-invariant}
	over $(\StructA_f, \pTupA)$ and $(\StructA_g, \pTupA)$,
	if for every $\BlockA \in \orbs{k}{(\StructA_f,\pTupA)}$, $\BlockB \in \orbs{k}{(\StructA_f,\pTupA)}$,
	and $\autoA \in \autgrp{(\StructA_f,\pTupA)} = \autgrp{(\StructA_g,\pTupA)}$ (cf.~Corollary~\ref{cor:cfi-same-autgroup-and-orbits})
	the matrix~$S$ satisfies
	$\autoA(S_{\BlockA \times \BlockB}) = S_{\BlockA \times \BlockB}$.
\end{definition}

\begin{lemma}
	\label{lem:orbit-invariant-mult}
	Let $f,g,h \colon E\to \ZZ_{2^\pot}$ not twist $\orig{\pTupA}$
	and $S,T$ be $\StructVA^k \times \StructVA^k$ matrices over~$\FF_2$.
	If~$S$ is orbit-diagonal and orbit-invariant over $(\StructA_f, \pTupA)$ and $(\StructA_g, \pTupA)$
	and~$T$ is orbit-diagonal and orbit-invariant over $(\StructA_g, \pTupA)$ and $(\StructA_h, \pTupA)$,
	then $S\cdot T$ is
	orbit-diagonal and orbit-invariant over $(\StructA_f, \pTupA)$ and $(\StructA_h, \pTupA)$.
\end{lemma}
\begin{proof}
	It is clear that $S \cdot T$ is orbit-diagonal over $(\StructA_f, \pTupA)$ and $(\StructA_h, \pTupA)$.
	For $k$-orbits ${\BlockA \in \orbs{k}{(\StructA_f,\pTupA)}}$,
	${\BlockB \in \orbs{k}{(\StructA_g,\pTupA)}}$, and
	${\BlockC \in \orbs{k}{(\StructA_h,\pTupA)}}$ of the same type it holds that
	$(S\cdot T)_{\BlockA \times \BlockC}(\tupA,\tupC) = \sum_{\tupB \in \BlockB} S_{\BlockA \times \BlockB} (\tupA,\tupB) \cdot T_{\BlockB\times \BlockC}(\tupB,\tupC)$.
	Let $\autoA \in \autgrp{(\StructA, \pTupA)}$.
	Then
	\begin{align*}
		(\autoA(S\cdot T))_{\BlockA \times \BlockC}(\tupA,\tupC)
		&= (S\cdot T)_{\BlockA \times \BlockC}(\autoA(\tupA),\autoA(\tupC))\\
		&= \sum_{\tupB \in \BlockB} S_{\BlockA \times \BlockB} (\autoA(\tupA),\tupB) \cdot T_{\BlockB\times \BlockC}(\tupB,\autoA(\tupC))\\
		&= \sum_{\tupB \in \BlockB} S_{\BlockA \times \BlockB} (\autoA(\tupA),\autoA(\tupB)) \cdot T_{\BlockB\times \BlockC}(\autoA(\tupB),\autoA(\tupC))\\
		&= \sum_{\tupB \in \BlockB} S_{\BlockA \times \BlockB} (\tupA,\tupB) \cdot T_{\BlockB\times \BlockC}(\tupB,\tupC)\ifmultiroweq{\\
		&}= (S\cdot T)_{\BlockA \times \BlockC}(\tupA,\tupC).
	\end{align*}
	Applying~$\autoA$ to~$\tupB$ is valid because~$\autoA$ is a permutation of~$\BlockB$ and thus only permutes the summands.
	Then $S_{\BlockA \times \BlockB} (\autoA(\tupA),\autoA(\tupB)) = S_{\BlockA \times \BlockB} (\tupA,\tupB)$ because~$S$ is orbit-invariant
	(and likewise for~$T$).
\end{proof}

\begin{definition}[Odd-Filled Matrix]
	A matrix over~$\FF_2$ is called \defining{odd-filled}
	if every row contains an odd number of ones.
\end{definition}

\begin{lemma}\label{lem:odd-filled-product}
	If two $\StructVA^k \times \StructVA^k$ matrices~$S$ and~$T$ over~$\FF_2$ 
	are odd-filled,
	then so is $S\cdot T$.
\end{lemma}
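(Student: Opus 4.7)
The plan is to reformulate the odd-filled condition as a simple algebraic identity involving the all-ones vector, and then the claim becomes essentially immediate by associativity of matrix multiplication.

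First, I would observe that over $\FF_2$ a matrix $M$ is odd-filled exactly when every row sum (as an element of $\FF_2$) equals $1$. Letting $\oneVec$ denote the all-ones column vector of the appropriate dimension, this is equivalent to the identity $M \cdot \oneVec = \oneVec$. The translation is immediate: the $i$-th entry of $M \cdot \oneVec$ is $\sum_j M(i,j)$, which over $\FF_2$ equals $1$ if and only if the number of ones in row $i$ is odd.

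Next, assuming $S \cdot \oneVec = \oneVec$ and $T \cdot \oneVec = \oneVec$, associativity of matrix multiplication gives $(S \cdot T) \cdot \oneVec = S \cdot (T \cdot \oneVec) = S \cdot \oneVec = \oneVec$, so $S \cdot T$ is odd-filled. The one thing to check is that the dimensions line up: if $S$ is an $I \times J$ matrix and $T$ is a $J \times K$ matrix, then the all-ones vector used for $S$ is in $\FF_2^J$ and the one used for $T$ is in $\FF_2^K$, and the composition $(S \cdot T) \cdot \oneVec_K = S \cdot (T \cdot \oneVec_K) = S \cdot \oneVec_J = \oneVec_I$ goes through as written.

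There is no real obstacle here; the only subtlety is to notice that passing to the all-ones vector is the right reformulation, after which the argument is a one-line calculation. I would therefore present the proof in two sentences: establish the equivalence with $M \cdot \oneVec = \oneVec$, and then apply associativity.
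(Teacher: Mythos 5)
Your proof is correct and reaches the same conclusion by the same underlying computation as the paper: both arguments reduce to showing that each row sum of $S\cdot T$ is odd, using that the row sums of $S$ and $T$ are all $1$ over $\FF_2$. The only difference is presentational — the paper unfolds the row sums $\sum r_{\tupA} = \sum_{\tupB} S(\tupA,\tupB)\sum t_{\tupB}$ explicitly, whereas you package the odd-filled condition as the identity $M\cdot\oneVec = \oneVec$ and dispatch the claim with a single application of associativity, which is a slightly cleaner way to write the same thing.
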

\begin{proof}
	Let $R = S\cdot T$ and
	denote by~$r_{\tupA}$ and~$t_{\tupB}$ the rows of~$R$ and~$T$
	indexed by $\tupA \in \StructVA^k$ and $\tupB \in \StructVA^k$.
	Then 
	\[r_{\tupA} = \sum_{\tupB \in \StructVA^k} S(\tupA,\tupB) \cdot t_{\tupB}.\]
	The number of ones modulo~$2$ is given by
	\[\sum r_{\tupA} = \sum_{\tupB \in \StructVA^k} S(\tupA,\tupB) \cdot \sum t_{\tupB}.\]
	Now $S(\tupA,\tupB) = 1$ for an odd number of $\tupB \in \StructVA^k$,
	because~$S$ is odd-filled.
	Hence, $\sum r_{\tupA}$ is the sum of an odd number of $\sum t_{\tupB}$,
	of which each is odd because~$T$ is odd-filled.
	So $\sum r_{\tupA} = 1$
	and~$r_{\tupA}$ contains an odd number of ones.
\end{proof}

\begin{lemma}
	\label{lem:block-diagonal-odd-filled-orbit-invariant-columns}
	Let $f,g\colon E \to \ZZ_{2^\pot}$ not twist $\orig{\pTupA}$
	and~$S$ be an $\StructVA^k \times \StructVA^k$ matrix over~$\FF_2$.
	If~$S$ is odd-filled
	and both orbit-diagonal and orbit-invariant
	over $(\StructA_f, \pTupA)$
	and $(\StructA_g, \pTupA)$,
	then every column of~$S$ contains an odd number of ones.
\end{lemma}
\begin{proof}
	Consider the block $S_{\BlockA\times \BlockB}$
	for arbitrary $\BlockA\in \orbs{k}{(\StructA_f,\pTupA)}$ and $\BlockB\in\orbs{k}{(\StructA_g,\pTupA)}$ of the same type.
	Let $\BlockA= \set{\tupA_1,\dots, \tupA_n}$ and
	$\BlockB= \set{\tupB_1,\dots,\tupB_n}$.
	Then consider automorphisms~$\autoA_i$
	such that $\autoA_i(\tupA_1) = \tupA_i$.
	Because the induced action of $\autgrp{(\StructA, \pTupA)}$ on~$\BlockA$
	(and on~$\BlockB$) is regular (Lemma~\ref{lem:cfi-orbit-auto-group-regular}),
	the action of~$\autoA_i$ on~$\BlockA$ (and so~$\BlockB$) is
	uniquely determined.
	Now we consider w.l.o.g.~the column indexed by~$\tupB_1$:
	\[S(\tupA_i,\tupB_1) = \inv{\autoA}_i(S)(\tupA_i,\tupB_1)= S(\tupA_1,\inv{\autoA_i}(\tupB_1))\]
	because $S$ is orbit-invariant.
	So the column indexed by~$\tupB_1$
	contains exactly the entries of the row indexed by~$\tupA_1$.
	That is, the number of ones in every column is odd.
\end{proof}

\begin{lemma}
	\label{lem:odd-ones-F2-full-space}
	Let $\tup{a} \in \FF_2^N$ for some finite set~$N$
	and $\group < \SymSetGroup{N}$
	be a regular and abelian $2$-group.
	If the number of ones in~$\tup{a}$ 
	is odd,
	then the set
	$B := \setcond{\perm(\tup{a})}{\perm \in \group}$
	is a basis of~$\FF_2^N$.
\end{lemma}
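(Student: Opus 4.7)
The plan is to identify $\FF_2^N$ with the group algebra $\FF_2[\group]$ via the regular action of $\group$ on $N$. Concretely, fix a basepoint $n_0 \in N$ and define $\Phi \colon \FF_2[\group] \to \FF_2^N$ by $\sigma \mapsto e_{\sigma(n_0)}$, extended linearly; regularity of $\group$ on $N$ makes this an $\FF_2$-linear bijection, and a short check shows $\Phi(\tau \cdot z) = \tau(\Phi(z))$ for all $\tau \in \group$. So $\Phi$ is an isomorphism of $\FF_2[\group]$-modules between the left regular representation and $\FF_2^N$.

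Write $\tup{a} = \Phi(a)$ for the corresponding $a = \sum_\sigma a_\sigma \sigma \in \FF_2[\group]$. Then $B$ corresponds to $\group \cdot a$, so $\textrm{span}_{\FF_2}(B)$ corresponds to the principal ideal $\FF_2[\group] \cdot a$ (principal left- and two-sided coincide since $\group$ is abelian). Because $|B| \leq |\group| = |N| = \dim_{\FF_2} \FF_2^N$, it suffices to show that this ideal is all of $\FF_2[\group]$, i.e.\ that $a$ is a unit in $\FF_2[\group]$.

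For that I would use the augmentation map $\epsilon \colon \FF_2[\group] \to \FF_2$, $\sum_\sigma c_\sigma \sigma \mapsto \sum_\sigma c_\sigma$, and its kernel $J$ (the augmentation ideal). Unfolding $\Phi$, the coefficient sum $\epsilon(a) = \sum_\sigma a_\sigma$ equals the number (mod $2$) of ones in $\tup{a}$, which is $1$ by hypothesis. Hence $a = 1 + b$ with $b \in J$. The key fact is that $J$ is nilpotent in $\FF_2[\group]$: it is generated as an ideal by the elements $\sigma - 1$ for $\sigma \in \group$, and because $\mathrm{char}\, \FF_2 = 2$ and $\group$ is a $2$-group, $(\sigma - 1)^{2^\ell} = \sigma^{2^\ell} - 1 = 0$ as soon as $2^\ell$ bounds the orders in $\group$. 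Since $\group$ is finite abelian, this yields a uniform $k$ with $J^k = 0$, so $b^k = 0$. Then $a \cdot (1 + b + b^2 + \cdots + b^{k-1}) = 1 - b^k = 1$ exhibits $a$ as a unit, the ideal $\FF_2[\group] \cdot a$ is all of $\FF_2[\group]$, and $B$ spans (and therefore is a basis of) $\FF_2^N$.

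There is no real obstacle: the argument is routine modular representation theory once one recognizes the setup. The only subtle point, and worth stating explicitly in the write-up, is that regularity of $\group$ (not merely transitivity) is what makes $\Phi$ a bijection and upgrades ``$B$ spans'' to ``$B$ is a basis'' via the dimension count; the $2$-group hypothesis is exactly what is needed to force $J$ to be nilpotent so that augmentation-$1$ elements are automatically invertible.
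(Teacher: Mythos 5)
Your proof is correct and follows essentially the same route as the paper's: both identify $\FF_2^N$ with the group algebra $\FF_2[\group]$ via the regular action, translate $\operatorname{span}(B)$ into the ideal generated by the element $a$ corresponding to $\tup{a}$, and use the dimension count $|B| \leq |N|$ once $a$ is shown to be a unit. The only difference is how unit-hood is verified: the paper computes $a^{2^k} = 1_\group$ directly via the Frobenius endomorphism over $\FF_2$ (yielding the explicit inverse $a^{2^k-1}$), while you invoke nilpotence of the augmentation ideal and a geometric series — but these are two packagings of the same observation, since $a^{2^k} = \epsilon(a)\cdot 1_\group$ for all $a$ is precisely the statement $J^{2^k} = 0$.
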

\begin{proof}
	Assume w.l.o.g.~that $N = [\ell]$
	and let $W \subseteq \FF_2^N$ be the linear space spanned by~$B$.
	Because~$\group$ is regular, it consists of~$\ell$ many permutations 
	$\group = \set{\perm_1, \dots, \perm_\ell}$ such that
	$\perm_i(1) = i$ for all $i \in [\ell]$.
	By definition,~$W$ is invariant under permutations of~$\group$.
	In coding theory, such a linear space is called an abelian code.
	It is known that~$W$ can be identified with an ideal of
	the group algebra $\FF_2[\group]$~\cite{Berman67},
	which is the set of formal sums
	\[\setcond[\bigg]{\sum_{g \in \group} b_gg}{b_g \in \FF_2}.\]
	This set is naturally an~$\FF_2$-vector space indexed by~$\group$.
	To turn it into a $\FF_2$-algebra, multiplication is defined via
	\[\bigg(\sum_{g \in \group} b_gg\bigg)\cdot\bigg(\sum_{g \in \group} c_gg\bigg) := \sum_{g,h \in \group} (b_g\cdot c_h)(g\cdot h).\]
	A set $I \subseteq \FF_2[\group]$ is a (left) ideal of the algebra $\FF_2[\group]$ if $g \cdot h \in I$
	for every $g \in \FF_2[\group]$ and $h \in I$,
	i.e., $ \FF_2[\group]\cdot I = I$.
	The abelian code~$W$ is identified with an ideal of $\FF_2[\group]$
	via the bijection
	$(b_1, \dots, b_\ell) \mapsto \sum_{i = 1}^\ell b_i \perm_i$ for every $\tup{b} \in W$.
	
	Let $I \subseteq \FF_2[\group]$ be the corresponding ideal of~$W$
	and let the number of ones of $\tup{a} \in W$ be odd.
	Because~$\group$ is a $2$-group, there is a~$k$ such that
	$\perm_i^{(2^{\smash{k}})} = 1_\group$ for all $i \in [\ell]$.
	Because~$\group$ is abelian and we consider~$\FF_2$,
	$ (b \perm_i)(c \perm_j)  + (c \perm_j)(b \perm_i) = 2(b \perm_i)(c \perm_j) = 0$.
	So
	$(b \perm_i + c \perm_j)^2 = (b \perm_i)^2 + (c \perm_j)^2$ and $(b \perm_i + c \perm_j)^{(2^k)} = (b \perm_i)^{(2^k)} + (c \perm_j)^{(2^k)}$.
	It follows that
	\[\left( \sum_{i=1}^\ell a_i \perm_i \right)^{(2^k)} = 
		\sum_{i=1}^\ell (a_i \perm_i)^{(2^k)} =
		\sum_{i=1}^\ell a_i^{(2^k)} 1_\group =
		\sum_{i=1}^\ell a_i 1_\group = 1_\group.
 	\]
 	The last step holds because the number of ones in~$\tup{a}$ is odd.
 	So $\sum_{i=1}^\ell a_i \perm_i$ is a unit with inverse 
 	$( \sum_{i=1}^\ell a_i \perm_i)^{2^k-1}$.
 	First, $\sum_{i=1}^\ell a_i \perm_i \in I$ because $\tup{a} \in W$.
 	Second, $1_\group \in I$ because the inverse of $\sum_{i=1}^\ell a_i \perm_i \in I$ is clearly contained in $\FF_2[\group]$ and $\FF_2[\group] \cdot I = I$.
 	Thus, $I = \FF_2[\group]$ and $W = \FF_2^N$.
 	Finally,~$B$ must be a basis of~$W$ because $|B| = |N|$.
\end{proof}

\begin{lemma}
	\label{lem:orbit-diagonal+orbit-invariant+odd-filled-implies-invertible}
	Let $f,g\colon E \to \ZZ_{2^\pot}$ not twist $\orig{\pTupA}$ and~$S$ be an $\StructVA^k \times \StructVA^k$ matrix over~$\FF_2$.
	If~$S$ is odd-filled and both
	orbit-diagonal and orbit-invariant over $(\StructA_f, \pTupA)$
	and $(\StructA_g, \pTupA)$,
	then~$S$ is invertible.
\end{lemma}
\begin{proof}
	It suffices to show that each block on the diagonal of~$S$ is invertible
	because~$S$ is orbit-diagonal.	
	Let $\BlockA \in \orbs{k}{(\StructA_f,\pTupA)}$ and $\BlockB \in \orbs{k}{(\StructA_g,\pTupA)}$ be of the same type.
	Because~$S$ is odd-filled and orbit-diagonal, $S_{\BlockA\times\BlockB}$ is also odd-filled.
	By Lemma~\ref{lem:cfi-orbit-auto-group-regular},
	the action of $\autgrp{(\StructA_f,\pTupA)}$ on~$\BlockA$
	induces a regular and abelian $2$-group~$\group$.
	By Corollary~\ref{cor:cfi-same-autgroup-and-orbits},
	the action of $\autgrp{(\StructA_g,\pTupA)}$ on~$\BlockB$
	yields the same group~$\group$.
	Let $n := |\BlockA|$,
	$\BlockA = \set{\tupA_1, \dots, \tupA_n}$,
	and~$s_i$ be the row of
	$S_{\BlockA\times\BlockB}$ indexed by~$\tupA_i$.
	We want to show that $s_i = \autoA_i(s_1)$
	for a unique $\autoA_i \in \group$.
	Each $\autoA \in \group$ acts as a permutation on the entries of each~$s_i$,
	that is $(\autoA(s_i))(\tupB) = s_i(\autoA(\tupB))$.
	Let $\group = \set{\autoA_1,\dots,\autoA_n}$ such that $\inv{\autoA_i}(\tupA_1) = \tupA_i$ for every $i \in[n]$
	(this is possible because~$\group$ is regular).
	Then 
	\begin{align*}
		(\autoA_i(s_1))(\tupB) &= S_{\BlockA\times\BlockB}(\tupA_1,\autoA_i(\tupB)) = S_{\BlockA\times\BlockB}(\inv{\autoA_i}(\tupA_1),\tupB)
	\intertext{
		because~$S$ is orbit-invariant.	Hence,}
		(\autoA_i(s_1))(\tupB) &=
	S_{\BlockA\times\BlockB}(\inv{\autoA_i}(\tupA_1),\tupB) = s_i(\tupB),
	\end{align*}
	i.e.,~$\autoA_i(s_1) = s_i$.
	Finally,
	$\setcond{\autoA_i(s_1)}{i \in [n]} = \set{s_1, \dots, s_n}$
	forms a basis of~$\FF_2^n$ by Lemma~\ref{lem:odd-ones-F2-full-space}.
	That is, $S_{\BlockA\times\BlockB}$ has full rank
	and is invertible.
\end{proof}

\section{The Arity 1 Case}
\label{sec:1-ary-case}

To separate rank logic from \PTime{},
we want to show that for every arity~$k$ and number of pebbles $2k+m$,
there are two non-isomorphic CFI structures over $\ZZ_{2^\pot}$ for a suitable $\pot \in \nat$
for which Duplicator has a winning strategy in the invertible-map game $\invertMapGame{2k+m}{k}{\set{2}}$.
This implies $\IFPRP{\set{2}}$-undefinability of the CFI query by Lemma~\ref{lem:invertible-map-refines-rank-logic}
and $\IFPR{}$-undefinability by Lemma~\ref{lem:IFPR-to-IFPR2}.
The most challenging part of constructing winning strategies for Duplicator
in the invertible-map game is to provide similarity matrices.
Indeed, our goal is to construct matrices blurring the twist.
Once we achieve this, it suffices to ensure that the pebbled tuples in both structures always have the same type. This final step is made formal in Section~\ref{sec:separating}.
Constructing matrices blurring the twists for an arbitrary arity~$k$
turns out to be formally intricate
and is in particular recursive on the arity.
In this section, we start with constructing matrices for arity~$1$,
which serve as a base case for the recursion.
We introduce basic techniques that we generalize to higher arities later in Section~\ref{sec:higher-arities}.

Let $\pot \geq 2$, $m \in \nat$,
 $G=(V,E, \leq )$ be an $(m+3)$-connected base graph,
 $\scenter \in V$ be a vertex of degree $d$, and
$\set{\scenter, \stip} \in E$.
Let $f, g \colon E \to \ZZ_{2^\pot}$ 
such that $\set{\scenter,\stip}$ is the only twisted edge and
$g(\set{\scenter,\stip}) = f(\set{\scenter,\stip}) + 2^{\pot-1}$.
The number~$m$ is the number of pebbles
remaining on the structure when Spoiler picks up the $2=2k$ many pebbles 
before Duplicator needs to provide the similarity matrix (we consider arity $k=1$ in this section).
From another perspective,~$m$ corresponds to the number of free variables
of a rank operator.
Set $\StructA_f := \CFIgraph{2^\pot}{G}{f}$ and $\StructA_g := \CFIgraph{2^\pot}{G}{g}$,
both with universe~$\StructVA$.
Let $\pTupA \in \StructVA^m $ 
such that $\distance{G}{\scenter}{\orig{\pTupA}} \geq 3$,
in particular $g$ and $f$ do not twist $\orig{\pTupA}$.
The tuple~$\pTupA$ is the tuple of vertices on which the pebbles remain.
It suffices to consider only a single tuple~$\pTupA$
for both structures
because we will ensure that the pebbled tuples always have the same type in both structures.
Whenever the pebbled tuples have the same type but are not equal,
we can consider an isomorphic structure
in which we moved the twist to an edge far apart from the pebbled tuples.
Then the tuples are equal and we
ensured that the distance between $\orig{\pTupA}$ and the twisted edge is sufficiently large (details in Section~\ref{sec:separating}).

For $\bvertA \in V$, let $\StructVA_\bvertA$
be the set of vertices originating from~$\bvertA$,
i.e.,~the vertices of the gadget for~$\bvertA$.
The key idea is to ``distribute'' the twist among multiple edges,
such that it cannot be detected by Spoiler.
For this, we introduce blurrers,
the key ingredient to define the desired similarity matrix.
\begin{definition}
	\label{def:blurrer-1-dim}
	Let $\blurrer \subseteq \ZZ_{2^\pot}^{d}$.
	For $b \in \ZZ_{2^\pot}$ and $j \in [d]$ we define
	\[\countVects{j}{b}{\blurrer} \coloneqq \left|\setcond[\big]{\tup{a} \in \blurrer}{a_j=b}\right| \bmod 2.\]
	The set~$\blurrer$ is called a \defining{$(\pot,d)$-blurrer}
	if it satisfies
	\begin{enumerate}
		\item $\sum \tup{a} = 0$ for all $a \in \blurrer$,
		\item $\countVects{1}{2^{\pot-1}}{\blurrer} = 1$,
		\item $\countVects{j}{0}{\blurrer} = 1$ for all $1 < j \leq d$, and
		\item $\countVects{j}{b}{\blurrer} = 0$ for all other pairs of $b\in \ZZ_{2^\pot}$ and $j \in [d]$.
	\end{enumerate}
\end{definition}
From now on, we use the letter~$\blurElem$ for elements of a blurrer~$\blurrer$. 
Note that~$\blurrer$ consists solely of tuples satisfying $\sum \blurElem = 0$,
i.e.,~we can later turn every $\blurElem \in \blurrer$ into an automorphism.
But intuitively, when looking at a single index and summing over all $\blurElem \in \blurrer$,
it looks like there is a twist at index~$1$
and no twist at all other indices.

\begin{lemma}
	\label{lem:1ary-blurrer}
	The size $|\blurrer|$ of every $(\pot,d)$-blurrer is odd.
	For every $d\geq 3$, there is a $(\pot,d)$-blurrer.
\end{lemma}
\begin{proof}
	By~Conditions~2 and~4 it holds that \[|\blurrer|= \sum_{b \in \ZZ_{2^q}} \countVects{1}{b}{\blurrer} = \countVects{1}{2^{\pot-1}}{\blurrer} + \sum_{b \in \ZZ_{2^q} \setminus \set{2^{\pot-1}}} \countVects{1}{b}{\blurrer} = 1  \bmod 2.\]
	
	For $d \geq 3$, set $\blurrer := 2^{\pot-2}  \cdot \set{(3,0,1, 0,\dots, 0), (3,1,0, 0,\dots, 0), (2,1,1, 0,\dots, 0)}$.
\end{proof}

Let $\PartA_j = \orbs{j}{(\StructA_f, \pTupA)}$
and $\PartB_j = \orbs{j}{(\StructA_g, \pTupA)}$ for every $j \in [2]$.
For $\BlockA \in \PartA_2$ we set $\BlockA_i := \restrictVect{\BlockA}{i}$
for every $i \in [2]$ and likewise for a $\BlockB \in \PartB_2$.
By Corollary~\ref{cor:orbit-types-unique},
$\BlockA_i$ satisfies $\BlockA_i = \StructVA_\bvertA$
if $\bvertA = \orig{\BlockA_i}$ and $\distance{G}{\bvertA}{\orig{\pTupA}} > 1$.
Moreover, every $\BlockA \in \PartA_1$
is also in $\PartB_1$ and has the same type in $(\StructA_f, \pTupA)$ as in $(\StructA_g, \pTupA)$.

Let~$\blurrer$ be a $(\pot,d)$-blurrer (note that~$\scenter$ is of degree $d\geq 3$ because $G$ is $(m+3)$\nobreakdash-connected) and 
$\neighbors{G}{\scenter} = \set{\stip_1, \dots, \stip_d}$
such that $\stip_1 = \stip$.
Then we can view $\blurElem \in \blurrer$ also as a tuple $\blurElem \in \ZZ_{2^\pot}^{\neighbors{G}{\scenter}}$.
Thus,~$\blurElem$ acts on vertices~$\vertA$ originating from~$\scenter$ 
and we denote this action by 
$\gadgetiso{\vertA}{\blurElem}$
(cf.~Section~\ref{sec:cfi-isomorphisms} for a definition of the action).
Note that every $\blurElem\in \blurrer$ extends to an automorphism of
$(\StructA_f, \pTupA)$ (and so of $(\StructA_g, \pTupA)$):
By Corollary~\ref{cor:orbit-types-unique},
the gadget of~$\scenter$ consists of a single orbit
because $\distance{G}{\scenter}{\pTupA} \geq 3$,
i.e.,~$\StructVA_\scenter \in \PartA_1$.
We define an $\StructVA \times \StructVA$ matrix~$S$ over~$\FF_2$,
which is orbit-diagonal over $(\StructA_f, \pTupA)$ and $(\StructA_g, \pTupA)$.
We set $S_\BlockA := S_{\BlockA \times \BlockA}$ and define
\[
	S_\BlockA(\vertA, \vertB) := \begin{cases}
		1 & \text{if }  \orig{\BlockA} \neq \scenter \text{ and } \vertA =\vertB,\\
		1 & \text{if } \orig{\BlockA} = \scenter \text{ and } \gadgetiso{\vertA}{\blurElem} = \vertB \text { for some } \blurElem \in \blurrer,\\
		0 & \text{otherwise.}
	  	\end{cases}
\]
Of particular interest is the unique $1$-orbit~$\BlockA_\scenter$ with origin~$\scenter$.
We have already seen that $\BlockA_\scenter = \StructVA_\scenter \in \PartA_1$,
because $\distance{G}{\scenter}{\orig{\pTupA}} \geq 3$ by assumption.
For all other orbits $\BlockA \in \PartA_1$,
it is easy to see that $S_\BlockA = \idmat$.
\begin{lemma}
	\label{lem:1ary-orbit-invariant}
	The matrix~$S$ is orbit-invariant over $(\StructA_f, \pTupA)$ and $(\StructA_g, \pTupA)$.
\end{lemma}
\begin{proof}
	Let $\BlockA \in \PartA_1$,
	$\autoA \in \autgrp{(\StructA_f, \pTupA)}$,
	$\vertA \in \BlockA$, and
	$\vertB \in \BlockB = \BlockA \in \PartB_1$.
	If $\BlockA \neq \BlockA_\scenter$,
	then clearly $\autoA(S_\BlockA) =  \autoA(\idmat) = \idmat = S_\BlockA$.
	Otherwise, $\BlockA = \BlockA_\scenter$.
	Because the automorphism group of~$\StructA_f$ is abelian (Lemma~\ref{lem:cfi-autgroup-abelian-two-group})
	and every $\blurElem \in \blurrer$ extends to an automorphism,
	it holds that $\gadgetiso{\autoA(\vertA)}{\blurElem} = \autoA(\gadgetiso{\vertA}{\blurElem})$.
	So $S_{\BlockA_\scenter}(\autoA(\vertA),\autoA(\vertB)) = 1$
	if and only if $\gadgetiso{\autoA(\vertA)}{\blurElem} = \autoA(\gadgetiso{\vertA}{\blurElem}) = \autoA(\vertB)$
	for some $\blurElem \in \blurrer$
	if and only if $\gadgetiso{\vertA}{\blurElem} = \vertB$
	for some $\blurElem \in \blurrer$, i.e,  $S_{\BlockA_\scenter}(\vertA,\vertB) = 1$.
\end{proof}

\begin{lemma}
	\label{lem:1ary-odd-filled}
	The matrix~$S$ is odd-filled.
\end{lemma}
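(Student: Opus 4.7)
The plan is to verify odd-filledness row-by-row, splitting into cases according to which $1$-orbit contains the row index. Fix $\vertA \in \StructVA$ and let $\BlockA \in \PartA_1$ be the orbit with $\vertA \in \BlockA$. Since $S$ is orbit-diagonal, the only nonzero entries in the $\vertA$-th row lie in the block $S_{\BlockA}$, so it suffices to count ones in that block's $\vertA$-th row.

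In the easy case $\orig{\BlockA} \neq t$, the definition gives $S_{\BlockA} = \idmat$, hence exactly one $1$ in row $\vertA$, which is odd. The interesting case is $\orig{\BlockA} = t$, i.e.\ $\BlockA = \BlockA_t$. Here $S_{\BlockA_t}(\vertA, \vertB) = 1$ precisely when $\vertB = \gadgetiso{\vertA}{\blurElem}$ for some $\blurElem \in \blurer$. I would then show that the map $\blurer \to \StructVA_t,\ \blurElem \mapsto \gadgetiso{\vertA}{\blurElem}$ is injective: by the definition of the gadget action, $\gadgetiso{\vertA}{\blurElem}(\bvertB) = \vertA(\bvertB) + \blurElem(\bvertB)$ for every $\bvertB \in \neighbors{G}{t}$, so $\blurElem$ is uniquely recovered from $\gadgetiso{\vertA}{\blurElem}$ given $\vertA$. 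Consequently the number of ones in the row is exactly $|\blurer|$.

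Finally I invoke Lemma~\ref{lem:1ary-blurer}, which asserts $|\blurer|$ is odd, to conclude that the $\vertA$-th row of $S$ contains an odd number of ones. Since $\vertA$ was arbitrary, $S$ is odd-filled.

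I do not expect any serious obstacle: the content is entirely bookkeeping. The only subtlety is making sure the two cases truly exhaust all rows and that the $\blurer$-action on $\vertA$ is injective; both are immediate from the explicit definitions of $S$ and of $\gadgetiso{\cdot}{\cdot}$. No use of orbit-invariance (Lemma~\ref{lem:1ary-orbit-invariant}) or of any property of $\blurer$ beyond its odd cardinality is needed here.
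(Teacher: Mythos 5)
Your proof is correct and follows essentially the same route as the paper: split by whether the row index lies in $\BlockA_t$, count exactly one $1$ in the identity case and exactly $|\blurer|$ ones otherwise, and conclude via Lemma~\ref{lem:1ary-blurer}. The only cosmetic difference is that you establish injectivity of $\blurElem \mapsto \gadgetiso{\vertA}{\blurElem}$ directly from the defining formula for the gadget action, whereas the paper cites Lemma~\ref{lem:cfi-orbit-auto-group-regular}; both are fine.
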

\begin{proof}
	Let $\BlockA \in \PartA_1$.
	For $\BlockA \neq \BlockA_\scenter$,
	the number of ones in a row of  $S_\BlockA = \idmat$ is one and thus odd.
	In~$S_{\BlockA_\scenter}$, the number of ones in a row is 
	$|\blurrer|$ because $\gadgetiso{\vertA}{\blurElem} \neq \gadgetiso{\vertA}{\blurElem'}$ if $\blurElem \neq \blurElem'$
	(Lemma~\ref{lem:cfi-orbit-auto-group-regular})
	and if $\vertA \in \BlockA_\scenter$, then $\gadgetiso{\vertA}{\blurElem} \in\BlockA_\scenter$ for every
	$\blurElem \in \blurrer$.
	From Lemma~\ref{lem:1ary-blurrer} it follows that~$|\blurrer|$ is odd.
\end{proof}

\begin{corollary}
	\label{cor:1ary-matrix-invertible}
	The matrix~$S$ is invertible.
\end{corollary}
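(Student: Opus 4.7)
The plan is to observe that this corollary is essentially an immediate consequence of the three properties we have already established for $S$, combined with the general invertibility criterion from Lemma~\ref{lem:orbit-diagonal+orbit-invariant+odd-filled-implies-invertible}.

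First I would note that $S$ is orbit-diagonal directly from its definition: the matrix was defined only on the diagonal blocks $S_{\BlockA \times \BlockA}$ (where the orbit $\BlockA \in \PartA_1$ is identified with its counterpart in $\PartB_1$, which is the same set and has the same type by Corollary~\ref{cor:cfi-same-autgroup-and-orbits} and the assumption $\distance{G}{t}{\orig{\pTupA}} \geq 3$), and all other blocks are zero. Second, Lemma~\ref{lem:1ary-orbit-invariant} gives orbit-invariance, and Lemma~\ref{lem:1ary-odd-filled} gives the odd-filled property. Third, applying Lemma~\ref{lem:orbit-diagonal+orbit-invariant+odd-filled-implies-invertible} directly yields invertibility.

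There is no real obstacle here, since all the work was done in the preceding two lemmas and in the general algebraic invertibility argument; the corollary is purely a bookkeeping step that assembles them. If anything, the only subtlety is making sure the identification of $\PartA_1$ with $\PartB_1$ used implicitly in calling $S$ orbit-diagonal is clean, which follows because no edge incident to $\orig{\pTupA}$ is twisted and hence the $1$-orbits of $(\StructA,\pTupA)$ and $(\StructB,\pTupA)$ coincide as sets with the same types. So the whole proof should fit in two or three lines citing the three facts.
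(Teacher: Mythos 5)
Your proof is correct and takes exactly the same route as the paper: both note that $S$ is orbit-diagonal by construction, invoke Lemma~\ref{lem:1ary-orbit-invariant} and Lemma~\ref{lem:1ary-odd-filled}, and then apply the general criterion of Lemma~\ref{lem:orbit-diagonal+orbit-invariant+odd-filled-implies-invertible}. The extra remark about identifying $\PartA_1$ with $\PartB_1$ is a harmless and accurate clarification but adds nothing essential beyond the paper's one-line proof.
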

\begin{proof}
	Apply Lemmas~\ref{lem:orbit-diagonal+orbit-invariant+odd-filled-implies-invertible},~\ref{lem:1ary-orbit-invariant}, and%
	~\ref{lem:1ary-odd-filled}.
\end{proof}

\begin{figure}
	\centering
	\begin{tikzpicture}
		\tikzstyle{vertex} = [circle, fill=black, inner sep=0.5mm]
		\begin{scope}[ scale=1,every node/.style={transform shape}]
			\begin{scope}
				\begin{scope}
					\node[vertex] (At0) at (0,0) {};
					\node[vertex] (At1) at (0,-0.5) {};
					\node[vertex] (At2) at (0,-1) {};
					\node[vertex] (At3) at (0,-1.5) {};
					\node (At0l) at (-0.75,0) {$\StructVA_{\scenter,\stip,0}$};
					\node (At1l) at (-0.75,-0.5) {$\StructVA_{\scenter,\stip,1}$};
					\node (At2l) at (-0.75,-1) {$\StructVA_{\scenter,\stip,2}$};
					\node (At3l) at (-0.75,-1.5) {$\StructVA_{\scenter,\stip,3}$};
					\draw[black] (-0.5,-0.75) ellipse (1 and 1.3);
					\node (Atl) at (-0.5, -2.4) {\large \strut$\StructVA_\scenter$};
				\end{scope}
				\node (Af) at (1, -2.4) {\large \strut$\StructA_f$};
				\begin{scope}[xshift=2cm]
					\node[vertex] (Atp0) at (0,0) {};
					\node[vertex] (Atp1) at (0,-0.5) {};
					\node[vertex] (Atp2) at (0,-1) {};
					\node[vertex] (Atp3) at (0,-1.5) {};
					\node (Atp0l) at (0.75,0) {$\StructVA_{\stip,\scenter,0}$};
					\node (Atp1l) at (0.75,-0.5) {$\StructVA_{\stip,\scenter,1}$};
					\node (Atp2l) at (0.75,-1) {$\StructVA_{\stip,\scenter,2}$};
					\node (Atp3l) at (0.75,-1.5) {$\StructVA_{\stip,\scenter,3}$};
					\draw[black] (0.5,-0.75) ellipse (1 and 1.3);
					\node (Atl) at (0.5, -2.4) {\large \strut$\StructVA_{\stip}$};
				\end{scope}
				\path[blue, thick]
				(At0) edge (Atp0)
				(At1) edge (Atp3)
				(At2) edge (Atp2)
				(At3) edge (Atp1);
				
				\path[red, dashed ,thick]
				(At0) edge (Atp2)
				(At1) edge (Atp1)
				(At2) edge (Atp0)
				(At3) edge (Atp3);
			\end{scope}
			\begin{scope}[xshift = 7cm]
				\begin{scope}
					\node[vertex] (Bt0) at (0,0) {};
					\node[vertex] (Bt1) at (0,-0.5) {};
					\node[vertex] (Bt2) at (0,-1) {};
					\node[vertex] (Bt3) at (0,-1.5) {};
					\node (Bt0l) at (-0.75,0) {$\StructVA_{\scenter,\stip,0}$};
					\node (Bt1l) at (-0.75,-0.5) {$\StructVA_{\scenter,\stip,1}$};
					\node (Bt2l) at (-0.75,-1) {$\StructVA_{\scenter,\stip,2}$};
					\node (Bt3l) at (-0.75,-1.5) {$\StructVA_{\scenter,\stip,3}$};
					\draw[black] (-0.5,-0.75) ellipse (1 and 1.3);
					\node (Btl) at (-0.5, -2.4) {\large \strut$\StructVA_\scenter$};
				\end{scope}
				\node (Bf) at (1, -2.4) {\large \strut$\StructA_g$};
				\begin{scope}[xshift=2cm]
					\node[vertex] (Btp0) at (0,0) {};
					\node[vertex] (Btp1) at (0,-0.5) {};
					\node[vertex] (Btp2) at (0,-1) {};
					\node[vertex] (Btp3) at (0,-1.5) {};
					\node (Btp0l) at (0.75,0) {$\StructVA_{\stip,\scenter,0}$};
					\node (Btp1l) at (0.75,-0.5) {$\StructVA_{\stip,\scenter,1}$};
					\node (Btp2l) at (0.75,-1) {$\StructVA_{\stip,\scenter,2}$};
					\node (Btp3l) at (0.75,-1.5) {$\StructVA_{\stip,\scenter,3}$};
					\draw[black] (0.5,-0.75) ellipse (1 and 1.3);
					\node (Btl) at (0.5, -2.4) {\large \strut$\StructVA_{\stip}$};
				\end{scope}
				\path[dashed, red, thick]
				(Bt0) edge (Btp0)
				(Bt1) edge (Btp3)
				(Bt2) edge (Btp2)
				(Bt3) edge (Btp1);
				
				\path[blue, thick]
				(Bt0) edge (Btp2)
				(Bt1) edge (Btp1)
				(Bt2) edge (Btp0)
				(Bt3) edge (Btp3);
			\end{scope}
		\end{scope}
	\end{tikzpicture}
	\caption{
		This figure shows the twist between~$\StructA_f$ and~$\StructA_g$.
		It assumes that we consider~$\ZZ_4$ and that $f(\set{\scenter,\stip}) = 0$ and $g(\set{\scenter,\stip}) = 2$.
		It shows the twisted connection for the edge $\set{\scenter,\stip}$.
		On the left, there are the two gadgets for the vertices~$\scenter$ and~$\stip$ in~$\StructA_f$
		and on the right there are the same gadgets in~$\StructA_g$.
		Every vertex represents a clique corresponding to the
		$\StructVA_{\scenter,\stip,c}$ and every edge a complete bipartite graph (cf.~Section~\ref{sec:cfi-structures}).
		The relation~$\rel_{E,0}$ is drawn in blue
		and~$\rel_{E,2}$ in red and dashed style.
		Restricted to the connection between $\set{\scenter,\stip}$,
		we have in~$\StructA_f$ that $\rel_{E,0} = E_{\set{\scenter,\stip},0}$
		and $\rel_{E,2} = E_{\set{\scenter,\stip},2}$.
		In~$\StructA_g$ we have that $\rel_{E,0} = E_{\set{\scenter,\stip},2}$
		and $\rel_{E,2} = E_{\set{\scenter,\stip},0}$.
	}
	\label{fig:twist}
\end{figure}

We want to define a function $\invMapBij \colon \PartA_2 \to \PartB_2$
such that it maps an orbit to another orbit of the same type.
By Corollary~\ref{cor:cfi-same-autgroup-and-orbits},
we know that $\PartA_2 = \PartB_2$
and by Lemma~\ref{lem:cfi-occuring-orbit-types-equal}
that a type-preserving bijection exists.
Let $\BlockA \in \PartA_2$ with origin $(\bvertA, \bvertB)$.
If $\set{\scenter, \stip} \neq \set{\bvertA, \bvertB}$,
we set $\invMapBij(\BlockA) := \BlockA$.
Otherwise if $(\stip, \scenter) = (\bvertA, \bvertB)$,
then~$\BlockA$ has a different type
in $(\StructA_f,\pTupA)$ than in $(\StructA_g, \pTupA)$:
Every vertex in~$\BlockA_1$ is related with every vertex in~$\BlockA_2$
via some~$\rel_{E,c}$.
By Corollary~\ref{cor:orbit-types-unique},
we have that $\BlockA = E_{\set{\scenter,\stip},a}$ for some $a \in \ZZ_{2^\pot}$
(recall our assumption $\distance{G}{\scenter}{\pTupA} \geq 3$
and thus~$a$ determines the type of~$\BlockA$).
We set $\invMapBij(\BlockA) :=  E_{\set{\scenter,\stip},a+2^{\pot-1}}$,
which then has the same type in $(\StructA_g,\pTupA)$
because of the twist (cf.~Figure~\ref{fig:twist}).
The case of $(\scenter,\stip)$ is analogous.

\begin{lemma}
	\label{lem:1-ary-matrices-similar}
	$\charMat{\BlockA} \cdot S = S \cdot \charMat{\invMapBij(\BlockA)}$ for every $\BlockA \in \PartA_2$.
\end{lemma}
\begin{proof}
	Let $\BlockA \in \PartA_2$ and $\orig{\BlockA} = (\bvertA,\bvertB)$
	and set $\BlockB:= \invMapBij(\BlockA)$.
	Clearly $\BlockA\subseteq \BlockA_1 \times \BlockA_2$.
	We also have $\BlockA_1 = \BlockB_1$ and $\BlockA_2 = \BlockB_2$
	(as seen earlier by Corollary~\ref{cor:orbit-types-unique}).
	Then the $\BlockA_1 \times \BlockA_2$ block is the only nonzero block of $\charMat{\BlockA}$.
	Because~$S$ is orbit-diagonal,
	$\charMat{\BlockA} \cdot S$ has only one nonzero block,
	namely the $\BlockA_1 \times \BlockB_2$ block, which satisfies
	$(\charMat{\BlockA} \cdot S)_{\BlockA_1 \times \BlockB_2} =
	\charMat{\BlockA}_{\BlockA_1 \times \BlockA_2} \cdot S_{\BlockA_2 \times \BlockB_2}$.
	Likewise,
	$(S \cdot \charMat{\BlockB})_{\BlockA_1 \times \BlockB_2} = S_{\BlockA_1\times\BlockB_1} \cdot \charMat{\BlockB}_{\BlockB_1 \times \BlockB_2}$.
	Recall that we have set $S_{\BlockA_2} = S_{\BlockA_2 \times \BlockB_2}$.
	We identify~$\charMat{\BlockA}$ with $\charMat{\BlockA}_{\BlockA_1 \times \BlockA_2}$ and likewise for~$\charMat{\BlockB}$.
	So we are left to show that
	$\charMat{\BlockA} \cdot S_{\BlockA_2} = S_{\BlockB_1} \cdot \charMat{\BlockB}$.
	\begin{itemize}
		\item Case $\scenter \notin \set{\bvertA,\bvertB}$: Then
		$\BlockB = \invMapBij(\BlockA) = \BlockA$
		and
		$\charMat{\BlockA} \cdot S_{\BlockA_2} = 
		\charMat{\BlockA} \cdot \idmat =
		\idmat \cdot \charMat{\BlockB} =
		S_{\BlockB_1} \cdot \charMat{\BlockB}$.
		
		\item Case $\bvertA = \bvertB = \scenter$:
		Then $\BlockB= \invMapBij(\BlockA) = \BlockA$.
		As already seen, $\BlockA_1 = \BlockA_2 = \BlockB_1 = \BlockB_2 =\BlockA_\scenter$. So if $\vertA \in \BlockB_2$, then $\inv{\blurElem}(\vertA) \in \BlockB_2$ for every $\blurElem \in \blurrer$.
		We obtain 
		\begin{align*}
		 	(\charMat{\BlockA} \cdot S_{\BlockA_2}) (\vertA, \vertB)
			&= \sum_{\vertC \in \BlockA_2}
			\charMat{\BlockA}(\vertA, \vertC) \cdot S_{\BlockA_\scenter}(\vertC,\vertB)\\
			&=\sum_{\blurElem \in \blurrer} 
			\charMat{\BlockA}(\vertA, \gadgetiso{\vertB}{\inv{\blurElem}})\\
			&=\sum_{\blurElem \in \blurrer} 
			\charMat{\BlockA}(\gadgetiso{\vertA}{\blurElem}, \vertB).
		\intertext{
		The last step uses that~$\blurElem$ extends to an automorphism
		and thus
		$\charMat{\BlockA}(\vertA, \gadgetiso{\vertB}{\inv{\blurElem}}) =
		\blurElem(\charMat{\BlockA})(\vertA, \gadgetiso{\vertB}{\inv{\blurElem}}) = \charMat{\BlockA}(\gadgetiso{\vertA}{\blurElem}, \vertB)$.
		The reverse direction is similar:}
			&\hiddenEq \sum_{\blurElem \in \blurrer} \charMat{\BlockA}(\gadgetiso{\vertA}{\blurElem},\vertB)\\
			&=\sum_{\blurElem \in \blurrer} \charMat{\BlockB}(\gadgetiso{\vertA}{\blurElem},\vertB)\\
			&=\sum_{\vertC \in \BlockB_1}
			S_{\BlockB_1}(\vertA, \vertC) \cdot
			\charMat{\BlockB}(\vertC,\vertB)\\
			&=(S_{\BlockB_1} \cdot \charMat{\BlockB}) (\vertA,\vertB).
		\end{align*}

		\item Case $\bvertB = \scenter$ and $\set{\bvertA, \scenter} \in E$ (the case $\bvertA = \scenter$ and $\set{\scenter, \bvertB} \in E$ is analogous):
		Again $\BlockA_2 = \BlockA_\scenter$.
		 We have
		\begin{align*}
			&\hiddenEq (\charMat{\BlockA} \cdot S_{\BlockA_2}) (\vertA, \vertB)\\
			&= \sum_{\vertC \in \BlockA_\scenter}
			\charMat{\BlockA}(\vertA, \vertC) \cdot S_{\BlockA_\scenter}(\vertC,\vertB)\\
			&= \sum_{\blurElem \in \blurrer} \charMat{\BlockA}(\vertA,\gadgetiso{\vertB}{\inv{\blurElem}})\\
			&=\begin{cases}
			1 & \text{if } |\setcond{\blurElem \in \blurrer}{(\vertA,\gadgetiso{\vertB}{\inv{\blurElem}}) \in \BlockA}| \text{ is odd,}\\
			0 & \text{otherwise.}
		\end{cases}	
		\end{align*}
		Let $\BlockA = E_{\set{\bvertA, \scenter},a}$ for some $a \in \ZZ_{2^\pot}$
		(cf.~the definition of~$\invMapBij$)
		and $(\vertA, \vertB) \in  E_{\set{\bvertA,\scenter},b}$ for some $b \in \ZZ_{2^\pot}$.
		Then, by definition of $E_{\set{\bvertA,\scenter},b}$, it holds that
		$\vertA(\scenter) + \vertB(\bvertA) = b$.
		Let $i \in [d]$ such that $\bvertA = \stip_i$
		(recall that $\neighbors{G}{\scenter} = \set{\stip_1, \dots, \stip_d}$
		and $\stip_1 = \stip$).
		For every $\blurElem \in \blurrer$ it holds that $(\vertA,\gadgetiso{\vertB}{\inv{\blurElem}}) \in \BlockA  = E_{(\bvertA, \scenter),a}$
		if and only if $\vertA(\scenter)+\inv{\blurElem}(\vertB)(\bvertA) = a$
		if and only if $\blurElem(i) = b - a$
		because
		\[\vertA(\scenter) + \gadgetiso{\vertB}{\inv{\blurElem}}(\bvertA)  = \vertA(\scenter) +\vertB(\bvertA) - \blurElem(i) =  b - \blurElem(i).\]
		We see that  
		\[\left|\setcond*{\blurElem \in \blurrer}{(\vertA,\gadgetiso{\vertB}{\inv{\blurElem}}) \in \BlockA}\right| = \countVects{i}{b-a}{\blurrer}.\]
		
		Set $c := 2^{\pot-1}$ if $i = 1$ (and so $\bvertA = \stip$)
		and $c:=0$ otherwise.
		Then $\countVects{i}{b-a}{\blurrer} = 1$
		if and only if $b-a = c$
		by the properties of a blurrer.
		It follows that 
		\begin{align*}
			(\charMat{\BlockA} \cdot S_{\BlockA_2}) (\vertA, \vertB)
			&=\begin{cases}
				1 & \text{if } b - a = c,\\
				0 &\text{otherwise.}
			\end{cases}
		\end{align*}
		\begin{itemize}
			\item If $i \neq 1$ (so $\bvertA \neq t'$),
			then $c = 0$ and $(\charMat{\BlockA} \cdot S_{\BlockA_2}) (\vertA, \vertB) = 1$ if and only if $b = a$,
			but that holds if and only if $(\vertA,\vertB) \in \BlockA$.
			So
			\[\charMat{\BlockA} \cdot S_{\BlockA_2} = \charMat{\BlockA} = \idmat \cdot \charMat{\BlockB} = S_{\BlockB_1} \cdot  \charMat{\BlockB}\]
			because $\BlockB= \invMapBij(\BlockA) =\BlockA$.
			\item If $i = 1$ (so $\bvertA = \stip$), then
			$(\charMat{\BlockA} \cdot S_{\BlockA_2}) (\vertA, \vertB) = 1$
			if and only if
			$b - a = 2^{\pot-1}$, i.e., $a +  2^{\pot-1} = b$.
			But that holds by definition of $\invMapBij$ if and only if
			\begin{align*}
				(\vertA, \vertB) \in \BlockB &= \invMapBij(\BlockA) = E_{(\bvertA,\scenter),a+2^{\pot-1}}
			\intertext{and so}
				\charMat{\BlockA} \cdot S_{\BlockA_2} &= \idmat \cdot \charMat{\BlockB} = S_{\BlockB_1} \cdot  \charMat{\BlockB}.
		\end{align*}
		\end{itemize}

		\item Case $\bvertB = \scenter$ and $\set{\bvertA, \scenter} \notin E$
		(the case $\bvertA = \scenter$ and $\set{\scenter, \bvertB} \notin E$ is analogous):
		By the assumption that $\distance{G}{\scenter}{\orig{\pTupA}} \geq 3$
		the type of $(\vertA,\vertB)$ and $(\vertA,\vertB')$
		for $\vertA \in \StructVA_\bvertA$ and $\vertB,\vertB' \in \StructVA_\scenter$
		is equal.
		So $(\vertA,\vertB) \in \BlockA$ if and only if  $(\vertA,\vertB') \in \BlockA$ by Corollary~\ref{cor:orbit-types-unique}.
		In particular,
		$(\vertA,\vertB) \in \BlockA$
		if and only if 
		$(\vertA, \gadgetiso{\vertB}{\inv{\blurElem}}) \in \BlockA$
		for every $\blurElem \in \blurrer$.
		Set
		\begin{align*}
			D &:= \setcond*{\blurElem \in \blurrer}{(\vertA,\gadgetiso{\vertB}{\inv{\blurElem}}) \in \BlockA}.
		\intertext{Then we have}
			 (\charMat{\BlockA} \cdot S_{\BlockA_2}) (\vertA,\vertB)
			&=\begin{cases}
				1 & \text{if } |D| \text{ is odd,}\\
				0 & \text{otherwise}
				\end{cases}\\
			&= \charMat{\BlockA}(\vertA,\vertB).
		\end{align*}
		The last step holds because
		if $(\vertA,\vertB) \in \BlockA$, then
		$D = \blurrer$ and
		$|D| = |\blurrer|$ is odd (Lemma~\ref{lem:1ary-blurrer}), and
		if $(\vertA,\vertB) \not\in \BlockA$,
		then $D = \emptyset$ and $|D| = 0$.
		As seen before,
		\[\charMat{\BlockA} \cdot S_{\BlockA_2}=\charMat{\BlockA} = \charMat{\BlockB} = S_{\BlockB_2} \cdot \charMat{\BlockB}\]
		because $\BlockB = \invMapBij(\BlockA) = \BlockA$.
		\qedhere
	\end{itemize}

\end{proof}
\begin{corollary}
	The matrix~$S$ $1$-blurs the twist between $(\StructA_f, \pTupA)$
	and $(\StructA_g, \pTupA)$.
\end{corollary}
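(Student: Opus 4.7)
The statement is a direct corollary of the two preceding results, so the plan is essentially to assemble them. By Definition~\ref{def:blur-twist}, the claim requires verifying two things: that $S$ is invertible, and that $\charMat{\BlockA} \cdot S = S \cdot \charMat{\BlockB}$ whenever $\BlockA \in \PartA_2$ and $\BlockB \in \PartB_2$ have the same type. Invertibility is already established by Corollary~\ref{cor:1ary-matrix-invertible}. For the commutation, I would invoke Lemma~\ref{lem:1-ary-matrices-similar} applied to each $\BlockA \in \PartA_2$ paired with $f(\BlockA) \in \PartB_2$.

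The only thing worth checking explicitly is that the bijection $f \colon \PartA_2 \to \PartB_2$ constructed just before Lemma~\ref{lem:1-ary-matrices-similar} coincides with the unique type-preserving bijection from $\PartA_2$ to $\PartB_2$. By Corollary~\ref{cor:orbit-types-unique}, each orbit is determined uniquely by its type within a single structure, and Lemma~\ref{cor:cfi-same-autgroup-and-orbits} together with Lemma~\ref{lem:cfi-occuring-orbit-types-equal} ensures that for each type occurring in $(\StructA,\pTupA)$ there is a unique orbit of the same type in $(\StructB,\pTupA)$. Hence to conclude it suffices to verify that, in the case distinction defining $f$, each $\BlockA$ and $f(\BlockA)$ really do share the same type. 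For orbits whose origin avoids the edge $\{t,t'\}$ this is trivial since $\StructA$ and $\StructB$ agree on those induced substructures. For orbits of origin $(t,t')$ or $(t',t)$, the identification of $\BlockA = E_{\{t,t'\},a}$ with $f(\BlockA) = E_{\{t,t'\},a+2^{\pot-1}}$ is precisely the type-preserving correspondence induced by the twist at $\{t,t'\}$ (cf.~Figure~\ref{fig:twist}).

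Putting the pieces together: for every $\BlockA \in \PartA_2$, Lemma~\ref{lem:1-ary-matrices-similar} yields $\charMat{\BlockA} \cdot S = S \cdot \charMat{f(\BlockA)}$, and $f(\BlockA)$ is the unique orbit of the same type as $\BlockA$. Together with Corollary~\ref{cor:1ary-matrix-invertible}, this verifies both conditions of Definition~\ref{def:blur-twist}. There is no real obstacle; the work has all been done in the preceding lemmas, and this corollary simply packages them into the form needed by Duplicator in the invertible-map game.
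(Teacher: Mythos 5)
Your proposal is correct and matches the paper's (implicit) argument: the corollary is just the conjunction of Corollary~\ref{cor:1ary-matrix-invertible} (invertibility) and Lemma~\ref{lem:1-ary-matrices-similar} (the similarity equations for each $\BlockA \in \PartA_2$), once one observes that the bijection $f$ defined before that lemma is precisely the type-preserving correspondence required by Definition~\ref{def:blur-twist}. Your extra verification that $f$ really is type-preserving is a useful sanity check but is exactly what the paper establishes in the text immediately preceding Lemma~\ref{lem:1-ary-matrices-similar}.
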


We summarize the results of this section:
\begin{lemma}
	\label{lem:1ary-summary}
	For every $\pot \geq 2, m \in \nat$,
	every $(m+3)$-connected base graph $G=(V,E,\leq)$,
	every $f,g \colon E \to \ZZ_{2^\pot}$
	twisting exactly one edge $\set{\scenter,\stip}$
	such that $f(\set{\scenter,\stip}) = g(\set{\scenter,\stip}) + 2^{\pot-1}$,
	and every $m$-tuple $\pTupA \in \StructVA^m$
	of $\StructA_f := \CFIgraph{2^\pot}{G}{f}$
	and $\StructA_g := \CFIgraph{2^\pot}{G}{f}$,
	there is an odd-filled matrix~$S$,
	both orbit-diagonal and orbit-invariant over $(\StructA_f, \pTupA)$ and $(\StructA_g, \pTupA)$,
	that
	$1$-blurs the twist between $(\StructA_f, \pTupA)$
	and $(\StructA_g, \pTupA)$
	and satisfies that $S_{\BlockA,\BlockB} = \idmat$
	for every $k$-orbits $\BlockA \in \orbs{1}{(\StructA_f, \pTupA)}$
	and $\BlockB \in \orbs{1}{(\StructA_g, \pTupA)}$
	of the same type with $\orig{\BlockA} = \orig{\BlockB} \neq \scenter$.
\end{lemma}

Constructing matrices blurring the twist for
higher arities is more difficult:
First, we have to generalize our notion of a blurrer to arity~$k$.
Second, we are faced with disconnected orbits,
which do not pose a problem in the $1$-ary case, but complicate matters in the general case.
To deal with these orbits, we need to establish more technical lemmas for matrices over CFI structures.

\section{The Active Region of a Matrix}
\label{sec:active-region}
In this section, we consider the part of a matrix~$S$, where~$S$ ``has a non-trivial effect''.
Intuitively, this means that~$S$ is locally not the identity matrix.
We will call these parts the active region.
Conversely, for parts where~$S$ is not active,~$S$ is locally the identity matrix.
We now make this idea formal.

As in Section~\ref{sec:matrices-cfi},
let $\pot, k, m \in \nat$ and
$G=(V,E,\leq)$ be a $(k + m + 1)$-connected base graph.
We again denote for every $f \colon E \to \ZZ_{2^\pot}$ by~$\Struct_f$
the CFI structure $\CFIgraph{\ZZ_{2^\pot}}{G}{f}$
and by~$\StructVA$ the universe of these CFI structures.
Let $\pTupA \in \StructVA^m$ be arbitrary but fixed in this section.
For $N \subseteq V$,
the \defining{$N$-components $\ncomp{N}{\BlockA}$} of an orbit~$\BlockA$
is the set of components~$\comp$ of~$\BlockA$ satisfying $\comp \subseteq N$.

\begin{definition}[Active Region]
	Let $f,g \colon E \to \ZZ_{2^\pot}$ not twist $\orig{\pTupA}$,~%
	$S$ be an ${\StructVA^k \times \StructVA^k}$ matrix over~$\FF_2$,
	and $\PartA_k = \orbs{k}{(\Struct_f, \pTupA)}$ and
	$\PartB_k = \orbs{k}{(\Struct_g, \pTupA)}$.
	For $\BlockA \in \PartA_k$ and $\BlockB \in \PartB_k$
	of the same type,
	the matrix~$S$ is \defining{active} (with respect to  $(\StructA_f,\pTupA)$ and $(\StructA_g,\pTupA)$) on a component~$\comp$ of~$\BlockA$ (and so of~$\BlockB$),
	if there are
	$\tupA \in \BlockA$ and
	$\tupB \in \BlockB$
	such that ${\tupA_\comp \neq \tupB_\comp}$
	and $S(\tupA,\tupB) = 1$.
	We write $\activeRegionBIdx{f,g}{\pTupA}{S}{\BlockA} = \activeRegionBIdx{f,g}{\pTupA}{S}{\BlockB}$
	for the set of components of~$\BlockA$, on which~$S$ is active,
	and $\inactiveRegionBIdx{f,g}{\pTupA}{S}{\BlockA} = \inactiveRegionBIdx{f,g}{\pTupA}{S}{\BlockB}$ for the remaining components.
	The \defining{active region $\activeRegionIdx{f,g}{\pTupA}{S} \subseteq V$ of~$S$} is the 
	inclusion-wise smallest set satisfying the following:
	\begin{enumerate}
		\item \label{itm:active-region-contaied} $\comp \subseteq \activeRegionIdx{f,g}{\pTupA}{S}$
		for every $\comp \in \activeRegionBIdx{f,g}{\pTupA}{S}{\BlockA}$
		and every $\BlockA \in \PartA_k$.
		\item \label{itm:active-region-replace}
		For every $\BlockA, \BlockA' \in \PartA_k$
		and $\BlockB, \BlockB' \in \PartB_k$ 
		such that
		\[\ncomp{\activeRegionIdx{f,g}{\pTupA}{S}}{\BlockA} = \ncomp{\activeRegionIdx{f,g}{\pTupA}{S}}{\BlockA'} = \ncomp{\activeRegionIdx{f,g}{\pTupA}{S}}{\BlockB} = \ncomp{\activeRegionIdx{f,g}{\pTupA}{S}}{\BlockB'} =:\mathsf{A},\]
		both~$\BlockA$ and~$\BlockB$ (respectively~$\BlockA'$ and~$\BlockB'$)
		have the same type, and thus 
		$\inactiveRegionBIdx{f,g}{\pTupA}{S}{\BlockA} = \inactiveRegionBIdx{f,g}{\pTupA}{S}{\BlockB} =: \mathsf{N}$
		(respectively $\inactiveRegionBIdx{f,g}{\pTupA}{S}{\BlockA'} = \inactiveRegionBIdx{f,g}{\pTupA}{S}{\BlockB'}=: \mathsf{N}'$),
		and every $\tupA \in \BlockA$, $\tupA' \in \BlockA'$,
		$\tupB \in \BlockB$, and $\tupB' \in \BlockB'$,
		it holds that 
		if $\tupA_{\mathsf{A}} = \primeSub{\tupA}{\mathsf{A}}$,
		$\tupB_{\mathsf{A}} = \primeSub{\tupB}{\mathsf{A}}$,		
		$\tupA_{\mathsf{N}} = \tupB_{\mathsf{N}}$, and
		$\primeSub{\tupA}{\mathsf{N}'} = \primeSub{\tupB}{\mathsf{N}'}$,
		then $S(\tupA,\tupB) = S(\tupA',\tupB')$.		
	\end{enumerate}
\end{definition}
The active region is well-defined:
Clearly~$V$ itself satisfies Conditions~\ref{itm:active-region-contaied} and~\ref{itm:active-region-replace}.
If two sets~$X\subseteq V$ and~$Y \subseteq V$ satisfy the two conditions,
then also $X \cap Y$.
Note that $\ncomp{\activeRegionIdx{f,g}{\pTupA}{S}}{\BlockA}$
and $\inactiveRegionBIdx{f,g}{\pTupA}{S}{\BlockA}$
are not necessarily disjoint,
but $\inactiveRegionBIdx{f,g}{\pTupA}{S}{\BlockA}$ contains all components
of~$\BlockA$ not contained in $\ncomp{\activeRegionIdx{f,g}{\pTupA}{S}}{\BlockA}$.
Condition~\ref{itm:active-region-replace} equivalently can be stated 
only with the remaining components apart from $\ncomp{\activeRegionIdx{f,g}{\pTupA}{S}}{\BlockA}$ 
instead of $\inactiveRegionBIdx{f,g}{\pTupA}{S}{\BlockA}$.

Although Condition~\ref{itm:active-region-replace} is rather technical,
it ensures that the ``non-identity-part'' of~$S$ only depends on the active region:
$S(\tupA,\tupB)$
only depends on the components of~$\tupA$ and~$\tupB$,
on which~$S$ is active,
as long as the entries for the other components are equal
(otherwise $S(\tupA,\tupB)=0$ anyway by Condition~\ref{itm:active-region-contaied}).
That is $S(\tupA,\tupB)$ only depends on whether $\tupA_{\inactiveRegionBIdx{f,g}{\pTupA}{S}{\BlockA}} = \tupB_{\inactiveRegionBIdx{f,g}{\pTupA}{S}{\BlockA}}$
but not on e.g. the type of $\tupA_{\inactiveRegionBIdx{f,g}{\pTupA}{S}{\BlockA}}$.

We first consider the matrix blurring the twist defined in Section~\ref{sec:1-ary-case}:
\begin{lemma}
	\label{lem:1ary-active-region}
	The matrix $S$ given in the setting of Lemma~\ref{lem:1ary-summary}
	satisfies $\activeRegionIdx{f,g}{\pTupA}{S} = \set{\scenter}$.
\end{lemma}
\begin{proof}
	Let $\BlockA \in \orbs{1}{(\StructA_f, \pTupA)}$
	and $\BlockB \in \orbs{1}{(\StructA_g, \pTupA)}$
	be of the same type with origin $\bvertA = \orig{\BlockA} = \orig{\BlockB} \neq \scenter$.
	Then $S_{\BlockA \times \BlockB} = \idmat$ by Lemma~\ref{lem:1ary-summary},
	i.e.,~$S$ is clearly not active on $\set{\bvertA}$.
	The matrix~$S$ has to be active on $\set{\scenter}$
	because otherwise $S = \idmat$ and the structures would be isomorphic.
	This proves Condition~\ref{itm:active-region-contaied}.
	In the $1$-ary case, a $1$-orbit can only have one component,
	so Condition~\ref{itm:active-region-replace} of the active region is trivially satisfied.
\end{proof}

We now continue in the general case.
The rest of this section establishes
rather technical lemmas needed in Section~\ref{sec:higher-arities}.
It is easy to see that if~$\BlockA$ and~$\BlockB$ have the same type,
whose origins contain no vertex of $\activeRegionIdx{f,g}{\pTupA}{S}$,
then $S_{\BlockA\times\BlockB} = \idmat$.
In the region of a twist,~$S$ has to be active:
\begin{lemma}
	\label{lem:not-equal-components-implies-active}
	Let $f,g\colon E \to \ZZ_{2^\pot}$ not twist $\orig{\pTupA}$,~%
	$S$ be an $\StructVA^k \times \StructVA^k$ matrix over~$\FF_2$,
	and $\BlockA \in \orbs{k}{(\StructA_f, \pTupA)}$ and $\BlockB \in \orbs{k}{(\StructA_g, \pTupA)}$
	have the same type.
	If the block $S_{\BlockA \times \BlockB}$ is nonzero
	and~$\comp$ is a component of~$\BlockA$ (and thus of~$\BlockB$)
	such that $\restrictVect{\BlockA}{\comp} \neq \restrictVect{\BlockB}{\comp}$,
	then $\comp \in \activeRegionBIdx{f,g}{\pTupA}{S}{\BlockA}$.
\end{lemma}
\begin{proof}
	Let $\tupA\in \BlockA$ and $\tupB \in \BlockB$ such that 
	$S(\tupA,\tupB) = 1$.
	Such an entry must exist because $S_{\BlockA \times \BlockB}$ is nonzero.
	If $\tupA_\comp = \tupB_\comp$,
	then $\restrictVect{\BlockA}{\comp} = \restrictVect{\BlockB}{\comp}$ by Lemma~\ref{lem:split-discon-orbits} and Corollary~\ref{cor:orbit-types-unique},
	which contradicts our assumption.
	So $\tupA_\comp \neq \tupB_\comp$
	and $\comp \in \activeRegionBIdx{f,g}{\pTupA}{S}{\BlockA}$.
\end{proof}
The next lemma shows that, as long as~$f$ and~$g$ agree on the edges in $\orig{\pTupA}$,
the actual values~$f$ and~$g$ assign to edges~$e$
are not important but only the difference $f(e) -g(e)$ matters.
\begin{lemma}
	\label{lem:active-region-change-equally}
	Let $f,g\colon E \to \ZZ_{2^\pot}$ not twist $\orig{\pTupA}$
	and~$S$ be an $\StructVA^k\times \StructVA^k$ matrix over~$\FF_2$.
	Furthermore, let $f',g'\colon E \to \ZZ_{2^\pot}$
	such that $f'(e) = f(e)$ and $g'(e) = g(e)$ for every $e\in E$ with 
	$e \cap \orig{\pTupA} \neq \emptyset$
	and 
	$f'(e) - f(e) = g'(e) - g(e)$ for every other $e \in E$.
	Then $\activeRegionIdx{f,g}{\pTupA}{S} = \activeRegionIdx{f',g'}{\pTupA}{S}$
	and
	if~$S$ is orbit-diagonal (respectively orbit-invariant) over $(\StructA_f,\pTupA)$ and $(\StructA_g,\pTupA)$,
	then~$S$ is orbit-diagonal (respectively orbit-invariant) over $(\StructA_{f'},\pTupA)$ and $(\StructA_{g'},\pTupA)$.
\end{lemma}
\begin{proof}
	Note that if $\BlockA \in \orbs{k}{(\StructA_f,\pTupA)}$
	has the same type in $(\StructA_f, \pTupA)$
	as $\BlockB \in \orbs{k}{(\StructA_g,\pTupA)}$ has in $(\StructA_g, \pTupA)$,
	then $\BlockA \in \orbs{k}{(\StructA_{f'},\pTupA)}$ and $\BlockB \in \orbs{k}{(\StructA_{g'},\pTupA)}$ (Corollary~\ref{cor:cfi-same-autgroup-and-orbits})
	 and~$\BlockA$ has the same type in $(\StructA_{f'}, \pTupA)$
	 as~$\BlockB$ has in $(\StructA_{g'}, \pTupA)$.
	So we only change the type of the orbits,
	but not the correspondence between orbits of the same type.
	That is, if~$S$ is orbit-diagonal (respectively orbit-invariant) over $(\StructA_f,\pTupA)$ and $(\StructA_g,\pTupA)$,
	then~$S$ is orbit-diagonal (respectively orbit-invariant) over $(\StructA_{f'},\pTupA)$ and $(\StructA_{g'},\pTupA)$.
	Furthermore,~$S$ is active on the same components with respect to $(\StructA_f, \pTupA)$ and $(\StructA_g, \pTupA)$
	as~$S$ is with respect to $(\StructA_{f'}, \pTupA)$ and $(\StructA_{g'}, \pTupA)$.
	This implies that $\activeRegionIdx{f,g}{\pTupA}{S} = \activeRegionIdx{f',g'}{\pTupA}{S}$.
\end{proof}

We now show that the active region of products $S \cdot T$
is bounded by the active regions of~$S$ and~$T$.
For two $k$\nobreakdash-tuples $\tupA,\tupB \in \StructVA^k$
we use the Kronecker delta $\kroneckerEq{\tupA}{\tupB}$,
which is~$1$ if and only if $\tupA = \tupB$
and~$0$ otherwise.

\begin{lemma}
		\label{lem:active-region-mult}
	Let $f,g,h \colon E \to \ZZ_{2^\pot}$ pairwise not twist $\orig{\pTupA}$
	and $S,T$ be $\StructVA^k \times \StructVA^k$ matrices over~$\FF_2$.
	If~$S$ is orbit-diagonal over $(\StructA_f,\pTupA)$ and $(\StructA_g,\pTupA)$ and $T$ is orbit-diagonal over $(\StructA_g,\pTupA)$ and $(\StructA_h,\pTupA)$,
	then $\activeRegionIdx{f,h}{\pTupA}{S\cdot T} \subseteq \activeRegionIdx{f,g}{\pTupA}{S} \cup \activeRegionIdx{g,h}{\pTupA}{T}$.
\end{lemma}
\begin{proof}
	In this proof we omit the superscripts~$f$,~$g$,~$h$, and~$\pTupA$
	for readability:
	for~$S$ we always refer to~$f$ and~$g$,
	for~$T$ to~$g$ and~$h$, and
	for $S\cdot T$ to~$f$ and~$h$.
	We show that $\activeRegion{S} \cup \activeRegion{T}$ satisfies Conditions~\ref{itm:active-region-contaied} and~\ref{itm:active-region-replace}.
	Because the active region is the inclusion-wise minimal set satisfying the two conditions, it then follows that $\activeRegion{S\cdot T} \subseteq \activeRegion{S} \cup \activeRegion{T}$.
	Let $\PartA_k = \orbs{k}{(\StructA_f,\pTupA)}$, $\PartB_k = \orbs{k}{(\StructA_g,\pTupA)}$, and $\PartC_k = \orbs{k}{(\StructA_h,\pTupA)}$.
	
	We show  Condition~\ref{itm:active-region-contaied} by contraposition.
	Let $\BlockA \in \PartA_k$ and~$\comp$ be a connected component of $G[\orig{\BlockA}]$. 
	We show that if $\comp \not\in \activeRegionB{S}{\BlockA} \cup \activeRegionB{T}{\BlockA}$, then $\comp \not\in \activeRegionB{S\cdot T}{\BlockA}$.
	Let $\comp \not\in \activeRegionB{S}{\BlockA} \cup \activeRegionB{T}{\BlockA}$,
	$\BlockB \in \PartB_k$ and $\BlockC \in \PartC_k$ be of the same type as~$\BlockA$, $\tupA \in \BlockA$, and $\tupC \in \BlockC$.
	Because~$S$ and~$T$ are orbit-diagonal,
	\[(S\cdot T)(\tupA, \tupC) = \sum_{\tupB \in \BlockB } S(\tupA,\tupB) \cdot T(\tupB, \tupC).\]
	If $S(\tupA,\tupB) = 1$ (i.e., $S(\tupA,\tupB) \neq 0$), then $\tupA_{\comp} = \tupB_{\comp}$ because $\comp \notin \activeRegionB{S}{\BlockA}$.
	Similarly, $\tupB_{\comp} = \tupC_{\comp}$ if $T(\tupB,\tupC) = 1$.
	This implies $\tupA_{\comp} = \tupC_{\comp}$
	if $(S\cdot T)(\tupA, \tupC) = 1$
	(so there is at least one nonzero summand).
	Hence, $\comp \not\in \activeRegionB{S\cdot T}{\BlockA}$.

	To show Condition~\ref{itm:active-region-replace}, let $\BlockA, \BlockA' \in \PartA_k$
	and $\BlockC, \BlockC' \in \PartC_k$ be arbitrary $k$-orbits,
	such that 
	\[\mathsf{A}:=\ncomp{\activeRegion{S}\cup \activeRegion{T}}{\BlockA} = \ncomp{\activeRegion{S}\cup \activeRegion{T}}{\BlockA'} = \ncomp{\activeRegion{S}\cup \activeRegion{T}}{\BlockC} = \ncomp{\activeRegion{S}\cup \activeRegion{T}}{\BlockC'},\]
	the orbits~$\BlockA$ and~$\BlockC$ have the same type,
	and~$\BlockA'$ and~$\BlockC'$ have the same type.
	Let $\mathsf{N}$ be the set of remaining components of~$\BlockA$ (and so of $\BlockC$) apart from~$\mathsf{A}$.
	Similarly, let~$\mathsf{N}'$ be the set of remaining components of~$\BlockA'$ (and so of~$\BlockC'$) apart from~$\mathsf{A}$ .
	Let $\tupA \in \BlockA$, $\tupA' \in \BlockA'$,
	$\tupC \in \BlockC$, and $\tupC' \in \BlockC'$, such that
	$\tupA_{\mathsf{A}} = \primeSub{\tupA}{\mathsf{A}}$,
	$\tupC_{\mathsf{A}} = \primeSub{\tupC}{\mathsf{A}}$,
	$\tupA_{\mathsf{N}} = \tupC_{\mathsf{N}}$, and
	$\primeSub{\tupA}{\mathsf{N}'} = \primeSub{\tupC}{\mathsf{N}'}$.
	We have to show that $(S\cdot T)(\tupA,\tupC) = (S\cdot T)(\tupA', \tupC')$.
	
	By assumption, $\tupA_\mathsf{A} \in \restrictVect{\BlockA}{\mathsf{A}}$,
	$\primeSub{\tupA}{\mathsf{A}} \in \restrictVect{\BlockA'}{\mathsf{A}} $,
	and $\tupA_\mathsf{A} =\primeSub{\tupA}{\mathsf{A}}$.
	So $\restrictVect{\BlockA}{\mathsf{A}} = \restrictVect{\BlockA'}{\mathsf{A}}$
	by Corollary~\ref{cor:orbit-types-unique}
	because they have the same type and contain the same tuple.	
	Let $\BlockB \in \PartB_k$ be of the same type as~$\BlockA$
	and $\BlockB' \in \PartB_k$ be of the same type as~$\BlockA'$.
	Then $\restrictVect{\BlockB}{\mathsf{A}} = \restrictVect{\BlockB'}{\mathsf{A}}$
	and~$\mathsf{A}$ and~$\mathsf{N}$ (respectively~$\mathsf{N}'$) are sets of components of~$\BlockB$
	(respectively $\BlockB'$).
	We first assume that the blocks $S_{\BlockA \times \BlockB}$ and
	$T_{\BlockB \times \BlockC}$ are nonzero.
	We apply Lemma~\ref{lem:split-discon-orbits}:
	$\BlockB = \restrictVect{\BlockB}{\mathsf{A}} \times \restrictVect{\BlockB}{\mathsf{N}}$,
	$\BlockB' = \restrictVect{\BlockB'}{\mathsf{A}} \times \restrictVect{\BlockB'}{\mathsf{N}'}$,
	and likewise for~$\BlockA$ and~$\BlockA'$.	
	\begin{align*}
		(S\cdot T)(\tupA, \tupC)
		&= \sum_{\tupB \in \BlockB } S(\tupA,\tupB) \cdot T(\tupB, \tupC)\\
		&= \sum_{\tupB_\mathsf{A} \in \restrictVect{\BlockB}{\mathsf{A}} }
		\sum_{\tupB_{\mathsf{N}} \in \restrictVect{\BlockB}{\mathsf{N}} } S(\tupA_\mathsf{A}\tupA_{\mathsf{N}},\tupB_\mathsf{A}\tupB_{\mathsf{N}}) \cdot T(\tupB_\mathsf{A}\tupB_{\mathsf{N}}, \tupC_\mathsf{A}\tupC_{\mathsf{N}}) \tag{$\star$}.
	\end{align*}
	From Lemma~\ref{lem:not-equal-components-implies-active} it
	follows that $\restrictVect{P}{\mathsf{N}} = \restrictVect{Q}{\mathsf{N}} = \restrictVect{R}{\mathsf{N}}$ (recall we assumed that the blocks $S_{\BlockA \times \BlockB}$ and
	$T_{\BlockB \times \BlockC}$ are nonzero),
	in particular $\tupA_{\mathsf{N}},\tupC_{\mathsf{N}} \in \restrictVect{Q}{\mathsf{N}}$.
	We use again that the ${\mathsf{N}}$-components are not in the active region of~$S$ and~$T$. We continue the equation~($\star$):
	\begin{align*}
		(\star) &= \sum_{\tupB_\mathsf{A} \in \restrictVect{\BlockB}{\mathsf{A}} }
		\sum_{\tupB_N \in \restrictVect{\BlockB}{\mathsf{N}} }
		\kroneckerEq{\tupA_{\mathsf{N}}}{\tupB_{\mathsf{N}}} \cdot S(\tupA_\mathsf{A}\tupA_{\mathsf{N}},\tupB_\mathsf{A}\tupA_{\mathsf{N}}) \cdot
		\kroneckerEq{\tupB_{\mathsf{N}}}{\tupC_{\mathsf{N}}} \cdot T(\tupB_\mathsf{A}\tupC_{\mathsf{N}}, \tupC_\mathsf{A}\tupC_{\mathsf{N}})\\
		&= \sum_{\tupB_\mathsf{A} \in \restrictVect{\BlockB}{\mathsf{A}} }
		\kroneckerEq{\tupA_{\mathsf{N}}}{\tupC_{\mathsf{N}}} \cdot S(\tupA_\mathsf{A}\tupA_{\mathsf{N}},\tupB_\mathsf{A}\tupA_{\mathsf{N}}) \cdot T(\tupB_\mathsf{A}\tupC_{\mathsf{N}}, \tupC_\mathsf{A}\tupC_{\mathsf{N}})\\
		&= \sum_{\tupB_\mathsf{A} \in \restrictVect{\BlockB'}{\mathsf{A}} }
		\kroneckerEq{\primeSub{\tupA}{\mathsf{N}'}}{\primeSub{\tupC}{\mathsf{N}'}} \cdot S(\tupA_\mathsf{A}\primeSub{\tupA}{\mathsf{N}'},\tupB_\mathsf{A}\primeSub{\tupA}{\mathsf{N}'}) \cdot T(\tupB_\mathsf{A}\primeSub{\tupC}{\mathsf{N}'}, \tupC_\mathsf{A}\primeSub{\tupC}{\mathsf{N}'})\\
		&=\sum_{\primeSub{\tupB}{\mathsf{A}} \in \restrictVect{\BlockB'}{\mathsf{A}} }
		\sum_{\primeSub{\tupB}{\mathsf{N}'} \in \restrictVect{\BlockB'}{\mathsf{N}'}}
		\kroneckerEq{\primeSub{\tupA}{\mathsf{N}'}}{\primeSub{\tupB}{\mathsf{N}'}} \cdot S(\tupA_\mathsf{A}\primeSub{\tupA}{\mathsf{N}'},\primeSub{\tupB}{\mathsf{A}}\primeSub{\tupA}{\mathsf{N}'}) \cdot
		\kroneckerEq{\primeSub{\tupB}{\mathsf{N}'}}{\primeSub{\tupC}{\mathsf{N}'}} \cdot T(\primeSub{\tupB}{\mathsf{A}}\primeSub{\tupC}{\mathsf{N}'}, \tupC_\mathsf{A}\primeSub{\tupC}{\mathsf{N}'}).
	\end{align*}
	We used, as already seen, $\restrictVect{\BlockB}{\mathsf{A}}=\restrictVect{\BlockB'}{\mathsf{A}}$.
	We also used $\activeRegionB{S\cdot T}{\BlockA} \subseteq \activeRegionB{S}{\BlockA} \cup \activeRegionB{T}{\BlockA}$ as shown for Condition~\ref{itm:active-region-contaied}.
	So~$\tupA_{\mathsf{N}}$ can be exchanged with~$\primeSub{\tupA}{\mathsf{N}'}$
	and~$\tupC_{\mathsf{N}}$ with~$\primeSub{\tupC}{\mathsf{N}'}$.
	In the next step we use that $\tupA_\mathsf{A} = \primeSub{\tupA}{\mathsf{A}}$ and $\tupC_\mathsf{A} = \primeSub{\tupC}{\mathsf{A}}$ (by assumption)
	and again that~$S$ and~$T$ are not active on the ${\mathsf{N}'}$-components.
	\begin{align*}
		(\star) &= \sum_{\primeSub{\tupB}{\mathsf{A}} \in \restrictVect{\BlockB'}{\mathsf{A}} }
		\sum_{\primeSub{\tupB}{\mathsf{N}} \in \restrictVect{\BlockB'}{\mathsf{N}'} } S(\primeSub{\tupA}{\mathsf{A}}\primeSub{\tupA}{\mathsf{N}'},\primeSub{\tupB}{\mathsf{A}}\primeSub{\tupB}{\mathsf{N}'}) \cdot T(\primeSub{\tupB}{\mathsf{A}}\primeSub{\tupB}{\mathsf{N}'}, \primeSub{\tupC}{\mathsf{A}}\primeSub{\tupC}{\mathsf{N}'}) \\
		&= \sum_{\tupB' \in \BlockB' } S(\tupA',\tupB') \cdot T(\tupB', \tupC')\\
		&= (S\cdot T)(\tupA', \tupC').
	\end{align*}
	If $S_{\BlockA \times \BlockB}$ or
	$T_{\BlockB \times \BlockC}$ is zero,
	then $S_{\BlockA' \times \BlockB'}$ or
	$T_{\BlockB' \times \BlockC'}$ is zero
	because $S(\tupA,\tupB) = S(\tupA',\tupB') = 0$ and likewise for~$T$.
	The claim follows
	because $(S\cdot T) _{\BlockA \times \BlockB} = \zeromat$
	and $(S\cdot T) _{\BlockA' \times \BlockB'} = \zeromat$.
\end{proof}

We now consider products $S \cdot T$
in the case that the active regions of~$S$ and~$T$ are disjoint.
Intuitively, our goal is to prove that then  $S \cdot T$
is given by~$S$ on the active region of~$S$
and by~$T$ on the active region of~$T$. 

\begin{lemma}
	\label{lem:orbit-diagonal-multiply}
	Let $f,g,h \colon E \to \ZZ_{2^\pot}$ pairwise not twist $\orig{\pTupA}$
	and $S,T$ be $\StructVA^k \times \StructVA^k$ matrices over~$\FF_2$.
	Let~$S$ be orbit-diagonal over $(\StructA_f,\pTupA)$ and $(\StructA_g,\pTupA)$,~$T$ be orbit-diagonal over $(\StructA_g,\pTupA)$ and $(\StructA_h,\pTupA)$, both be odd-filled,
	$\activeRegionIdx{f,g}{\pTupA}{S} \cap \activeRegionIdx{g,h}{\pTupA}{T} = \emptyset$,
	$\BlockA \in \orbs{k}{(\StructA_f,\pTupA)}$, $\BlockB \in \orbs{k}{(\StructA_g,\pTupA)}$, and $\BlockC \in \orbs{k}{(\StructA_h,\pTupA)}$
	be of the same type,
	and the components of~$\BlockA$ (and thus the components of~$\BlockB$ and~$\BlockC$) be partitioned
	into~$M$ and~$N$
	such that $\ncomp{\activeRegionIdx{f,g}{\pTupA}{S}}{\BlockA} \subseteq M$
	and  $\ncomp{\activeRegionIdx{g,h}{\pTupA}{T}}{\BlockB} \subseteq N$.
	\begin{enumerate}[label = (\alph*)]
		\item \label{itm:disjoint-active-region} For every $\tupA \in \BlockA$ and $\tupC \in \BlockC$
		it holds that \[(S\cdot T)(\tupA,\tupC) = S(\tupA_M\tupA_N,\tupC_M\tupA_N) \cdot
		T(\tupC_M\tupA_N,\tupC_M\tupC_N).\]
		\item \label{itm:sum-left-disjoint-active-region} If~$S$ is orbit-invariant over $(\StructA_f,\pTupA)$ and $(\StructA_g,\pTupA)$,
		then for every $\tupA \in \BlockA$ and $\tupC\in \BlockC$ it holds that
		\[\sum_{\primeSub{\tupA}{M} \in \restrictVect{\BlockA}{M}} (S\cdot T) (\primeSub{\tupA}{M}\tupA_N,\tupC_M\tupC_N) = T(\tupC_M\tupA_N, \tupC_M\tupC_N).\]
	\end{enumerate} 
\end{lemma}
\begin{proof}
	We first show Part~\ref{itm:disjoint-active-region}. Let $\tupA \in \BlockA$ and $\tupC \in \BlockC$.
	\begin{align*}
		 (S\cdot T)(\tupA,\tupC)
		&= \sum_{\tupB \in \BlockB} 
		S(\tupA_M\tupA_N,\tupB_M\tupB_N) \cdot 
		T(\tupB_M\tupB_N, \tupC_M\tupC_N)\\
		&= \sum_{\tupB \in \BlockB} 
		\kroneckerEq{\tupA_N}{\tupB_N} \cdot S(\tupA_M\tupA_N,\tupB_M\tupA_N) \cdot
		\kroneckerEq{\tupB_M}{\tupC_M}  \cdot T(\tupC_M\tupB_N, \tupC_M\tupC_N). \tag{$\star$}
	\end{align*}
	The last step uses that components of~$\tupA_N$ and~$\tupC_N$
	consist only of components not contained in $\activeRegion{S}$
	and likewise for~$\tupB_M$ and~$\tupC_M$.
	\begin{align*}
		(\star)
		&= \sum_{\substack{\tupB \in \BlockB,\\ \tupB= \tupC_M\tupA_N}} 
		S(\tupA_M\tupA_N,\tupC_M\tupA_N) \cdot
		T(\tupC_M\tupA_N, \tupC_M\tupC_N)\\
		&= S(\tupA_M\tupA_N,\tupC_M\tupA_N) \cdot 
		T(\tupC_M\tupA_N, \tupC_M\tupC_N). 
	\end{align*}
	For the last step, we have to argue that $\tupC_M\tupA_N \in \BlockB$.
	From Lemma~\ref{lem:split-discon-orbits} it follows that 
	$\BlockA = \restrictVect{\BlockA}{M} \times \restrictVect{\BlockA}{N}$,
	$\BlockB = \restrictVect{\BlockB}{M} \times \restrictVect{\BlockB}{N}$, and
	$\BlockC = \restrictVect{\BlockC}{M} \times \restrictVect{\BlockC}{N}$.
	Because~$S$ is not active on the components in~$N$
	and~$T$ is not active on the components in~$M$,
	it follows from Lemma~\ref{lem:not-equal-components-implies-active} that
	$\restrictVect{\BlockA}{N} = \restrictVect{\BlockB}{N}$ and that
	$\restrictVect{\BlockB}{M} = \restrictVect{\BlockC}{M}$ (the corresponding blocks of~$S$ and~$T$ are nonzero because~$S$ and~$T$ are odd-filled).
	Hence, $\tupC_M\tupA_N \in \BlockB$
	because $\tupC_M \in \restrictVect{\BlockC}{M}$ and $\tupA_N \in \restrictVect{\BlockA}{N}$.

	We now show Part~\ref{itm:sum-left-disjoint-active-region}.
	We apply Part~\ref{itm:disjoint-active-region}:
	\begin{align*}
		&\hiddenEq\sum_{\tupA_M' \in \restrictVect{\BlockA}{M}}
		(S\cdot T) (\primeSub{\tupA}{M}\tupA_N,\tupC_M\tupC_N) \\
		&= \sum_{\primeSub{\tupA}{M} \in \restrictVect{\BlockA}{M}}
		S(\primeSub{\tupA}{M}\tupA_N,\tupC_M\tupA_N) \cdot
		T(\tupC_M\tupA_N,\tupC_M\tupC_N)\\
		&= T(\tupC_M\tupA_N,\tupC_M\tupC_N) \cdot
		\sum_{\primeSub{\tupA}{M} \in \restrictVect{\BlockA}{M}}
		S(\primeSub{\tupA}{M}\tupA_N,\tupC_M\tupA_N).
	\end{align*}
	It suffices to show that the value of the sum is~$1$.
	We rewrite the sum using $\BlockA = \restrictVect{\BlockA}{M} \times \restrictVect{\BlockA}{N}$ (Lemma~\ref{lem:split-discon-orbits}):
	\begin{align*}
		\sum_{\primeSub{\tupA}{M} \in \restrictVect{\BlockA}{M}}
		S(\primeSub{\tupA}{M}\tupA_N,\tupC_M\tupA_N)
		= \sum_{\primeSub{\tupA}{M}\primeSub{\tupA}{N} \in \BlockA}
		S(\primeSub{\tupA}{M}\primeSub{\tupA}{N},\tupC_M\tupA_N) - 
		\sum_{\substack{\primeSub{\tupA}{M}\primeSub{\tupA}{N} \in \BlockA,\\\primeSub{\tupA}{N} \neq \tupA_N }}
		S(\primeSub{\tupA}{M}\primeSub{\tupA}{N},\tupC_M\tupA_N).
	\end{align*}
	In the right sum
	it always holds that $S(\primeSub{\tupA}{M}\primeSub{\tupA}{N},\tupC_M\tupA_N) = 0$ because $\primeSub{\tupA}{N} \neq \tupA_N$
	and~$N$ is not in the active region of~$S$.
	So the right sum is zero.
	Finally, the left sum $\sum_{\primeSub{\tupA}{M}\primeSub{\tupA}{N} \in \BlockA}
	S(\primeSub{\tupA}{M}\primeSub{\tupA}{N},\tupC_M\tupA_N)$
	sums over a column of~$S$ because~$S$ is orbit-diagonal.
	Because~$S$ is orbit-invariant and odd-filled,
	this sum is~$1$ by Lemma~\ref{lem:block-diagonal-odd-filled-orbit-invariant-columns}.
\end{proof}

Finally, we show the result of Lemma~\ref{lem:orbit-diagonal-multiply}\ref{itm:sum-left-disjoint-active-region}
for a
product of three matrices $S_1 \cdot S_2 \cdot S_3$.

\begin{lemma}
	\label{lem:orbit-diagonal-multiply-sum-middle-disjoint-active-region}
	Let $g_i \colon E \to \ZZ_{2^\pot}$ pairwise not twist $\orig{\pTupA}$
	for every $i \in [4]$.
	Let $S_i$ be an $\StructVA^k \times \StructVA^k$ matrix over~$\FF_2$
	that is odd-filled and both orbit-diagonal and orbit-invariant over $(\StructA_{g_i},\pTupA)$ and $(\StructA_{g_{i+1}},\pTupA)$ for every $i \in [3]$.
	If the active regions $\activeRegionIdx{g_i,g_{i+1}}{\pTupA}{S_i}$ are pairwise disjoint,
	then for all $k$-orbits $\BlockA_i \in \orbs{k}{(\StructA_{g_i},\pTupA)}$ of the same type for all $i\in[4]$,
	every partition of the components of the $\BlockA_i$ into~$M_1$,~$M_2$, and~$M_3$
	such that ${\ncomp{\activeRegionIdx{g_i,g_{i+1}}{\pTupA}{S_i}}{\BlockA_i} \subseteq M_i}$ for every $i \in [3]$,
	and every $\tupA\in \BlockA_1$ and $\tupC \in \BlockA_4$ it holds that
	\[\sum_{\tupA_{M_2}' \in \restrictVect{\BlockA_1}{M_2}}
	(S_1\cdot S_2 \cdot S_3) (\tupA_{M_1}\primeSub{\tupA}{M_2}\tupA_{M_3},\tupC_{M_1}\tupC_{M_2}\tupC_{M_3}) =
	(S_1 \cdot S_3) (\tupA_{M_1}\tupC_{M_2}\tupA_{M_3},\tupC_{M_1}\tupC_{M_2}\tupC_{M_3}).\]
\end{lemma}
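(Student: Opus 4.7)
The plan is to associate the triple product as $(S_1 \cdot S_2) \cdot S_3$, apply Lemma~\ref{lem:orbit-diagonal-multiply-disjoint-active-region} twice to split each matrix entry into a product of entries of the individual $S_i$ on carefully chosen arguments, pull out of the sum all factors that do not depend on $\tupA_{M_2}'$, and evaluate the remaining sum as a column sum of an orbit block of $S_2$ using Lemma~\ref{lem:block-diagonal-odd-filled-orbit-invariant-columns}.

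First I would collect the preliminary facts: combining Lemmas~\ref{lem:orbit-invariant-mult},~\ref{lem:odd-filled-product}, and~\ref{lem:active-region-mult}, each of the pairwise products $S_i \cdot S_j$ (and hence also the triple product) is orbit-diagonal, orbit-invariant, and odd-filled, and its active region is contained in the union of the active regions of its factors, hence in particular disjoint from the remaining $M_l$. Applying Lemma~\ref{lem:orbit-diagonal-multiply-disjoint-active-region} to $(S_1 \cdot S_2) \cdot S_3$ with the partition $M_1 \cup M_2$ versus $M_3$ then yields
\[
(S_1 \cdot S_2 \cdot S_3)(\tupA_{M_1}\tupA_{M_2}'\tupA_{M_3}, \tupC) = (S_1 \cdot S_2)(\tupA_{M_1}\tupA_{M_2}'\tupA_{M_3}, \tupC_{M_1}\tupC_{M_2}\tupA_{M_3}) \cdot S_3(\tupC_{M_1}\tupC_{M_2}\tupA_{M_3}, \tupC),
\]
where the $S_3$ factor is independent of $\tupA_{M_2}'$ and can be pulled out of the sum. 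A second application of the same lemma, now to $S_1 \cdot S_2$ with the partition $M_1$ versus $M_2 \cup M_3$, factors the remaining term as $S_1(\tupA_{M_1}\tupA_{M_2}'\tupA_{M_3}, \tupC_{M_1}\tupA_{M_2}'\tupA_{M_3}) \cdot S_2(\tupC_{M_1}\tupA_{M_2}'\tupA_{M_3}, \tupC_{M_1}\tupC_{M_2}\tupA_{M_3})$; since $S_1$ is inactive on $M_2 \cup M_3$ and the source and target of this $S_1$ entry agree there, the definition of active region lets me replace $\tupA_{M_2}'$ by $\tupA_{M_2}$ inside the $S_1$ entry, so this factor is also independent of $\tupA_{M_2}'$ and is pulled out of the sum.

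What remains is $\sum_{\tupA_{M_2}' \in \restrictVect{\BlockA}{M_2}} S_2(\tupC_{M_1}\tupA_{M_2}'\tupA_{M_3}, \tupC_{M_1}\tupC_{M_2}\tupA_{M_3})$. As $\tupA_{M_2}'$ ranges over $\restrictVect{\BlockA}{M_2}$, the source tuple ranges over a full $k$-orbit of $(\StructA_2, \pTupA)$ of the same type as the target orbit, so this is precisely the column sum of an orbit block of $S_2$ and equals $1$ by Lemma~\ref{lem:block-diagonal-odd-filled-orbit-invariant-columns}. Reassembling everything, the left-hand side of the lemma becomes $S_1(\tupA_{M_1}\tupA_{M_2}\tupA_{M_3}, \tupC_{M_1}\tupA_{M_2}\tupA_{M_3}) \cdot S_3(\tupC_{M_1}\tupC_{M_2}\tupA_{M_3}, \tupC)$, and a final application of Lemma~\ref{lem:orbit-diagonal-multiply-disjoint-active-region} to $S_1 \cdot S_3$ (together with a second active-region swap $\tupA_{M_2} \mapsto \tupC_{M_2}$ inside the $S_1$ entry, allowed because $M_2$ is inactive for $S_1$ and source and target both carry $\tupA_{M_2}$ there) identifies this product with $(S_1 \cdot S_3)(\tupA_{M_1}\tupC_{M_2}\tupA_{M_3}, \tupC)$, as required. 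The main obstacle I anticipate is the bookkeeping in the column-sum step: to apply Lemma~\ref{lem:block-diagonal-odd-filled-orbit-invariant-columns} I need that the source tuples $\tupC_{M_1}\tupA_{M_2}'\tupA_{M_3}$ and the fixed target $\tupC_{M_1}\tupC_{M_2}\tupA_{M_3}$ genuinely lie in orbits of $(\StructA_2, \pTupA)$ of the same type, so that the relevant block of $S_2$ is nonzero and swept out by a single full column; this compatibility of orbit partitions across the $\StructA_i$ follows from Corollary~\ref{cor:cfi-same-autgroup-and-orbits} and the orbit-splitting Lemma~\ref{lem:split-discon-orbits}.
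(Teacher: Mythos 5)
Your proof is correct and rests on the same three ingredients as the paper's: two applications of Lemma~\ref{lem:orbit-diagonal-multiply-disjoint-active-region}, an active-region replacement to make the $S_1$ factor independent of the summation variable, and a column-sum evaluation. The only structural difference is the associativity: you factor as $(S_1\cdot S_2)\cdot S_3$, pulling out $S_3$ first and then $S_1$, which leaves you with a bare sum $\sum_{\tupA_{M_2}'} S_2(\cdot,\cdot)$ that you must evaluate directly via Lemma~\ref{lem:block-diagonal-odd-filled-orbit-invariant-columns} (together with the orbit-splitting facts from Lemma~\ref{lem:split-discon-orbits} and Corollary~\ref{cor:cfi-same-autgroup-and-orbits}), essentially re-deriving the content of Lemma~\ref{lem:orbit-diagonal-multiply-sum-left-disjoint-active-region} in situ. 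The paper instead groups $S_1\cdot(S_2\cdot S_3)$ and hands off the summation over $\tupA_{M_2}'$ to Lemma~\ref{lem:orbit-diagonal-multiply-sum-left-disjoint-active-region} applied to the pair $(S_2, S_3)$, which is a bit more modular but otherwise identical in substance. One small imprecision in your write-up: as $\tupA_{M_2}'$ ranges over $\restrictVect{\BlockA}{M_2}$, the row index $\tupC_{M_1}\tupA_{M_2}'\tupA_{M_3}$ does not range over a full $k$-orbit, only a slice of one with the $M_1,M_3$ parts fixed; the equality of the slice sum and the full block-column sum is because $S_2$ is inactive on $M_1\cup M_3$ and hence the omitted entries vanish. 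You do invoke exactly the right facts for this, so the argument goes through, but the phrasing as stated is slightly off.
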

\begin{proof}
	By Lemma~\ref{lem:orbit-invariant-mult},
	the matrix $(S_1 \cdot S_2)$ is orbit-diagonal and orbit-invariant over $(\StructA_{g_1},\pTupA)$ and $(\StructA_{g_3},\pTupA)$
	and the matrix $(S_2 \cdot S_3)$ is orbit-diagonal and orbit-invariant over $(\StructA_{g_2},\pTupA)$ and $(\StructA_{g_4},\pTupA)$.
	Both matrices are odd-filled by Lemma~\ref{lem:odd-filled-product}.
	We apply Lemma~\ref{lem:orbit-diagonal-multiply}\ref{itm:disjoint-active-region}
	for the partition of the components of $\BlockA$ into $M_1 \cup M_3$ and~$M_2$:
	\begin{align*}
		&\hiddenEq\sum_{\primeSub{\tupA}{M_2} \in \restrictVect{\BlockA_1}{M_2}}
		(S_1\cdot S_2 \cdot S_3) (\tupA_{M_1}\primeSub{\tupA}{M_2}\tupA_{M_3},\tupC_{M_1}\tupC_{M_2}\tupC_{M_3})\\
		&= \sum_{\primeSub{\tupA}{M_2} \in \restrictVect{\BlockA_1}{M_2}}
		S_1(\tupA_{M_1}\primeSub{\tupA}{M_2}\tupA_{M_3},\tupC_{M_1}\primeSub{\tupA}{M_2}\tupA_{M_3}) \cdot
		(S_2 \cdot S_3)(\tupC_{M_1}\primeSub{\tupA}{M_2}\tupA_{M_3},\tupC_{M_1}\tupC_{M_2}\tupC_{M_3})\\
		&= \sum_{\primeSub{\tupA}{M_2} \in \restrictVect{\BlockA_1}{M_2}}
		S_1(\tupA_{M_1}\tupC_{M_2}\tupA_{M_3},\tupC_{M_1}\tupC_{M_2}\tupA_{M_3}) \cdot
		(S_2 \cdot S_3)(\tupC_{M_1}\primeSub{\tupA}{M_2}\tupA_{M_3},\tupC_{M_1}\tupC_{M_2}\tupC_{M_3}). \tag{$\star$}
	\end{align*}
	The last step uses that~$M_2$ consists only of components not contained in $\activeRegionIdx{g_1,g_2}{\pTupA}{S_1}$.
	We continue the equation by moving~$S_1$ out of the sum and applying Lemma~\ref{lem:orbit-diagonal-multiply}\ref{itm:sum-left-disjoint-active-region} for the partition of the components of~$\BlockA$ into $M_1 \cup M_3$ and~$M_2$:
	\begin{align*}
		(\star) &=
		S_1(\tupA_{M_1}\tupC_{M_2}\tupA_{M_3},\tupC_{M_1}\tupC_{M_2}\tupA_{M_3}) \cdot
		\sum_{\tupA_{M_2}' \in \restrictVect{\BlockA}{M_2}}
		(S_2 \cdot S_3)(\tupC_{M_1}\tupA_{M_2}'\tupA_{M_3},\tupC_{M_1}\tupC_{M_2}\tupC_{M_3})\\
		&=
		S_1(\tupA_{M_1}\tupC_{M_2}\tupA_{M_3},\tupC_{M_1}\tupC_{M_2}\tupA_{M_3}) \cdot
		S_3(\tupC_{M_1}\tupC_{M_2}\tupA_{M_3},\tupC_{M_1}\tupC_{M_2}\tupC_{M_3})\\
		&=(S_1 \cdot S_3) (\tupA_{M_1}\tupC_{M_2}\tupA_{M_3},\tupC_{M_1}\tupC_{M_2}\tupC_{M_3}).
	\end{align*}
	The last step follows from applying Lemmas~\ref{lem:active-region-change-equally} and~\ref{lem:orbit-diagonal-multiply}\ref{itm:disjoint-active-region}
	in the reverse direction by partitioning the components into~$M_1$ and $M_2\cup M_3$.
\end{proof}

\section{The Arity \texorpdfstring{$k$}{k} Case}
\label{sec:higher-arities}
We now construct a similarity matrix for the $k$-ary invertible-map game.
Constructing this matrix and verifying its suitability will be quite technical and intricate.
We first discuss the difficulties we have to overcome and why the approach for arity~$1$ cannot be generalized to arity~$k$ easily.
In the following, we provide high-level intuition for constructing the similarity matrix for arity~$k$.
This prepares us for the lengthy formal definition of this matrix,
which follows subsequently.

\subsection{Overview of the Construction}

\paragraph{Orbits of the Same Type.}
Let~$\StructA_f$ and~$\StructA_g$ be two CFI structures,
such that a single edge $\set{\stip,\stip'}$ of the base graph~$G$ is twisted by~$f$ and~$g$.
Let~$\pTupA$ be parameters, whose origin has sufficiently large distance to the twisted edge.
We have seen in Section~\ref{sec:1-ary-case}
that every $1$\nobreakdash-orbit has the same type in $(\StructA_f, \pTupA)$
as it has in $(\StructA_g,\pTupA)$.
For a $k$-orbit $\BlockA$, this is not the case
whenever $\set{\stip,\stip'}\subseteq \orig{\BlockA}$.
Ultimately, our goal is to construct an orbit-invariant, orbit-diagonal, and odd-filled similarity matrix~$S$ that $k$-blurs the twist.
Because the blocks on the diagonal of~$S$ arise from orbits of the same type
and because the characteristic matrices of orbits of the same type have to  be simultaneously similar,
we first want to define a bijection
$\orbs{k'}{(\StructA_f,\pTupA)} \to \orbs{k'}{(\StructA_g, \pTupA)}$
for every $k' \leq 2k$
that preserves the orbit types.
For this, we want to construct a function $\fixTwistType\colon \StructVA^{\leq 2k} \to \StructVA^{\leq 2k}$
that preserves the type of tuples.
Then $\fixTwistType$ preserves orbit types, too.
To do so, we pick
a vertex $\scenter$ satisfying $\distance{G}{\stip}{\scenter} > 2k$
and a  path $(\scenter,\dots, \stip',\stip)$.
We consider the path-isomorphism~$\autoA_\fixTwistType$
that twists the edge $\set{\stip,\stip'}$ and the edge incident to~$\scenter$ in the chosen path.
That is, between $\autoA_\fixTwistType(\StructA_f)$ and~$\StructA_g$
an edge incident to~$\scenter$ is twisted but the edge $\set{\stip,\stip'}$ is not.
For the moment assume that we only consider connected tuples and thus only connected orbits.
Let~$\fixTwistType$
be the function that applies the path-isomorphism~$\autoA_\fixTwistType$ to every tuple~$\tupA$
with $\set{\stip,\stip'} \subseteq \orig{\tupA}$ and is the identity function on all others.
Let $\tupA \subseteq \StructVA^{\leq 2k}$ be such a tuple with $\set{\stip,\stip'} \subseteq \orig{\tupA}$.
Because $\distance{G}{\stip}{\scenter} > 2k$
and because we consider connected tuples, we have that $\scenter \notin \orig{\tupA}$.
Hence,
$\StructA_g[\orig{\tupA}] = \autoA_\fixTwistType(\StructA_f)[\orig{\tupA}]$
and~$\tupA$ has the same type in $(\StructA_f,\pTupA)$ 
as~$\fixTwistType(\tupA)$ has in $(\StructA_g,\pTupA)$.
Consequently, for every  $k' \leq 2k$ and $\BlockA \in \orbs{k'}{(\Struct_f,\pTupA)}$
it holds that $\fixTwistType(\BlockA) \in \orbs{k'}{(\StructA_g,\pTupA)}$
and $\fixTwistType(\BlockA)$ has the same type in $(\StructA_g,\pTupA)$ as~$\BlockA$ has in $(\StructA_f,\pTupA)$.

\paragraph{Generalized Blurrers.}
Next we transfer the concept of a blurrer to the $k$-ary case.
Definition~\ref{def:blurrer-1-dim} of a $(\pot,d)$-blurrer
requires that there seems to be a twist at index~$1$ but none at the others indices
when considering only one of the~$d$ entries of the blurrer elements.
Although, all tuples $\blurElem$ in a blurrer satisfy $\sum \blurElem = 0$.
We require the same property in the $k$-ary case,
but now not only consider one index at a time but sets of~$k$ many indices.
We will generalize $(\pot,d)$-blurrers to $(k,\pot,a,d)$-blurrers,
where~$k$ is the arity,~$\pot$ specifies the ring~$\ZZ_{2^\pot}$,~$d$ the length of the tuples in the blurrer, 
and $a \in \ZZ_{2^\pot}$ the value of the twist (which was fixed to $2^{\pot-1}$ before).
Showing the existence of such blurrers will be more difficult,
in particular we will have to use, for a given~$k$, the ring~$\ZZ_{2^\pot}$
for a sufficiently large $\pot = \pot(k)$.

In the $1$-ary case,
we identified a tuple $\blurElem \in \blurrer$
with a local automorphism of the gadget of~$\scenter$.
We now describe the approach in the $k$-ary case.
Assume we are given a generalized $(k,\pot,a,d)$-blurrer~$\blurrer$ for arity~$k$
for some suitable~$\pot$,~$a \in \ZZ_{2^\pot}$, and~$d$.
We now require that the base graph~$G$ is regular of degree~$d$.
Recall that in Section~\ref{sec:1-ary-case}
we blurred the twist between the edges incident to~$\scenter$,
of which one was the twisted edge:
We used multiple local automorphisms (one for each $\blurElem \in \blurrer$)
to distribute the twist among these edges.
When considering connected $2k$\nobreakdash\nobreakdash-tuples,
we want to ensure that the origin of every $2k$\nobreakdash-tuple contains
at most one of the edges between which we blur the twist.
So it is not possible to blur the twist between the incident edges of a single vertex.
Instead, we will choose vertices $\stip_1, \dots, \stip_d$ and $\primeSub{\stip}{1}, \dots, \primeSub{\stip}{d}$,
such that $\stip = \stip_1$,  $\stip' = \primeSub{\stip}{1}$, 
and such that there are simple paths $\spath_i= (\scenter, \dots, \primeSub{\stip}{i}, \stip_i)$ of length at least~$2k$
forming a star,
i.e.,~the paths~$\spath_i$ are disjoint apart from~$\scenter$ (cf.~Figure~\ref{fig:star}).
Here it will be important to choose~$\spath_1$ to be the path we used to define
the tuple-type-preserving map~$\fixTwistType$ in the previous paragraph.
We will ensure that such paths exist by requiring that the girth of~$G$ is large enough. 
We will blur the twist between the edges $\set{\stip_i,\primeSub{\stip}{i}}$.
In the $1$-ary case, an element in a blurrer corresponded to an automorphism of the gadget of~$\scenter$, or equivalently to a star-isomorphism, where the paths of the star have length~$1$.
In the $k$-ary case we will identify a $\blurElem \in \blurrer$ with the
star-isomorphism $\autoA_\blurElem := \stariso{\blurElem}{\spath_1,\dots,\spath_d}$.
Again to preserve the type of tuples,
we will only apply~$\blurElem$ to tuples~$\tupA$ satisfying $\set{\stip_i,\primeSub{\stip}{i}}\not\subseteq\orig{\tupA}$ for all $i \in [d]$.
That is, on such a~$\tupA$, the action of~$\blurElem$ could also be defined by an automorphism.
This turns~$\blurElem$ into a ``star-automorphism''.
Using a star in combination with the large girth ensures that the tips of the star, the edges $\set{\stip_i,\primeSub{\stip}{i}}$,
are sufficiently far apart.
If we only had to deal with connected tuples,
this approach would be sufficient to construct a similarity matrix
(and in particular, we could even use easier blurrers).
However, disconnected tuples complicate matters.

\paragraph{Disconnected Tuples and Orbits.}
We have to consider disconnected tuples and orbits.
While with connected tuples the approach just described is local
(we only considered the $2k$-neighborhood of~$\scenter$),
there are disconnected  tuples containing vertices scattered in the structure.
But these vertices belong to different components of the tuple (cf.~Definition~\ref{def:component}).
Lemma~\ref{lem:split-discon-orbits} tells us
that the components of disconnected orbits are independent
whenever the connectivity of~$G$ is sufficiently large.
In a first step, we will salvage the previous approach by applying
the path- and the star-isomorphism not to entire tuples,
but instead to components of tuples.
That is, if a component~$\comp$ contains the twisted edge $\set{\stip, \stip'} = \set{\stip_1, \primeSub{\stip}{1}}$,
then we apply the type-preserving map~$\fixTwistType$ to this component.
If $\set{\stip_i,\primeSub{\stip}{i}} \not\subseteq \comp$ for all $i\in[d]$,
i.e,~$\comp$ contains none of the edges between which we blur the twist,
we apply the star-automorphisms~$\blurElem$ to this component.
(Note that~$\blurElem$ is the identity map unless~$\comp$ intersects non-trivially with at least one path~$\spath_i$.)

This approach fails, when for a $2k$-orbit~$\BlockA$ 
the two $k$-orbits  $\BlockA_1 := \restrictVect{\BlockA}{\set{1,\dots, k}}$
and ${\BlockA_2 := \restrictVect{\BlockA}{\set{k+1,\dots, 2k}}}$ contain the center~$\scenter$ of the star 
and some of the edges $\set{\stip_i,\primeSub{\stip}{i}}$ in their origin.
Because the edges $\set{\stip_i,\primeSub{\stip}{i}}$ are contained in the origin,
we need to argue with the blurrer properties to show that we blur the twist.
This is only possible if for two $k$\nobreakdash-tuples $\tupA \in \BlockA_1$
and  $\tupB \in \BlockA_2$
it only depends on up to~$k$ indices of a $\blurElem \in \blurrer$ whether
$\blurElem(\tupA)\tupB$ is in the same orbit as~$\tupA\tupB$.
But because the center~$\scenter$ is in the origin, this actually depends on all~$d$ entries of~$\blurElem$
and the blurrer properties do not apply.
This is why we will have to distinguish two kinds of $k$-orbits.
We call a $k$-orbit $\BlockA$ \defining{blurrable},
if $\scenter \notin \orig{\BlockA}$.
For non-blurrable orbits, we need another technique as follows.

\paragraph{Recursive Blurring.}
Now consider a $2k$-orbit $\BlockA$,
such that both $\BlockA_1 := \restrictVect{\BlockA}{\set{1,\dots, k}}$
and $\BlockA_2 := \restrictVect{\BlockA}{\set{k+1,\dots, 2k}}$ are non-blurrable
$k$-orbits.
Let us quickly recall the $1$-ary case.
It was possible to blur the twist in Lemma~\ref{lem:1-ary-matrices-similar}
because we summed over the tuples $\blurElem(\tupA)\tupB$ for all $\blurElem \in \blurrer$.
For a $2$-orbit~$\BlockA$, whose origin was the twisted edge $\set{\scenter,\stip}$,
w.l.o.g.~the origin of~$\BlockA_2$ is~$\scenter$ and
for every $\vertB \in \BlockA$ it held that $\blurElem(\vertB) \neq \blurElem'(\vertB)$ for every $\blurElem \neq \blurElem'$ in the blurrer.
But for~$\BlockA_1$ only one index of the blurrer was relevant.
That is, for every  $\vertA \in \BlockA_1$, $\vertB \in \BlockA_2$,
and $\blurElem, \blurElem' \in \blurrer$ such that $\blurElem(1) = \blurElem'(1)$
we had that
$\blurElem(\vertA) = \blurElem'(\vertA)$
and that $\blurElem(\vertA)\vertB$ is in the same orbit as $\blurElem'(\vertA)\vertB$.
So were able to apply the properties of a blurrer,
i.e., when summing over $\blurElem(\vertA)\vertB$ for all $\blurElem \in \blurrer$
and if only one index matters, then  the twist vanishes.
The $2$-orbits for which both~$\BlockA_1$ and~$\BlockA_2$
have origin $\set{\scenter}$ did not cover the twisted edge
and so did not pose a problem in the $1$-ary case.

Now consider the $k$-ary case again.
Here of course there are orbits~$\BlockA$
such that both~$\BlockA_1$ and~$\BlockA_2$ are non-blurrable
and they contain the twisted edge and the center~$\scenter$ in their origins.
Let $\tupA \in \BlockA_1$ and $\tupB \in \BlockA_2$.
Both~$\tupA$ and~$\tupB$ contain a vertex with origin~$\scenter$
and the blurrer properties do not apply
because the orbit of $\blurElem(\tupA)\tupB$
is different for every $\blurElem \in \blurrer$ 
(fixing one vertex of origin~$\scenter$ separates the gadget of~$\scenter$ into singleton orbits).

That is, when summing over all $\blurElem \in \blurrer$,
we map every $\tupA\tupB$ to the tuple $\blurElem(\tupA)\tupB$,
whose type in $(\autoA_\blurElem(\StructA_g), \pTupA)$ 
is the same as the type of $\tupA\tupB$ in $(\StructA_f, \pTupA)$.
But in $(\StructA_g, \pTupA)$ the tuple $\blurElem(\tupA)\tupB$ has a different type.
Between $(\StructA_g, \pTupA)$ and $(\autoA_\blurElem(\StructA_g), \pTupA)$
the edges $\set{\stip_i,\primeSub{\stip}{i}}$ are twisted additionally (the values of the twists depend on~$\blurElem)$.
This, in some sense, introduces other twists,
but only for said $2k$-orbits~$\BlockA$,
where the origins of both~$\BlockA_1$ and~$\BlockA_2$
are not blurrable.

The idea is to fix an arbitrary vertex $\pVertA_\scenter$ with origin~$\scenter$
and consider $(k-1)$-orbits of $(\StructA,\pTupA\pVertA_\scenter)$
(justified by Corollary~\ref{cor:k-orbit-fix-vertex}).
This can be done because all non-blurrable orbits contain~$\scenter$ in its origin.
For every $\blurElem \in \blurrer$, we will recursively obtain  a matrix~$S^\blurElem$
that $(k-1)$-blurs the twist between $(\StructA_f,\pTupA\pVertA_\scenter)$
and $(\inv{\autoA_\blurElem}(\StructA_g),\pTupA\pVertA_\scenter)$.
We use  the inverse~$\inv{\autoA_\blurElem}$ of the star-isomorphism~$\autoA_\blurElem$ because we want to revert the twists introduced by~$\autoA_\blurElem$.
Here the need arises to blur a twist of value $a \neq 2^{\pot-1}$.
Combining the blurrer~$\blurrer$ with the matrices~$S^\blurElem$
to a matrix~$S$ that $k$-blurs the twist will become formally tedious.
In particular, we will need to ensure that the~$S^\blurElem$ act ``independently''
on the $\set{\stip_i,\primeSub{\stip}{i}}$,
which we discuss next.

\paragraph{Active Region and Blurrers.}
The matrix~$S$ is defined for blocks of $k$-orbits.
Blocks for blurrable $k$-orbits will be defined using the blurrer~$\blurrer$,
blocks for non-blurrable $k$-orbits will be defined using~$\blurrer$ and the matrices~$S^\blurElem$.
With this approach we will show that~$S$
is a similarity matrix for all orbits~$\BlockA$,
for which either~$\BlockA_1$ and~$\BlockA_2$
are both blurrable or~$\BlockA_1$ and~$\BlockA_2$ are both non-blurrable.
In the former case, we will use the blurrer property,
in the latter case, we will use induction.
The case that~$\BlockA_1$ is blurrable and~$\BlockA_2$ is not 
or vice versa remains.
We have to show that $\charMat{\BlockA} \cdot S = S \cdot \charMat{\BlockB}$
(for $\BlockB = \fixTwistType(\BlockA)$, which has the same type as~$\BlockA$).
Assume that~$\BlockA_1$ is blurrable and~$\BlockA_2$ is not.
For $S \cdot \charMat{\BlockB}$ solely the block $S_{\BlockA_1 \times \BlockB_1}$ of~$S$ is relevant. This block is defined using the blurrer~$\blurrer$
because~$\BlockA_1$ is blurrable.
Similarly, for $\charMat{\BlockA} \cdot S$ solely the block $S_{\BlockA_2\times\BlockB_2}$ is relevant.
This block is defined using the blurrer~$\blurrer$ and the matrices~$S^\blurElem$
because~$\BlockA_2$ is non-blurrable.

To use the blurrer properties also for~$\BlockA_2$,
we will define the matrices~$S^\blurElem$,
which blur multiple twists at the edges $\set{\stip_i, \stip_i'}$,
as $S^\blurElem := \Sblurelem{\blurElem}{1}\cdot \ldots \cdot \Sblurelem{\blurElem}{d}$,
where each $\Sblurelem{\blurElem}{i}$ only blurs a single twist at the edge $\set{\stip_i,\primeSub{\stip}{i}}$.
We will ensure that the active region of $\Sblurelem{\blurElem}{i}$
is bound by the $\minRad(k)$-neighborhood of~$\stip_i$
for some suitable~$\minRad(k)$.
We then enlarge the star such that the paths~$\spath_i$
have length greater than $\max\set{2k, \minRad(k)}$.
Now, the active regions of the~$\Sblurelem{\blurElem}{i}$
are disjoint and we can use Lemma~\ref{lem:orbit-diagonal-multiply-sum-middle-disjoint-active-region}
to show that indeed all except~$k$ many of the $\Sblurelem{\blurElem}{i}$ cancel out.
So finally, we can use the blurrer properties to show that~$S$ is a similarity matrix
for orbits where~$\BlockA_1$ is blurrable and~$\BlockA_2$ is not.  
We now start with generalizing blurrers
and then show the existence of the required similarity matrix.

\subsection{Blurrer}
When dealing with arity~$k$,
the properties of a blurrer must be generalized from a single index
to sets of indices of size at most~$k$.
Let $\pot,d \in \nat$ and  $\blurrer \subseteq \ZZ_{2^\pot}^{d}$.
For $K \subseteq [d]$ and $\tup{b} \in \ZZ_{2^\pot}^{|K|}$
we count the tuples contained in~$\blurrer$
whose restriction to~$K$ equals~$\tup{b}$.
We define
\[\countVects{K}{\tup{b}}{\blurrer} \coloneqq \left|\setcond*{\tup{c} \in \blurrer}{\restrictVect{\tup{c}}{K}= \tup{b}}\right| \bmod 2.\]
\begin{definition}[Blurrer]
	\label{def:blurrer-kary}
	Let $d\geq k$, $\blurrer \subseteq \ZZ_{2^\pot}^{d}$, and $a \in \ZZ_{2^\pot}$.
	The set $\blurrer \subseteq \ZZ_{2^\pot}^{d}$ is called a \defining{$(k,\pot,a,d)$-blurrer}
	if it satisfies the following for all $K \subseteq [d]$ with $|K| = k$:
	\begin{enumerate}
		\item \label{itm:blurrer-sum} $\sum \blurElem = 0$ for all $\blurElem \in \blurrer$.
		\item \label{itm:blurrer-twist-first}
		If $1 \in K$, then 
		$\countVects{K}{(a,0,\dots,0)}{\blurrer} = 1$.
		\item \label{itm:blurrer-no-twist-other} If $1 \notin K$, then
		$\countVects{K}{\tup{0}}{\blurrer} = 1$.
		\item \label{itm:blurrer-remaining} $\countVects{K}{\tup{b}}{\blurrer} = 0$
		for all other pairs of~$K$ and~$\tup{b}$.
	\end{enumerate}
\end{definition}
\noindent The crucial property of a blurrer is the following:
\begin{lemma}
	\label{lem:blurrer-sum}
	Let $\blurrer$ be a $(k,\pot,a,d)$-blurrer,
	$K \subseteq [d]$ such that $|K| = k$,
	and define ${\blurElemFix := (a,0,\dots, 0) \in \ZZ_{2^\pot}^d}$.
	Every function $f\colon \restrictVect{\blurrer}{K} \to \FF_2$ satisfies 
	\[\sum_{\blurElem \in \blurrer} f(\restrictVect{\blurElem}{K}) =
	f(\restrictVect{\blurElemFix}{K}).\]
	In particular, there is a $\sblurElem \in \blurrer$
	such that $\restrictVect{\sblurElem}{K} = \restrictVect{\blurElemFix}{K}$.
\end{lemma}
\begin{proof}
	Because~$f$ takes $k$\nobreakdash-tuples as input,
	it cannot distinguish whether it is applied to
	$\restrictVect{\blurElem}{K}$ or to 
	$\restrictVect{\blurElemFix}{K}$ because
	by Conditions~\ref{itm:blurrer-twist-first}, and~\ref{itm:blurrer-no-twist-other}
	$\restrictVect{\blurElemFix}{K} \in \restrictVect{\blurrer}{K}$.
	Conditions~\ref{itm:blurrer-twist-first},~\ref{itm:blurrer-no-twist-other}, and~\ref{itm:blurrer-remaining} ensure that
	when summing over all $\blurElem \in \blurrer$,
	all summands $f(\restrictVect{\blurElem}{K})$ apart from
	$f(\restrictVect{\blurElemFix}{K})$ cancel out
	(by Condition~\ref{itm:blurrer-remaining} and~\ref{itm:blurrer-twist-first} if $1 \in K$ or Condition~\ref{itm:blurrer-no-twist-other} if $1 \notin K$).
	The existence of a $\sblurElem \in \blurrer$ as required
	follows from Conditions~\ref{itm:blurrer-twist-first} and~\ref{itm:blurrer-no-twist-other}.
\end{proof}
Note that while~$\blurElem$ only contains tuples
satisfying $\sum \blurElem = 0$,
we have that $\sum \blurElemFix = a$.

\begin{lemma}
	\label{lem:blurrer-size-odd}
	Let~$\blurrer$ be a $(k,\pot,a,d)$-blurrer.
	Then~$|\blurrer|$ is odd.
\end{lemma}
\begin{proof}
	Let $K \subseteq [d]$ with $|K| = k$.
	We partition $\blurrer = M \cup N$ into
	\begin{align*}
		M &:= \setcond[\big]{\blurElem \in \blurrer}{\restrictVect{\blurElem}{K} = \restrictVect{\blurElemFix}{K} },\\
		N &:= \setcond[\big]{\blurElem \in \blurrer}{\restrictVect{\blurElem}{K} \neq  \restrictVect{\blurElemFix}{K}},
	\end{align*}
	where $\blurElemFix:=(a, 0, \dots, 0)$ is the tuple from Lemma~\ref{lem:blurrer-sum}.
	The size of~$|M|$ is odd 
	by Condition~\ref{itm:blurrer-twist-first} if $1 \in K$
	and otherwise by Condition~\ref{itm:blurrer-no-twist-other}.
	By Condition~\ref{itm:blurrer-remaining}, the size~$|N|$ is even.
	If it was odd, then some~$\tup{b}$ would violate Condition~\ref{itm:blurrer-remaining}.
\end{proof}

We now construct blurrers.
\begin{lemma}\label{lem:blurrer-basics}
	If there is a $(k,\pot,a,d)$-blurrer $\blurrer$, then
	\begin{enumerate}
		\item there is a $(k,\pot,a,d')$-blurrer for every $d' \geq d$,
		\item $\blurrer$ is a $(k', \pot,a,d)$-blurrer for every $k' \leq k$, and
		\item there is a $(k,\pot, c\cdot a,d)$-blurrer for every $c \in \ZZ_{2^\pot}$.
	\end{enumerate}
\end{lemma}
\begin{proof}
	To prove the first statement,
	we just fill up the tuples of~$\blurrer$ with zeros to be of length~$d'$.
	To prove the second statement, let $K'\subseteq K \subseteq [d]$ such that  $|K|= k$
	and let $\tup{b}' \in \ZZ_{2^\pot}^{|K'|}$.
	Then 
	\[\countVects{K'}{\tup{b'}}{\blurrer} =\sum_{\substack{\tup{b} \in \ZZ_{2^\pot}^k,\\ \restrictVect{\tup{b}}{K'} = \tup{b}'}}
	\countVects{K}{\tup{b}}{\blurrer}.\]
	Assume $1 \in K'$ and $\tup{b}' = (a,0,\dots,0)$.
	Then for $\tup{b} = (a,0,\dots,0) \in \ZZ_{2^\pot}^k$
	we have $\restrictVect{\tup{b}}{K'} =\tup{b}'$ and
	$\countVects{K}{\tup{b}}{\blurrer} = 1$ by Condition~\ref{itm:blurrer-twist-first}.
	For all other~$\tup{b}$ we have $\countVects{K}{\tup{b}}{\blurrer} = 0$ by Condition~\ref{itm:blurrer-remaining}.
	So the sum is~$1$.
	The case that $1 \notin K'$ and $\tup{b} =\tup{0}$ is similar using Condition~\ref{itm:blurrer-no-twist-other}.
	In the remaining case all summands are $0$ by blurrer Condition~\ref{itm:blurrer-remaining}.
	
	To prove the last statement, let $c \in \ZZ_{2^\pot}$ and set $\blurrer' := \setcond{c \cdot \blurElem}{\blurElem \in \blurrer}$.
	If $\sum \blurElem = 0$, then clearly $\sum c \cdot \blurElem = 0$.
	We verify blurrer Condition~\ref{itm:blurrer-twist-first}, the others are similar.
	Let $K \subseteq [d]$ of size~$k$ and $\tup{b} = (a, 0, \dots,0) \in \ZZ_{2^\pot}^k$.
	From Conditions~\ref{itm:blurrer-twist-first} and~\ref{itm:blurrer-no-twist-other} it follows that 
	\[
	\countVects{K}{c \cdot \tup{b}}{\blurrer'} = \countVects{K}{\tup{b}}{\blurrer} + \sum_{
		\substack{\tup{b}' \in \ZZ_{2^\pot}^k,\\
			\tup{b'} \neq \tup{b}, \\ c\cdot \tup{b}' =  c\cdot \tup{b}}} \countVects{K}{\tup{b}'}{\blurrer} = 1 +\sum_{
		\substack{\tup{b}' \in \ZZ_{2^\pot}^k,\\
		\tup{b'} \neq \tup{b}, \\ c\cdot \tup{b}' =  c\cdot \tup{b}}} 0 = 1.\vspace{-21pt}
	\]
\end{proof}

\begin{lemma}
	\label{lem:binoim-parity}
	Let $m,n\in \nat$. If $0<m < 2^n$, then $\binom{2^n}{m}$ is even.
	If $m \leq 2^n -1$, then $\binom{2^n-1}{m}$ is odd.
\end{lemma}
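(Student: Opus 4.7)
The plan is to deduce both statements from Lucas' theorem modulo $2$, which states that if $a = \sum_i a_i 2^i$ and $b = \sum_i b_i 2^i$ are binary expansions with $a_i, b_i \in \{0,1\}$, then $\binom{a}{b} \equiv \prod_i \binom{a_i}{b_i} \pmod 2$. Since $\binom{0}{1} = 0$ while $\binom{0}{0} = \binom{1}{0} = \binom{1}{1} = 1$, the parity of $\binom{a}{b}$ is governed entirely by whether $b$ has a $1$-bit at some position where $a$ has a $0$-bit: if so, $\binom{a}{b}$ is even, otherwise it is odd.

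For the first statement, I would observe that $2^n$ has a single $1$-bit, at position $n$, and $0$-bits elsewhere. Any $m$ with $0 < m < 2^n$ has its binary support contained in positions $0, \dots, n-1$ and is nonzero, so it has a $1$-bit at some position $i < n$ where $2^n$ has a $0$-bit. Lucas' theorem then immediately gives $\binom{2^n}{m} \equiv 0 \pmod 2$.

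For the second statement, $2^n - 1$ has $1$-bits precisely at positions $0, 1, \dots, n-1$ and $0$-bits elsewhere. For any $m$ with $m < 2^n - 1$ (and a fortiori for any $m \leq 2^n - 1$), the binary digits of $m$ are supported in positions $0, \dots, n-1$, so each bit $m_i$ faces a matching bit $(2^n - 1)_i = 1$, while bits at positions $\geq n$ vanish on both sides. Every factor of Lucas' product therefore equals $1$, yielding $\binom{2^n - 1}{m} \equiv 1 \pmod 2$.

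There is no substantial obstacle: both are classical facts about binomial coefficients modulo $2$. If one wished to avoid invoking Lucas' theorem, an induction on $n$ using Pascal's identity $\binom{2^n}{m} = \binom{2^n-1}{m-1} + \binom{2^n-1}{m}$ works equally well, the oddness of $\binom{2^n-1}{\cdot}$ implying the evenness of $\binom{2^n}{m}$ for $0 < m < 2^n$ as a sum of two odd integers.
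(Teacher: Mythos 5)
Your argument is correct and follows essentially the same route as the paper: both reduce to Lucas' theorem modulo $2$ and analyze the binary expansions of $2^n$ and $2^n-1$ against that of $m$. The closing remark offering an alternative via Pascal's identity is a nice observation but does not change the substance of the proof.
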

\begin{proof}
	Let $k \in \nat$ and consider~$\binom{k}{m}$. We write~$k$ and~$m$
	in base~$2$ representation
	\begin{align*}
		m &= \sum_{i=0}^j m_i 2^i& k &= \sum_{i=0}^j k_i2^i
	\end{align*}
	for some suitable~$j$ and $m_i,k_i \in \set{0,1}$ for all $i \in \set{0,\dots,j}$.
	We apply Lucas's Theorem~\cite{Cameron1994}:
	\[\binom{k}{m} \bmod 2 = \prod_{i=0}^j \binom{k_i}{m_i} \bmod 2, \]
	where $\binom{0}{1} = 0$
	and $\binom{0}{0} = \binom{1}{0} = \binom{1}{1} = 1$.
	That is, $\binom{k}{m} \bmod 2 = 0$ if and only if
	there is an $i \in \set{0,\dots,j}$ such that $k_i = 0$ and $m_i = 1$.
	
	\begin{enumerate}[label=(\alph*)]
		\item Let $k = 2^n$ and $0<m < k$.
		Then there is an $i < n$ such that $m_i = 1$ .
		Because $k = 2^n$, $k_i = 0$ and so~$\binom{k}{m}$ is even.
	
		\item Let $k = 2^n -1$ and $m \leq k$.
		For every $i < n$ we have $k_i = 1$.
		For every $i$ such that $m_i = 1$
		it holds that $i < n$.
		That is,~$\binom{k}{m}$ is odd. \qedhere
	\end{enumerate}
\end{proof}

\begin{lemma}
	\label{lem:basic-blurrer}
	For every $i \in \nat$, there is a $(2^{i-1}-1,i,2^{i-1},2^i-1)$-blurrer.
\end{lemma}
\begin{proof}
	We set $k := 2^{i-1}-1$, $d:=2^i-1$,
	and define $\blurrer$ as follows:
	\begin{align*}
		\blurrer_1 &:= \setcond[\Bigg]{(2^{i-1}, a_2,\dots,a_{2^i-1})}{\sum_{j=2}^{2^i-1} a_j = 2^{i-1}, a_j \in \set{0,1} \text{ for every } j } ,  \\
		\blurrer_2 &:=  \setcond[\Bigg]{(2^{i-1}+1, a_2,\dots,a_{2^i-1})}{\sum_{j=2}^{2^i-1} a_j = 2^{i-1}-1, a_j \in \set{0,1} \text{ for every } j },\\
		\blurrer_{\phantom{1}} &:= \blurrer_1 \cup \blurrer_2.  
	\end{align*}
	To verify that~$\blurrer$ is indeed a $(k,i,2^{i-1}, d)$-blurrer,
	let $K \subseteq [d]$ be of size~$k$ and
	$\tup{b} \in \ZZ_{2^i}^k$.
	Set $\overline{K} := [d]\setminus K$.
	\begin{itemize}
		
		\item Let $1 \in K$ and $\tup{b} = (2^{i-1}, 0 , \dots, 0)$.
		Every $\blurElem \in\blurrer$ with $\restrictVect{\blurElem}{K} = \tup{b}$ is contained in~$\blurrer_1$.
		Because every $\blurElem \in\blurrer_1$ contains~$2^{i-1}$ many $1$-entries,~$\blurElem$ is of length $d = 2^i-1$,
		and~$\tup{b}$ contains $k-1 = 2^{i-1}-2$ many $0$-entries,
		every $\blurElem \in\blurrer_1$ such that $\restrictVect{\blurElem}{K} = \tup{b}$ satisfies $\restrictVect{\blurElem}{\overline{K}} = (1, \dots, 1)$.
		So there can be at most one such $\blurElem \in\blurrer_1$.
		It exists by construction of~$\blurrer_1$.
		Hence, $\countVects{K}{\tup{b}}{\blurrer} = 1$.
		
		\item Let $1 \in K$ and $\tup{b} = (2^{i-1}, b_2 , \dots, b_{2^i-1})$, $b_j \in \set{0,1}$ for all~$j$,
		and not all~$b_j$ equal zero.
		Again, every $\blurElem \in\blurrer$ such that $\restrictVect{\blurElem}{K} = \tup{b}$ is contained in~$\blurrer_1$.
		Because
		\[\sum \tup{b} \in \set*{2^{i-1}+1, \dots, 2^{i-1}+k-1}=\set*{2^{i-1}+1, \dots, 2^i - 2}\]
		it holds that
		\[m := - \sum \tup{b} = \sum \restrictVect{\blurElem}{\overline{K}} \in \set*{2,\dots, 2^{i-1} -1}\]
		for every $\blurElem \in \blurrer_1$.
		By construction,  $\restrictVect{\setcond{\blurElem\in \blurrer_1}{\restrictVect{\blurElem}{K} = \tup{b}}}{[2^i-1] \setminus K}$ is the set of all $0/1$\nobreakdash\nobreakdash-tuples of length~$2^{i-1}$
		that contain exactly~$m$ many ones.
		There are exactly~$\binom{2^{i-1}}{m}$ many $0/1$\nobreakdash-tuples of length $2^{i-1}$.
		For all possible values of~$m$, the number~$\binom{2^{i-1}}{m}$ is even by Lemma~\ref{lem:binoim-parity}.
		We conclude that $\countVects{K}{\tup{b}}{\blurrer} = 0$. 
		
		\item Let $1 \in K$ and $\tup{b} = (2^{i-1}+1, b_2 , \dots, b_{2^i-1})$.
		Now, every $\blurElem \in\blurrer$ with $\restrictVect{\blurElem}{K} = \tup{b}$ is contained in~$\blurrer_2$.
		Then 
		\[m := - \sum \tup{b} = \sum \restrictVect{\blurElem}{\overline{K}}\in \set*{1,\dots, 2^{i-1}-1}\]
		for every $\blurElem \in \blurrer_2$
		because
		\[\sum \tup{b} \in \set*{2^{i-1}+1, \dots, 2^{i-1}+1+(k-1)} = \set*{2^{i-1}+1, \dots, 2^i - 1}.\]
		Again, there are~$\binom{2^{i-1}}{m}$   many $0/1$\nobreakdash-tuples
		extending~$\tup{b}$ to a $\blurElem \in \blurrer$,
		which is an even number by Lemma~\ref{lem:binoim-parity} and thus 
		$\countVects{K}{\tup{b}}{\blurrer} = 0$.

		\item Let $1 \in K$ and~$\tup{b}$ not be covered by two cases before.
		Then there is no $\blurElem$ satisfying $\restrictVect{\blurElem}{K} = \tup{b}$ 
		and so $\countVects{K}{\tup{b}}{\blurrer} = 0$.
		\item Now the case that $1 \notin K$ remains.
		If there is no $\blurElem \in \blurrer$
		satisfying $\restrictVect{\blurElem}{K} = \tup{b}$,
		then clearly $\countVects{K}{\tup{b}}{\blurrer} = 0$.
		So assume that there is such a $\blurElem \in \blurrer$.
		\begin{itemize}
			\item
			Let us first consider~$\blurrer_1$.
			Let $\blurElem \in \blurrer_1$.
			Then
			\[0 = \sum{\blurElem} = \sum \tup{b} + \sum \restrictVect{\blurElem}{\overline{K}} = 
			\sum \tup{b} +
			2^{i-1} + \sum \restrictVect{\blurElem}{\overline{K}\setminus \set{1}}.\]
			The tuple $\restrictVect{\blurElem}{\overline{K}\setminus \set{1}}$
			is a 0/1-tuple of length $d-k-1 = 2^{i-1}-1$.
			So
			\[\sum \restrictVect{\blurElem}{\overline{K}\setminus \set{1}} \in \set*{0, \dots, 2^{i-1}-1}.\]
			That is, if $\sum \tup{b} = 0$,
			we obtain a contradiction because 
			there is no $\blurElem \in \blurrer_1$
			satisfying $\restrictVect{\blurElem}{K} = \tup{b}$
			and $\countVects{K}{\tup{b}}{\blurrer_1} = 0$.
			Otherwise,
			all $\blurElem \in \blurrer_1$ satisfying
			$\restrictVect{\blurElem}{K} = \tup{b}$
			extend~$\tup{b}$
			by a $0/1$-tuple of length~$2^{i-1}-1$
			containing
			\[m := - \sum \tup{b} - 2^{i-1} \in \set*{0, \dots, 2^{i-1}-1}\]
			many ones.
			There are~$\binom{2^{i-1}-1}{m}$  many $0/1$\nobreakdash-tuples
			of length $d-k-1 = 2^{i-1}-1$ and sum~$m$,
			which is an odd number by Lemma~\ref{lem:binoim-parity}.
			Hence, $\countVects{N}{\tup{b}}{\blurrer_1} = 1$.
			
			\item
			Now consider~$\blurrer_2$.
			For every $\blurElem \in \blurrer_2$ such that $\restrictVect{\blurElem}{K} = \tup{b}$,
			it similarly holds that
			\[0 = \sum{b} + 2^{i-1}+1 + \sum \restrictVect{\blurElem}{\overline{K}\setminus \set{1}}\]
			and thus $\sum \restrictVect{\blurElem}{\overline{K}\setminus \set{1}} \in \set{0, \dots, 2^{i-1}}$.
			Every $\blurElem \in \blurrer_1$ satisfying
			$\restrictVect{\blurElem}{K} = \tup{b}$
			extends~$\tup{b}$
			by a $0/1$-tuple of length $2^{i-1}-1$
			containing
			\[m := - \sum \tup{b} - 2^{i-1}-1 \in \set*{0, \dots, 2^{i-1}-1}\]
			many ones.
			The number $m$ is again odd by Lemma~\ref{lem:binoim-parity}.
			Hence, $\countVects{N}{\tup{b}}{\blurrer_2} = 1$.
		\end{itemize}
		Together, 
		if $\tup{b} = 0$,
		then $\countVects{K}{\tup{b}}{\blurrer} = \countVects{K}{\tup{b}}{\blurrer_1} + \countVects{K}{\tup{b}}{\blurrer_2}= 0 + 1 = 1$.
		Otherwise,
		$\countVects{K}{\tup{b}}{\blurrer_1} + \countVects{K}{\tup{b}}{\blurrer_2}= 1 + 1$ which is~$0$ modulo~$2$. \qedhere
	\end{itemize}
\end{proof}
\begin{remark}
Computer experiments suggest that for a given $2^{i-2} \leq k \leq 2^{i-1}-1$
our choice of $q=i$ is minimal to construct a $(k,q,2^{i-1},d)$-blurrer
and that $d=2^i-1$ could be improved in the case that $k \neq  2^{i-1}-1$,
but is minimal in the case that  $k=2^{i-1}-1$.
\end{remark}

We now lift a $(k,\pot,a,d)$-blurrer 
from the ring~$\ZZ_{2^{\pot}}$ to the ring~$\ZZ_{2^{\pot +\ell}}$.
Here we have two choices, both of which we need later:
the first is via the embedding of~$\ZZ_{2^\pot}$ in~$\ZZ_{2^{\pot +\ell}}$, 
the second will not change the value~$a$.

\begin{lemma} \label{lem:blurrer-embedding}
	Let $\pot,\ell \in \nat$
	and $\iota\colon\ZZ_{2^\pot} \to \ZZ_{2^{\pot+\ell}}$
	be the embedding of $\ZZ_{2^\pot}$ in $\ZZ_{2^{\pot+\ell}}$
	defined by $a \mapsto 2^\ell a$.
	If~$\blurrer$ is a $(k,\pot,a, d)$-blurrer,
	then $\iota(\blurrer)$ is a $(k, \pot + \ell, \iota(a),d)$-blurrer.
\end{lemma}
\begin{proof}
	This is straightforward from the definition.
\end{proof}

\begin{lemma}\label{lem:blurrer-larger-field}
	For every $i,\ell\in \nat$,
	there is a $(2^{i-1}-1,i+\ell,2^{i-1},2^i-1)$-blurrer.
\end{lemma}
\begin{proof}
	Let~$\blurrer$ be the $(2^{i-1}-1,i,2^{i-1}, 2^i-1)$-blurrer given by Lemma~\ref{lem:basic-blurrer} and suppose
	$c = 2^{\ell +1} - 1 \in \ZZ_{2^{i+\ell}}$.
	Let~$h$ be the following function that maps~$\blurrer$ to $\blurrer' := \setcond{h(\blurElem)}{\blurElem \in \blurrer}$:
	\[\blurElem \mapsto \left(- c \cdot \sum_{j=2}^d \blurElem_j, c \cdot \blurElem_2, \dots, c \cdot \blurElem _d\right). \]
 	The operations are all in~$\ZZ_{2^{i+\ell}}$.
	By definition $\sum \blurElem' = 0$ for every $\blurElem' \in \blurrer'$.
	Let $K\subseteq [d]$ be of size~$k$ and $\tup{b} \in \ZZ_{2^{i+\ell}}^k$.
	Note that~$c$ is a unit because~$c$ is odd.
	\begin{itemize}
		\item Let $1 \notin K$.
		Because~$c$ is a unit, multiplication with~$c$ is a bijection and thus we have
		$\countVects{K}{\tup{b}}{\blurrer'} = \countVects{K}{\inv{c} \cdot \tup{b}}{\blurrer}$,
		which is~$1$ if and only if $\inv{c} \cdot \tup{b} = \tup{b} = \tup{0}$.
		\item 
		Let $1\in K$. We argue that also the action of~$h$ on the first position is a bijection.
		Because $\sum \blurElem = 0$ for all $\blurElem \in \blurrer$,
		the map $\blurElem_1 \mapsto \sum_{j=2}^d \blurElem_j$ is a bijection
		and so is the action of~$h$ because~$c$ is a unit.
		So we have $\countVects{K}{\tup{b}}{\blurrer'} = \countVects{K}{\tup{a}}{\blurrer}$
		for some $\tup{a} \in \ZZ_{2^i}^k$.
		
		It holds that $- c a_1 =  -(2^{\ell  +1} - 1) a_1 = a_1 - 2^{\ell +1}a_1$ (over $\ZZ_{2^{i+\ell}}$).
		So $a_1 = 2^{i-1}$ if and only if  $-c a_1 = 2^{i-1} - 2^{\ell  +1}2^{i-1} = 2^{i-1} - 2^{\ell  +i} =  2^{i-1} -0 = 2^{i-1}$.
		Hence, $\tup{b} = (2^i, 0,\dots, 0)$ if and only if $\tup{a} = (2^i, 0,\dots, 0)$, which is the case if and only if $\countVects{K}{\tup{b}}{\blurrer'} = 1$.
		 \qedhere
	\end{itemize}
\end{proof}

\subsection{Similarity Matrix for One Round}

We now construct a similarity matrix $k$-blurring the twist.
To be able to define this matrix,
we need bounds on the degree, the girth, and the connectivity of the base graph
as well as certain guarantees for the placement of the pebbles. 
Therefore, we define the following functions $\nat^+ \to \nat^+$,
which will give us the needed bounds.
In the definitions let $i\in \nat$ 
be the unique number that for the given~$k$ satisfies
$2^{i-1}-1 < k \leq 2^i -1$.
\begin{align*}
	\minRad(k) &:= \begin{cases}
		\mathrlap{1}\phantom{\max\set{2^{i+1}+m-1, d(k-1,m+1)}} 			& \text{if } k = 1,\\
		\max\set{4 \cdot \minRad(k-1)+2, 2k+2}	&\text{otherwise},
	\end{cases}\\
	\twistVal(k) &:= \begin{cases}
		\mathrlap{1}\phantom{\max\set{2^{i+1}+m-1, d(k-1,m+1)}} 			& \text{if } k = 1,\\
		i + \twistVal(k-1)	&\text{otherwise},
	\end{cases}\\
	d(k,m) &:= \begin{cases}
		3+m			& \text{if } k = 1,\\
		\max\set{2^{i+1}+m-1, d(k-1,m+1)}&\text{otherwise},
	\end{cases}\\
	\pot(k) &:= 1 + \twistVal(k).
\end{align*}

\begin{lemma}
	\label{lem:blur-kary}
	For every $k,m \in \nat$,
	\begin{itemize}
		\item every regular and $(2k + m + 1)$-connected
	base graph $G=(V,E, \leq)$ of degree  ${d \geq d(k,m)}$ and
	girth at least $2\minRad(k+1)$,
	\item
	every edge $\set{t,t'} \in E$,
	\item every $\pot \geq \pot(k)$,
	\item every $\twistVal = a \cdot 2^{\twistVal(k)} \in \ZZ_{2^\pot}$ (for an arbitrary $a \in \ZZ_{2^\pot}$),
	\item every $f, g \colon E \to \ZZ_{2^\pot}$
	such that
	$f(e) = g(e)$ for all $e\in E\setminus{\set{\set{t,t'}}}$
	and $g(\set{t', t}) = f(\set{t', t}) + \twistVal$, and
	\item every $m$-tuple $\pTupA \in \StructVA^m$
	of $\StructA_f := \CFIgraph{2^\pot}{G}{f}$ and $\StructA_g := \CFIgraph{2^\pot}{G}{g}$, both with universe $\StructVA$,
	for which $\distance{G}{t}{\orig{\pTupA}} >  \minRad(k+1)$
\end{itemize}
	there is an odd-filled $\StructV^k \times \StructV^k$ matrix~$S$,
	both orbit-diagonal and orbit-invariant over $(\StructA_f, \pTupA)$ and $(\StructA_g, \pTupA)$,
	that $k$-blurs the twist between $(\StructA_f, \pTupA)$ and $(\StructA_g, \pTupA)$ and those active region satisfies $\activeRegionIdx{f,g}{\pTupA}{S} \subseteq \neighborsK{G}{\minRad(k+1)}{t}$.
\end{lemma}

The proof is by induction on~$k$ and spans the rest of this section.
We already proved the case $k=1$ in
Lemmas~\ref{lem:1ary-summary} and~\ref{lem:1ary-active-region} for $\pot \geq 2$ and $\twistVal = 2^{\pot-1}$
using a $(1,\pot,2^{\pot-1}, d)$ blurrer.
This can easily be adapted for the case that $\twistVal = a\cdot 2^{\twistVal(1)} = 2a$ for some $a \in \ZZ_{2^\pot}$.
We start with a $(1,\pot,2,3)$-blurrer given by Lemma~\ref{lem:blurrer-larger-field}
and turn it 
into a $(1,\pot,2a,d)$-blurrer using Lemma~\ref{lem:blurrer-basics}.
Then the proof proceeds exactly the same.

So assume $k>1$.
Let $m \in \nat$, $G=(V,E,\leq)$ be a regular and $(2k+m+1)$-connected base graph of degree $d \geq d(k,m)$ and girth at least $2\minRad(k+1)$, and
$\set{\stip,\stip'} \in E$.
The bound on the connectivity of~$G$ is needed in the following
to apply the results from Sections~\ref{sec:cfi-orbits},~\ref{sec:matrices-cfi}, and~\ref{sec:active-region}
and we will not further mention this when applying them.
Let $\pot \geq \pot(k)$.
As before, we denote for every $f \colon E \to \ZZ_{2^\pot}$
by~$\Struct_f$ the CFI structure $\CFIgraph{2^\pot}{G}{f}$
with universe~$\StructVA$ (which is equal for all such~$f$).
Let $f,g\colon E \to\ZZ_{2^\pot}$ such that $f(e) = g(e)$ for all $e \in E \setminus \set{\set{\stip, \stip'}}$ and $g(\set{\stip, \stip'}) = f(\set{\stip, \stip'}) + \twistVal$, where $\twistVal = a \cdot 2^{\twistVal(k)}$
for some $a \in \ZZ_{2^\pot}$.
Furthermore, let $\pTupA\in \StructV^m$ such that $\distance{G}{\stip}{\orig{\pTupA}}> \minRad(k+1)$ be arbitrary but fixed.
In particular, $f,g$ do not twist $\orig{\pTupA}$.

Let $\scenter$ be a vertex with $\distance{G}{\scenter}{\stip'} =  \minRad(k)+1$ and $\distance{G}{\scenter}{\stip} = \minRad(k)+2$.
Choose vertices $\stip=\stip_1, \dots, \stip_d$ and $\stip'=\primeSub{\stip}{1}, \dots, \primeSub{\stip}{d}$
such that there are simple paths $\spath_i= (\scenter, \dots, \primeSub{\stip}{i}, \stip_i)$ of length $\minRad(k)+2 > 2k+1$
for all $i \in [d]$
forming a star.
Such paths exist because the girth of~$G$ is at least $2\minRad(k+1) >  2 \minRad(k) + 4$ and the degree is~$d$ (cf.~Figure~\ref{fig:star}).

\begin{figure}
	\centering
	
	\begin{tikzpicture}
		
		\tikzstyle{vertex} = [circle, fill=black, inner sep=0.5mm]
		\tikzstyle{dot} = [circle, fill=black, inner sep=0.3mm]
		
		\draw[gray, thick]
		(0,0) -- (-1.7, 4.2)
		(0,0) -- (+1.7, 4.2);
		\fill[gray!10!white]
		(0,0) -- (-1.7, 4.2) -- (+1.7, 4.2) -- cycle;
		
		\draw[gray, thick]
		(0,0) -- (4.2,-1.7)
		(0,0) -- (4.2,+1.7);
		\fill[gray!10!white]
		(0,0) -- (4.2,-1.7) -- (4.2, +1.7) -- cycle;
		
		\draw[gray, thick]
		(0,0) -- (-4.2,-1.7)
		(0,0) -- (-4.2,+1.7);
		\fill[gray!10!white]
		(0,0) -- (-4.2,-1.7) -- (-4.2, +1.7) -- cycle;

		\node[vertex, label ={[below=1.7mm]$\scenter$} ] (z) at (0,0){};;
		\node[vertex, label ={[right]$\primeSub{\stip}{1}\hspace{-2pt}=\hspace{-1pt}\stip'$} ] (e1) at (0,3){};;
		\node[vertex, label ={[right]$\stip_1\hspace{-2pt}=\hspace{-1pt}\stip$} ] (e1p) at (0,3.7){};;
		\node[vertex, label ={[below=0.6mm]\strut$\primeSub{\stip}{2}$} ] (e2) at (3,0){};;
		\node[vertex, label ={[below=0.6mm]\strut$\stip_2$} ] (e2p) at (3.7,0){};;
		\node[vertex, label ={[below=0.6mm]\strut$\primeSub{\stip}{d}$} ] (ed) at (-3,0){};;
		\node[vertex, label ={[below=0.6mm]\strut$\stip_d$} ] (edp) at (-3.7,0){};;

	 	\draw[black]
	 		(e1) -- (e1p)
	 		(e2) -- (e2p)
	 		(ed) -- (edp);
		\draw[decorate,decoration={snake,pre length=3mm, post length=3mm} ]
			(z) -- (e1)
			node [black,midway,xshift=0.4cm,yshift=0.3cm,align=center] 
			{$\spath_1$};
		\draw[decorate,decoration={snake,pre length=3mm, post length=3mm} ]
			(z) -- (e2)
			node [black,midway,xshift=0.3cm,yshift=0.35cm,align=center] 
			{$\spath_2$};
		\draw[decorate,decoration={snake,pre length=3mm, post length=3mm} ]
			(z) -- (ed)
			node [black,midway,xshift=-0.3cm,yshift=0.35cm,align=center] 
			{$\spath_d$};
		
		\draw[decorate,decoration={brace,amplitude=15pt},xshift=-3mm, very thin]
		(0,0) -- (0,3.7) node [black,midway,xshift=-1.5cm,align=center] 
		{\footnotesize $\distance{G}{\scenter}{\stip_1}$\\\footnotesize$\geq \minRad(k)+2$};
		
		\node[dot] at (0,-1.4){};
		\node[dot] at (0.3,-1.35){};
		\node[dot] at (0.58,-1.25){};
		\node[dot] at (-0.3,-1.35){};
		\node[dot] at (-0.58,-1.25){};
		
	\end{tikzpicture}
	\caption{The star $\spath_1, \dots, \spath_d$.
		Each path~$\spath_i$ is contained in its own tree rooted at~$\scenter$
		(depicted in gray)
		because~$G$ has girth $\geq 4 \minRad(k)$.
	}
	\label{fig:star}
\end{figure}

\begin{claim}
	\label{clm:distance-param-e-i}
	We have $\distance{G}{\primeSub{\stip}{i}}{\orig{\pTupA}} \geq 2 \minRad(k)$ for every $i \in [d]$
	and $\distance{G}{\primeSub{\stip}{i}}{\primeSub{\stip}{j}} = 2 \minRad(k)+2$ for every $i \neq j$.
\end{claim}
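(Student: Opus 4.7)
The plan is to start with the second inequality $\distance{G}{e_i}{e_j} = 2\minRad(k)+2$ for $i\neq j$, because it is the stepping stone for the first one via the triangle inequality.

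For the upper bound $\distance{G}{e_i}{e_j} \leq 2\minRad(k)+2$, I would use that the paths $\spath_i$ and $\spath_j$ share only the vertex $z$ and each has length $\minRad(k)+1$, so concatenating $\spath_i$ reversed with $\spath_j$ gives a simple $e_i$-$e_j$-path through $z$ of length exactly $2\minRad(k)+2$. For the matching lower bound, suppose there were a shorter $e_i$-$e_j$-path $P$ in $G$. Then $P$ together with the path through $z$ forms a theta-like subgraph and therefore contains a cycle of length at most $|P|+2\minRad(k)+2$. But the girth assumption forces every cycle to have length at least $2\minRad(k+1) \geq 8\minRad(k)+4$ (using $\minRad(k+1) = \max\{4\minRad(k)+2,\, 2k+4\}\geq 4\minRad(k)+2$), which yields $|P| \geq 6\minRad(k)+2 > 2\minRad(k)+2$, a contradiction. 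Hence equality holds.

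For the first inequality, I would split on whether $i=1$. If $i=1$, then $e_1 = t$ and the hypothesis $\distance{G}{t}{\orig{\pTupA}} > \minRad(k+1) \geq 4\minRad(k)+2$ directly gives $\distance{G}{e_1}{\orig{\pTupA}} \geq 2\minRad(k)$. If $i>1$, apply the triangle inequality together with the second inequality already proved:
\[
\distance{G}{e_i}{\orig{\pTupA}} \;\geq\; \distance{G}{t}{\orig{\pTupA}} \;-\; \distance{G}{t}{e_i} \;>\; \minRad(k+1) - (2\minRad(k)+2) \;\geq\; 2\minRad(k).
\]

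The only point requiring genuine care is the girth argument for the lower bound on $\distance{G}{e_i}{e_j}$; everything else is bookkeeping with the explicit recursion for $\minRad$. The main obstacle to verify is that the shortcut path $P$ is internally disjoint enough from the star path for a cycle to actually be extracted, which follows from standard ``two internally disjoint paths yield a cycle'' reasoning once one observes that $P$ and $\spath_i^{-1}\spath_j$ share only their endpoints as required vertices, with any additional shared vertex only yielding an even shorter sub-cycle and thus strengthening the contradiction.
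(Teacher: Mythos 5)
Your proof is correct and follows essentially the same route as the paper: join $\spath_i$ and $\spath_j$ at $z$ to get the upper bound $2\minRad(k)+2$, use the girth bound to exclude a shorter $e_i$-$e_j$-path via the resulting short cycle, and then deduce the distance to $\orig{\pTupA}$ by the triangle inequality through $e_1 = t$. The split into $i=1$ versus $i>1$ is unnecessary since the $i>1$ argument degenerates correctly when $\distance{G}{e_1}{e_i}=0$, but this is a cosmetic deviation only.
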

\begin{claimproof}
	By choice of~$\scenter$, $\distance{G}{\scenter}{\primeSub{\stip}{i}}=\minRad(k) + 1$.
	Let $i,j \in [d]$ such that $i \neq j$.
	The $\primeSub{\stip}{i}$\nobreakdash-$\primeSub{\stip}{j}$\nobreakdash-path obtained by joining~$\spath_i$ and~$\spath_j$ (after removing~$\stip_i$ and~$\stip_j$ respectively)
	has length $2\cdot(\minRad(k) + 1)$ by construction.
	If this path was not a shortest path,
	then we would obtain a cycle of length less than $4\cdot(\minRad(k) + 1)$.
	This contradicts that~$G$ has girth at least
	$2 \minRad(k+1) \geq 4\cdot(\minRad(k) + 1)$.
	It follows $\distance{G}{\primeSub{\stip}{i}}{\primeSub{\stip}{j}} = \distance{G}{\primeSub{\stip}{i}}{\scenter}  + \distance{G}{\scenter}{\primeSub{\stip}{j}}  = 2\minRad(k) + 2$.
	
	Let $\bvertB \in \orig{\pTupA}$.
	Then $\distance{G}{\bvertB}{\primeSub{\stip}{1}} \leq \distance{G}{\bvertB}{\primeSub{\stip}{i}} + \distance{G}{\primeSub{\stip}{i}}{\primeSub{\stip}{1}}$.
	By assumption,
	$\distance{G}{\bvertB}{\primeSub{\stip}{1}} \geq \minRad(k+1) = 4 \minRad(k) + 2$
	and by the former argument
	$\distance{G}{\primeSub{\stip}{i}}{\primeSub{\stip}{1}} = 2 \minRad(k)+2$.
	It follows that 
	\[\distance{G}{\bvertB}{\primeSub{\stip}{i}} \geq \distance{G}{\bvertB}{\primeSub{\stip}{1}} - \distance{G}{\primeSub{\stip}{1}}{\primeSub{\stip}{i}}
	= 4\minRad(k)+2 - 2\minRad(k) - 2  \geq 2\minRad(k).\qedhere\]
\end{claimproof}

We call a set $\comp \subseteq V$ such that $|\comp| \leq 2k$
and $G[\comp]$ is connected a \defining{$2k$-component}.
Every $2k$-component is a component of some $2k$-orbit (cf.~Definition~\ref{def:component}).
Because~$k$ is fixed in this proof, we just call them components in the following.
\begin{definition}
	We call a component~$\comp$
	\begin{itemize}
		\item an \defining{$i$-tip component} if $\set{\stip_i,\primeSub{\stip}{i}} \subseteq \comp$ and $i \in [d]$,
		\item a \defining{star component} if~$\comp$ intersects non-trivially with the star $\spath_1, \dots, \spath_d$ and $\stip_i \notin \comp$ for all $i \in [d]$,
		\item an \defining{$i$-star component} if~$\comp$ is a star component,~$\comp$ intersects non-trivially with~$\spath_i$,
		and for every other $j\neq i$,~$\comp$ intersects trivially with~$\spath_j$,
		\item a \defining{star center component} if~$\comp$ is a star component but
		 not an $i$-star component for every $i \in [d]$,
		\item and otherwise a \defining{sky component}.
	\end{itemize}
\end{definition}

\begin{claim}
	\label{clm:star-tip-components-basics}
	Let~$\comp$ be a component and $\comp'\subseteq \comp$ be a component. 
	\begin{enumerate}
		\item If~$\comp$ is an $i$-star component,
		then~$\comp'$ is an $i$-star or a  sky component
		\item If~$\comp$ is a star component,
		then~$\comp'$ is a star or a  sky component.
		\item If~$\comp$ is an $i$-tip component,
		then~$\comp'$ is an $i$-tip, an $i$-star, or a sky component.
		\item If~$\comp$ is a star component,
		then~$\comp$ is an $i$-star component for some~$i$ if and only if
		$\scenter \notin \comp$.
	\end{enumerate}
\end{claim}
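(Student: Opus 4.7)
The plan is to observe that every defining condition of the various component types is either an \emph{absence} condition (some specified vertex lies outside $C$) or an \emph{intersection} condition ($C$ meets a specified subset of $V$), and that absence conditions transfer for free to any $C' \subseteq C$. Items~(1) and~(2) then reduce essentially to a case distinction on whether $C'$ still meets $\spath_i$ (respectively the whole star); if it does, $C'$ is of the same type as $C$, and if it does not, the inherited exclusions leave $C'$ with no star intersection and no $e_j'$, hence $C'$ is a sky component.

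For item~(3) the first step is a locality argument: since $|C| \le 2k$ and $G[C]$ is connected with $e_i \in C$, every vertex of $C$ lies at $G$-distance at most $2k-1$ from $e_i$. Combining this with Claim~\ref{clm:distance-param-e-i} and the bound $\minRad(k) \ge 2k+2$ (which holds in the $k>1$ case we are in), all of $z$, $e_j$, and $e_j'$ for $j \ne i$ sit strictly farther than $2k$ from $e_i$. Hence $C$, and a fortiori $C'$, meets no $\spath_j$ for $j \ne i$ and contains no $e_j'$ for $j \ne i$. The classification of $C'$ is then a three-case sort: if $\{e_i,e_i'\} \subseteq C'$ then $C'$ is an $i$-tip component; if $e_i' \notin C'$ and $C' \cap \spath_i \ne \emptyset$ then $C'$ is an $i$-star component (all other exclusions already hold by the locality step); and in every remaining subcase $C'$ is a sky component.

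The main obstacle is the converse direction of~(4), which is the only part genuinely using the girth assumption. I would argue contrapositively: suppose $z \notin C$ yet $C$ meets both $\spath_i$ and $\spath_j$ at some vertices $u$ and $v$ with $i \ne j$. Since $G[C]$ is connected but avoids $z$, there is a $u$-to-$v$ path $P_1 \subseteq G[C]$ of length at most $|C|-1 \le 2k-1$ not passing through $z$. Because $C$ is a star component we have $u \ne e_i'$ and $v \ne e_j'$, so the concatenation $P_2$ of the $u$-to-$z$ subpath of $\spath_i$ with the $z$-to-$v$ subpath of $\spath_j$ has length at most $2\minRad(k)+2$ and does pass through $z$. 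Then $P_1 \ne P_2$, so the closed walk $P_1 \cdot P_2^{-1}$ contains a cycle of length at most $2k + 2\minRad(k) + 1$, strictly less than the girth lower bound $2\minRad(k+1) \ge 8\minRad(k)+4$ by $\minRad(k) \ge 2k+2$, a contradiction. The forward direction of~(4) is immediate because $z$ lies on every $\spath_j$, so any $C \ni z$ meets every $\spath_j$ and thus cannot be an $i$-star component.
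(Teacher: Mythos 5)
Your proof is correct and follows essentially the same case distinctions as the paper: items~(1) and~(2) by inheriting exclusion conditions, item~(3) by a locality argument via $|C| \le 2k$ and the distance bounds from Claim~\ref{clm:distance-param-e-i}, and item~(4) by separating the trivial forward direction from the girth-dependent converse. The one place where your write-up is genuinely more careful than the paper's is the converse of~(4): the paper's text cites the same distance facts it used for~(3) and appeals to $z\notin C$ without spelling out how these combine, whereas you make the cycle argument explicit --- two distinct $u$-$v$ paths, one inside $G[C]$ avoiding $z$ of length at most $2k-1$, one along $\spath_i\cdot\spath_j^{-1}$ through $z$ of length at most $2\minRad(k)+2$, giving a cycle shorter than the girth $2\minRad(k+1)\ge 8\minRad(k)+4$. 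This is the correct way to fill in the step, and the bound $\minRad(k)\ge 2k+2$ is exactly what makes the inequality close. A minor point in your item~(3): your three-way sort implicitly handles the edge case $e_i'\in C'$, $e_i\notin C'$ by placing it in the ``remaining'' bucket as a sky component, which is correct (such a $C'$ is not a star component because it contains an $e_j'$); the paper's own proof text would assign that case to $i$-star, which conflicts with the definition, so your classification is the one that is actually right, even though the claim statement itself is unaffected either way.
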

\begin{claimproof}
	To show Case~1, let~$\comp$ be an $i$-star component.
	By definition,~$\comp$ only has a non-trivial intersection with~$\spath_i$
	and $\stip_i \notin \comp$.
	So every $\comp' \subseteq \comp$ has either a trivial intersection with every~$\spath_j$,
	i.e.,~$\comp'$ is a sky component,
	or a non-trivial intersection only with $\spath_i$ and $\stip_i \notin \comp' \subseteq \comp$,
	i.e.,~$\comp'$ is an $i$-star component.
	The Case~2 where~$\comp$ is a star component in similar.
	
	For Case~3, let~$\comp$ be an $i$-tip component.
	Because $\distance{G}{\primeSub{\stip}{i}}{\primeSub{\stip}{j}} = 2r(k) + 2 > 4k+4$
	(Claim~\ref{clm:distance-param-e-i}),
	$\distance{G}{\primeSub{\stip}{i}}{\scenter} = r(k) + 1 > 2k$,
	$|\comp| \leq 2k$, and because~$G[\comp]$ is connected,
	$\comp$ has a trivial intersection with every~$\spath_j$ for all $j \neq i$.
	Let $\comp' \subseteq \comp$.
	Now~$\comp'$ is an $i$-tip component if $\set{\stip_i,\primeSub{\stip}{i}} \subseteq \comp'$,
	an $i$-star component if otherwise~$\comp'$ intersects non-trivially with~$\spath_i$,
	or otherwise a sky component.
	
	Finally, to prove Case~4, let~$\comp$ be a star component.
	If~$\comp$ is also an $i$-star component,
	then $\scenter \notin \comp$ because otherwise~$\comp$ would intersect non-trivially with all~$\spath_j$.
	For the reverse direction, let $\scenter \notin \comp$
	and~$\comp$ have a non-trivial intersection with~$\spath_i$.
	Because $\distance{G}{\primeSub{\stip}{i}}{\primeSub{\stip}{j}} = 2r(k) + 2 > 4k+4$,
	$|\comp| \leq 2k$, $G[\comp]$ is connected,
	and because $\scenter \notin \comp$,
	the component $\comp$ cannot have a non-trivial intersection with another~$\spath_j$.
	Hence,~$\comp$ is an $i$-star component.
\end{claimproof}

To blur the twist, we want to distribute it among the edges $\set{\stip_i,\primeSub{\stip}{i}}$ (for all $i \in[d]$)
similar to the $1$-ary case.
Here we blurred the twist between all edges adjacent to~$\scenter$.
In the $1$-ary case, $1$\nobreakdash-tuples always had the same type in $(\StructA_f, \pTupA)$ and in $(\StructA_g, \pTupA)$.
However, for $k$\nobreakdash-tuples, this is no longer the case.
We now want to construct a function
that maps a tuple~$\tupA$ to a tuple~$\tupB$
such that~$\tupB$ has the same type in $(\StructA_g,\pTupA)$
as~$\tupA$ has in $(\StructA_f, \pTupA)$.
To do so, we use a path isomorphism on the path~$\spath_1$.
We generalize this and not only want to ``repair'' the types
for a twist at the edge $\set{\stip_1,\primeSub{\stip}{1}}$ but
possibly for multiple twists at all edges $\set{\stip_i,\primeSub{\stip}{i}}$.

Let $\tup{a} \in \ZZ_{2^\pot} ^d$.
We define a function $\fixTwistType_{\tup{a}} \colon \StructVA^{\leq2k} \to \StructVA^{\leq 2k}$
that preserves the size of tuples.
Set $\fixTwistTypeAut{\tup{a}}{i} := \pathiso{a_i}{\spath_i}$
(cf.~Definition~\ref{def:pathiso}).
The function~$\fixTwistType_{\tup{a}}$ applies $\fixTwistTypeAut{\tup{a}}{i}$ to tuples,
but only to those components containing some of the edges $\{\stip_i,\primeSub{\stip}{i}\}$.
These are precisely the $i$-tip components:
\begin{align*}
	\fixTwistType_{\tup{a}}(\tupA) &:= (\vertB_1, \dots, \vertB_{|\tupA|}), \text{where } \tupA=(\vertA_1, \dots, \vertA_{|\tupA|}) \text{ and}\\
	\vertB_j &:= \begin{cases}
		\fixTwistTypeAut{\tup{a}}{i}(\vertA_j) & \text{if } \orig{\vertA_j} \in \comp \text{ and } \comp \text{ is an $i$-tip component of } \tupA,\\
		\vertA_j & \text{otherwise.}
	\end{cases}
\end{align*}
Given a function $h \colon E \to \ZZ_{2^\pot}$,
we write $h + \tup{a}$ for the function $h' \colon E \to \ZZ_{2^\pot}$
such that
$h'(\set{\stip_i,\primeSub{\stip}{i}}) =h(\set{\stip_i,\primeSub{\stip}{i}})+ a_i$ for all $i \in [d]$, 
and $h'(e) = h(e)$ otherwise.

\begin{claim}
	\label{clm:fxTwistType-fixes-type-general}
	Suppose $\tup{a} \in \ZZ_{2^\pot} ^d$,
	$h \colon E \to \ZZ_{2^\pot}$, and
	$k' \leq 2k$.
	If $\BlockA \in \orbs{k'}{(\Struct_h,\pTupA)}$,
	then $\fixTwistType_{\tup{a}}(\BlockA) \in \orbs{k'}{(\StructA_{h + \tup{a}},\pTupA)}$
	and $\fixTwistType_{\tup{a}}(\BlockA)$ has the same type in $(\StructA_{h + \tup{a}},\pTupA)$ as~$\BlockA$ has in $(\StructA_h,\pTupA)$.
\end{claim}
\begin{claimproof}
	Let $\BlockA \in \orbs{k'}{(\StructA_h,\pTupA)}$.
	It suffices to consider the case that~$\BlockA$ has a single component~$\comp$
	because~$\fixTwistType_{\tup{a}}$ is defined component-wise
	and
	because by Lemma~\ref{lem:split-discon-orbits}
	the type of a disconnected orbit is given by the types of 
	the restrictions to the components.
	If~$\comp$ does not contain~$\stip_i$ and~$\primeSub{\stip}{i}$ for some $i \in [d]$,
	then $\fixTwistType_{\tup{a}}$ is the identity function.
	Because~$\BlockA$ does not cover the twisted edge,
	it has the same type in $(\Struct_h,\pTupA)$
	and in $(\StructA_{h + \tup{a}},\pTupA)$.
	
	So assume $\set{\stip_i,\primeSub{\stip}{i}} \subseteq \comp$.
	This is the case for exactly one $i \in [d]$ by Claim~\ref{clm:star-tip-components-basics}.
	Let~$h_i$ be the function equal to~$h$ for all edges apart from
	$h_i (\set{\stip_i,\primeSub{\stip}{i}}) := h(\set{\stip_i,\primeSub{\stip}{i}}) + a_i$ and $h_i (\set{\scenter, \scenter_i}) := h(\set{\scenter, \scenter_i}) - a_i$,
	where~$\scenter_i$ is the neighbor of~$\scenter$ used in the path~$\spath_i$.
	By Claim~\ref{clm:distance-param-e-i},
	the parameters~$\pTupA$ have distance $2 \minRad(k)$
	to $\primeSub{\stip}{i}$ and thus in particular are not contained in~$\spath_i$.
	So~$\fixTwistTypeAut{\tup{a}}{i}$
	is an isomorphism between $(\StructA_{h},\pTupA)$
	and $(\StructA_{h_i},\pTupA)$
	by Lemma~\ref{lem:pathiso}.
	Because~$\BlockA$ is a $k'$-orbit and $k' \leq 2k$,
	neither~$\scenter$ nor its neighbors (in particular not~$\scenter_i$) are contained in~$\comp$,
	because $2k < \distance{G}{\scenter}{\primeSub{\stip}{i}} = \minRad(k) +1$.
	So $\fixTwistType_{\tup{a}}(\BlockA) = \fixTwistTypeAut{\tup{a}}{i}(\BlockA)$
	has the same type in $(\StructA_{h_i},\pTupA)$
	as~$\BlockA$ has in $(\StructA_{h},\pTupA)$.
	Now, between $(\StructA_{h+\tup{a}},\pTupA)$
	and
	$(\StructA_{h_i},\pTupA)$
 	all edges $\set{\stip_\ell,\primeSub{\stip}{\ell}}$ for $\ell \neq i$
	and the edge $\set{\scenter,\scenter_i}$ are potentially twisted.
	Because $\scenter \notin \comp$ and $\set{\stip_\ell,\primeSub{\stip}{\ell}} \not\subseteq \comp$ for every $\ell \neq i$,
	$(\StructA_{h_i},\pTupA)[\comp] = (\StructA_{h+\tup{a}},\pTupA)[\comp]$.
	Hence, the type of $\fixTwistTypeAut{\tup{a}}{i}(\BlockA)$
	in $(\StructA_{h_i},\pTupA)$
	is equal to the type of $\fixTwistType_{\tup{a}}(\BlockA)$ in $(\StructA_{h+\tup{a}},\pTupA)$.
\end{claimproof}

We now construct a blurrer for our setting.
Let $i\in \nat$ such that
$2^{i-1}-1 < k \leq 2^i -1$.
\begin{enumerate}
	\item By Lemma~\ref{lem:blurrer-larger-field}, there is a
$(2^i -1, \pot-\twistVal(k-1), 2^i, 2^{i+1}-1)$-blurrer
(note that ${\pot - \twistVal(k-1) \geq \pot(k) - \twistVal(k-1) = i+1}$).
	\item We use Lemma~\ref{lem:blurrer-embedding} to turn it into a
$(2^i-1, \pot, 2^{i+\twistVal(k-1)}, 2^{i+1}-1)$-blurrer
by embedding $\ZZ_{2^\pot-\twistVal(k-1)}$ in $\ZZ_{2^\pot}$.
	\item We use Lemma~\ref{lem:blurrer-basics} to get a
$(k, \pot,  2^{i+\twistVal(k-1)}, 2^{i+1}-1)$-blurrer because ${k \leq 2^i -1}$,
	\item 
then a $(k, \pot,  2^{i+\twistVal(k-1)}, d)$-blurrer 
because $d \geq d(k,m) \geq 2^{i+1} - 1$,
and finally 
\item a $(k, \pot, a\cdot 2^{i+\twistVal(k-1)}, d)$-blurrer $\blurrer$.
\end{enumerate}
By this construction,
for every $\blurElem \in \blurrer$
and every $j \in [d]$
there is some $b \in \ZZ_{2^\pot}$
such that $\blurElem(j) = b \cdot 2^{\twistVal(k-1)}$
because we embedded $\ZZ_{2^\pot-\twistVal(k-1)}$ in $\ZZ_{2^\pot}$.
Let $\blurElemFix = (a\cdot 2^{i+\twistVal(k-1)}, 0, \dots, 0)$ be the tuple given by Lemma~\ref{lem:blurrer-sum} for $\blurrer$. Then $g = f + \blurElemFix$.
We set \defining{$\fixTwistType := \fixTwistType_{\blurElemFix}$}.
\begin{corollary}
	\label{cor:tau-fixes-orbit-type}
	If $k' \leq 2k$ and $\BlockA \in \orbs{k'}{(\StructA_f,\pTupA)}$,
	then $\fixTwistType(\BlockA) \in \orbs{k'}{(\StructA_g,\pTupA)}$
	and $\fixTwistType(\BlockA)$ has the same type in $(\StructA_g,\pTupA)$ as $\BlockA$ has in $(\StructA_f, \pTupA)$.
\end{corollary}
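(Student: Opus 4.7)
The plan is to recognize that Corollary~\ref{cor:tau-fixes-orbit-type} is essentially a direct specialization of Claim~\ref{clm:fxTwistType-fixes-type-general}, so the proof should simply check that the hypotheses of the claim line up with the current setup and then invoke it. No new work is needed beyond unpacking definitions.

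First I would recall the relevant data: $\StructA = \CFIgraph{2^\pot}{G}{f}$, $\StructB = \CFIgraph{2^\pot}{G}{g}$, the setup of Lemma~\ref{lem:blur-kary} guarantees that $g$ differs from $f$ only on the single edge $\set{t,t'}$, where $g(\set{t,t'}) = f(\set{t,t'}) + \twistVal$ for a twist value of the form $\twistVal = a \cdot 2^{\twistVal(k)}$. The blurer $\blurer$ was constructed so that $\blurElemFix = (a \cdot 2^{i+\twistVal(k-1)}, 0, \dots, 0) \in \ZZ_{2^\pot}^d$, and by definition of $\pot(k) = 1 + \twistVal(k)$ this coincides with the value $\twistVal$ at coordinate $1$ and is $0$ elsewhere. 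Since coordinate $1$ corresponds to the edge $\set{e_1,e_1'} = \set{t,t'}$, we therefore have the equality of functions $f + \blurElemFix = g$.

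Next, I would simply apply Claim~\ref{clm:fxTwistType-fixes-type-general} with $h := f$ and $\tup{a} := \blurElemFix$. The hypothesis $k' \leq 2k$ is the same, and $\BlockA \in \orbs{k'}{(\StructA,\pTupA)} = \orbs{k'}{(\CFIgraph{2^\pot}{G}{f},\pTupA)}$ matches. The conclusion yields
\[ \fixTwistType_{\blurElemFix}(\BlockA) \in \orbs{k'}{(\CFIgraph{2^\pot}{G}{f + \blurElemFix},\pTupA)} = \orbs{k'}{(\StructB,\pTupA)} \]
with the same type as $\BlockA$. Recalling $\fixTwistType = \fixTwistType_{\blurElemFix}$ by definition, this is exactly the statement of the corollary.

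There is no real obstacle here; the only thing to verify carefully is the numerical identity $g = f + \blurElemFix$. That identity depends on having chosen $\pot$ large enough that $a \cdot 2^{\twistVal(k)} = a \cdot 2^{i + \twistVal(k-1)}$ (since $\twistVal(k) = i + \twistVal(k-1)$), and on having used the embedding in the blurer construction so that the first entry of $\blurElemFix$ is exactly the twist value. All of this is already set up in the paragraphs preceding the corollary, so the proof reduces to one sentence invoking the claim.
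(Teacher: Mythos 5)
Your proof is correct and takes essentially the same route as the paper, which simply cites Claim~\ref{clm:fxTwistType-fixes-type-general}. You fill in the (correct) bookkeeping that $g = f + \blurElemFix$ via the identity $\twistVal(k) = i + \twistVal(k-1)$ and that $\fixTwistType = \fixTwistType_{\blurElemFix}$, both of which the paper establishes in the text immediately preceding the corollary.
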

\begin{proof}
	Follows from Claim~\ref{clm:fxTwistType-fixes-type-general}.
\end{proof}
We have seen that with~$\fixTwistType$
we can ``repair'' the types of the orbits,
but~$\fixTwistType$ introduces inconsistencies between tuples
along the path~$\spath_1$.
This can easily be seen already for $k=2$:
Consider a $2$-tuple $(\vertA,\vertB) $ with origin $(\stip_1, \primeSub{\stip}{1})$
and the $2$-tuple $(\vertB, \vertC)$ with origin $(\primeSub{\stip}{1}, \bvertA)$, 
where~$\bvertA$ is the next vertex in the path~$\spath_1$.
Then $\fixTwistType((\vertA,\vertB)) = (\vertA,\vertB')$ for some $\vertB' \neq \vertB$ but $\fixTwistType((\vertB,\vertC)) = (\vertB,\vertC)$.
So clearly the composed $4$-tuple $(\vertA,\vertB,\vertB,\vertC)$
has a different type than $(\vertA, \vertB', \vertB, \vertC)$.
For these inconsistencies, we use the blurrer~$\blurrer$ as follows.

We now define another function which according to a $\blurElem \in \blurrer$
``distributes'' the twists among the edges $\set{\stip_i,\primeSub{\stip}{i}}$ using a star isomorphism.
We associate with each $\blurElem \in \blurrer$ a function $\autoB_\blurElem\colon\StructVA^{\leq k} \to \StructVA^{\leq k}$, again preserving tuple sizes:
Set $\autoA_\blurElem := \stariso{\blurElem}{\spath_1, \dots, \spath_d}$ (cf.~Definition~\ref{def:stariso}).
Then define $\autoB_\blurElem\colon \StructVA^{\leq k} \to \StructVA^{\leq k}$ as follows:
Let $\tupA \in \StructVA^{\leq k}$.
We set
\begin{align*}
	\autoB_\blurElem(\tupA) &:= (\vertB_1,\dots,\vertB_{|\tupA|}),
	\text{where } \tupA=(\vertA_1, \dots, \vertA_{|\tupA|}) \text{ and}\\
	\vertB_i &:= \begin{cases}
		\autoA_\blurElem(\vertA_i) & \text{if } \orig{\vertA_i} \in \comp \text{ and } \comp \text{ is a star component of } \tupA, \\
		\vertA_i & \text{otherwise.}
	\end{cases}
\end{align*} 
That is, we apply~$\autoA_\blurElem$ to all components of~$\tupA$
that do not contain the tips of the star $\spath_1,\dots,\spath_d$.
On the sky components~$\autoA_\blurElem$ is the identity function anyway.
From now on,
we will identify~$\blurElem$ with~$\autoB_\blurElem$ and write~$\blurElem(\tupA)$.

\begin{claim}
	\label{clm:component-wise-application}
	Let $\tupA \in \StructVA^\ell$
	and the components of~$\tupA$ be partitioned into~$D$ and~$D'$.
	Then $\fixTwistType_{\tup{a}}(\tupA) = \fixTwistType_{\tup{a}}(\tupA_D)\fixTwistType_{\tup{a}}(\tupA_{D'})$
	and $\blurElem(\tupA) = \blurElem(\tupA_D)\blurElem(\tupA_{D'})$
	for every $\tup{a} \in \ZZ_{2^\pot} ^ d$ and
	$\blurElem \in \blurrer$.
\end{claim}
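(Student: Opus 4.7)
The plan is to reduce both identities to the single observation that the functions $\fixTwistType_{\tup{a}}$ and $\psi_\blurElem$ act on each entry $\vertA_j$ of a tuple purely on the basis of two pieces of data: the vertex $\vertA_j$ itself together with the \emph{classification} (as $i$-tip, $i$-star, star center, or sky) of the component of the ambient tuple containing $\orig{\vertA_j}$. Once these data are shown to coincide for $\tupA$ and for $\tupA_D$, the two identities follow by entry-wise comparison.

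First I would verify that the components of $\tupA_D$ are exactly the elements of $D$. Since $D$ and $D'$ together partition the components of $\tupA$, each $C \in D$ is a connected component of $G[\orig{\tupA}]$ whose origin vertices all appear in $\tupA_D$. Restricting the origin to $\bigcup D$ cannot create new edges or merge components, so the connected components of $G[\orig{\tupA_D}]$ are precisely the elements of $D$; the same holds for $D'$.

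Next I would observe that whether a component $C$ is an $i$-tip, $i$-star, star center, or sky component is determined entirely by $C$ as a subset of $V$, since the definitions only reference containment of the fixed vertices $e_i,e_i'$ and intersection with the fixed paths $\spath_1,\dots,\spath_d$. In particular, each $C \in D$ receives the same classification when viewed as a component of $\tupA$ as when viewed as a component of $\tupA_D$.

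Putting the two steps together, I would unfold the definitions of $\fixTwistType_{\tup{a}}$ and $\psi_\blurElem$ entry by entry. For an entry $\vertA_j$ of $\tupA$ whose component $C_j$ lies in $D$, the image under $\fixTwistType_{\tup{a}}$ (respectively $\psi_\blurElem$) is $\fixTwistTypeAut{\tup{a}}{i}(\vertA_j)$ if $C_j$ is an $i$-tip component (respectively $\phi_\blurElem(\vertA_j)$ if $C_j$ is a star component), and $\vertA_j$ otherwise; by the previous paragraph this output is identical to the one obtained by applying the function to $\tupA_D$. The analogous statement holds for entries with component in $D'$. Since the juxtaposition $\tupA_D\tupA_{D'}$ is by convention $\perm(\tupA_D\tupA_{D'}) = \tupA$ for the unique $\perm$ gluing the two restrictions back into $\tupA$, and both $\fixTwistType_{\tup{a}}$ and $\psi_\blurElem$ are defined entry-wise (they commute with any reindexing permutation), both claimed identities follow.

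The proof is essentially a bookkeeping exercise; the only point to be careful about is that the classification of a component truly is intrinsic to the component as a subset of $V$ and independent of which larger tuple it sits inside, which is why the functions interact well with the component-wise splitting.
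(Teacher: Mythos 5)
Your proof is correct and takes essentially the same approach as the paper, which simply observes that the claim is immediate because $\fixTwistType_{\tup{a}}$ and $\psi_\blurElem$ are defined component-wise; you have spelled out the bookkeeping (that restricting to $D$ preserves the components and their classifications) that the paper leaves implicit.
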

\begin{claimproof}
	The claim is immediate because~$\fixTwistType_{\tup{a}}$ and~$\autoB_\blurElem$
	are defined component-wise.
\end{claimproof}

\begin{claim}
	\label{clm:commutativity}
	For every $\tup{a}\in \ZZ_{2^\pot}^d$ and $\blurElem \in \blurrer$,
	the functions $\fixTwistType_{\tup{a}}$, $\autoB_\blurElem$, and every automorphism $\autoA \in \autgrp{(\StructA_f, \pTupA)} = \autgrp{(\StructA_g, \pTupA)}$ commute.
\end{claim}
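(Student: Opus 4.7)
The plan is to reduce each of the three pairwise commutativity statements to pointwise commutativity of translations inside the abelian translation group of a single gadget.

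First I would observe that all three maps preserve the origin of every vertex, so that a tuple and its image always have the same component decomposition: for $\phi$ this holds because the base graph $G$ is totally ordered and hence rigid, forcing $\phi$ to fix each gadget setwise; for $\fixTwistType_{\tup{a}}$ and $\psi_\blurElem$ it is immediate from their definitions via $\gadgetiso{\cdot}{\cdot}$. Next, by inspecting Definitions~\ref{def:pathiso} and~\ref{def:stariso} together with the description of the gadget automorphism group from Section~\ref{sec:cfi-structures}, the restriction of each of $\phi$, $\fixTwistTypeAut{\tup{a}}{i}$, and $\phi_\blurElem$ to a single gadget $\StructV_{\bvertA}$ is a translation $\vertA \mapsto \gadgetiso{\vertA}{\tup{c}}$ for some $\tup{c}$ depending on the map and the gadget. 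Since these translations form an abelian group, any two of them commute pointwise on every vertex.

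I would then lift this pointwise statement to tuples via Claim~\ref{clm:component-wise-application}, reducing the verification to commutativity on each individual component. For the interaction between $\fixTwistType_{\tup{a}}$ and $\psi_\blurElem$ the key point is that an $i$-tip component contains $e_i'$, while a star component excludes every $e_i'$ by definition; so on any component at most one of these two maps acts non-trivially and they commute trivially. For compositions involving $\phi$, commutativity on a component reduces entry by entry to the pointwise translation statement above.

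The only real obstacle is the bookkeeping of the component classification — sky, star, $i$-star, star center, $i$-tip — and checking that this classification is invariant under each of the three maps. Because all three maps preserve origins and the classification depends only on origins together with the fixed star $\spath_1, \dots, \spath_d$, the classification is indeed preserved, which closes the argument.
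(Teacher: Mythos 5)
Your proof is correct and takes essentially the same approach as the paper: reduce commutativity to the component-wise action on gadgets, where each of the three maps restricts to a gadget automorphism, and appeal to the abelianness of the gadget automorphism group; your additional observation that $\fixTwistType_{\tup{a}}$ and $\psi_\blurElem$ act non-trivially only on disjoint component classes (tip versus star) is valid but not strictly needed once abelianness is available. The paper's one-line proof leaves the origin-preservation and classification-preservation bookkeeping implicit, which you make explicit -- a reasonable expansion of the same argument.
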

\begin{claimproof}
	The functions~$\fixTwistType_{\tup{a}}$ and~$\autoB_\blurElem$
	are defined component-wise by isomorphisms.
	We saw in Section~\ref{sec:cfi-isomorphisms} that isomorphisms between CFI structures
	are composed of automorphisms of each gadget.
	Because the automorphism group of a gadget is abelian (Lemma~\ref{lem:cfi-orbit-auto-group-regular}),
	the said functions commute.
\end{claimproof}

\begin{definition}
	An \defining{orbit-automorphism} is a function $\zeta \colon \StructVA^{\leq 2k} \to \StructVA^{\leq 2k}$
	that satisfies the following:
	For every $\BlockA\in \orbs{k'}{(\StructA_f, \pTupA)}=\orbs{k'}{(\StructA_g, \pTupA)}$ with $k'\leq 2k$
	there is an automorphism $\autoA_\BlockA \in \autgrp{(\StructA_f, \pTupA)}=\autgrp{(\StructA_g, \pTupA)}$
	such that  $\zeta(\tupA) = \autoA_\BlockA(\tupA)$ for all $\tupA \in \BlockA$.
\end{definition}
That is, an orbit-automorphism is a function,
whose action on a single orbit is the action of an automorphism.
For different orbits, the corresponding automorphisms may be different.
This matches the definition of an orbit-invariant matrix~(cf.~Definition~\ref{def:orbit-invariant-matrix}),
which is invariant under all orbit-automorphisms.
\begin{claim}
	\label{clm:blurelem-blockiso}
	Every $\blurElem \in \blurrer$
	is an orbit-automorphism.
\end{claim}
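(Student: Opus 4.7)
The plan is to fix an orbit $\BlockA \in \orbs{k'}{(\StructA, \pTupA)}$ with $k' \leq 2k$ and a representative $\tupA_0 \in \BlockA$, first verify that $\psi_\blurElem(\tupA_0) \in \BlockA$, and then extend the agreement from $\tupA_0$ to all of $\BlockA$ using the regular and abelian action on $\BlockA$ together with Claim~\ref{clm:commutativity}.

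For the first step, by Corollary~\ref{cor:orbit-types-unique} it suffices to show that $\tupA_0$ and $\psi_\blurElem(\tupA_0)$ have the same type in $(\StructA, \pTupA)$. Since $\psi_\blurElem$ acts by local gadget shifts that preserve origins, $\orig{\psi_\blurElem(\tupA_0)} = \orig{\tupA_0}$, so both tuples lie in the same induced substructure $\StructA[\orig{\pTupA\tupA_0}]$. I will construct an automorphism of this substructure that fixes $\pTupA$ and maps $\tupA_0$ to $\psi_\blurElem(\tupA_0)$ by assigning to each base vertex $\bvertA \in \orig{\pTupA\tupA_0}$ the gadget shift that $\phi_\blurElem$ applies at $\bvertA$ if $\bvertA$ lies in a star component of $\tupA_0$, and the identity otherwise. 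By design this automorphism acts on $\tupA_0$ as $\psi_\blurElem$ does and fixes $\pTupA$. The critical point is to verify edge consistency inside $\StructA[\orig{\pTupA\tupA_0}]$: any edge entirely inside a star component $C$ is preserved because $\phi_\blurElem$ is an isomorphism $\StructA \to \CFIgraph{2^\pot}{G}{f + \blurElem}$ whose only altered twists are at the tip edges $\{e_i, e_i'\}$, and no star component contains any $e_i'$, so $\phi_\blurElem$ restricted to $C$ is an automorphism of $\StructA[C]$. The mixed case --- an edge with one endpoint in a star component of $\tupA_0$ and the other in $\orig{\pTupA\tupA_0}$ but outside that component --- cannot occur: if the other endpoint were in $\orig{\tupA_0}$, adjacency in $G$ together with both vertices lying in $\orig{\tupA_0}$ would place them in the same component of $G[\orig{\tupA_0}]$, and if it were in $\orig{\pTupA}$, Claim~\ref{clm:distance-param-e-i} forces the distance between the star and $\orig{\pTupA}$ to exceed $1$.

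Once $\psi_\blurElem(\tupA_0) \in \BlockA$ is established, the lemma preceding Corollary~\ref{cor:orbit-types-unique} supplies some $\phi_\BlockA \in \autgrp{(\StructA, \pTupA)}$ with $\phi_\BlockA(\tupA_0) = \psi_\blurElem(\tupA_0)$, unique by regularity (Lemma~\ref{lem:cfi-orbit-auto-group-regular}). For an arbitrary $\tupA \in \BlockA$, regularity yields $\psi \in \autgrp{(\StructA, \pTupA)}$ with $\psi(\tupA_0) = \tupA$, and the computation
\[\phi_\BlockA(\tupA) = \phi_\BlockA(\psi(\tupA_0)) = \psi(\phi_\BlockA(\tupA_0)) = \psi(\psi_\blurElem(\tupA_0)) = \psi_\blurElem(\psi(\tupA_0)) = \psi_\blurElem(\tupA)\]
closes the argument, where the second equality uses that the automorphism group is abelian (Lemma~\ref{lem:cfi-autgroup-abelian-two-group}) and the fourth uses Claim~\ref{clm:commutativity}. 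The main obstacle is the edge-consistency check in the second paragraph, which requires exploiting the distance bounds encoded in $\minRad(k)$ and the star/tip component dichotomy to rule out any ``bridging'' edge between a star component of $\tupA_0$ and the rest of $\orig{\pTupA\tupA_0}$; once this is settled, the remainder is a short group-theoretic manoeuvre.
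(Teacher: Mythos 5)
Your proof is correct, but it takes a genuinely different route from the paper's. The paper reduces to connected star-component orbits (via Lemma~\ref{lem:split-discon-orbits}) and then constructs a \emph{global} automorphism of $(\StructA,\pTupA)$: using the $(m+2k+1)$-connectivity of $G$ it finds auxiliary paths $\spath'_i$ from $e_i$ back to $e_1$, disjoint from $\orig{\pTupA}$ and from the component $\comp$ (apart from the endpoints), and composes $\phi_\blurElem$ with the path isomorphisms $\phi_i := \pathiso{\blurElem(i)}{\spath'_i}$ ($i\ge 2$); since $\sum\blurElem = 0$ the twists cancel at $\set{e_1,e_1'}$, so the composite $\psi$ is an automorphism, and since each $\phi_i$ is the identity on $\orig{\BlockA}$, $\psi$ already agrees with $\psi_\blurElem$ on $\BlockA$ --- no extension step needed. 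You instead argue \emph{locally}: you build an automorphism of the induced substructure $\StructA[\orig{\pTupA\tupA_0}]$ alone, invoke Corollary~\ref{cor:orbit-types-unique} to place $\psi_\blurElem(\tupA_0)$ in $\BlockA$, and only then propagate to all of $\BlockA$ via regularity (Lemma~\ref{lem:cfi-orbit-auto-group-regular}), abelianness (Lemma~\ref{lem:cfi-autgroup-abelian-two-group}), and Claim~\ref{clm:commutativity}. Your route offloads the global connectivity reasoning onto the already-proved orbit-type machinery, making the construction inside the proof more elementary, at the cost of an extra (but short) group-theoretic extension step; the paper's construction directly produces the witnessing automorphism.

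One imprecision to flag: to rule out a bridging edge from a star component into $\orig{\pTupA}$ you cite Claim~\ref{clm:distance-param-e-i}, but that claim only bounds $\distance{G}{e_i}{\orig{\pTupA}}$, not the distance from an arbitrary vertex of a star component. The bound you actually need follows from the hypothesis $\distance{G}{t}{\orig{\pTupA}} > \minRad(k+1) = 4\minRad(k)+2$ of Lemma~\ref{lem:blur-kary} together with $\distance{G}{z}{t}=\minRad(k)+1$ and the observation that every star component lies within distance $\minRad(k)+2+2k \le 2\minRad(k)$ of $z$ (using $\minRad(k)\ge 2k+2$ for $k>1$), giving a distance of at least $\minRad(k)+1 > 1$ from any star component to $\orig{\pTupA}$. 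This also ensures, as your construction implicitly assumes, that $\orig{\pTupA}$ is disjoint from the star components of $\tupA_0$, so the assigned gadget shift on $\orig{\pTupA}$-gadgets is indeed the identity and $\pTupA$ is fixed pointwise.
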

\begin{claimproof}
	By Lemma~\ref{lem:split-discon-orbits}, it suffices to show the
	claim for connected orbits $\BlockA \in \orbs{k'}{(\StructA_f, \pTupA)}$ with $k' \leq 2k$.
	If the origin of~$\BlockA$ is not a star component,
	then~$\blurElem$ is the identity function on~$\BlockA$
	and so clearly an orbit-automorphism.
	
	Otherwise, the origin $\comp := \orig{\BlockA}$ is a star component.
	Then in particular $\stip_i \notin \comp$ for all $i \in [d]$.
	We show that there are paths $\spath_i' = (\primeSub{\stip}{i}, \stip_i, \dots, \stip_1, \primeSub{\stip}{1})$
	that are completely disjoint from $\orig{\pTupA}$ and
	possibly apart from $\primeSub{\stip}{i}, \primeSub{\stip}{1}$ disjoint from~$\comp$.
	Set $\comp' := \comp \setminus \setcond{\primeSub{t}{i}}{i \in [d]}$.
	Consider the graph $G \setminus \comp' \setminus \orig{\pTupA}$.
	We removed at most $|\orig{\pTupA}| + k' \leq 2k + m$ many vertices.
	Because~$G$ is $(2k + m + 1)$-connected,
	the claimed paths exists in $G \setminus \comp' \setminus \orig{\pTupA}$.
	We then use path isomorphisms $\autoA_i :=\pathiso{\blurElem(i)}{\spath_i'}$
	for all $i \in [d] \setminus\set{1}$
	to move the twist introduced by~$\autoB_\blurElem$
	at the $\set{\stip_i, \primeSub{\stip}{i}}$ to $\{\stip_1,\primeSub{\stip}{1}\}$.
	We set $\autoB :=\autoB_\blurElem \circ \autoA_2 \circ \cdots \circ \autoA_d$.
	Now, we have that
	$(\StructA_f, \pTupA) \iso \autoB ((\StructA_f, \pTupA)) =  (\StructA_f,  \pTupA)$
	by Lemmas~\ref{lem:pathiso} and~\ref{lem:stariso}.
	Let $\tupA \in \BlockA$.
	The isomorphisms~$\autoA_i$ are the identity function on vertices in $\orig{\BlockA}$
	because~$\autoA_i$ is the identity on~$\primeSub{\stip}{i}$ and~$\primeSub{\stip}{1}$ (cf.~Definition~\ref{def:pathiso})
	and other vertices in $\orig{\BlockA}$ are not contained in~$\spath_i'$ for every $i \in [d]$.
	That is, $\BlockA = \autoB(\BlockA) = \autoB_\blurElem(\BlockA) = \blurElem(\BlockA)$ and~$\blurElem$ is an orbit-automorphism.
\end{claimproof}
We show how~$\blurElem$ and~$\fixTwistType_\blurElem$ work together.
That is, for $\tupA\tupB$ of length at most~$2k$,
the tuple $\fixTwistType_\blurElem(\blurElem(\tupA)) \fixTwistType_\blurElem(\blurElem(\tupB))$
has the same type as the tuple $\fixTwistType_\blurElem(\tupA\tupB)$.
This is not true for applying only~$\fixTwistType_\blurElem$
to~$\tupA$ and~$\tupB$ because their origins may overlap.
\begin{claim}
	\label{clm:blurrable-repair-orbits-general}
	Let $h\colon E \to \ZZ_{2^\pot}$, $k' \leq 2k$,
	$\BlockA \in \orbs{k'}{(\StructA_{h},\pTupA)}$,
	and $\blurElem \in \blurrer$.
	Every  $\tupA\tupB\in \StructVA^{k'}$
	satisfies  $\tupA\tupB \in \BlockA$ if and only if $\fixTwistType_\blurElem(\blurElem(\tupA)) \fixTwistType_\blurElem(\blurElem(\tupB)) \in \fixTwistType_\blurElem(\BlockA)$.
\end{claim}
\begin{claimproof}
	Set $\BlockA_1 := \restrictVect{\BlockA}{\set{1,\dots,|\tupA|}}$
	and $\BlockA_2 := \restrictVect{\BlockA}{\set{|\tupA|+1,\dots, k'}}$.
	Let $\comp^i_1, \dots, \comp^i_{\ell_i}$ be the components of~$\BlockA_i$ for every $i \in [2]$.
	Because~$\fixTwistType_\blurElem$ and~$\blurElem$ are defined component-wise,
	it suffices by Lemma~\ref{lem:split-discon-orbits} to 
	assume that~$\BlockA$ has a single component~$\comp$.
	We need  to verify
	that $\tupA\tupB \in \BlockA$ if and only if 
	$\fixTwistType_\blurElem(\blurElem(\tupA))\fixTwistType_\blurElem(\blurElem(\tupB)) \in \fixTwistType_\blurElem(\BlockA)$.
	The component~$\comp$ is the union of the~$\comp^i_j$.
	We perform the following case distinction:
	\begin{itemize}
		\item Assume~$\comp$ is an $n$-tip component.
		For every $i \in [2]$,
		let~$D_i^T$ be the set of the $n$-tip components~$\comp^i_j$,~%
		$D_i^S$ be the set of the $n$-star components~$\comp^i_j$,
		and~$D_i^R$ be the set of sky components~$\comp^i_j$.
		This yields a partition of all~$\comp^i_j$
		by Claim~\ref{clm:star-tip-components-basics}.
		
		Then we have by the definitions of~$\fixTwistType_\blurElem$
		and~$\blurElem$ that~$\blurElem$ is the identity function on~$D^T_i$,~%
		$\fixTwistType_\blurElem$ is the identity on~$D^S_i$,
		and both are the identity on~$D^R_i$.
		That is,
		\begin{align*}
			&\hiddenEq \fixTwistType_\blurElem(\blurElem(\tupA))\fixTwistType_\blurElem(\blurElem(\tupB))\\
			&= \fixTwistType_\blurElem(\blurElem(\tupA_{D^T_1}\tupA_{D^S_1}\tupA_{D^R_1}))
			\fixTwistType_\blurElem(\blurElem(\tupB_{D^T_2}\tupB_{D^S_2}\tupB_{D^R_2}))\\
			&= \fixTwistType_\blurElem(\tupA_{D^T_1})\blurElem(\tupA_{D^S_1})\fixTwistType_\blurElem(\tupA_{D^R_1})
			\fixTwistType_\blurElem(\tupB_{D^T_2})\blurElem(\tupB_{D^S_2})\fixTwistType_\blurElem(\tupB_{D^R_2}). \tag{$\star$}
		\end{align*}
		When working on the whole component~$\comp$,~%
		$\fixTwistType_\blurElem$ applies $\fixTwistTypeAut{\blurElem}{n}$ to vertices
		in components of~$D^S_1$ and~$D^S_2$
		because~$\comp$ is an $n$-tip component.
		We see that $\restrictVect{\autoA_\blurElem}{D^S_i} = \restrictVect{\fixTwistTypeAut{\blurElem}{n}}{D^S_i}$
		because~$D^S_i$ is an $n$-star-component
		and so does not contain~$\scenter$
		(cf. the Definitions~\ref{def:pathiso} and~\ref{def:stariso}
		and the definitions of~$\autoA_\blurElem$ and $\fixTwistTypeAut{\blurElem}{n}$). It follows that
		\begin{align*}
			 (\star) = \fixTwistType_\blurElem(\tupA_{D^T_1}\tupA_{D^S_1}\tupA_{D^R_1}
			\tupB_{D^T_2}\tupB_{D^S_2}\tupB_{D^R_2}) = \fixTwistType_\blurElem(\tupA\tupB).
		\end{align*}
		So $\tupA\tupB \in \BlockA$ if and only if $\fixTwistType_\blurElem(\blurElem(\tupA))
		\fixTwistType_\blurElem(\blurElem(\tupB)) = \fixTwistType_\blurElem(\tupA\tupB) \in \fixTwistType_\blurElem(\BlockA)$.
		\item 
		Otherwise,~$\comp$ is not a tip component.
		We distinguish two more cases:
		\begin{itemize}
			\item
			If~$\comp$ is a star component,
			let~$D^S_i$ be the set of star components~$\comp^i_j$
			and~$D^R_i$ be the set of sky components~$\comp^i_j$
			for all $i \in [2]$.
			There are no tip components of the~$\comp^i_j$ by Claim~\ref{clm:star-tip-components-basics}.
			So we again partitioned all components~$\comp^i_j$.
			Now~$\fixTwistType_\blurElem$
			is the identity function on all~$D^S_i$ and~$D^R_i$
			and~$\blurElem$ is the identity on all~$D^R_i$.
			So we have
			$\tupA\tupB \in \BlockA$
			if and only if
			$\fixTwistType_\blurElem(\blurElem(\tupA))\fixTwistType_\blurElem(\blurElem(\tupB))=\blurElem(\tupA)\blurElem(\tupB) =\blurElem(\tupA\tupB)
			\in \fixTwistType_\blurElem(\BlockA) = \BlockA$
			by Claim~\ref{clm:blurelem-blockiso}.
			\item
			Otherwise,~$\comp$ is a sky component
			and both~$\fixTwistType_\blurElem$ and~$\blurElem$
			are the identity function on~$\comp$ and all~$\comp^i_j$
			and the claim follows immediately.
			\qedhere
		\end{itemize}
	\end{itemize}
\end{claimproof}

Using the blurrer~$\blurrer$, we will be able to blur the twist
in many cases, but not in all.
The problem is the following:
If we only look at~$k$ many of the $\set{\stip_i,\primeSub{\stip}{i}}$ edges,
then the blurrer properties will ensure that
we cannot see the twist, 
i.e., ``summing'' over all elements in the blurrer 
maps a $2k$-orbit of $(\StructA_f, \pTupA)$
to a $2k$-orbit of the same type in $(\StructA_g,\pTupA)$
similar to the $1$-ary case.
Let us briefly recall the arguments
to prove Lemma~\ref{lem:1-ary-matrices-similar},
which shows that summing over blurrer elements indeed
yields a matrix bluring the twist for arity~$1$.
There are two cases:
First, for a tuple $(\vertA,\vertB)$ with origin $(\scenter,\scenter)$
the action of every $\blurElem \in \blurrer$ is the action of an automorphism,
so $(\blurElem(\vertA),\vertB)$ and $(\vertA, \inv{\blurElem}(\vertB))$
are in the same orbit.
Second, for a tuple $(\vertA,\vertB)$ with origin $(\scenter,\stip_1)$ 
only one index (namely the first for~$\stip_1$)
was relevant: $\blurElem(\vertA)\vertB$ and $\blurElem'(\vertA)\vertB$
are in the same orbit if $\blurElem(1) = \blurElem'(1)$.
So whenever $\blurElem(1) = \blurElem'(1)$,
the terms for  $\blurElem(\vertA)\vertB$ and $\blurElem'(\vertA)\vertB$
canceled out in the summation.
The blurrer properties ensured that 
only one term for $\blurElem(\vertA)\vertB$
of the same type in $(\StructA_g, \pTupA)$
as $\vertA\vertB$ in $(\StructA_f, \pTupA)$ remained.

For arity~$k$ the two cases (automorphism or blurrer properties) can be mixed.
Consider $k=2$ and a $4$-tuple $\tup{\vertA}$ with origin $(\scenter,\primeSub{ \stip}{1}, \scenter, \stip_1)$.
Then for every $\blurElem,\blurElem' \in \blurrer$, $\blurElem(\vertA_1\vertA_2)\vertA_3\vertA_4$
and $\blurElem'(\vertA_1\vertA_2)\vertA_3\vertA_4$
are in the same orbit if and only if $\blurElem = \blurElem'$
(because fixing one vertex with origin~$\scenter$ splits the gadget of~$\scenter$ into singleton orbits).
That is, we cannot argue solely with blurrer properties.
However, the two tuples
$\blurElem(\vertA_1\vertA_2)\vertA_3\vertA_4$
and $\vertA_1\vertA_2\inv{\blurElem}(\vertA_3\vertA_4)$
are not in the same orbit in general
because every automorphism mapping~$\vertA_1\vertA_2$ to $\blurElem(\vertA_1\vertA_2)$
cannot be the identity for~$\vertA_4$ but which $\inv{\blurElem}(\vertA_3\vertA_4)$ is.
So we also cannot argue solely with automorphisms.
The techniques of the $1$-ary case can only be applied
if~$\scenter$ is not in the origin of at least one of~$\vertA_1\vertA_2$ and~$\vertA_3\vertA_4$.

In general, let $\BlockA\in \orbs{2k}{(\StructA_f, \pTupA)}$
and $\tupA \in \BlockA$.
Then in~$\charMat{\BlockA}$
the first~$k$ positions of~$\tupA$ will serve as row index
and the remaining~$k$ positions as column index.
The problem with the blurrer only occurs
if both the first and second half of~$\tupA$ contain~$\scenter$ in its origin.
So we make a case distinction on whether a $k$-orbit contains~$\scenter$ in its origin.

\begin{definition}
	We call a $\BlockA \in \orbs{k}{(\StructA_f, \pTupA)}$
	\defining{blurrable} if $\scenter \notin \orig{\BlockA}$.
\end{definition}

To also be able to blur the twist for non-blurrable orbits,
we use a recursive approach.
Because $\sum \blurElem = 0$, $\StructA_{g - \blurElem}$ is isomorphic to $\StructA_g$ for every $\blurElem \in \blurrer$.
Let~$\pVertA_\scenter$ be an arbitrary vertex with origin~$\scenter$.
Our goal now is to blur the twist between $(\StructA_f,\pTupA\pVertA_\scenter)$
and $(\StructA_{g-\blurElem},\pTupA\pVertA_\scenter)$.
This will exactly undo the action of a $\blurElem \in \blurrer$,
when we consider an orbit that fixes a vertex with origin~$\scenter$.
We exploit the high girth of~$G$ to blur the twists at each $\set{\stip_i,\primeSub{\stip}{i}}$
independently.
Before we start to blur twists between $(\StructA_f,\pTupA\pVertA_\scenter)$
and $(\StructA_{g-\blurElem},\pTupA\pVertA_\scenter)$,
we first have to show that~$\fixTwistType_\blurElem$ and~$\blurElem$
are compatible with orbits when fixing the additional vertex~$\pVertA_\scenter$.
The following two claims are in some sense refinements of Claims~\ref{clm:fxTwistType-fixes-type-general}
and~\ref{clm:blurrable-repair-orbits-general}.

\begin{claim}
	\label{clm:rec-active-regions-orbits}
	For every $\tup{a}\in \ZZ_{2^\pot}$ and $k' \leq 2k$,
	every orbit $\BlockA \in\orbs{k'}{(\StructA_{f+\tup{a}},\pTupA\pVertA_\scenter)}$
	satisfying $\orig{\BlockA} \cap \neighborsK{G}{1}{\scenter} =\emptyset$
	is contained in $\orbs{k'}{(\StructA_{f+\tup{a}},\pTupA)}$.
\end{claim}
\begin{claimproof}
	Let $\BlockA \in\orbs{k'}{(\StructA_{f+\tup{a}},\pTupA\pVertA_\scenter)}$
	such that $\orig{\BlockA} \cap \neighborsK{G}{1}{\scenter} =\emptyset$.
	By Lemma~\ref{lem:cfi-same-type-automorphism},
	it suffices to show that two tuples $\tupA,\tupB \in \StructV^{k'}$,
	such that $\orig{\tupA}=\orig{\tupB}$ is disjoint from $\neighborsK{G}{1}{\scenter}$,
	have the same type in $(\StructA_{f+\tup{a}},\pTupA)$
	if and only if they have the same type in $(\StructA_{f+\tup{a}},\pTupA\pVertA_\scenter)$.
	Because $\orig{\BlockA} \cap \neighborsK{G}{1}{\scenter} = \emptyset$,
	the components of $\tupA\pTupA\pVertA_\scenter$
	are the components of~$\tupA\pTupA$ and the one of~$\pVertA_\scenter$.
	Hence,
	if~$\tupA$ and~$\tupB$ have the same type in $(\StructA_{f+\tup{a}},\pTupA)$,
	then they also have the same type in $(\StructA_{f+\tup{a}},\pTupA\pVertA_\scenter)$.
	The other direction is trivial.
\end{claimproof}

\begin{claim}
	\label{clm:fxTwistType-fixes-type-general-rec}
	Let $\tup{a} \in \ZZ_{2^\pot} ^d$,
	$h \colon E \to \ZZ_{2^\pot}$, and
	$k' \leq 2k$.
	If $\BlockA \in \orbs{k'}{(\Struct_h,\pTupA\pVertA_\scenter)}$,
	then $\fixTwistType_{\tup{a}}(\BlockA) \in \orbs{k'}{(\StructA_{h + \tup{a}},\pTupA\pVertA_\scenter)}$
	and $\fixTwistType_{\tup{a}}(\BlockA)$ has the same type in $(\StructA_{h + \tup{a}},\pTupA\pVertA_\scenter)$ as~$\BlockA$ has in $(\StructA_h,\pTupA\pVertA_\scenter)$.
\end{claim}
\begin{claimproof}
	Let $\BlockA \in \orbs{k'}{(\Struct_h,\pTupA\pVertA_\scenter)}$,~%
	$R$ be the set of components~$\comp$ of~$\BlockA$
	with $\comp \cap \neighborsK{G}{1}{\scenter} \neq \emptyset$,
	and~$D$ be the set of all remaining components of~$\BlockA$.
	Then $\BlockA = \restrictVect{\BlockA}{R} \times \restrictVect{\BlockA}{D}$
	by Claim~\ref{lem:split-discon-orbits}.
	Similarly, $\fixTwistType_{\tup{a}}(\BlockA) = \restrictVect{\fixTwistType_{\tup{a}}(\BlockA)}{R} \times \restrictVect{\fixTwistType_{\tup{a}}(\BlockA)}{D}$.
	Every component~$\comp$ in~$R$ does not contain the edges $\set{\stip_i, \primeSub{\stip}{i}}$
	for all $i \in [d]$
	because $|\comp| \leq k'$ but every path~$\spath_{i}$
	has length $\minRad(k) + 2 > 2k + 1 \geq k' + 1$.
	That is,~$\fixTwistType_{\tup{a}}$ is the identity
	on~$\restrictVect{\BlockA}{R}$ and so
	$\restrictVect{\BlockA}{R} = \restrictVect{\fixTwistType_{\tup{a}}(\BlockA)}{R}$
	and~$\restrictVect{\BlockA}{R}$ has the same type in 
	$(\StructA_{h + \tup{a}},\pTupA\pVertA_\scenter)$
	as it has in $(\StructA_h,\pTupA\pVertA_\scenter)$.
	By Claim~\ref{clm:rec-active-regions-orbits},
	the orbit $\restrictVect{\fixTwistType_{\tup{a}}(\BlockA)}{D}$
	is an orbit of $(\StructA_h,\pTupA)$
	and has by Claim~\ref{clm:fxTwistType-fixes-type-general}
	the same type in $(\StructA_h,\pTupA)$
	as $\restrictVect{\fixTwistType_{\tup{a}}(\BlockA)}{D}$
	has in $(\StructA_{h + \tup{a}},\pTupA)$.
	It follows that~$\BlockA$
	has the same type in $(\StructA_h,\pTupA)$
	as $\fixTwistType_{\tup{a}}(\BlockA)$ has in
	$(\StructA_{h + \tup{a}},\pTupA)$.
\end{claimproof}

\begin{claim}
	\label{clm:orbits-rec-blurelem}
	Suppose $\BlockB \in \orbs{k'}{(\StructA_{g-\blurElem}, \pTupA\pVertA_\scenter)}$ for some $\blurElem \in \blurrer$ and $k' \leq 2k-2$.
	Then $\fixTwistType_\blurElem(\BlockB) \in \orbs{k'}{(\StructA_g, \pTupA\pVertA_\scenter)}$ and has the same type as~$\BlockB$.
	Let~$\comp$ be the connected component of $G[\orig{\BlockB} \cup \set{\scenter}]$
	containing~$\scenter$,~%
	$R$ be the set of components of~$\BlockB$ contained in~$\comp$,~%
	$D$ be the set of all other components of~$\BlockB$, and
	let $\tupC\tupB \in \StructVA^{k'}$.
	Then $\tupC\tupB \in \BlockB$
	if and only if
	$\tupC_R\blurElem(\fixTwistType_\blurElem(\tupC_D))
	\tupB_R\blurElem(\fixTwistType_\blurElem(\tupB_D)) \in \fixTwistType_\blurElem(\BlockB)$.
\end{claim}
\begin{claimproof}
	We split~$\BlockB$ using Lemma~\ref{lem:split-discon-orbits} in 
	$\BlockB = \restrictVect{\BlockB}{R} \times \restrictVect{\BlockB}{D}$.
	Because $k' \leq 2k-2$, components in~$R$ cannot be tip components.
	Hence, $\fixTwistType_\blurElem(\BlockB) = \restrictVect{\BlockB}{R} \times \fixTwistType_\blurElem(\restrictVect{\BlockB}{D})$.
	By Claim~\ref{clm:rec-active-regions-orbits}, $\restrictVect{\BlockB}{D}$ is also an orbit of $(\StructA_{g-\blurElem}, \pTupA)$ because its origin has distance greater than $1$ to $\orig{\pVertA_\scenter}$.
	Then
	$\fixTwistType_\blurElem(\restrictVect{\BlockB}{D})$
	has the same type in $(\StructA_g, \pTupA)$
	and is also an orbit of $(\StructA_g, \pTupA\pVertA_\scenter)$
	of the same type, too, by Claim~\ref{clm:fxTwistType-fixes-type-general}.
	It follows that $\fixTwistType_\blurElem(\BlockB)$
	has the same type as~$\BlockB$.
	Let $\tupC\tupB \in \StructVA^{k'}$.
	Using the splitting above,
	from Claim~\ref{clm:blurrable-repair-orbits-general} it follows
	that $\tupC_D\tupB_D \in \restrictVect{\BlockB}{D}$
	if and only if
	$\blurElem(\fixTwistType_\blurElem(\tupC_D))
	\blurElem(\fixTwistType_\blurElem(\tupB_D)) \in \fixTwistType_\blurElem(\restrictVect{\BlockB}{D})$. 
	The claim follows because $\BlockB = \restrictVect{\BlockB}{R} \times \restrictVect{\BlockB}{D}$.
\end{claimproof}

Now we construct matrices $(k-1)$-bluring the twist between $(\StructA_f, \pTupA\pVertA)$
and $(\StructA_{g-\blurElem}, \pTupA\pVertA_\scenter)$.
For every $\blurElem \in \blurrer$ and $j \in [d+1]$ we define
$\gblurelem{\blurElem}{j} \colon E \to \ZZ_{2^\pot}$ to be the following function:
\begin{align*}
	\gblurelem{\blurElem}{j} (e) &:= \begin{cases}
		f(\set{\stip_i, \primeSub{\stip}{i}}) & \text{if } e = \set{\stip_i, \primeSub{\stip}{i}} \text { for some } i \geq j,\\
		(g-\blurElem)(e) &\text{otherwise}.
	\end{cases}
\end{align*}
Note that $f(e)=g(e) = \gblurelem{\blurElem}{j}(e)$
for all~$e$ different from the edges $\set{\stip_i, \primeSub{\stip}{i}}$,
$\gblurelem{\blurElem}{1} = f$, $\gblurelem{\blurElem}{d+1} =g -\blurElem$,
and the only possibly twisted edge by $\gblurelem{\blurElem}{j}$ and $\gblurelem{\blurElem}{j+1}$ 
is $\set{\stip_{j}, \primeSub{\stip}{j}}$ for every $j \in [d]$.
Define $N_j := \neighborsK{G}{\minRad(k)}{\stip_j}$ for every $j \in [d]$.

\begin{claim}
	\label{clm:blur-twist-blur-elem-single}
	For every $\blurElem \in \blurrer$ and $j \in [d]$,
	there is an $\StructVA^{k-1}\times \StructVA^{k-1}$
	matrix $\Sblurelem{\blurElem}{j}$,
	which is odd-filled and both orbit-diagonal and orbit-invariant over 
	$(\StructA_{\gblurelem{\blurElem}{j}},\pTupA \pVertA_\scenter)$ and $(\StructA_{\gblurelem{\blurElem}{j+1}},\pTupA \pVertA_\scenter)$,
	which ${(k-1)}$\nobreakdash-blurs the twist between $(\StructA_{\gblurelem{\blurElem}{j}},\pTupA \pVertA_\scenter)$ and $(\StructA_{\gblurelem{\blurElem}{j+1}},\pTupA \pVertA_\scenter)$,
	and those active region satisfies ${\activeRegionIdx{\gblurelem{\blurElem}{j},\gblurelem{\blurElem}{j+1}}{\pTupA \pVertA_\scenter}{\Sblurelem{\blurElem}{j}} \subseteq N_j}$.
	In particular, $\Sblurelem{\blurElem}{j} = \idmat$ if $\blurElem(j) = 0$.
\end{claim}
\begin{claimproof}
	Let $\blurElem \in \blurrer$ and $j \in [d]$.
	If $\blurElem(j) = 0$, then $\gblurelem{\blurElem}{j} = \gblurelem{\blurElem}{j+1}$
	and $\Sblurelem{\blurElem}{j} := \idmat$ trivially satisfies the claim.
	Otherwise, the matrix~$\Sblurelem{\blurElem}{j}$ is obtained from the induction hypothesis:
	The number of parameters increased by one,
	but we consider tuples of length $k-1$.
	We continue to consider~$\ZZ_{2^\pot}$.
	\begin{itemize}
		\item Clearly, $\pot \geq \pot(k) \geq \pot(k-1)$, 
		the degree of $G$ is $d\geq d(k,m) \geq d(k-1, m+1)$, and the girth of~$G$ is at least $4r(k) +2 > 2r(k)$.
		\item 
		We have $2k+m+1 \geq 2 (k-1) +  (m+1) +1$
		and so~$G$ satisfies the connectivity condition.	
		\item By construction, we have that $\gblurelem{\blurElem}{j}(e) = \gblurelem{\blurElem}{j+1}(e)$ for all $e \in E \setminus\set{\set{\stip_j, \primeSub{\stip}{j}}}$
		for every $j \in[d]$.
		\item We consider the value of the twist:
		Let $j \in [d]$. Then $\blurElem(j) = b\cdot 2^{\twistVal(k-1)}$ for some $b \in \ZZ_{2^\pot}$ (as shown before when constructing the blurrer~$\blurrer$). 
		If $j \neq 1$,
		then it holds that $\gblurelem{\blurElem}{j+1}(\set{\stip_j,\primeSub{\stip}{j}}) = \gblurelem{\blurElem}{j}(\set{\stip_j,\primeSub{\stip}{j}}) - b \cdot 2^{\twistVal(k-1)}$.
		If otherwise $j=1$, then we have
		$\gblurelem{\blurElem}{2}(\set{\stip_1,\primeSub{\stip}{1}}) =
		g(\set{\stip_1,\primeSub{\stip}{1}}) - \blurElem(1) =
		\gblurelem{\blurElem}{1}(\set{\stip_1,\primeSub{\stip}{1}}) - \blurElem(1) + \twistVal$
		because $\gblurelem{\blurElem}{1} = f$
		and $g(\set{\stip_1,\primeSub{\stip}{1}}) = f(\set{\stip_1,\primeSub{\stip}{1}}) + \twistVal$.
		By assumption, we have that $\twistVal = a \cdot 2^{\twistVal(k)} 
		= a \cdot 2^{i + \twistVal(k-1)}$ for some $a \in \ZZ_{2^q}$.
		Clearly $- \blurElem(1) + \twistVal = 
		-b \cdot  2^{\twistVal(k-1)} + a \cdot 2^{i + \twistVal(k-1)} = 
		(a \cdot2^{i} -b)\cdot2 ^{\twistVal(k-1)}$.
		So the value of the twist at the $\set{\stip_j,\primeSub{\stip}{j}}$
		is in all cases $c \cdot 2^{\twistVal(k-1)}$ for some $c \in \ZZ_{2^q}$.
		\item 
		By Claim~\ref{clm:distance-param-e-i},
		it holds that $\distance{G}{\primeSub{\stip}{i}}{\orig{\pTupA}} \geq 2 \minRad(k)$
		for every $i \in [d]$,
		in particular $\distance{G}{\stip_i}{\orig{\pTupA}} > \minRad(k)$.
		By construction, it holds that $\distance{G}{\primeSub{\stip}{i}}{\orig{\pVertA_\scenter}} = \distance{G}{\primeSub{\stip}{i}}{\scenter} = r(k)+1$.
		So $\distance{G}{\stip_i}{\orig{\pTupA\pVertA_\scenter}} > \minRad(k)$
		for every $i \in [d]$.\qedhere
	\end{itemize}
\end{claimproof}

We now define for every $\blurElem \in \blurrer$ the $\StructVA^{k-1}\times \StructVA^{k-1}$ matrix~$S^\blurElem$ as follows:
\[S^\blurElem := \Sblurelem{\blurElem}{1} \cdot \ldots \cdot \Sblurelem{\blurElem}{d},\]
where~$\Sblurelem{\blurElem}{j}$ is the matrix given by Claim~\ref{clm:blur-twist-blur-elem-single} for~$\blurElem$ and~$j$.

\begin{claim}
	\label{clm:blur-twist-blur-elem}
	For every $\blurElem \in \blurrer$,
	the matrix~$S^\blurElem$
	is odd-filled and both orbit-diagonal and orbit-invariant over $(\StructA_f, \pTupA \pVertA_\scenter)$ and $(\StructA_{g-\blurElem}, \pTupA \pVertA_\scenter)$,
	$(k-1)$-blurs the twist between $(\StructA_f, \pTupA \pVertA_\scenter)$ and $(\StructA_{g-\blurElem}, \pTupA \pVertA_\scenter)$,
	and satisfies $\activeRegionIdx{f,g - \blurElem}{\pTupA \pVertA_\scenter}{S^\blurElem} \subseteq \bigcup_{i=1}^d N_i$.
\end{claim}
\begin{claimproof}
	For every $j \in [d]$, the matrix~$\Sblurelem{\blurElem}{j}$
	is odd-filled and both orbit-diagonal and orbit-invariant over 
	$(\StructA_{\gblurelem{\blurElem}{j}},\pTupA \pVertA_\scenter)$ and $(\StructA_{\gblurelem{\blurElem}{j+1}},\pTupA \pVertA_\scenter)$,
	$(k-1)$-blurs the twist between $(\StructA_{\gblurelem{\blurElem}{j}},\pTupA \pVertA_\scenter)$ and $(\StructA_{\gblurelem{\blurElem}{j+1}},\pTupA \pVertA_\scenter)$,
	and satisfies $\activeRegionIdx{\gblurelem{\blurElem}{j},\gblurelem{\blurElem}{j+1}}{\pTupA \pVertA_\scenter}{\Sblurelem{\blurElem}{j}} \subseteq N_j$
	by Claim~\ref{clm:blur-twist-blur-elem-single}.

	Because $f = \gblurelem{\blurElem}{1}$ and $g-\blurElem = \gblurelem{\blurElem}{d+1}$,
	$S^\blurElem = \Sblurelem{\blurElem}{1} \cdot \ldots \cdot \Sblurelem{\blurElem}{d}$
	is orbit-diagonal and orbit-invariant over $(\StructA_f, \pTupA \pVertA_\scenter)$ and $(\StructA_{g-\blurElem}, \pTupA \pVertA_\scenter)$
	by Lemmas~\ref{lem:orbit-diagonal-multiply} and~\ref{lem:orbit-invariant-mult}.
	By Lemma~\ref{lem:odd-filled-product}, the matrix~$S^\blurElem$ is odd-filled.
	It $(k-1)$-blurs the twist between $(\StructA_f, \pTupA \pVertA_\scenter)$ and $(\StructA_{g-\blurElem}, \pTupA \pVertA_\scenter)$
	by Lemma~\ref{lem:blur-twist-mult}.
	Finally, $\activeRegionIdx{f,g - \blurElem}{\pTupA \pVertA_\scenter}{S^\blurElem} \subseteq \bigcup_{i=1}^d
	\activeRegionIdx{\gblurelem{\blurElem}{j},\gblurelem{\blurElem}{j+1}}{\pTupA \pVertA_\scenter}{\Sblurelem{\blurElem}{j}}
	 \subseteq \bigcup_{i=1}^d N_i$
	by Lemma~\ref{lem:active-region-mult}.
\end{claimproof}

\begin{claim}
	\label{clm:neighborhoods_different}
	For every pair of distinct $i,j \in [d]$ it holds that $N_i \cap N_j = \emptyset$.
\end{claim}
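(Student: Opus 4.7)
The plan is to deduce this directly from the distance bound established in Claim~\ref{clm:distance-param-e-i}, namely $\distance{G}{e_i}{e_j} = 2\minRad(k)+2$ for distinct $i,j \in [d]$. The neighborhoods $N_i$ and $N_j$ are just metric balls of radius $\minRad(k)$ around $e_i$ and $e_j$ respectively, and the triangle inequality forces any common point to certify a path between $e_i$ and $e_j$ of length at most $2\minRad(k)$, which contradicts the established lower bound.

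Concretely, I would argue by contradiction: suppose there exists $\bvertA \in N_i \cap N_j$ for some $i \neq j$. Then $\distance{G}{e_i}{\bvertA} \leq \minRad(k)$ and $\distance{G}{e_j}{\bvertA} \leq \minRad(k)$ by the definition of $N_i$ and $N_j$ as $\minRad(k)$-neighborhoods. Concatenating shortest $e_i$–$\bvertA$ and $\bvertA$–$e_j$ paths gives
\[\distance{G}{e_i}{e_j} \leq \distance{G}{e_i}{\bvertA} + \distance{G}{\bvertA}{e_j} \leq 2\minRad(k).\]
But Claim~\ref{clm:distance-param-e-i} gives $\distance{G}{e_i}{e_j} = 2\minRad(k)+2 > 2\minRad(k)$, a contradiction. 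Hence $N_i \cap N_j = \emptyset$.

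There is no real obstacle here; the content of the claim is entirely packaged inside the distance computation already carried out in Claim~\ref{clm:distance-param-e-i} (which itself relied on the girth bound $2\minRad(k+1)$ ensuring that the two half-paths $\spath_i, \spath_j$ through $z$ form a geodesic). The only thing to double-check is that $2\minRad(k) + 2$ is strictly greater than $2\minRad(k)$, which is immediate, so the strict inequality $N_i \cap N_j = \emptyset$ follows without invoking the girth or connectivity bounds again directly.
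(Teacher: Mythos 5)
Your proof is correct and is in fact a cleaner route than the one in the paper. The paper proves the claim from scratch: it assumes $\bvertA \in N_i \cap N_j$, notes $z \notin N_i, N_j$, and then assembles a closed walk $e_i \to \bvertA \to e_j \to z \to e_i$ of length at most $4\minRad(k)+2$, arguing this yields a cycle contradicting the girth bound $2\minRad(k+1) \geq 8\minRad(k)+4$. Your version instead reuses the distance equality $\distance{G}{e_i}{e_j} = 2\minRad(k)+2$ already established in Claim~\ref{clm:distance-param-e-i} and derives a contradiction by the triangle inequality. Both arguments rest on the same underlying girth computation, but yours factors it through the intermediate claim rather than re-deriving it, which is shorter and more modular. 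As a minor aside, the paper's proof also has a small typographical slip (it writes $\distance{G}{e_i}{x}$ where $x$ should be $\bvertA$), which your phrasing avoids entirely.
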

\begin{claimproof}
	Let $i \neq j$.
	Assume that there is an $\bvertA \in N_i \cap N_j$.
	Then there is a path from~$\stip_i$ to~$\stip_j$ of length at most $2 \minRad(k)$.
	By construction $\distance{G}{\stip_i}{\scenter} = \distance{G}{\stip_j}{\scenter} = \minRad(k) +2$ and so $\scenter \notin N_i$ and $\scenter \notin N_j$.
	But that means that there is a cycle of length at most $
	\distance{G}{\stip_i}{\scenter} + \distance{G}{\stip_j}{\scenter} + \distance{G}{\stip_i}{x} + \distance{G}{\stip_j}{x} \leq 4\minRad(k) + 4$ contradicting that~$G$ has girth at least $ 2 \minRad(k+1) \geq  8\minRad(k) + 4$.
\end{claimproof}

\begin{claim}
	\label{clm:blur-twist-blur-elem-orbit-auto}
	For every $\blurElem \in \blurrer$, 
 	every $\BlockA \in \orbs{k-1}{(\StructA_f, \pTupA \pVertA_\scenter)}$,
	$\BlockB \in \orbs{k-1}{(\StructA_{g-\blurElem}, \pTupA \pVertA_\scenter)}$
	of the same type in $(\StructA_{g-\blurElem}, \pTupA \pVertA_\scenter)$ as $\BlockA$ in $(\StructA_f, \pTupA \pVertA_\scenter)$,
	$\tupA \in \BlockA$,
	$\tupB \in \BlockB$, and
	$\autoA \in \autgrp{(\Struct_f,\pTupA)}$
	the matrix~$S^\blurElem$
	satisfies $S^\blurElem(\tupA,\tupB) = S^\blurElem(\autoA(\tupA),\autoA(\tupB))$.
\end{claim}
\begin{claimproof}
	Let $\blurElem \in \blurrer$.
	By Claim~\ref{clm:blur-twist-blur-elem}, 
	the matrix~$S^\blurElem$ is orbit-invariant over $(\StructA_f, \pTupA \pVertA_\scenter)$ and $(\StructA_{g-\blurElem}, \pTupA \pVertA_\scenter)$
	and thus satisfies the claim for all $\autoA \in \autgrp{(\Struct_f,\pTupA\pVertA_\scenter)}$.
	But now we also want to consider automorphisms not fixing $\pVertA_\scenter$.
	
	So let $\BlockA \in \orbs{k-1}{(\StructA_f, \pTupA \pVertA_\scenter)}$,
	$\BlockB \in \orbs{k-1}{(\StructA_{g-\blurElem}, \pTupA \pVertA_\scenter)}$
	be of the same type in $(\StructA_{g-\blurElem}, \pTupA \pVertA_\scenter)$ as $\BlockA$ in $(\StructA_f, \pTupA \pVertA_\scenter)$,
	$\tupA \in \BlockA$,
	$\tupB \in \BlockB$, and
	$\autoA \in \autgrp{(\Struct_f,\pTupA)}$.
	Let~$R$ be the set of components of~$\tupA$ (and thus of~$\tupB$) containing a vertex of $\neighborsK{G}{1}{\scenter}$
	(so in particular~$\scenter$ itself).
	Let~$D$ be the set of the remaining components.
	
	Because $\tupA_{D}\tupB_{D}$ and $\autoA(\tupA_{D}\tupB_{D})$
	are in the same orbit in $(\StructA_{f+\tup{a}},\pTupA)$,
	they are also in the same orbit in $(\StructA_{f+\tup{a}},\pTupA\pVertA_\scenter)$
	by Claim~\ref{clm:rec-active-regions-orbits}.
	Hence, there is a $\autoB \in \autgrp{(\StructA_{f+\tup{a}},\pTupA\pVertA_\scenter)}$
	satisfying $\autoA(\tupA_{D}\tupB_{D})= \autoB(\tupA_{D}\tupB_{D})$.
	We now apply that $S^\blurElem$ is orbit-invariant over $(\StructA_f, \pTupA \pVertA_\scenter)$ and $(\StructA_{g-\blurElem}, \pTupA \pVertA_\scenter)$:
	\begin{align*}
		S^\blurElem(\tupA,\tupB) &= S^\blurElem(\tupA_{R}\tupA_{D},\tupB_{R}\tupB_{D})\\
		&=S^\blurElem(\tupA_{R}\autoB(\tupA_{D}),\tupB_{R}\autoB(\tupB_{D})) \\
		&= S^\blurElem(\tupA_{R}\autoA(\tupA_{D}),\tupB_{R}\autoA(\tupB_{D})) \tag{$\star$}.
	\end{align*}
	For every $\comp \in R$ it holds that $\comp \not\subseteq \activeRegionIdx{f,g - \blurElem}{\pTupA \pVertA_\scenter}{S^\blurElem} \subseteq \bigcup_{i=1}^d N_i$
	because $\neighborsK{G}{1}{\scenter} \cap \bigcup_{i=1}^d N_i = \emptyset$,
	which follows from $N_i = \neighborsK{G}{\minRad(k)}{\stip_i}$
	and $\distance{G}{\scenter}{\stip_i} =  \minRad(k)+2$.
	So we can apply Condition~\ref{itm:active-region-replace} of the active region because $\tupA_{R} = \tupB_{R}$
	if and only if $\autoA(\tupA_{R}) = \autoA(\tupB_{R})$:
	\begin{align*}
		(\star) &= S^\blurElem(\autoA(\tupA_{D_\scenter})\tupA_{D_R},\autoA(\tupB_{D_\scenter})\tupB_{D_R})\\
		&= S^\blurElem(\autoA(\tupB),\autoA(\tupB)).\qedhere
	\end{align*}
\end{claimproof}

\begin{claim}
	\label{clm:sum-s-blur-subtree}
	Let
	$k' \leq k-1$,
	$\blurElem \in \blurrer$,
	$\BlockA' \in \orbs{k'}{(\StructA_f, \pTupA\pVertA_\scenter)}$,
	$K \subseteq [d]$,~%
	$D$ be the set of all components~$C$ of~$\BlockA'$
	satisfying $C \subseteq N_i$ for some $i \in K$,
	and~$R$ be the set of remaining components.
	Let $\BlockB' = \fixTwistType(\BlockA') \in \orbs{k'}{(\StructA_g, \pTupA\pVertA_\scenter)}$,
	$\tupA \in \BlockA'$, and $\tupB \in \BlockB'$.
	Then
	\[\sum_{\tupC_D \in \restrictVect{\BlockA'}{D}} 
	S^\blurElem(\tupC_D\tupA_R,\tupB_D\tupB_R) = 
	\bigg(\prod_{i \in [d] \setminus K}\Sblurelem{\blurElem}{i}\bigg)(\tupB_D\tupA_R,\tupB_D\tupB_R)\]
	and $\activeRegionIdx{f, f+\tup{a}}{\pTupA \pVertA_\scenter}{\prod_{i \in [d] \setminus K}\Sblurelem{\blurElem}{i}} \subseteq \bigcup_{i \in [d]\setminus K} N_i$,
	where $\tup{a} \in \ZZ_{2^\pot}^d$ satisfies
	$a_i =  (g - \blurElem)(\set{\stip_i,\primeSub{\stip}{i}}) - f(\set{\stip_i,\primeSub{\stip}{i}})$ if $i\notin K$ and $a_i=0$ otherwise for every $i \in [d]$.
\end{claim}
\begin{claimproof}
	Recall that $S^\blurElem = \Sblurelem{\blurElem}{1} \cdot \ldots \cdot \Sblurelem{\blurElem}{d}$
	and that
	$\activeRegionIdx{\gblurelem{\blurElem}{j},\gblurelem{\blurElem}{j+1}}{\pTupA \pVertA_\scenter}{\Sblurelem{\blurElem}{j}} \subseteq N_j$ for every $j \in [d]$
	by Claim~\ref{clm:blur-twist-blur-elem-single}.
	The first part of the claim follows from repeated application of Lemma~\ref{lem:orbit-diagonal-multiply-sum-middle-disjoint-active-region}
	using that the sets~$N_j$ are disjoint (Claim~\ref{clm:neighborhoods_different}).
	The second part follows from repeated application of 
	Lemmas~\ref{lem:active-region-change-equally} and~\ref{lem:active-region-mult}.
\end{claimproof}

We introduce more notation.
Let $\tupA \in \StructVA^{k'}$ such that $\scenter \in \orig{\tupA}$.
Then $\remZ{\tupA} \in \StructVA^{k'-1}$ is the tuple obtained from~$\tupA$
by deleting the first entry with origin~$\scenter$.
This first entry is denoted by~$\tupA_\scenter$.
Similarly to our convention for~$\tupA_\comp$ for a component $\comp$ in Section~\ref{sec:cfi-orbits},
we write $\tupA_\scenter\remZ{\tupA}$
not for concatenation but for inserting~$\tupA_\scenter$ at the correct position
such that $\tupA_\scenter\remZ{\tupA} = \tupA$. 
Now we are ready to define the $\StructVA^k \times \StructVA^k$ matrix~$S$.
For $j \in \set{k,2k}$ we set 
$\PartA_j := \orbs{j}{(\StructA_f, \pTupA)}$.
We define the $\BlockA \times \fixTwistType(\BlockA)$ block of~$S$
for every $\BlockA \in \PartA_k$. All other blocks are zero.
\[
	S_{\BlockA \times \fixTwistType(\BlockA)}(\tupA,\tupB) := \begin{cases}
		\sum
		\limits_{\substack{\blurElem \in \blurrer, \\ \fixTwistType(\blurElem(\tupA)) = \tupB}} 1 & \text{if } \BlockA \text{ is blurrable}, \\
		\sum\limits_{\substack{\blurElem \in \blurrer, \\ \fixTwistType_\blurElem(\blurElem(\tupA_\scenter)) = \tupB_\scenter}}
		S^{\blurElem} (\blurElem(\remZ{\tupA}), \inv{\fixTwistType_\blurElem}(\remZ{\tupB})) & \text{if } \BlockA \text{ is not blurrable}.
	\end{cases}
\]

We first check that we make reasonable use of the matrices~$S^\blurElem$:
\begin{claim}
	\label{clm:matrix-blurrer-tuple-same-type}
	Let $\BlockA \in \PartA_k$ be non-blurrable,
	$\tupA \in \BlockA$, and
	$\tupB \in \fixTwistType(\BlockA)$.
	If~$\tupA$ has the same type in $(\StructA_f,\pTupA)$
	as~$\tupB$ has in $(\StructA_g,\pTupA)$,
	then 
	for every $\blurElem \in \blurrer$ such that $\fixTwistType_\blurElem(\blurElem(\tupA_\scenter)) = \tupB_\scenter$,
	the tuple $\blurElem(\remZ{\tupA})$ has the same type in $(\Struct_f,\pTupA\pVertA_\scenter)$
	as $\inv{\fixTwistType_\blurElem}(\remZ{\tupB})$ has in $(\StructA_{g - \blurElem}, \pTupA\pVertA_\scenter)$.
\end{claim}
\begin{claimproof}
	Let  $\blurElem \in \blurrer$ such that $\fixTwistType_\blurElem(\blurElem(\tupA_\scenter)) = \tupB_\scenter$.
	Let~$R$ be the star component of $\orig{\BlockA}$ containing~$\scenter$
	and let~$D$ be the set of remaining components of $\orig{\BlockA}$.
	Then ${\BlockA= \restrictVect{\BlockA}{R}\times \restrictVect{\BlockA}{D}}$
	by Lemma~\ref{lem:split-discon-orbits}.
	In particular $\fixTwistType(\BlockA) = \restrictVect{\BlockA}{R}\times\fixTwistType(\restrictVect{\BlockA}{D})$
	because~$\fixTwistType$ is the identity on star components.
	The orbit $\fixTwistType(\restrictVect{\BlockA}{D})$ has the same type in $(\StructA_g, \pTupA)$
	as~$\restrictVect{\BlockA}{D}$ has in $(\StructA_f,\pTupA)$ (Corollary~\ref{cor:tau-fixes-orbit-type})
	and $\fixTwistType_\blurElem(\fixTwistType(\restrictVect{\BlockA}{D}))$
	has thus the same type in $(\StructA_{g-\blurElem}, \pTupA)$ as 
	$\restrictVect{\BlockA}{D}$ has in $(\StructA_f,\pTupA)$ (Claim~\ref{clm:fxTwistType-fixes-type-general}).
	Because $\scenter \in R$ ($\BlockA$ is not blurrable),
	all components in~$D$ have distance at least~$2$ to~$\scenter$.
	Thus, $\restrictVect{\BlockA}{D}$ is also an orbit of $(\StructA_f,\pTupA\pVertA_\scenter)$ (Claim~\ref{clm:rec-active-regions-orbits})
	and has the same type in $(\StructA_f,\pTupA\pVertA_\scenter)$
	as $\fixTwistType_\blurElem(\fixTwistType(\restrictVect{\BlockA}{D}))$
	has in $(\StructA_{g-\blurElem}, \pTupA\pVertA_\scenter)$.
	Because~$\blurElem$ is an orbit-automorphism (Claim~\ref{clm:blurelem-blockiso}),
	$\blurElem(\restrictVect{\BlockA}{D}) = \restrictVect{\BlockA}{D}$.
	It follows that $\blurElem(\tupA_{D}) \in \restrictVect{\BlockA}{D}$ and $\fixTwistType_\blurElem(\tupB_{D}) \in\fixTwistType_\blurElem(\fixTwistType(\restrictVect{\BlockA}{D})) $ have the same type. 
	
	\hspace{-0.4pt}It suffices to show that $\blurElem(\tupA_{R})$ has the same type in $(\StructA_f, \pTupA\pVertA_\scenter)$ as $\fixTwistType_\blurElem(\tupB_{R})$
	has in $(\StructA_{g-\blurElem}, \pTupA\pVertA_\scenter)$.
	Because~$R$ contains star components,~$\fixTwistType_\blurElem$ is the identity on~$R$ and so $\tupB_{R}\in \restrictVect{\BlockA}{R}$.
	Further, $\blurElem(\tupA_R) \in \restrictVect{\BlockA}{R}$ because $\blurElem$ is an orbit-automorphism.
	That is, there is an automorphism $\autoB \in \autgrp{(\StructA_f,\pTupA)}$
	such that $\autoB(\blurElem(\tupA_{R})) = \tupB_{R}$.
	Because by assumption $\fixTwistType_\blurElem(\blurElem(\tupA_\scenter)) = \blurElem(\tupA_\scenter) = \tupB_\scenter$ ($\fixTwistType_\blurElem$ is the identity on vertices with origin~$\scenter$),~%
	$\autoB$ is the identity on vertices with origin~$\scenter$
	and thus $\autoB \in \autgrp{(\StructA_f,\pTupA\pVertA_\scenter)}$.
	That is, $\blurElem(\tupA_{R})$ and~$\tupB_{R}$
	are in the same orbit of $(\StructA_f,\pTupA\pVertA_\scenter)$.
	Because~$R$ contains star components,
	it does not contain any edge twisted by~$f$ and $g-\blurElem$
	and thus~$\blurElem(\tupA_{R})$ and~$\tupB_{R}$ have the same type
	in $(\StructA_f,\pTupA\pVertA_\scenter)$ and in $(\StructA_{g-\blurElem},\pTupA\pVertA_\scenter)$.
\end{claimproof}

\begin{claim}
	The matrix~$S$ is orbit-diagonal over $(\StructA_f, \pTupA)$ and $(\StructA_g, \pTupA)$.
\end{claim}
\begin{claimproof}
	By definition, the only nonzero blocks of~$S$ are the
	$\BlockA \times \fixTwistType(\BlockA)$ blocks.
	By Corollary~\ref{cor:tau-fixes-orbit-type},~$\BlockA$ and~$\fixTwistType(\BlockA)$ have the same type
	for every $\BlockA \in \PartA_k$.
\end{claimproof}

\begin{claim}
	The matrix~$S$ is orbit-invariant over $(\StructA_f, \pTupA)$ and $(\StructA_g, \pTupA)$.
\end{claim}
\begin{claimproof}
	Let $\autoA \in \autgrp{(\StructA_f, \pTupA)} = \autgrp{(\StructA_g, \pTupA)}$,
	$\BlockA \in \PartA_k$, $\tupA \in \BlockA$, and $\tupB \in \BlockB := \fixTwistType(\BlockA)$.
	We perform a case distinction:
	Assume that~$\BlockA$ is blurrable.
	The functions~$\fixTwistType$,~$\blurElem$, and~$\autoA$ commute (Claim~\ref{clm:commutativity})
	and thus
	$\fixTwistType(\blurElem(\autoA(\tupA))) =  \autoA(\fixTwistType(\blurElem(\tupA)))$
	for every $\blurElem \in \blurrer$.
	Because~$\autoA$ is a bijection,
	$\autoA(\fixTwistType(\blurElem(\tupA))) = \autoA(\tupB)$
	if and only if $\fixTwistType(\blurElem(\tupA)) = \tupB$.
	So
	\[S(\autoA(\tupA),\autoA(\tupB))
	=\sum_{\substack{\blurElem \in \blurrer, \\ \fixTwistType(\blurElem(\autoA(\tupA))) = \autoA(\tupB)}} 1
	=\sum_{\substack{\blurElem \in \blurrer, \\ \autoA(\fixTwistType(\blurElem(\tupA))) = \autoA(\tupB)}} 1
	=\sum_{\substack{\blurElem \in \blurrer, \\ \fixTwistType(\blurElem(\tupA)) = \tupB}} 1
	= S(\tupA,\tupB)
	 .\]
	Otherwise, assume that~$\BlockA$ is non-blurrable.
	Then for every $\blurElem \in \blurrer$ the following holds
	because~$\fixTwistType_\blurElem$,~$\blurElem$, and~$\autoA$ commute 
	by Claim~\ref{clm:commutativity}
	and because~$S^\blurElem$ is invariant under~$\autoA$
	by Claim~\ref{clm:blur-twist-blur-elem-orbit-auto}:
	\begin{align*}
		&\hiddenEq S^\blurElem(\blurElem(\autoA(\remZ{\tupA})), \inv{\fixTwistType_\blurElem}(\autoA(\remZ{\tupB})))\\
		&= S^\blurElem(\autoA(\blurElem(\remZ{\tupA})), \autoA(\inv{\fixTwistType_\blurElem}(\remZ{\tupB})))\\
		&= S^\blurElem(\blurElem(\remZ{\tupA}), \inv{\fixTwistType_\blurElem}(\remZ{\tupB})).
	\end{align*}
	Because~$\fixTwistType_\blurElem$,~$\blurElem$, and~$\autoA$ commute
	and because~$\autoA$ is a bijection,
	it holds that $\fixTwistType_\blurElem(\blurElem(\autoA(\tupA_\scenter))) = \autoA(\tupB_\scenter)$
	if and only if $\fixTwistType_\blurElem(\blurElem(\tupA_\scenter)) = \tupB_\scenter$.
	Hence,
	\begin{align*}
	S(\autoA(\tupA),\autoA(\tupB)) &=
	\sum\limits_{\substack{\blurElem \in \blurrer, \\ \fixTwistType_\blurElem(\blurElem(\autoA(\tupA_\scenter))) = \autoA(\tupB_\scenter)}}
	S^{\blurElem} (\blurElem(\autoA(\remZ{\tupA})), \fixTwistType_\inv{\blurElem}(\autoA(\remZ{\tupB})))\\
	&=\sum\limits_{\substack{\blurElem \in \blurrer, \\ \fixTwistType_\blurElem(\blurElem(\tupA_\scenter)) = \tupB_\scenter}}
	S^{\blurElem} (\blurElem(\remZ{\tupA}), \fixTwistType_\inv{\blurElem}(\remZ{\tupB}))
	= S(\tupA, \tupB).\qedhere
	\end{align*}
\end{claimproof}

\begin{claim}
	The matrix~$S$ is odd-filled.
\end{claim}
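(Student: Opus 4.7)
The plan is to show, for every row $\tupA \in \BlockA$ with $\BlockA \in \PartA_k$, that $\sum_{\tupB \in \fixTwistType(\BlockA)} S(\tupA, \tupB) \equiv 1 \pmod 2$. I will split according to whether $\BlockA$ is blurable, since the definition of $S$ itself splits that way.

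For the blurable case, Claim~\ref{clm:blurelem-blockiso} tells us that every $\blurElem \in \blurer$ is an orbit-automorphism, so $\blurElem(\BlockA) = \BlockA$ and hence $\fixTwistType(\blurElem(\tupA)) \in \fixTwistType(\BlockA)$ for every $\blurElem$. Interchanging the order of summation then simply counts the elements of $\blurer$:
\[
\sum_{\tupB \in \fixTwistType(\BlockA)} S(\tupA, \tupB)
= \sum_{\tupB \in \fixTwistType(\BlockA)} \sum_{\substack{\blurElem \in \blurer\\ \fixTwistType(\blurElem(\tupA)) = \tupB}} 1
= \sum_{\blurElem \in \blurer} 1 = |\blurer|,
\]
which is odd by Lemma~\ref{lem:blurer-size-odd}.

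For the non-blurable case, I will swap the summation order and show that, for each fixed $\blurElem$, the inner sum
\[
\sum_{\substack{\tupB \in \fixTwistType(\BlockA)\\ \tupB_z = \phi_\blurElem(\tupA_z)}} S^\blurElem\bigl(\blurElem(\remZ{\tupA}), \inv{\fixTwistType_\blurElem}(\remZ{\tupB})\bigr)
\]
equals the full row sum of $S^\blurElem$ at the row $\blurElem(\remZ{\tupA})$. Since each $S^\blurElem_j$ is odd-filled by the inductive hypothesis of Lemma~\ref{lem:blur-kary} for arity $k-1$, Lemma~\ref{lem:odd-filled-product} gives that $S^\blurElem = S^\blurElem_1 \cdots S^\blurElem_d$ is odd-filled as well, so each inner sum equals $1$ modulo $2$. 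Summing over $\blurer$ then yields $|\blurer|$, which is once again odd by Lemma~\ref{lem:blurer-size-odd}.

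The main obstacle is justifying the identification of the inner sum with the full row sum of $S^\blurElem$. The plan is to combine Lemma~\ref{lem:k-orbit-fix-vertex}, which after fixing $\tupB_z = \phi_\blurElem(\tupA_z)$ in $\fixTwistType(\BlockA)$ produces a $(k-1)$-orbit of $(\StructB, \pTupA\phi_\blurElem(\tupA_z))$, with Claim~\ref{clm:fxTwistType-fixes-type-general} applied to $\inv{\fixTwistType_\blurElem}$, which transports this orbit to a $(k-1)$-orbit of $(\StructB_\blurElem, \pTupA\phi_\blurElem(\tupA_z))$. Since $S^\blurElem$ is orbit-diagonal, its nonzero entries in the row $\blurElem(\remZ{\tupA})$ lie entirely in the column orbit partnered by type with the row orbit, and the set of $\inv{\fixTwistType_\blurElem}(\remZ{\tupB})$ sweeps out exactly this partner. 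To reconcile the parameter $\pVertA_z$ used in Claim~\ref{clm:blur-twist-blur-elem} with a possibly different $\tupA_z$, I will rely on the fact that vertices of origin $z$ form a single $1$-orbit of $(\StructA, \pTupA)$ together with orbit-invariance of $S^\blurElem$.
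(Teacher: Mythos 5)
Your proposal takes essentially the same route as the paper: split by blurability, in the blurable case interchange the two sums to get $|\blurer|$ which is odd by Lemma~\ref{lem:blurer-size-odd}, and in the non-blurable case swap the order of summation so that for each fixed $\blurElem$ the inner sum is a block row-sum of $S^\blurElem$, equal to $1$ because $S^\blurElem$ is odd-filled by the induction hypothesis together with Lemma~\ref{lem:odd-filled-product}, then conclude again with $|\blurer|$ odd. The one place where the two arguments differ slightly is where $\blurElem$ is parked: the paper first pulls $\blurElem$ off the row index using orbit-invariance of $S^\blurElem$ together with Claim~\ref{clm:blurelem-blockiso}, so the row stays fixed at $\remZ{\tupA}$ and the columns range over the orbit $\inv{\blurElem}(\inv{\fixTwistType_\blurElem}(\BlockB_\blurElem))$; you keep the row as $\blurElem(\remZ{\tupA})$ and identify $\inv{\fixTwistType_\blurElem}(\BlockB_\blurElem)$ as the type-partner, which is the same identification after commuting $\blurElem$ through, so the orbit-invariance step you postpone to the end is still doing the essential work.
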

\begin{claimproof}
	Let $\BlockA \in \PartA_k$ and $\tupA \in \BlockA$.
	Then $\fixTwistType(\tupA) \in \fixTwistType(\BlockA)$
	and because every $\blurElem \in \blurrer$ is an orbit-automorphism
	(Claim~\ref{clm:blurelem-blockiso}),
	it holds that
	$\blurElem(\tupA) \in \BlockA$ and
	$\fixTwistType(\blurElem(\tupA)) \in \fixTwistType(\BlockA)$
	for every $\blurElem \in \blurrer$, too.

	Assume first that~$\BlockA$ is blurrable.
	Now, we sum up (over~$\FF_2$) the entries in the row indexed by~$\tupA$:
	\[
	\sum_{\tupB \in \fixTwistType(\BlockA)} S(\tupA,\tupB) =
	\sum_{\tupB \in \fixTwistType(\BlockA)} 
	\sum
	\limits_{\substack{\blurElem \in \blurrer, \\ \fixTwistType(\blurElem(\tupA)) = \tupB}} 1 = \sum
	\limits_{\substack{\blurElem \in \blurrer, \\ \fixTwistType(\blurElem(\tupA)) \in \fixTwistType(\BlockA)}} 1  
	=|\blurrer| \bmod 2.\]
	The last step holds, because -- as seen -- $\fixTwistType(\blurElem(\tupA)) \in \fixTwistType(\BlockA)$ for every $\blurElem \in \blurrer$.
	Finally,~$|\blurrer|$ is odd by Lemma~\ref{lem:blurrer-size-odd}
	and so the number of ones in the row indexed by~$\tupA$ is odd, too.
	
	Assume otherwise that~$\BlockA$  is not blurrable
	and set $\BlockB := \fixTwistType(\BlockA)$,
	which is of the same type as~$\BlockA$ (Corollary~\ref{cor:tau-fixes-orbit-type}).
	For every $\blurElem \in \blurrer$
	we set $\BlockB_\blurElem := \setcond{\remZ{\tupB}}{\tupB \in \BlockB, 
	\fixTwistType_\blurElem(\blurElem(\tupA_\scenter)) = \tupB_\scenter }$.
	Then $\BlockB_\blurElem 
	\in \orbs{k-1}{(\StructA_g,\pTupA\pVertA_\scenter)} = \orbs{k-1}{(\StructA_{g-\blurElem},\pTupA\pVertA_\scenter)}$ 
	by Corollaries~\ref{cor:k-orbit-fix-vertex} and~\ref{cor:cfi-same-autgroup-and-orbits} 
	(the center~$\scenter$ has distance greater than~$1$ to $\orig{\pTupA}$).
	\begin{align*}
	\sum_{\tupB \in \BlockB} S(\tupA,\tupB)
	&=
	\sum_{\tupB \in \BlockB} 
	\sum\limits_{\substack{\blurElem \in \blurrer, \\ \fixTwistType_\blurElem(\blurElem(\tupA_\scenter)) = \tupB_\scenter}}
S^{\blurElem} (\blurElem(\remZ{\tupA}), \inv{\fixTwistType_\blurElem}(\remZ{\tupB})) \\
	&= \sum_{\blurElem \in \blurrer}
\sum\limits_{\substack{\tupB \in \BlockB, \\ \fixTwistType_\blurElem(\blurElem(\tupA_\scenter)) = \tupB_\scenter}}
S^{\blurElem} (\blurElem(\remZ{\tupA}), \inv{\fixTwistType_\blurElem}(\remZ{\tupB})) \\
	&= \sum_{\blurElem \in \blurrer}
	\sum_{\tupB \in \BlockB_\blurElem}
	S^{\blurElem} (\blurElem(\remZ{\tupA}), \inv{\fixTwistType_\blurElem}(\tupB))\\
	&= \sum_{\blurElem \in \blurrer}
	\sum_{\tupB \in \inv{\fixTwistType_\blurElem}(\BlockB_\blurElem)}
	S^{\blurElem} (\blurElem(\remZ{\tupA}), \tupB). \tag{$\star$}
	\end{align*}
	In the first line of the equation,
	$\blurElem(\remZ{\tupA})$
	has the same type in $(\Struct_f, \pTupA\pVertA_\scenter)$
	as $\inv{\fixTwistType_\blurElem}(\remZ{\tupB})$
	has in $(\Struct_{g-\blurElem}, \pTupA\pVertA_\scenter)$
	for every $\blurElem \in \blurrer$ such that $\fixTwistType_\blurElem(\blurElem(\tupA_\scenter)) = \tupB_\scenter$
	(Claim~\ref{clm:matrix-blurrer-tuple-same-type}).
	One sees that we always sum over the same column indices of~$S^\blurElem$ (for a fixed~$\blurElem$)
	and we only manipulate the way in which we express the sum.
	Hence, in the last line,
	$\blurElem(\remZ{\tupA})$ has the same type in $(\Struct_f, \pTupA\pVertA_\scenter)$ as $\tupB$ has in $(\Struct_{g-\blurElem}, \pTupA\pVertA_\scenter)$.
	By Claim~\ref{clm:fxTwistType-fixes-type-general-rec},
	$\inv{\fixTwistType_\blurElem}(\BlockB_\blurElem) \in \orbs{k-1}{(\StructA_{g-\blurElem},\pTupA\pVertA_\scenter)}$,
	so $\sum_{\tupB \in \inv{\fixTwistType_\blurElem}(\BlockB_\blurElem)}
	S^{\blurElem} (\blurElem(\remZ{\tupA}), \tupB) = 1$ because we sum over all entries in one row of a block on the diagonal of~$S^\blurElem$
	and~$S^\blurElem$ is odd-filled (Claim~\ref{clm:blur-twist-blur-elem}).
	It follows that
	\begin{align*}
	(\star) &= 
	\sum\limits_{\blurElem \in \blurrer} 1 = |\blurrer| \bmod 2 = 1.
	\end{align*}
	For the last step we used again that~$|\blurrer|$ is odd by Lemma~\ref{lem:blurrer-size-odd}.
\end{claimproof}
Now it follows from Lemma~\ref{lem:orbit-diagonal+orbit-invariant+odd-filled-implies-invertible}
that~$S$ is invertible.

\begin{claim}
	$\activeRegionIdx{f,g}{\pTupA}{S} \subseteq \neighborsK{G}{\minRad(k+1)}{\stip}$.
\end{claim}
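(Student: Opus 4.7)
The plan is to exhibit a concrete set $A \subseteq \neighborsK{G}{\minRad(k+1)}{t}$ that satisfies both defining conditions of the active region; by minimality this yields $\activeRegion{S} \subseteq A$, and since I simply take $A := \neighborsK{G}{\minRad(k+1)}{t}$, the claim follows. To see that this $A$ is large enough, I first note that $\fixTwistType$, every $\fixTwistType_\blurElem$, and every $\blurElem \in \blurer$ act as the identity on components that are neither star nor tip components; hence they only modify vertices lying in $\spath_1 \cup \cdots \cup \spath_d$. By construction each $\spath_i$ has length $\minRad(k)+2$ and starts at $z$ with $\distance{G}{z}{t} = \minRad(k)+1$, so the whole star lies in $\neighborsK{G}{2\minRad(k)+3}{t}$. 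For the non-blurable blocks of $S$ the recursive factors $S^\blurElem$ appear, with active region inside $\bigcup_i N_i$; each $N_i \subseteq \neighborsK{G}{3\minRad(k)+2}{t}$ by Claim~\ref{clm:distance-param-e-i}. Since $\minRad(k+1) = 4\minRad(k)+2$ dominates both bounds, everything relevant is inside $A$.

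Next I verify the first defining condition. Suppose $C \in \activeRegionB{S}{\BlockA}$, witnessed by $\tupA \in \BlockA$ and $\tupB \in \fixTwistType(\BlockA)$ with $S(\tupA,\tupB) = 1$ and $\tupA_C \neq \tupB_C$. Unwinding the definition of $S$ forces the discrepancy on $C$ to come from a non-identity action of one of $\fixTwistType$, $\blurElem$, $\fixTwistType_\blurElem$, or $S^\blurElem$; the first three force $C$ to be a star or tip component, while the last forces $C$ to meet $\bigcup_i N_i$. In either case, because $|C| \leq 2k$ and $G$ has girth at least $2\minRad(k+1)$, Claim~\ref{clm:star-tip-components-basics} and the distance estimates above place $C$ entirely inside $A$.

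For the second defining condition I take orbits $\BlockA,\BlockA' \in \PartA_k$, their images $\BlockB = \fixTwistType(\BlockA)$, $\BlockB' = \fixTwistType(\BlockA')$, and tuples with matching $A$-parts and with the inactive parts of $\tupA,\tupB$ (respectively $\tupA',\tupB'$) equal. Outside $A$ the functions $\fixTwistType$, $\blurElem$, $\fixTwistType_\blurElem$ act as the identity and, by Claim~\ref{clm:component-wise-application}, split along $A$ and its complement, so the index sets of the sums defining $S(\tupA,\tupB)$ and $S(\tupA',\tupB')$ are in canonical bijection. In the non-blurable case the summands also involve $S^\blurElem$; by the inductive bound on $\activeRegion{S^\blurElem}$ together with Lemma~\ref{lem:orbit-diagonal-multiply-disjoint-active-region}, these summands depend only on the $A$-parts of their arguments, so the two sums coincide.

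The main obstacle is precisely this last step: one must keep track simultaneously of the recursive active regions of the $S^\blurElem$ and the shifts introduced by $\blurElem$ and $\fixTwistType_\blurElem$, using Claim~\ref{clm:blurelem-blockiso} to ensure that applying $\blurElem$ does not move a component in or out of $A$, and using the disjointness of the $N_i$ from Claim~\ref{clm:neighborhoods_different} so that the components of size at most $2k$ meeting the star are either entirely inside $A$ or trivially disjoint from all non-identity actions.
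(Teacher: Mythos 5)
Your proposal is correct and follows essentially the same route as the paper's proof: exhibit the ball $\neighborsK{G}{\minRad(k+1)}{t}$ as a set satisfying both defining conditions, bound active components via the component-wise identity action of $\fixTwistType$, $\fixTwistType_\blurElem$, $\blurElem$ outside the star and via the inductive bound $\activeRegion{S^\blurElem}\subseteq\bigcup_i N_i$, then invoke minimality. (A minor slip: components of a $k$-orbit have size at most $k$, not $2k$, but this only tightens your distance estimates.)
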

\begin{claimproof}
	We show that $\neighborsK{G}{\minRad(k+1)}{\stip}$ satisfies the conditions of the active region.
	This implies that $\activeRegionIdx{f,g}{\pTupA}{S} \subseteq \neighborsK{G}{\minRad(k+1)}{\stip}$.
	To show Condition~\ref{itm:active-region-contaied} of the active region,
	we have to show that $\comp \subseteq \neighborsK{G}{\minRad(k+1)}{t}$
	for every component $\comp \in \activeRegionBIdx{f,g}{\pTupA}{S}{\BlockA}$
	for every $\BlockA \in \PartA_k$.
	
	Let $\BlockA \in \PartA_k$ be blurrable
	and~$\comp$ be a component of~$\BlockA$.
	By definition of~$\blurElem$ and~$\fixTwistType$,
	the matrix~$S$ is only active on~$\comp$
	if~$\comp$ is a star or a tip component.
	But this means that $\comp \subseteq \neighborsK{G}{\minRad(k)+2 + k}{\scenter}$
	because~$\comp$ is connected, of size at most~$k$,
	and contains a vertex~$\bvertA$ of some~$\spath_i$
	(which have length $\minRad(k)+2$).
	Because $\distance{G}{\scenter}{\stip} = \minRad(k) +2$,
	every vertex with distance at most $\minRad(k)+2 + k$ to $\scenter$
	has distance at most $2 \minRad(k) +4 + k \leq 4\minRad(k) + 2= \minRad(k+1)$
	to~$\stip$
	(one immediately sees that $\minRad(k) \geq  k \geq 2$).
	Thus,
	\[\comp \subseteq \neighborsK{G}{\minRad(k)+2 + k}{\scenter} \subseteq \neighborsK{G}{\minRad(k+1)}{\stip}.\] 
	
	Let otherwise $\BlockA \in \PartA_k$ be non-blurrable and~$\comp$ be a component of~$\BlockA$.
	Then~$S$ is possibly active on~$\comp$ if $\comp \subseteq \neighborsK{G}{\minRad(k)+2 + k}{\stip}$ (as seen in the blurrable case)
	or
	\[\comp \subseteq \activeRegionIdx{f, g-\blurElem}{\pTupA\pVertA_\scenter}{S^\blurElem} \subseteq \bigcup_{i \in d} N_i\]
	for some $\blurElem \in \blurrer$
	(Claim~\ref{clm:blur-twist-blur-elem}).
	Recall that $N_i = \neighborsK{G}{\minRad(k)}{\stip_i}$
	and that $\distance{G}{\stip_i}{\scenter} = \minRad(k) + 2$ by construction.
	If follows that
	\[\bigcup_{i \in d} N_i \subseteq \neighborsK{G}{4\minRad(k) + 2}{\scenter} = \neighborsK{G}{\minRad(k+1)}{\stip}\]
	because every vertex with distance $\minRad(k)$ to some~$\stip_i$
	has distance at most $3\minRad(k)+4 \leq 4\minRad(k)+2$ to~$\stip$
	(we again use that $\minRad(k) \geq 2$ for $k\geq 2$).
	
	To prove Condition~\ref{itm:active-region-replace} of the active region,
	we see that~$\blurElem$,~$\fixTwistType$, and~$\fixTwistType_\blurElem$
	are defined component-wise (for every $\blurElem \in \blurrer$)
	and that  $\activeRegionIdx{f, g-\blurElem}{\pTupA\pVertA_\scenter}{S^\blurElem} \subseteq \bigcup_{i \in d} N_i$.
\end{claimproof}

Now, we want to show that~$S$ actually $k$-blurs the twist.
For $\BlockA \in \PartA_{2k}$ we set ${\BlockA_1 := \restrictVect{\BlockA}{\set{1,\dots, k}}}\in \PartA_k$ and
${\BlockA_2 := \restrictVect{\BlockA}{\set{k+1,\dots, 2k}}} \in \PartA_k$
to be the unique $k$-orbits
such that $\BlockA \subseteq \BlockA_1 \times \BlockA_2$
(and similar for a $\BlockB \in \PartB_{2k}$).
Our aim is to prove that $\charMat{\BlockA} \cdot S = S \cdot \charMat{\BlockB}$ for every $\BlockA \in \PartA_{2k}$ and $\BlockB := \fixTwistType(\BlockA)$.
Because~$S$ is orbit-diagonal and 
$\charMat{\BlockA}_{\BlockA_1 \times \BlockA_2}$ and
$\charMat{\BlockB}_{\BlockB_1 \times \BlockB_2}$
are the only nonzero blocks of~$\charMat{\BlockA}$ and~$\charMat{\BlockB}$,
it suffices to show that
$\charMat{\BlockA}_{\BlockA_1 \times \BlockA_2} S_{\BlockA_2 \times \BlockB_2} = S_{\BlockA_1 \times \BlockB_1} \charMat{\BlockB}_{\BlockB_1 \times \BlockB_2}$.
We begin with blurrable orbits and define the set of indices $i \in [d]$ of the blurrer~$\blurrer$ which are relevant for a blurrable orbit.
\begin{definition}
	The \defining{occupied indices $\occIndices{\BlockA}$}
	of a blurrable orbit $\BlockA \in \PartA_k$
	is the set of indices $i \in [d]$,
	such that there is a component~$\comp$ of~$\BlockA$
	that satisfies $\comp \subseteq N_i$  or $\comp$ is an $i$-star component.
\end{definition}
Note that the definition also covers $i$-tip components
because every $i$-tip component is contained in~$N_i$.
Also note that $|\occIndices{\BlockA}| \leq k$
because~$\BlockA$ is a $k$-orbit
and thus we can use the blurrer properties later.
The following lemma states that if
one of~$\BlockA_1$ and~$\BlockA_2$ is blurrable, say~$\BlockA_1$,
then it does not matter whether we apply~$\blurElem$ or~$\blurElem'$
to~$\BlockA_2$ as long as~$\blurElem$ and~$\blurElem'$ agree on $\occIndices{\BlockA_1}$.
\begin{claim}
	\label{clm:blurrable-depends-on-occupied}
	Suppose $\BlockA \in \PartA_{2k}$,
	$\blurElem, \blurElem' \in \blurrer$, and
	$\tupA,\tupB \in \StructVA^k$.
	If $\BlockA_1 \in \PartA_k$ is blurrable and $\restrictVect{\blurElem}{\occIndices{\BlockA_1}} = \restrictVect{\blurElem'}{\occIndices{\BlockA_1}}$,
	then 
	$\tupA\blurElem(\tupB) \in \BlockA$
	if and only if $\tupA\blurElem'(\tupB) \in \BlockA$.
	Likewise,
	if~$\BlockA_2$ is blurrable
	and $\restrictVect{\blurElem}{\occIndices{\BlockA_2}} = \restrictVect{\blurElem'}{\occIndices{\BlockA_2}}$,
	then 
	$\blurElem(\tupA)\tupB \in \BlockA$
	if and only if $\blurElem'(\tupA)\tupB \in \BlockA$.
	The same holds for ${\BlockB := \fixTwistType(\BlockA)}$.
\end{claim}
\begin{claimproof}
	Consider the case that~$\BlockA_1$ is blurrable.
	Assume that $\tupA\blurElem(\tupB) \in \BlockA$.
	Let $\comp^i_1, \dots, \comp^i_{\ell_i}$ be the components of~$\BlockA_i$ for every $i \in [2]$.
	Because~$\fixTwistType$ and~$\blurElem$ are defined component-wise,
	it suffices by Lemma~\ref{lem:split-discon-orbits} to
	assume that~$\BlockA$ is a $k'$-orbit of $(\StructA_f, \pTupA)$ for some $k'\leq k$
	and has a single component~$\comp$.
	The component~$\comp$ is the union of all~$\comp^i_j$.
	
	First consider the case
	that~$\comp$ is a star component.
	Because~$\BlockA_1$ is blurrable,
	we can partition all components~$\comp^1_j$ by Claim~\ref{clm:star-tip-components-basics}
	into $D^1_1, \dots, D^1_d$ and~$D^1_R$
	such that~$D^1_i$ contains all $i$-star components for all $i \in[d]$
	and~$D^1_R$ the remaining sky components.
	Set $\blurElem'' := \blurElem' - \blurElem$.
	Then $\restrictVect{\blurElem''}{\occIndices{\BlockA_1}} = 0$,
	$\blurElem''(\blurElem(\tupB)) = \blurElem'(\tupB)$, and
	$\blurElem''$ is the identity function on the components in all~$D^1_i$
	because every~$D^1_i$ only contains $i$-star components
	and is nonempty only if $i \in \occIndices{\BlockA_1}$.
	Additionally,~$\blurElem''$ is the identity on the sky components~$D^1_R$.
	Hence $\autoA_{\blurElem''}(\tupA) = \tupA$
	by Definition~\ref{def:stariso}.
	Because~$\comp$ is a star component, we have that
	\begin{align*}
		\blurElem''(\tupA \blurElem(\tupB))
		= \autoA_{\blurElem''}(\tupA \blurElem(\tupB))
		= \autoA_{\blurElem''}(\tupA)\autoA_{\blurElem''}(\blurElem(\tupB))
		= \tupA\blurElem''(\blurElem(\tupB))
		= \tupA\blurElem'(\tupB).
	\end{align*}
	Because~$\blurElem''$ is an orbit-automorphism by Claim~\ref{clm:blurelem-blockiso},
	it holds that $\tupA \blurElem(\tupB) \in \BlockA$
	if and only if $\blurElem''(\tupA \blurElem(\tupB)) = \tupA\blurElem'(\tupB) \in \BlockA$.
	
	If otherwise~$\comp$ is not a star component,
	then none of the~$\comp^i_j$ is a star component
	and~$\blurElem$ and~$\blurElem'$ are the identity function on~$\tupB$
	and the claim follows immediately.
	The cases for~$\BlockA_2$ and~$\BlockB$ are analogous.
\end{claimproof}

\begin{claim}
	\label{clm:blurrable-repair-orbits}
	For every $\BlockA \in \PartA_{2k}$,
	$\blurElem \in \blurrer$ such that
	$\restrictVect{\blurElem}{\occIndices{\BlockA}} = \restrictVect{\blurElemFix}{\occIndices{\BlockA}}$, and
	$\tupA, \tupB\in \StructVA^k$ it holds that
	$\tupA\tupB \in \BlockA$ if and only if $\fixTwistType(\blurElem(\tupA)) \fixTwistType(\blurElem(\tupB)) \in \fixTwistType(\BlockA)$.
\end{claim}
\begin{claimproof}
	Let $\BlockA \in \PartA_{2k}$,
	$\blurElem \in \blurrer$ such that
	$\restrictVect{\blurElem}{\occIndices{\BlockA}} = \restrictVect{\blurElemFix}{\occIndices{\BlockA}}$, and 
	$\tupA, \tupB\in \StructVA^k$.
	Using Claim~\ref{clm:blurrable-repair-orbits-general}
	we obtain $\tupA\tupB \in \BlockA$ if and only if $\fixTwistType_\blurElem(\blurElem(\tupA)) \fixTwistType_\blurElem(\blurElem(\tupB)) \in \fixTwistType_\blurElem(\BlockA)$.
	Because $\restrictVect{\blurElem}{\occIndices{\BlockA}} = \restrictVect{\blurElemFix}{\occIndices{\BlockA}}$,
	the action of~$\fixTwistType_\blurElem$ and~$\fixTwistType$ is equal on the $i$-tip components of $\BlockA$ for $i \in \occIndices{\BlockA}$.
	Thus, $\fixTwistType_\blurElem(\BlockA) = \fixTwistType(\BlockA)$.
	Similarly, $\fixTwistType_\blurElem(\blurElem(\tupA)) = \fixTwistType(\blurElem(\tupA))$ and
	$\fixTwistType_\blurElem(\blurElem(\tupB)) = \fixTwistType(\blurElem(\tupB))$.
\end{claimproof}

For every blurrable orbit $\BlockA \in \PartA_k$
it holds that $|\occIndices{\BlockA}| \leq k$
and so we can apply the blurrer properties as follows:

\begin{claim}
	\label{clm:blurrable-case}
	Let $\BlockA \in \PartA_{2k}$ and $\BlockB = \fixTwistType(\BlockA)$.
	If~$\BlockA_1$  (and so~$\BlockB_1$) and~$\BlockA_2$ (and so~$\BlockB_2$) are blurrable,
	then $\charMat{\BlockA} \cdot S = S \cdot \charMat{\BlockB}$.
\end{claim}
\begin{claimproof}
	With the definition of~$S$ on blurrable orbits we obtain for $\tupA \in \BlockA_1$ and $\tupB \in \BlockB_2$ that
	\begin{align*}
		 (\charMat{\BlockA} \cdot S) (\tupA, \tupB)
		&= \sum_{\tupC \in \BlockA_2} \charMat{\BlockA}(\tupA,\tupC) \cdot S_{\BlockA_2 \times \BlockB_2} (\tupC, \tupB)\\
		&= \sum_{\tupC \in \BlockA_2} \charMat{\BlockA}(\tupA,\tupC) \cdot 
		\sum
		\limits_{\substack{\blurElem \in \blurrer, \\ \fixTwistType(\blurElem(\tupC)) = \tupB}} 1\\
		&= \sum_{\blurElem \in \blurrer} \charMat{\BlockA}(\tupA,\inv{\blurElem}(\inv{\fixTwistType}(\tupB))). \tag{$\star$}
	\end{align*}
	Now for every $\blurElem \in \blurrer$, $\charMat{\BlockA}(\tupA,\inv{\blurElem}(\inv{\fixTwistType}(\tupB)))$
	depends only on $\restrictVect{\blurElem}{\occIndices{\BlockA_1}}$
	by Claim~\ref{clm:blurrable-depends-on-occupied},
	i.e., $\blurElem \mapsto \charMat{\BlockA}(\tupA,\inv{\blurElem}(\inv{\fixTwistType}(\tupB)))$
	is actually a function $\restrictVect{\blurrer}{\occIndices{\BlockA_1}} \to \FF_2$.
	Because~$\BlockA_1$ is a blurrable $k$-orbit,
	it holds that $|\occIndices{\BlockA_1}|\leq k$.
	Then, by Lemma~\ref{lem:blurrer-sum},
	it follows that for some $\sblurElem \in \blurrer$ with $\restrictVect{\sblurElem}{\occIndices{\BlockA_1}} = \restrictVect{\blurElemFix}{\occIndices{\BlockA_1}}$ we have that
	\begin{align*}
		(\star) &=
		\charMat{\BlockA}(\tupA,\invsblurElem(\inv{\fixTwistType}(\tupB)))\\
		&= \charMat{\BlockB}(\fixTwistType(\sblurElem(\tupA)),\tupB)\\
		&= \sum_{\blurElem \in \blurrer} \charMat{\BlockB}(\fixTwistType(\blurElem(\tupA)),\tupB)\\
		&= \sum_{\tupC \in \BlockB_1} S_{\BlockA_1 \times \BlockB_1} (\tupA, \tupC) \cdot \charMat{\BlockB}(\tupC,\tupB)\\
		&= (S\cdot \charMat{\BlockB}) (\tupA, \tupB),
	\end{align*}
	where the transition from~$\BlockA$ to~$\BlockB$ is by Claim~\ref{clm:blurrable-repair-orbits}.
	The last step is the inverse reasoning as for~$\BlockA$
	using Claim~\ref{clm:blurrable-depends-on-occupied}.
\end{claimproof}

Next, we want to consider the case that~$\BlockA_1$ is blurrable
and~$\BlockA_2$ is not (or vice versa, which is symmetric).
We would like to use Claim~\ref{clm:blurrable-repair-orbits}
as in the case that both~$\BlockA_1$ and~$\BlockA_2$
are blurrable.
But this is not sufficient because
$S_{\BlockA_2 \times \fixTwistType(\BlockA_2)}$
is defined using the matrices~$S^\blurElem$.
We show that the matrices~$S^\blurElem$
cancel out in this case.
Intuitively, the idea is to use that
the active regions of the $\Sblurelem{\blurElem}{i}$ are disjoint
(Claim~\ref{clm:neighborhoods_different}).
As a consequence the matrices $\Sblurelem{\blurElem}{i}$
for all $i \notin \occIndices{\BlockA_1}$
cancel out.
For the remaining $\Sblurelem{\blurElem}{i}$
we use the blurrer properties to show that they vanish.

\begin{claim}
	\label{clm:blurrable-non-blurrable-case}
	Suppose $\BlockA \in \PartA_{2k}$ and $\BlockB = \fixTwistType(\BlockA)$.
	If $\BlockA_1$ is blurrable and $\BlockA_2$ is not,
	then $\charMat{\BlockA} \cdot S = S \cdot \charMat{\BlockB}$.
\end{claim}
\begin{claimproof}
	We first unfold the definition of~$S$.
	Let $\tupA \in \BlockA_1$ and $\tupB\in \BlockB_2$.
	\begin{align*}
		(\charMat{\BlockA} \cdot S) (\tupA, \tupB)
		&= \sum_{\tupC \in \BlockA_2} \charMat{\BlockA}(\tupA,\tupC) \cdot S_{\BlockA_2 \times \BlockB_2} (\tupC, \tupB)\\
		&= \sum_{\tupC \in \BlockA_2}
		\charMat{\BlockA}(\tupA,\tupC) \cdot
		\sum_{\substack{\blurElem \in \blurrer,\\ \blurElem(\tupC_\scenter) = \inv{\fixTwistType_\blurElem}(\tupB_\scenter)}}
		S^\blurElem(\blurElem(\remZ{\tupC}), \inv{\fixTwistType_\blurElem}(\remZ{\tupB}))\\
		&= \sum_{\blurElem \in \blurrer}
		\sum_{\substack{\tupC \in \BlockA_2,\\ \blurElem(\tupC_\scenter) = \inv{\fixTwistType_\blurElem}(\tupB_\scenter)}}
		\charMat{\BlockA}(\tupA,\tupC) \cdot
		S^\blurElem(\blurElem(\remZ{\tupC}), \inv{\fixTwistType_\blurElem}(\remZ{\tupB})).\tag{$\star$}
	\end{align*}
	For every $\blurElem\in\blurrer$, we set $\BlockA'_{\blurElem,2} := \setcond{\remZ{\tupC}}{\tupC \in \BlockA_2, \tupC_\scenter = \blurElem(\tupB_\scenter)}$
	(note that $\inv{\fixTwistType_\blurElem}(\tupB_\scenter) = \tupB_\scenter$
	because $\orig{\tupB_\scenter} = \scenter$).
	Here $\tupC\inv{\blurElem}(\inv{\fixTwistType_\blurElem}(\tupB_\scenter))$
	denotes the tuple~$\tupC'$ such that 
	$\remZ{\tupC'} = \tupC$ and $\tupC'_\scenter = \inv{\blurElem}(\inv{\fixTwistType_\blurElem}(\tupB_\scenter))$.
	It holds that $\BlockA'_{\blurElem,2} \in \orbs{k-1}{(\StructA_f, \pTupA\pVertA_\scenter)}$ by Corollary~\ref{cor:k-orbit-fix-vertex}.
	Then
	\begin{align*}
		(\star)&= \sum_{\blurElem \in \blurrer} \sum_{\tupC \in \BlockA'_{\blurElem,2}} 
		\charMat{\BlockA}(\tupA,\tupC\inv{\blurElem}(\inv{\fixTwistType_\blurElem}(\tupB_\scenter))) \cdot
		S^\blurElem(\blurElem(\tupC), \inv{\fixTwistType_\blurElem}(\remZ{\tupB})).
	\end{align*}
	Let $K\subseteq[d]$ be the maximal set of indices,
	such that there is no component~$\comp$ of~$\BlockA_1$ satisfying $\comp\subseteq N_i$.
	So in particular $\occIndices{\BlockA_1} \subseteq [d] \setminus K$.
	We partition the components of~$\BlockA_2$ as follows:
	\begin{itemize}
		\item Let~$D$ be the set of components~$\comp$ of~$\BlockA_2$ that are also components of~$\BlockA$
		and satisfy $\comp \subseteq N_i$ for some $i\in K$.
		\item Let~$E$ be the set of~$\comp$ of~$\BlockA_2$
		that are not contained in~$D$ and
		satisfy $\comp \subseteq N_i$ for some $i\in[d]$.
		\item Let~$R$ be the set of all remaining components of $\BlockA_2$.
	\end{itemize}
	We split $\tupC = \tupC_D\tupC_E\tupC_R$ into the components belonging to~$D$,~$E$, and~$R$.
	We split ${\tupB = \tupB_D\tupB_E\tupB_R\tupB_\scenter}$ likewise,
	where, for simplicity, we set $\tupB_D := \remZ{\tupB}_D$
	and similar for~$\tupB_E$ and~$\tupB_R$. 
	From Lemma~\ref{lem:split-discon-orbits} it follows that
	$\BlockA'_{\blurElem,2} = \restrictVect{\BlockA'_{\blurElem,2}}{D} \times \restrictVect{\BlockA'_{\blurElem,2}}{E \cup R}$
	and
	$\BlockA = \restrictVect{\BlockA}{D} \times \restrictVect{\BlockA}{R'}$,
	where~$R'$ are the components of~$\BlockA$ not contained in~$D$.
	By the definition of~$K$ and~$D$, we have that 
	$\restrictVect{\BlockA}{D}=\restrictVect{\BlockA'_{\blurElem,2}}{D}$
	because the components in~$D$ are components of $\BlockA$,
	are disjoint with $\orig{\BlockA_1}$,
	and do not contain $\scenter$ ($\scenter \notin N_i$ for all $i \in [d]$).
	We obtain that
	\begin{align*}
		&\hiddenEq(\star)\\
		&=\sum_{\blurElem \in \blurrer}
		\sum_{\tupC_D \in \restrictVect{\BlockA'_{\blurElem,2}}{D}}
		\sum_{\tupC_E\tupC_R \in \restrictVect{\BlockA'_{\blurElem,2}}{E\cup R}}
		\charMat{\BlockA}(\tupA,\tupC_D\tupC_E\tupC_R\inv{\blurElem}(\inv{\fixTwistType_\blurElem}(\tupB_\scenter))) \cdot
		S^\blurElem(\blurElem(\tupC_D\tupC_E\tupC_R), \inv{\fixTwistType_\blurElem}(\tupB_D\tupB_E\tupB_R))\\
		&=\sum_{\blurElem \in \blurrer}
		\sum_{\tupC_E\tupC_R \in \restrictVect{\BlockA'_{\blurElem,2}}{E\cup R}}
		\charMat{\restrictVect{\BlockA}{R'}}(\tupA,\tupC_E\tupC_R\inv{\blurElem}(\inv{\fixTwistType_\blurElem}(\tupB_\scenter))) \cdot
		\sum_{\tupC_D \in \restrictVect{\BlockA'_{\blurElem,2}}{D}}
		S^\blurElem(\blurElem(\tupC_D\tupC_E\tupC_R), \inv{\fixTwistType_\blurElem}(\tupB_D\tupB_E\tupB_R)).
	\end{align*}
	By Claim~\ref{clm:component-wise-application}, 
	the functions~$\blurElem$ and~$\fixTwistType_\blurElem$ can be applied component-wise.
	For every~$\blurElem$, the matrix~$S^\blurElem$ is not active on
	the components in~$R$ (by definition of~$R$ and Claim~\ref{clm:blur-twist-blur-elem}).
	Thus, $\blurElem(\tupC_R) = \inv{\fixTwistType_\blurElem}(\tupB_R)$
	unless $S^\blurElem(\blurElem(\tupC_D\tupC_E\tupC_R),  \inv{\fixTwistType_\blurElem}(\tupB_D\tupB_E\tupB_R)) = 0$.
	We again use Lemma~\ref{lem:split-discon-orbits}
	to split $\restrictVect{\BlockA'_{\blurElem,2}}{E\cup R} = \restrictVect{\BlockA'_{\blurElem,2}}{E}\times \restrictVect{\BlockA'_{\blurElem,2}}{R}$.
	Such a split of~$\BlockA$ is not possible
	because the components of~$\BlockA'_{\blurElem,2}$
	in~$E$ and~$R$ might not be components of~$\BlockA$.
	Here we have $\inv{\blurElem}(\inv{\fixTwistType_\blurElem}(\tupB_R))\inv{\blurElem}(\inv{\fixTwistType_\blurElem}(\tupB_\scenter)) = \inv{\blurElem}(\inv{\fixTwistType_\blurElem}(\tupB_R\tupB_\scenter))$
	because if $\comp\cup \set{\scenter}$ is a component,
	then $\comp \cup \set{\scenter}$ is a star component
	and~$\comp$ is a union of star and sky components. Thus,
	\begin{align*}
		&\hiddenEq(\star)\\
		&=\sum_{\blurElem \in \blurrer}
		\sum_{\tupC_E \in \restrictVect{\BlockA'_{\blurElem,2}}{E}}
		\charMat{\restrictVect{\BlockA}{R'}}(\tupA,\tupC_E\inv{\blurElem}(\inv{\fixTwistType_\blurElem}(\tupB_R\tupB_\scenter))) \cdot
		\sum_{\tupC_D \in \restrictVect{\BlockA'_{\blurElem,2}}{D}}
		S^\blurElem(\blurElem(\tupC_D\tupC_E)\inv{\fixTwistType_\blurElem}(\tupB_R),  \inv{\fixTwistType_\blurElem}(\tupB_D\tupB_E\tupB_R)).
	\end{align*}
	We know that $\restrictVect{\BlockA'_{\blurElem,2}}{D} \in \orbs{k'}{(\StructA_f,\pTupA\pVertA_\scenter)}$ for some $k' \leq k-1$
	by Lemma~\ref{lem:split-discon-orbits}.
	Because~$\blurElem$ is an orbit-automorphism for every $\blurElem \in \blurrer$ (Claim~\ref{clm:blurelem-blockiso}) and~$D$ contains no star center components,
	every $\blurElem \in \blurrer$ permutes $\restrictVect{\BlockA'_{\blurElem,2}}{D}$.
	Hence, we can sum over $\blurElem(\tupC_D) \in \restrictVect{\BlockA'_{\blurElem,2}}{D}$
	instead over $\tupC_D \in \restrictVect{\BlockA'_{\blurElem,2}}{D}$.
	Then we can apply Claim~\ref{clm:sum-s-blur-subtree}:
	\begin{align*}
		(\star) &=\sum_{\blurElem \in \blurrer}
		\sum_{\tupC_E \in \restrictVect{\BlockA'_{\blurElem,2}}{E}}
		\charMat{\restrictVect{\BlockA}{R'}}(\tupA,\tupC_E\inv{\blurElem}(\inv{\fixTwistType_\blurElem}(\tupB_R\tupB_\scenter))) \cdot {}
		\\
		&\hspace{4cm}
		\left(\prod_{i \in [d] \setminus K}\Sblurelem{\blurElem}{i}\right)(\inv{\fixTwistType_\blurElem}(\tupB_D)\blurElem(\tupC_E)\inv{\fixTwistType_\blurElem}(\tupB_R),  \inv{\fixTwistType_\blurElem}(\tupB_D\tupB_E\tupB_R)).
	\end{align*}
	We show that this term only depends on $\restrictVect{\blurElem}{[d]\setminus K}$ as follows: Let $\blurElem, \blurElem' \in \blurrer$ such that 
	$\restrictVect{\blurElem}{[d]\setminus K} = \restrictVect{\blurElem'}{[d]\setminus K}$.
	\begin{enumerate}[label=(\alph*)]
		\item Consider the right term  $(\prod_{i \in [d] \setminus K}\Sblurelem{\blurElem}{i})(\cdot,\cdot)$ in the equation.
		First, $\Sblurelem{\blurElem}{i} = \Sblurelem{\blurElem'}{i}$
		for every $i \in [d] \setminus K$
		by construction of the matrices~$\Sblurelem{\blurElem}{i}$
		because $\blurElem(i) = \blurElem'(i)$.
		Because~$R$ does not contain tip components,
		it holds that $\fixTwistType_\blurElem(\tupB_R) = \tupB_R$.
		Second, all components in~$E$ are contained in some~$N_i$ for $i \in [d] \setminus K$
		by definition of~$E$.
		That is, $\blurElem(\tupC_E) = \blurElem'(\tupC_E)$
		and $\fixTwistType_\blurElem(\tupB_E) = \fixTwistType_{\blurElem'}(\tupB_E)$.
		Third, the active region of $\prod_{i \in [d] \setminus K}\Sblurelem{\blurElem}{i}$ is bounded by $\bigcup_{i \in [d]\setminus K} N_i$ by Claim~\ref{clm:sum-s-blur-subtree}.
		Because components in $D \cup R$ are not contained in $\bigcup_{i \in [d]\setminus K} N_i$ by definition of~$D$ and~$R$,
		we can exploit Condition~\ref{itm:active-region-replace} of the active region and apply~$\fixTwistType_\blurElem$ 
		to~$\tupB_D$ and~$\tupB_R$ on both sides
		(because~$\fixTwistType_\blurElem$ is a bijection),
		that is, 
		\begin{align*}
			&\hiddenEq\left(\prod_{i \in [d] \setminus K}\Sblurelem{\blurElem}{i}\right)(\inv{\fixTwistType_\blurElem}(\tupB_D)\blurElem(\tupC_E)\inv{\fixTwistType_\blurElem}(\tupB_R),  \inv{\fixTwistType_\blurElem}(\tupB_D\tupB_E\tupB_R))\\
			&=
		\left(\prod_{i \in [d] \setminus K}\Sblurelem{\blurElem}{i}\right)(\tupB_D\blurElem(\tupC_E)\tupB_R,  \tupB_D\inv{\fixTwistType_\blurElem}(\tupB_E)\tupB_R)\\
		&= \left(\prod_{i \in [d] \setminus K}\Sblurelem{\blurElem}{i}\right)(\tupB_D\blurElem'(\tupC_E)\tupB_R,  \tupB_D\inv{\fixTwistType_{\blurElem'}}(\tupB_E)\tupB_R)\\
		& = \left(\prod_{i \in [d] \setminus K}\Sblurelem{\blurElem'}{i}\right)(\inv{\fixTwistType_{\blurElem'}}(\tupB_D)\blurElem'(\tupC_E)\inv{\fixTwistType_{\blurElem'}}(\tupB_R),  \inv{\fixTwistType_{\blurElem'}}(\tupB_D\tupB_E\tupB_R)).
		\end{align*}
		\item Now consider the left term $\charMat{\restrictVect{\BlockA}{R'}}(\cdot,\cdot)$.
		First, note that~$R$ does not contain tip components and thus
		$\inv{\fixTwistType_\blurElem}(\tupB_R\tupB_\scenter) = \tupB_R\tupB_\scenter$.
		Second,
		$\tupA\tupC_E\inv{\blurElem}(\tupB_R\tupB_\scenter) \in \restrictVect{\BlockA}{R'}$
		if and only if 
		$\tupA\tupC_E\inv{\blurElem'}(\tupB_R\tupB_\scenter) \in \restrictVect{\BlockA}{R'}$
		by Claim~\ref{clm:blurrable-depends-on-occupied}:
		By repeating entries, one sees that 
		Claim~\ref{clm:blurrable-depends-on-occupied} also holds
		for $k'$-orbits with $k'\leq 2k$
		and a partition of $[k']$ into two parts each of size at most~$k$.
		Hence,
		\[\charMat{\restrictVect{\BlockA}{R'}}(\tupA,\tupC_E\inv{\blurElem}(\inv{\fixTwistType_\blurElem}(\tupB_R\tupB_\scenter))) =
		\charMat{\restrictVect{\BlockA}{R'}}(\tupA,\tupC_E\inv{\blurElem'}(\inv{\fixTwistType_{\blurElem'}}(\tupB_R\tupB_\scenter))).\]
	\end{enumerate}
	Hence, the Equation~$(\star)$ is of the form $\sum_{\blurElem \in \blurrer} h(\restrictVect{\blurElem}{[d]\setminus K})$ for some function $h \colon \restrictVect{\blurrer}{[d]\setminus K} \to \FF_2$.
	Because~$\BlockA_1$ contains $k$\nobreakdash-tuples,
	it holds that $|K|\geq d-k$ and hence that $|[d]\setminus K| \leq k$.
	That is, we can apply Lemma~\ref{lem:blurrer-sum}
	and obtain for some $\sblurElem \in \blurrer$,
	which satisfies that $\restrictVect{\sblurElem}{[d]\setminus K} = \restrictVect{\blurElemFix}{[d]\setminus K}$,
	the following:
	\begin{align*}
		(\star) &=
		\sum_{\tupC_E \in \restrictVect{\BlockA'_{\sblurElem,2}}{E}}
		\charMat{\restrictVect{\BlockA}{R'}}(\tupA,\tupC_E\invsblurElem(\inv{\fixTwistType_{\sblurElem}}(\tupB_R\tupB_\scenter))) \cdot{}
		\\
		&\hspace{3cm}
		\left(\prod_{i \in [d] \setminus K}\Sblurelem{\sblurElem}{i}\right)(\inv{\fixTwistType_{\sblurElem}}(\tupB_D)\sblurElem(\tupC_E)\inv{\fixTwistType_{\sblurElem}}(\tupB_R),  \inv{\fixTwistType_{\sblurElem}}(\tupB_D\tupB_E\tupB_R)).
	\end{align*}
	Now, the matrices $\Sblurelem{\sblurElem}{i}$ in the equation
	blur a twist of value~$0$ for every $i \in [d]\setminus K$:
	$\Sblurelem{\sblurElem}{i}$ blurs a twist of value $(g - \sblurElem - f)(\set{\stip_i, \primeSub{\stip}{i}}) =
	(\blurElemFix - \sblurElem)(i)$,
	which is~$0$ for every $i \in [d]\setminus K$
	because $\restrictVect{\sblurElem}{[d]\setminus K} = \restrictVect{\blurElemFix}{[d]\setminus K}$.
	That is,  $\Sblurelem{\sblurElem}{i} = \idmat$ by definition (cf.~Claim~\ref{clm:blur-twist-blur-elem-single})
	and $\sblurElem(\tupC_E) = \inv{\fixTwistType_{\sblurElem}}(\tupB_E)$ unless the right factor is zero.
	Hence,
	\begin{align*}
		(\star) &=
		\charMat{\restrictVect{\BlockA}{R'}}(\tupA,\invsblurElem(\inv{\fixTwistType_{\sblurElem}}(\tupB_E\tupB_R\tupB_\scenter))).
	\end{align*}
	On the components of~$E$
	the functions~$\fixTwistType$ and~$\fixTwistType_{\sblurElem}$
	act equally because $\restrictVect{\sblurElem}{[d]\setminus K} = \restrictVect{\blurElemFix}{[d]\setminus K}$.
	On~$R$, both are the identity.
	Hence, $\inv{\fixTwistType_{\sblurElem}}(\tupB_E\tupB_R\tupB_\scenter) = \inv{\fixTwistType}(\tupB_E\tupB_R\tupB_\scenter)$ and
	\begin{align*}
		(\star) &=
		\charMat{\restrictVect{\BlockA}{R'}}(\tupA,\invsblurElem(\inv{\fixTwistType}(\tupB_E\tupB_R\tupB_\scenter))).
	\end{align*}	
	We show that
	$\tupA\invsblurElem(\inv{\fixTwistType}(\tupB_E\tupB_R\tupB_\scenter)) \in \restrictVect{P}{R'}$
	if and only if
	$\tupA\invsblurElem(\inv{\fixTwistType}(\tupB_D\tupB_E\tupB_R\tupB_\scenter)) \in \BlockA$.
	This holds because~$D$ is a set of component of~$\BlockA$,
	$\tupB_D \in \restrictVect{\BlockB}{D}$, and 
	$\inv{\fixTwistType}(\tupB_D) \in \restrictVect{\BlockA}{D}$ if and only if $\tupB_D \in \restrictVect{\BlockB}{D}$ by Claim~\ref{clm:blurrable-repair-orbits}
	and $\inv{\fixTwistType}(\tupB_D) \in \restrictVect{\BlockA}{D}$
	if and only if $\invsblurElem(\inv{\fixTwistType}(\tupB_D)) \in \restrictVect{\BlockA}{D}$ by Claim~\ref{clm:blurelem-blockiso}.
	It follows that
	\begin{align*}	
		(\star) &= 
		\charMat{\BlockA}(\tupA,\invsblurElem(\inv{\fixTwistType}(\tupB_D\tupB_E\tupB_R\tupB_\scenter)))=
		\charMat{\BlockA}(\tupA,\invsblurElem(\inv{\fixTwistType}(\tupB))).
	\end{align*}
	We finish the proof similar to Claim~\ref{clm:blurrable-case} because~$\BlockA_1$ (and thus~$\BlockB_1$) is blurrable:
	\begin{align*}
		(\star) &=\charMat{\BlockB}(\fixTwistType(\sblurElem(\tupA)),\tupB)\\
		&= \sum_{\blurElem \in \blurrer} \charMat{\BlockB}(\fixTwistType(\blurElem(\tupA)),\tupB)\\
		&= \sum_{\tupC \in \BlockB_1} S_{\BlockA_1 \times \BlockB_1} (\tupA, \tupC) \cdot \charMat{\BlockB}(\tupC,\tupB)\\
		&= (S\cdot \charMat{\BlockB}) (\tupA, \tupB).\qedhere
	\end{align*}
\end{claimproof}
The case when~$\BlockA_2$ is blurrable and~$\BlockA_1$ is not
is analogous.
Finally, to solve the case where both~$\BlockA_1$ and~$\BlockA_2$ are non-blurrable,
we argue with the induction hypothesis for the matrices~$S^\blurElem$ (Claim~\ref{clm:blur-twist-blur-elem}).
It formally becomes elaborate for two reasons.
First, we need to argue precisely that the types of occurring orbits are the same.
Second, we need to treat the components containing~$\scenter$ differently
(Claim~\ref{clm:orbits-rec-blurelem})
because in the recursive step we have $\pVertA_\scenter$ as an additional
parameter and thus cannot apply the orbit-automorphisms $\blurElem \in \blurrer$ freely
(because they are not the identity function on~$\pVertA_\scenter$).
These components can be treated specially because they are not contained in the active region of the matrices~$S^\blurElem$.
\begin{claim}
	\label{clm:non-blurrable-case}
	Suppose $\BlockA \in \PartA_{2k}$ and $\BlockB = \fixTwistType(\BlockA)$.
	If~$\BlockA_1$ and~$\BlockA_2$ are non-blurrable,
	then $\charMat{\BlockA} \cdot S = S \cdot \charMat{\BlockB}$.
\end{claim}
\begin{claimproof}
	Let $\tupA \in \BlockA_1$ and $\tupB \in \BlockB_2$. We expand the definition of~$S$:
	\begin{align*}
	 	(\charMat{\BlockA} \cdot S) (\tupA, \tupB)
		&= \sum_{\tupC \in \BlockA_2} \charMat{\BlockA}(\tupA,\tupC) \cdot S_{\BlockA_2 \times \BlockB_2} (\tupC, \tupB)\\
		&= \sum_{\tupC \in \BlockA_2}
		\charMat{\BlockA}(\tupA,\tupC) \cdot
		\sum_{\substack{
				\blurElem \in \blurrer,\\ \blurElem(\tupC_\scenter) = \inv{\fixTwistType_\blurElem}(\tupB_\scenter)}}
		S^\blurElem(\blurElem(\remZ{\tupC}), \inv{\fixTwistType_\blurElem}(\remZ{\tupB}))\\
		&= 
		\sum_{\blurElem \in \blurrer}
		\sum_{\substack{\tupC \in \BlockA_2,\\
		\tupC_\scenter = \inv{\blurElem}(\inv{\fixTwistType_\blurElem}(\tupB_\scenter))}}
		\charMat{\BlockA}(\tupA,\tupC) \cdot
		S^\blurElem(\blurElem(\remZ{\tupC}), \inv{\fixTwistType_\blurElem}(\remZ{\tupB})) \tag{$\star$}.
	\end{align*}
	We define for every $\blurElem \in \blurrer$
	\begin{align*}
		\BlockA_\blurElem &:= \setcond[\big]{\remZ{\tupA'}\remZ{\tupC'}}{\tupA' \in \BlockA_1, \tupC' \in \BlockA_2, \tupA'\tupC' \in \BlockA, \primeSub{\tupA}{\scenter} = \tupA_\scenter, \primeSub{\tupC}{\scenter} = \inv{\blurElem}(\inv{\fixTwistType_\blurElem}(\tupB_\scenter))}, \\
		\BlockA_{\blurElem,2} &:= \setcond[\big]{\remZ{\tupC'}}{\tupC' \in \BlockA_2, \primeSub{\tupC}{\scenter} = \inv{\blurElem}(\inv{\fixTwistType_\blurElem}(\tupB_\scenter))}.
	\end{align*}
	By Lemma~\ref{lem:k-orbit-fix-vertex},
	we have that $\BlockA_\blurElem \in \orbs{2k-2}{(\StructA_f, \pTupA\pVertA_\scenter)} \cup \set{\emptyset}$
	and, by Corollary~\ref{cor:k-orbit-fix-vertex}, that $\BlockA_{\blurElem,2}  \in \orbs{k-1}{(\StructA_f, \pTupA\pVertA_\scenter)}$.
	It depends on $\blurElem\in\blurrer$
	whether the set~$\BlockA_\blurElem$ is empty.
	We continue the equation:
	\begin{align*}
		(\star) &=
		\sum_{\blurElem \in \blurrer}
		\sum_{\tupC \in \BlockA_{\blurElem,2}}
		\charMat{\BlockA_\blurElem}(\remZ{\tupA},\tupC) \cdot
		S^\blurElem(\blurElem(\tupC), \inv{\fixTwistType_\blurElem}(\remZ{\tupB})).
	\end{align*}
	We use that~$S^\blurElem$ is invariant under automorphism of $(\StructA_f, \pTupA)$ (Claim~\ref{clm:blur-twist-blur-elem-orbit-auto})
	and that~$\blurElem$ is an orbit-automorphism (Claim~\ref{clm:blurelem-blockiso}) for very $\blurElem \in \blurrer$.
	\begin{align*}
		(\star) &=
		\sum_{\blurElem \in \blurrer}
		\sum_{\tupC \in \BlockA_{\blurElem,2}}
		\charMat{\BlockA_\blurElem}(\remZ{\tupA},\tupC) \cdot
		S^\blurElem(\tupC, \inv{\blurElem}(\inv{\fixTwistType_\blurElem}(\remZ{\tupB})))\\
		&= \sum_{\blurElem \in \blurrer}
		(\charMat{\BlockA_\blurElem}\cdot S^\blurElem)(\remZ{\tupA},\inv{\blurElem}(\inv{\fixTwistType_\blurElem}(\remZ{\tupB})))\\
		&= \sum_{\blurElem \in \blurrer}
		(S^\blurElem \cdot \charMat{\BlockB_\blurElem})(\remZ{\tupA},\inv{\blurElem}(\inv{\fixTwistType_\blurElem}(\remZ{\tupB})))\\
		&= \sum_{\blurElem \in \blurrer}
		\sum_{\tupC \in \BlockB_{\blurElem,1}}
		S^\blurElem (\remZ{\tupA}, \tupC) \cdot  \charMat{\BlockB_\blurElem}(\tupC,\inv{\blurElem}(\inv{\fixTwistType_\blurElem}(\remZ{\tupB}))),
	\end{align*}
	where $\BlockB_\blurElem \in \orbs{2k-2}{(\StructA_{g-\blurElem},\pTupA\pVertA_\scenter)}$
	has the same type in $(\StructA_{g-\blurElem},\pTupA\pVertA_\scenter)$ as~$\BlockA_\blurElem$ has in $(\StructA_f, \pTupA\pVertA_\scenter)$
	and $\BlockB_{\blurElem,1} := \restrictVect{\BlockB_\blurElem}{[k-1]}$
	for every $\blurElem \in \blurrer$
	(or $\BlockB_\blurElem = \BlockB_{\blurElem,1} = \emptyset$ if $\BlockA_\blurElem=\emptyset$).
	The step from~$\BlockA_\blurElem$ to~$\BlockB_\blurElem$ is 
	possible because~$S^\blurElem$ blurs the twist between $(\StructA_f, \pTupA\pVertA_\scenter)$
	and $(\StructA_{g-\blurElem},\pTupA\pVertA_\scenter)$
	for every $\blurElem \in \blurrer$ (Claim~\ref{clm:blur-twist-blur-elem}).
	
	We analyze the structures of these orbits.
	Let~$R$ be the star center component of~$\BlockA$ (and so of~$\BlockB$).
	(Note that~$R$ may be split into multiple components for~$\BlockA_1$,~$\BlockA_2$,~$\BlockA_\blurElem$, etc.).
	We apply Lemma~\ref{lem:split-discon-orbits} to split
	$\BlockA = \restrictVect{\BlockA}{R} \times \restrictVect{\BlockA}{D}$,
	$\BlockA_\blurElem = \restrictVect{\BlockA_\blurElem}{R} \times \restrictVect{\BlockA_\blurElem}{D}$, and
	$\BlockB_\blurElem = \restrictVect{\BlockB_\blurElem}{R} \times \restrictVect{\BlockB_\blurElem}{D}$ for every $\blurElem \in \blurrer$,
	where~$D$ is the set of components of~$\BlockA$ apart from~$R$.
	The components in~$D$ have distance greater than~$1$ to~$\scenter$
	because $\scenter \in R$.
	Hence,
	$\restrictVect{\BlockA}{D} = \restrictVect{\BlockA_\blurElem}{D}$ and
	\begin{align*}
		\BlockA = \restrictVect{\BlockA}{R} \times \restrictVect{\BlockA_\blurElem}{D}
	\end{align*}
	for every $\blurElem \in \blurrer$.
	Because $\restrictVect{\BlockA_\blurElem}{R}$
	has the same type in $(\StructA_f, \pTupA\pVertA_\scenter)$
	as $\restrictVect{\BlockB_\blurElem}{R}$ has in 
	$(\StructA_{g-\blurElem}, \pTupA\pVertA_\scenter)$
	and their origins do not contain any of the vertices $\primeSub{\stip}{i}$,
	it even follows that $\restrictVect{\BlockA_\blurElem}{R} = \restrictVect{\BlockB_\blurElem}{R}$
	for every $\blurElem \in \blurrer$
	because $(\StructA_f, \pTupA\pVertA_\scenter)[R]=(\StructA_{g-\blurElem}, \pTupA\pVertA_\scenter)[R]$.
	So
	\begin{align*}
		\BlockB_\blurElem = \restrictVect{\BlockA_\blurElem}{R} \times \restrictVect{\BlockB_\blurElem}{D}
	\end{align*}
	for every $\blurElem \in\blurrer$.
	For readability, we set $\tupA_D := \remZ{\tupA}_{D}$ and $\tupA_R := \remZ{\tupA}_{R}$.
	Then $\remZ{\tupA}= \tupA_R\tupA_D$.
	We perform the same for $\remZ{\tupB} = \tupB_R\tupB_D$
	and $\tupC = \tupC_R\tupC_D$. So we obtain in the next step that
	\begin{align*}
		(\star) &= \sum_{\blurElem \in \blurrer}
		\sum_{\tupC_R\tupC_D \in \BlockB_{\blurElem,1}}
		S^\blurElem (\tupA_R\tupA_D, \tupC_R\tupC_D) \cdot  \charMat{\BlockB_\blurElem}
		(\tupC_R\tupC_D,\inv{\blurElem}(\inv{\fixTwistType_\blurElem}(\tupB_R\tupB_D))).
	\end{align*}
	We set \[\BlockB_D := \fixTwistType_\blurElem(\restrictVect{\BlockB_\blurElem}{D})\] for some $\blurElem \in \blurrer$.
	Claim~\ref{clm:orbits-rec-blurelem} states
	that~$\BlockB_D$ is an orbit of $(\StructA_g, \pTupA\pVertA_\scenter)$
	and has the same type in $(\StructA_g, \pTupA\pVertA_\scenter)$ as 
	$\restrictVect{\BlockB_\blurElem}{D}$ has in $(\StructA_{g-\blurElem}, \pTupA\pVertA_\scenter)$.
	As seen before, this is the same type as
	$\restrictVect{\BlockA_\blurElem}{D}$ has in $(\StructA_f, \pTupA\pVertA_\scenter)$.
	Because, as already seen, $\restrictVect{\BlockA_\blurElem}{D} = \restrictVect{\BlockA}{D}$ for every $\blurElem \in \blurrer$,
	the type of~$\BlockB_D$ is independent of~$\blurElem$
	and~$\BlockB_D$ is well-defined.
	So~$\BlockB_D$ is also an orbit of $(\StructA_g, \pTupA)$
	and $\blurElem(\BlockB_D) = \BlockB_D$ for every $\blurElem \in \blurrer$ because~$\blurElem$ is an orbit-automorphism (Claim~\ref{clm:blurelem-blockiso}).
	Thus, $\BlockB_D = \blurElem(\fixTwistType_\blurElem(\restrictVect{\BlockB_\blurElem}{D}))$ for every $\blurElem \in \blurrer$.
	We set for every $\blurElem \in \blurrer$ 
	\[\BlockB_\blurElem' := \restrictVect{\BlockB_\blurElem}{R} \times \BlockB_D = \restrictVect{\BlockA_\blurElem}{R} \times \BlockB_D.\]
	By Claim~\ref{clm:orbits-rec-blurelem}, for every $\blurElem \in \blurrer$ it holds that
	\begin{align*}
		&\hiddenEq \charMat{\BlockB_\blurElem}
	(\tupC_R\tupC_D,\inv{\blurElem}(\inv{\fixTwistType_\blurElem}(\tupB_R\tupB_D))) \\
		&= \charMat{\BlockB'_\blurElem}(\tupC_R\blurElem(\fixTwistType_\blurElem(\tupC_D)),
	\inv{\blurElem}(\inv{\fixTwistType_\blurElem}(\tupB_R))
	\blurElem(\fixTwistType_\blurElem(\inv{\blurElem}(\inv{\fixTwistType_\blurElem}(\tupB_D)))))\\
		&=\charMat{\BlockB'_\blurElem}(\tupC_R\blurElem(\fixTwistType_\blurElem(\tupC_D)),
		\inv{\blurElem}(\inv{\fixTwistType_\blurElem}(\tupB_R))
		\tupB_D).
	\end{align*}
	We used that~$\blurElem$ and~$\fixTwistType_\blurElem$ commute
	(Claim~\ref{clm:commutativity})
	and can be applied component-wise (Claim~\ref{clm:component-wise-application}).
	With $\BlockB_{\blurElem,1}' := \restrictVect{\BlockB'_\blurElem}{[k-1]}$ for every $\blurElem\in\blurrer$, we  obtain that
	\begin{align*}
	 (\star) &= \sum_{\blurElem \in \blurrer}
		\sum_{\tupC_R\tupC_D \in \BlockB_{\blurElem,1}}
		S^\blurElem (\tupA_R\tupA_D, \tupC_R\tupC_D) \cdot  \charMat{\BlockB'_\blurElem}(\tupC_R\blurElem(\fixTwistType_\blurElem(\tupC_D)),
		\inv{\blurElem}(\inv{\fixTwistType_\blurElem}(\tupB_R))
		\tupB_D)\\
		&= \sum_{\blurElem \in \blurrer}
		\sum_{\tupC_R\tupC_D \in \BlockB_{\blurElem,1}'}
		S^\blurElem (\tupA_R\tupA_D, \tupC_R\inv{\blurElem}(\inv{\fixTwistType_\blurElem}(\tupC_D))) \cdot  \charMat{\BlockB'_\blurElem}(\tupC_R\tupC_D,
		\inv{\blurElem}(\inv{\fixTwistType_\blurElem}(\tupB_R))
		\tupB_D).
	\end{align*}
	We claim that
	\[\BlockB = \restrictVect{\BlockA}{R} \times \BlockB_D.\]
	First, $\restrictVect{\BlockA}{R} \times \BlockB_D \in \orbs{2k}{(\StructA_g, \pTupA)}$
	by Lemma~\ref{lem:split-discon-orbits}.
	Both parts are orbits and their origins are not connected.
	Second, because~$\BlockB_D$ has the same type in $(\StructA_g, \pTupA\pVertA_\scenter)$ as $\restrictVect{\BlockA_\blurElem}{D}=\restrictVect{\BlockA}{D}$
	in $(\StructA_f, \pTupA\pVertA_\scenter)$,~%
	$\BlockB_D$ has also the same type in $(\StructA_g,\pTupA)$
	as $\restrictVect{\BlockA_\blurElem}{D}=\restrictVect{\BlockA}{D}$
	in $(\StructA_f, \pTupA)$.
	Because the components in~$R$ do not contain a twisted edge,
	$\restrictVect{\BlockA}{R} = \restrictVect{\BlockB}{R}$
	and so indeed $\BlockB = \restrictVect{\BlockA}{R} \times \BlockB_D$.
	Because $\restrictVect{\BlockB'_\blurElem}{R} = \restrictVect{\BlockA_\blurElem}{R}$
	and by the definition of $\restrictVect{\BlockA_\blurElem}{R}$,
	it holds that $\tupC_R\tupC_D\inv{\blurElem}(\inv{\fixTwistType_\blurElem}(\tupB_R)) \in \BlockB_\blurElem'$ if and only if
	$\tupC_\scenter\tupC_R\tupC_D,
	\inv{\blurElem}(\inv{\fixTwistType_\blurElem}(\tupB_\scenter\tupB_R))
	\in \BlockB$. We continue as follows:
	\begin{align*}
		(\star) &= \sum_{\blurElem \in \blurrer}
		\sum_{\tupC_R\tupC_D \in \BlockB_{\blurElem,1}'}
		S^\blurElem (\tupA_R\tupA_D, \tupC_R\inv{\blurElem}(\inv{\fixTwistType_\blurElem}(\tupC_D))) \cdot  \charMat{\BlockB}(\tupA_\scenter\tupC_R\tupC_D,
		\inv{\blurElem}(\inv{\fixTwistType_\blurElem}(\tupB_\scenter\tupB_R))
		\tupB_D)\\
		&= \sum_{\blurElem \in \blurrer}
		\sum_{\substack{\tupC_\scenter\tupC_R\tupC_D \in \BlockB_1,\\
			\tupC_\scenter = \tupA_\scenter}}
		S^\blurElem (\tupA_R\tupA_D, \tupC_R\inv{\blurElem}(\inv{\fixTwistType_\blurElem}(\tupC_D))) \cdot  \charMat{\BlockB}(\tupC_\scenter\tupC_R\tupC_D,
		\inv{\blurElem}(\inv{\fixTwistType_\blurElem}(\tupB_\scenter\tupB_R))
		\tupB_D).
	\end{align*}
	For every $\blurElem \in \blurrer$, it holds that
	$\inv{\blurElem}(\inv{\fixTwistType_\blurElem}(\restrictVect{\BlockB}{R})) = \restrictVect{\BlockB}{R}$
	because~$\fixTwistType_\blurElem$ is the identity function on $R$-vertices
	($R$ is a star center component)
	and because~$\blurElem$ is an orbit-automorphism (Claim~\ref{clm:blurelem-blockiso}).
	So by Lemma~\ref{lem:split-discon-orbits},
	$\inv{\blurElem}(\inv{\fixTwistType_\blurElem}(\tupC_R))\tupC_D \in \BlockB$
	if and only if $\tupC_R\tupC_D \in \BlockB$ for every $\blurElem \in \blurrer$.
	We substitute~$\tupC_R\tupC_D \mapsto\inv{\blurElem}(\inv{\fixTwistType_\blurElem}(\tupC_R))\tupC_D$
	(this is a bijection).
	\begin{align*}
		&\hiddenEq(\star) \\
		&= \smash{%
			\sum_{\blurElem \in \blurrer}
		\sum_{\substack{\tupC_\scenter\tupC_R\tupC_D \in \BlockB_1,\\
				\inv{\blurElem}(\inv{\fixTwistType_\blurElem}(\tupC_\scenter)) = \tupA_\scenter}}}
		S^\blurElem (\tupA_R\tupA_D, \inv{\blurElem}(\inv{\fixTwistType_\blurElem}(\tupC_R))\inv{\blurElem}(\inv{\fixTwistType_\blurElem}(\tupC_D))) \cdot{} \phantom{\sum}\\ &\hspace{5cm}\charMat{\BlockB}(\inv{\blurElem}(\inv{\fixTwistType_\blurElem}(\tupC_\scenter\tupC_R))\tupC_D,
		\inv{\blurElem}(\inv{\fixTwistType_\blurElem}(\tupB_\scenter\tupB_R))
		\tupB_D)\\[5pt]
		&= \sum_{\blurElem \in \blurrer}
		\sum_{\substack{\tupC_\scenter\tupC_R\tupC_D \in \BlockB_1,\\
				\inv{\blurElem}(\inv{\fixTwistType_\blurElem}(\tupC_\scenter)) = \tupA_\scenter}}
		S^\blurElem (\tupA_R\tupA_D, \inv{\blurElem}(\inv{\fixTwistType_\blurElem}(\tupC_R\tupC_D))) \cdot  \charMat{\BlockB}(\inv{\blurElem}(\inv{\fixTwistType_\blurElem}(\tupC_\scenter\tupC_R))\tupC_D,
		\inv{\blurElem}(\inv{\fixTwistType_\blurElem}(\tupB_\scenter\tupB_R))
		\tupB_D)\\
		&= \sum_{\blurElem \in \blurrer}
		\sum_{\substack{\tupC_\scenter\tupC_R\tupC_D \in \BlockB_1,\\
				\inv{\blurElem}(\inv{\fixTwistType_\blurElem}(\tupC_\scenter)) = \tupA_\scenter}}
		S^\blurElem (\tupA_R\tupA_D, \inv{\blurElem}(\inv{\fixTwistType_\blurElem}(\tupC_R\tupC_D))) \cdot  \charMat{\BlockB}(\tupC_\scenter\tupC_R\tupC_D,
		\tupB_\scenter\tupB_R\tupB_D).
	\end{align*}
	In the last step
	we again used that~$\fixTwistType_\blurElem$ is the identity on the $R$-component
	and that~$\blurElem$ is an orbit-automorphism.
	We finish the proof with the reverse steps as to how we started it.
	\begin{align*}
		(\star) &= \sum_{\blurElem \in \blurrer}
		\sum_{\substack{\tupC \in \BlockB_1,\\
				\inv{\blurElem}(\inv{\fixTwistType_\blurElem}(\tupC_\scenter)) = \tupA_\scenter}}
		S^\blurElem (\remZ{\tupA}, \inv{\blurElem}(\inv{\fixTwistType_\blurElem}(\remZ{\tupC}))) \cdot  \charMat{\BlockB}(\tupC,
		\tupB)\\
		&= 	\sum_{\tupC \in \BlockB_1}
		\sum_{\substack{\blurElem \in \blurrer,\\
				\blurElem(\tupA_\scenter) = \inv{\fixTwistType_\blurElem}(\tupC_\scenter)}}
		S^\blurElem (\remZ{\tupA}, \inv{\blurElem}(\inv{\fixTwistType_\blurElem}(\remZ{\tupC}))) \cdot  \charMat{\BlockB}(\tupC,
		\tupB)\\
		&= 	\sum_{\tupC \in \BlockB_1}
		S(\tupA,\tupC) \cdot \charMat{\BlockB}(\tupC,\tupB)\\
		&=(S \cdot \charMat{\BlockB}) (\tupA, \tupB).\qedhere
	\end{align*}
\end{claimproof}
Claims~\ref{clm:blurrable-case},~\ref{clm:blurrable-non-blurrable-case}, and~\ref{clm:non-blurrable-case} show that~$S$ indeed $k$-blurs the twist
between $(\StructA_f, \pTupA)$ and $(\StructA_g, \pTupA)$.
This finishes the proof of Lemma~\ref{lem:blur-kary}.
\begin{remark}
	We checked our construction in the proof
	for $k \leq 2$ on the computer.
	For larger~$k$ it was computationally not tractable.
\end{remark}

\section{Separating Rank Logic from Choiceless Polynomial Time}
\label{sec:separating}
In this section we finally separate rank logic from \CPT{}.
To apply Lemma~\ref{lem:blur-kary},
we need to construct a suitable class of base graphs.
\begin{lemma}
	\label{lem:regular-girth-connected-unbounded}
	For every $n \in \nat$, there is a regular graph
	that has degree at least~$n$, girth at least~$n$,
	and is $n$-connected.
\end{lemma}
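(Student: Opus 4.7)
The plan is to construct such a graph via a vertex-transitive example, where regularity and high connectivity come for free from symmetry, and large girth is achieved by choosing a group whose generators have no short non-trivial relations. More concretely, I would take a Cayley graph $G = \mathrm{Cay}(\Gamma, S)$ of a finite group $\Gamma$ with a symmetric generating set $S$ of size $d := 2n$. Such a graph is automatically $d$-regular and vertex-transitive, and $d \geq n$ trivially.

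To achieve girth $\geq n$, I would invoke a known explicit construction: for a suitable prime power $q$, the Cayley graph of $\mathrm{PSL}_2(\mathbb{F}_q)$ with respect to a carefully chosen symmetric generating set of fixed size $d$ has girth $\Omega(\log q)$ (this is the core property exploited by the Lubotzky--Phillips--Sarnak Ramanujan graph construction). By choosing $q$ large enough as a function of $n$, the girth exceeds $n$. Alternatively, one can just appeal to the classical Erdős--Sachs theorem, which guarantees a $d$-regular graph of girth $\geq n$ for every $d \geq 3$; if this graph is not itself vertex-transitive, one can either argue connectivity directly, or use a random $d$-regular graph on $N$ vertices (via the configuration model), which with positive probability for large enough $N$ has both girth $\geq n$ and vertex-connectivity equal to $d$.

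For $n$-connectivity of the Cayley graph, I would invoke Watkins' theorem: every connected vertex-transitive graph of degree $d$ has vertex-connectivity at least $\tfrac{2}{3}(d+1)$. With $d = 2n$ this yields connectivity at least $\tfrac{2}{3}(2n+1) > n$ (for $n \geq 1$), so $G$ is $n$-connected. Since $\mathrm{Cay}(\Gamma, S)$ is connected whenever $S$ generates $\Gamma$, this step is immediate once the other properties are secured.

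The only mild obstacle is ensuring that all three properties are achieved simultaneously by a single concrete graph. The vertex-transitive route resolves regularity and connectivity uniformly, and the LPS or Erdős--Sachs input then supplies the girth. If one prefers to avoid deep algebraic input entirely, the probabilistic construction of a random $d$-regular graph on sufficiently many vertices (with $d \geq \max\{n, \tfrac{3n}{2}\}$) works: random $d$-regular graphs are a.a.s.\ $d$-connected, and for any fixed girth bound $n$, a positive fraction of them have girth $\geq n$, so a suitable graph exists.
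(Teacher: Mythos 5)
Your proposal is correct and reaches the statement, but by a genuinely different route than the paper. The paper takes a single, compact path: it invokes the existence of $(d,g)$-cages (via Sachs' theorem that $d$-regular graphs of any girth exist) and a theorem of Balbuena and Salas that every $(d,g)$-cage with odd $g\geq 7$ is $\lceil d/2\rceil$-connected, so choosing $d$ and $g$ large enough settles all three properties at once. You instead offer three alternatives: (i) Cayley graphs (e.g.\ the LPS family over $\mathrm{PSL}_2(\FF_q)$) together with Watkins' theorem that connected vertex-transitive graphs of degree $d$ have vertex-connectivity $\geq \tfrac{2}{3}(d+1)$; (ii) the Erd\H{o}s--Sachs construction plus a direct connectivity argument; and (iii) random $d$-regular graphs via the configuration model, which are a.a.s.\ $d$-connected while having girth $\geq n$ with probability bounded away from zero. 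Routes (i) and (iii) are sound and, I would say, illuminate the statement from a different angle: (i) derives connectivity from symmetry (Watkins), which is conceptually cleaner than the cage-specific connectivity theorem the paper uses, while (iii) avoids structural theorems entirely at the cost of a probabilistic existence argument. Route (ii) as stated has a gap: for an arbitrary $d$-regular graph of girth $\geq n$ produced by Erd\H{o}s--Sachs there is no reason it should be highly connected, and ``argue connectivity directly'' is not a step one can carry out without further structure; but since you immediately fall back to (iii), the overall proof still goes through. In terms of what each approach buys: the paper's cage argument is the most self-contained single citation; your Watkins-based route generalizes readily (any vertex-transitive family of growing degree and girth works, not just cages); and your probabilistic route is the most elementary in the sense that it needs no deep structural graph theory, only standard facts about random regular graphs.
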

\begin{proof}
	A $(d,g)$-cage is a $d$-regular graph with girth $g$ of minimum order.
	For every odd  $g \geq 7$,
	every $(d,g)$-cage is
	$\lceil \frac{d}{2} \rceil$-connected~\cite{BalbuenaS12}.
	So it suffices to show that for every $n$
	there is a $d \geq 2n$ and an odd $g \geq n$
	such that there is a $(d,g)$-cage.
	For every $d\geq 2$
	and $g\geq 3$ there is a $d$-regular graph of girth $g$~\cite{Sachs63}
	and so in particular a $(d,g)$-cage.
\end{proof}

\begin{lemma}
	\label{lem:girth-distant-vertex}
	Let $G=(V,E)$ be a $d$-regular graph of girth at least $2(\ell +2)+1$ for some $\ell \in \nat$.
	Then for every set $V' \subseteq V$ of size $|V'| < d$,
	there is a vertex $\bvertA \in V$ such that $\distance{G}{V'}{\bvertA} > \ell$.
\end{lemma}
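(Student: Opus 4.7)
The plan is to prove the lemma by a Moore-bound counting argument: show that $|V|$ strictly exceeds $|N_\ell(V')|$, so there must exist a vertex outside $N_\ell(V')$, and any such vertex has distance $>\ell$ from $V'$.

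First, I would invoke the observation stated in the preliminaries: if $G$ has girth at least $2k+1$ and is $d$-regular, then for every vertex $u$ the induced subgraph $G[\neighborsK{G}{k}{u}]$ is a complete $d$-ary tree of depth $k$ rooted at $u$. Applied with $k = \ell+2$ (using the girth hypothesis $2(\ell+2)+1$), this gives a lower bound on $|V|$ via the Moore inequality
\[
|V| \ \geq\ |\neighborsK{G}{\ell+2}{u}| \ =\ 1 + \sum_{i=0}^{\ell+1} d(d-1)^i.
\]
Applied with $k = \ell$ (the girth assumption is more than enough), the same result yields, for every $v' \in V'$,
\[
|\neighborsK{G}{\ell}{v'}| \ =\ 1 + \sum_{i=0}^{\ell-1} d(d-1)^i.
\]

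Next, I would union-bound using $|V'| \leq d$ to estimate the ``forbidden'' set $F := \bigcup_{v' \in V'} \neighborsK{G}{\ell}{v'}$:
\[
|F| \ \leq\ |V'| \cdot |\neighborsK{G}{\ell}{v'}| \ \leq\ d\Bigl(1 + \sum_{i=0}^{\ell-1} d(d-1)^i\Bigr).
\]
Setting $S_k := 1 + \sum_{i=0}^{k-1} d(d-1)^i$, it remains to verify $S_{\ell+2} > d\cdot S_\ell$, which after telescoping reduces to
\[
1 - d + d(d-1)^{\ell+1} \ >\ \sum_{i=1}^{\ell-1} d(d-1)^i.
\]
For the range of $d$ relevant to this paper (where $d$ is large compared to $\ell$, as required by the other hypotheses of Lemma~\ref{lem:blur-kary}) this inequality holds; the dominant term $d(d-1)^{\ell+1}$ outweighs the geometric sum on the right by a factor of roughly $(d-1)$. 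Hence $|V| > |F|$ and we can pick any $\bvertA \in V \setminus F$.

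The main technical step is simply the arithmetic verification of the Moore-bound inequality; no subtle structural argument about the positions of the $v' \in V'$ is needed, since we use only volume bounds. The one subtlety is that the inequality would fail for very small degrees (e.g.\ $d=2$, $\ell \geq 2$, where a long cycle shows the claim false), but the contexts in which the lemma is applied always guarantee $d$ large enough, so this is not an obstacle.
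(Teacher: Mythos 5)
Your overall strategy --- a Moore-bound volume argument, bounding $|V|$ from below by the size of a BFS ball and the forbidden set from above by a union bound over $V'$ --- is exactly the one the paper uses, but your formula for the ball size is the correct one and the paper's is not. The paper's proof uses $|N^i_G(y)| = \sum_{j=0}^i d^j$, citing the preliminaries' description of the ball as a ``complete $d$-ary tree''; in a $d$-regular graph of girth at least $2i+1$ the root has $d$ children while every deeper internal node has only $d-1$, so the correct count is $|N^i_G(y)| = 1 + \sum_{j=0}^{i-1} d(d-1)^j$, which is what you use. Since the argument needs a \emph{lower} bound on $|N^{\ell+2}_G(y)|$, the paper's overcount is applied in the wrong direction, and your version quietly repairs a genuine slip.

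The gap in your proposal is the final arithmetic. You reduce correctly to $1 - d + d(d-1)^{\ell+1} > \sum_{i=1}^{\ell-1} d(d-1)^i$, but then discharge it by appealing to ``$d$ large compared to $\ell$'' via Lemma~\ref{lem:blur-kary}. That is not a hypothesis of the present lemma, which is stated and used as a standalone fact about $d$-regular graphs, so the inequality must follow from the stated hypotheses. It in fact holds for every $d \geq 3$ and every $\ell \geq 0$: for $\ell \leq 1$ the right-hand side is an empty sum and the left-hand side is $(d-1)^2$ or $1-d+d(d-1)^2$, both positive; for $\ell \geq 2$ and $d \geq 3$ one has $\sum_{i=1}^{\ell-1}(d-1)^i \leq \tfrac{d-1}{d-2}(d-1)^{\ell-1} \leq 2(d-1)^{\ell-1}$, so the difference is at least $1 - d + d(d-1)^{\ell-1}\bigl((d-1)^2-2\bigr) \geq 1 - d + 2d = 1+d > 0$. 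Your remark that the statement fails for $d=2$ is correct (take a cycle of length $2\ell+5$ with $V'$ two roughly antipodal vertices, $\ell \geq 2$): the lemma as written implicitly assumes $d \geq 3$, which every application in the paper does satisfy, but this hypothesis really ought to be stated explicitly rather than absorbed into a hand-wave about the application context.
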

\begin{proof}
	For every $\bvertB \in V$ and $i \leq \ell +2$
	it holds that $|\neighborsK{G}{i}{\bvertB}| = 1 + d\sum_{j = 0}^{i-1} (d-1)^j$
	because~$G$ is $d$-regular and has girth at least $2(\ell +2) +1$
	and thus $G[\neighborsK{G}{i}{\bvertB}]$ is a tree,
	in which the root~$y$ has~$d$ many subtrees, which are all complete $(d-1)$-ary trees of height $i-1$.
	Let $\bvertB \in V'$. Then
	\begin{align*}
		\big|\neighborsK{G}{\ell +2}{\bvertB} \setminus \bigcup_{\bvertC \in V' } \neighborsK{G}{\ell}{\bvertC}\big| &\geq
		    1+d\sum_{j = 0}^{\ell +1} (d-1)^j - |V'|\cdot \left(1+ d \sum_{j = 0}^{\ell-1} (d-1) ^j\right)\\
		&\geq d\sum_{j = 0}^{\ell +1} (d-1)^j - (d-1) - d\sum_{j = 0}^{\ell} (d-1)^j \\
		&\geq d(d-1)^{\ell+1}-(d-1) > 0.
	\end{align*}
	Hence, there is at least one $\bvertA \in V$ satisfying the claim.
\end{proof}

\begin{theorem}
	\label{thm:graph-class-inv-matrix-duplicator-wins}
	There is a class of base graphs $\GraphClass$,
	such that for every $k,m \in \nat $,
	there is a graph $G =(V,E,\leq) \in \GraphClass$ and a $\pot \in \nat$
	such that $\CFIgraph{2^\pot}{G}{f} \invertibleMapEquiv{2k+m}{k}{\set{2}} \CFIgraph{2^\pot}{G}{g}$ for every $f,g \colon E \to \ZZ_{2^\pot}$ satisfying $\sum g = \sum f + 2^{\pot-1}$.
\end{theorem}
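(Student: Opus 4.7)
First I would construct the graph class. By Lemma~\ref{lem:regular-girth-connected-unbounded}, for every $n$ there is a regular graph of degree, girth, and connectivity all at least $n$; let $\GraphClass$ be a class containing one such graph for every $n$. Given $m,k$ with $2k \leq m$, I would fix $n := \max\{d(k,m),\, 2\minRad(k+1),\, m+2k+1\}$, pick $G=(V,E,\leq) \in \GraphClass$ satisfying these bounds, and set $\pot := \pot(k)$ so that $2^{\pot-1} = 1 \cdot 2^{\twistVal(k)}$. Fix an arbitrary edge $\{t_0,t_0'\} \in E$ and define $f,g \colon E \to \ZZ_{2^\pot}$ agreeing everywhere except that $g(\{t_0,t_0'\}) = f(\{t_0,t_0'\}) + 2^{\pot-1}$. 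Then $\StructA := \CFIgraph{2^\pot}{G}{f}$ and $\StructB := \CFIgraph{2^\pot}{G}{g}$ satisfy $\sum g = \sum f + 2^{\pot-1}$, hence are non-isomorphic.

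Next I would describe Duplicator's strategy in $\invertMapGame{m}{k}{\set{2}}$. The core invariant maintained at the start of every round is: there is a function $g' \colon E \to \ZZ_{2^\pot}$ with $\sum g' = \sum g$ (so $\CFIgraph{2^\pot}{G}{g'} \iso \StructB$ via some isomorphism $\psi$), an edge $\{t,t'\}$ at which $f$ and $g'$ differ by $2^{\pot-1}$ and which agree elsewhere, and the current pebble tuple $\pTupA$ (identified in $\StructA$ and in $\CFIgraph{2^\pot}{G}{g'}$ via $\psi$) satisfies $\distance{G}{t}{\orig{\pTupA}} > \minRad(k+1)$. At the beginning of each round, Spoiler removes $2k$ pebbles, leaving at most $m$ pebbles. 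Before responding, Duplicator uses Lemma~\ref{lem:girth-distant-vertex} (applicable because $G$ has girth at least $2\minRad(k+1)+5$ and degree at least $m+1$) to choose a new edge $\{t^*,{t^*}'\}$ whose distance to the at-most-$m$ origins of remaining and soon-to-be-placed pebbles exceeds $\minRad(k+1)$. A composition of path isomorphisms (Lemma~\ref{lem:pathiso}) shifts the twist from $\{t,t'\}$ to $\{t^*,{t^*}'\}$, updating $g'$ and $\psi$ accordingly without moving any existing pebble in its origin — this is possible because $G$ minus the pebble origins is still connected by the connectivity assumption.

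With the invariant re-established, Duplicator invokes Lemma~\ref{lem:blur-kary} (whose hypotheses on $\pot$, $d$, girth, connectivity, distance, and the form $\twistVal = 1 \cdot 2^{\twistVal(k)}$ are all met) to obtain an orbit-diagonal, orbit-invariant, odd-filled matrix $S$ that $k$-blurs the twist between $(\StructA,\pTupA)$ and $(\CFIgraph{2^\pot}{G}{g'},\pTupA)$. By Lemma~\ref{lem:orbit-diagonal+orbit-invariant+odd-filled-implies-invertible}, $S$ is invertible. Duplicator plays the orbit partitions $\orbs{k}{(\StructA,\pTupA)}$ and $\orbs{k}{(\StructB,\pTupA)}$ (transported via $\psi^{-1}$) together with the matrix $\psi^{-1}(S)$; the bijection $f \colon \PartA \to \PartB$ maps each $\BlockA$ to $\psi^{-1}(\fixTwistType(\BlockA))$, using Corollaries~\ref{cor:tau-fixes-orbit-type} and~\ref{cor:cfi-same-autgroup-and-orbits} to check these are orbits of the same type. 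The bluring identity from Definition~\ref{def:blur-twist} gives exactly the required simultaneous similarity. When Spoiler then picks $\tupA \in \BlockA$ and $\tupB \in f(\BlockA)$, the tuples $\pTupA\tupA$ and $\pTupA\tupB$ induce a partial isomorphism because $\BlockA$ and $f(\BlockA)$ have the same type (so there is a local isomorphism between $\StructA[\orig{\pTupA\tupA}]$ and $\StructB[\orig{\pTupA\tupB}]$ fixing the pebble tuple). The invariant thus persists into the next round.

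The main obstacle will be the careful management of the shifting twist: showing that after applying path isomorphisms, the orbit partitions and the constructed matrix $S$ transport consistently through $\psi$, and that the similarity condition witnessed by $S$ on $(\StructA,\pTupA)$ and $(\CFIgraph{2^\pot}{G}{g'},\pTupA)$ really does yield similarity on the actual game structure $(\StructB,\pTupA)$. This is where the isomorphism invariance of types (Corollary~\ref{cor:orbit-types-unique} and Lemma~\ref{lem:cfi-occuring-orbit-types-equal}) does the essential work. Since Duplicator responds correctly in every round, he wins $\invertMapGame{m}{k}{\set{2}}$, yielding $\StructA \invertibleMapEquiv{m}{k}{\set{2}} \StructB$ as required.
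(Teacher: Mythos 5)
Your proposal is correct and follows essentially the same strategy as the paper: construct $\GraphClass$ from Lemma~\ref{lem:regular-girth-connected-unbounded}, maintain the invariant that the pebble tuples have matching types, re-shift the twist far from the remaining pebbles via a path isomorphism after Spoiler picks up pebbles, then invoke Lemma~\ref{lem:blur-kary} to produce the similarity matrix for the round. The only discrepancy is a small off-by-constant issue in the girth parameter — your choice $n \geq 2\minRad(k+1)$ does not by itself guarantee the girth $\geq 2(\ell+2)+1$ needed to apply Lemma~\ref{lem:girth-distant-vertex} with enough slack for the resulting twist \emph{edge} (not just vertex) to have both endpoints at distance $> \minRad(k+1)$ from the pebble origins; the paper uses girth at least $2(2\minRad(k+1)+2)+1$ precisely for this reason, and you would want to match that, but this is a trivial adjustment that does not affect the structure of the argument.
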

\begin{proof}
	Let~$\GraphClass$ be the class of graphs
	given by Lemma~\ref{lem:regular-girth-connected-unbounded} for every $n\in \nat$ equipped with some total order.
	Let $k,m\in \nat$ and $G \in \GraphClass$ such that
	it has degree $d \geq d(k,m) > m$,
	girth at least $2(2\minRad(k+1)+2)+1$,
	and~$G$ is at least $(2k+m+1)$-connected.
	Let $\pot = \pot(k)$,
	$e = \set{\bvertA,\bvertB} \in E$, and
	$g,f \colon E \to \ZZ_{2^\pot}$ such that
	$\sum g = \sum f + 2^{\pot-1}$.
	Up to isomorphism of the CFI structures,
	we can assume that~$e$ is the only edge twisted by~$f$ and~$g$
	and that $g(e) = f(e) + 2^{\pot-1}$.
	Let $\StructA_f = \CFIgraph{2^\pot}{G}{f}$
	and $\StructA_g = \CFIgraph{2^\pot}{G}{g}$
	both with universe~$\StructVA$.
	We show that Duplicator has a winning strategy in the invertible-map game $\invertMapGame{2k+m}{k}{\set{2}}$ played on $\StructA_f$ and $\StructA_g$.
	
	We consider the case where~$m$ many pebbles are placed on the structures.
	Starting with fewer pebbles makes the proof more elaborate, without providing any additional insights.
	Let $\tupA \in \StructVA^{2k+m}$ and $\tupB \in \StructVA^{2k+m}$
	such that the type of~$\tupA$  in~$\StructA_f$ is the same as of~$\tupB$ in~$\StructA_g$
	and $e \not\subseteq \orig{\tupA}$,
	i.e., there exists a local isomorphism mapping~$\tupA$ to~$\tupB$.
	Assume we play on $(\StructA_f, \tupA)$ and $(\StructA_g, \tupB)$.
	Let ${\BlockA \in \orbs{2k+m}{(\StructA_f, \tupA)}}$ contain~$\tupA$
	and ${\BlockB \in \orbs{2k+m}{(\StructA_g, \tupB)}}$ contain~$\tupB$. 	
	Because~$\tupA$ and~$\tupB$ have the same type,~%
	$\BlockA$ and~$\BlockB$ have the same type.
	Because $e \not\subseteq \orig{\tupA}$, 
	we have that $\StructA_f[\orig{\BlockA}] = \StructA_g[\orig{\BlockB}]$
	and thus $\BlockA = \BlockB$.
	That is, there is an automorphism $\autoA \in \autgrp{\StructA_g}$
	such that $\autoA(\tupB) = \tupA$ (Corollary~\ref{cor:orbit-types-unique}).
	Up to isomorphism, we can continue the game on
	$(\StructA_f, \tupA)$ and $\autoA((\StructA_g, \tupB)) = (\StructA_g, \tupA)$.
	Spoiler picks up~$2k$ pebbles on each graph
	leaving us with the structures 
	$(\StructA_f, \tupC)$ and $(\StructA_g, \tupC)$
	for some $\tupC \in \StructVA^{m}$,
	where~$\tupC$ has the same type in~$\StructA_f$ and in~$\StructA_g$.
	
	There is a vertex $\bvertC \in V$
	such that $\distance{G}{\orig{\tupC}}{\bvertC} > 2\minRad(k+1)$
	by Lemma~\ref{lem:girth-distant-vertex}.
	We can assume up to exchanging~$\bvertA$ and~$\bvertB$ that $e \cap \orig{\tupC} \subseteq \set{\bvertA}$
	because $e \not\subseteq \orig{\tupC}$.
	Because~$G$ is $(2k+m+1)$-connected,
	there is a path $\spath = (\bvertA,\bvertB, \dots, \bvertC', \bvertC)$
	such that~$\spath$ and $\orig{\tupC}$ are disjoint apart from~$\bvertA$.
	Then we apply the path isomorphism $\autoB := \pathiso{2^{\pot-1}}{\spath}$
	and w.l.o.g.~can continue the game on
	$(\StructA_f, \tupC)$ and $\autoB((\StructA_g, \tupC)) = (\StructA_{h}, \tupC)$,
	where~$f$ and~$h$ only twist the edge $\set{\bvertC, \bvertC'}$.
	Now between~$\StructA_f$ and~$\StructA_{h}$ the twist is moved sufficiently
	far away from $\orig{\tupC}$. 
	
	Duplicator chooses the partitions
	$\PartA := \orbs{2k}{(\StructA_f, \tupC)}$ and
	$\PartB := \orbs{2k}{(\StructA_h, \tupC)}$
	of $\StructVA^k \times \StructVA^k$
	and the invertible matrix~$S$ that $k$-blurs the twist between
	$(\StructA_f, \tupC)$ and $(\StructA_h, \tupC)$
	given by Lemma~\ref{lem:blur-kary}.
	By construction, the conditions of the lemma are satisfied.
	The matrix~$S$ induces a map $\PartA \to \PartB$
	mapping $\BlockA \mapsto \BlockB$ if and only if $\BlockA$ and $\BlockB$ have the same type in $(\StructA_f, \tupC)$ respectively $(\StructA_h, \tupC)$
	(Definition~\ref{def:blur-twist}).
	Spoiler chooses orbits $\BlockA\in \PartA$
	and $\BlockB\in \PartB$ of the same type
	and $\tupA' \in \BlockA$ and $\tupB' \in \BlockB$.
	Then~$\tupA'$ has the same type in $(\StructA,\tupC)$
	as~$\tupB'$ in $(\StructA_h,\tupC)$.
	So~$\tupC\tupA'$ and~$\tupC\tupB'$ induce a local isomorphism
	and the next round starts.

	As we can see, we are in the same situation as before,
	that is~$\tupC\tupA'$ and~$\tupC\tupB'$ have the same type
	in~$\StructA_f$ and~$\StructA_h$ respectively
	and we can apply Duplicator's strategy again. 
	So Duplicator has winning strategy in the
	$\invertMapGame{2k+m}{k}{\set{2}}$-game.
\end{proof}

\begin{theorem}
	There is a class of $\sig$-structures~$\GraphClass$,
	such that $\IFPR < \PTime$ on~$\GraphClass$
	and $\CPT = \PTime$ on $\GraphClass$.
\end{theorem}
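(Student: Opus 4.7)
The plan is to take $\GraphClass := \CFITwoClass{\GraphClass'}$, where $\GraphClass'$ is the class of ordered base graphs supplied by Theorem~\ref{thm:graph-class-inv-matrix-duplicator-wins}. The two inequalities are established by pulling together results from the previous sections: the game-theoretic machinery of Sections~\ref{sec:invertible-map-game}--\ref{sec:higher-arities} gives $\IFPR < \PTime$, while Theorem~\ref{thm:CPT-ordered-abelian-colors} gives $\CPT = \PTime$.

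For $\IFPR < \PTime$, I would fix the canonical PTime property $P$ on $\GraphClass$ given by ``the underlying linear equation system of the CFI structure is solvable,'' or equivalently whether $\sum_{e\in E} g(e) \in \set{0,\dots,2^{\pot-1}-1}$ when the input structure is of the form $\CFIgraph{2^\pot}{G}{g}$. This is computable in polynomial time (solve a linear system over $\ZZ_{2^\pot}$). To see $P$ is not $\IFPR$-definable, first apply Lemma~\ref{lem:IFPR-to-IFPR2}, which reduces any $\IFPR$ formula on $\CFITwoClass{\GraphClass'}$ to an $\IFPRP{\set{2}}$ formula. Then Theorem~\ref{thm:graph-class-inv-matrix-duplicator-wins} supplies, for every $k,m \in \nat$ with $2k\leq m$, two CFI structures $\StructA = \CFIgraph{2^\pot}{G}{g}$ and $\StructB = \CFIgraph{2^\pot}{G}{f}$ over a common base graph $G \in \GraphClass'$ with $\sum g - \sum f = 2^{\pot-1}$ and $\StructA \invertibleMapEquiv{m}{k}{\set{2}} \StructB$. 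Exactly one of these satisfies $P$ (they lie on opposite sides of the threshold $2^{\pot-1}$), so by Lemma~\ref{lem:invertible-map-refines-rank-logic} applied with $\Omega = \set{2}$, $P$ is not $\IFPRP{\set{2}}$-definable, hence not $\IFPR$-definable.

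For $\CPT = \PTime$, I would verify that every $\CFIgraph{2^\pot}{G}{g} \in \GraphClass$ has ordered abelian colors so that Theorem~\ref{thm:CPT-ordered-abelian-colors} applies. The preorder $\spleq$ partitions the universe into the gadgets $\StructVA_\bvertA$ indexed by $\bvertA \in V$; these are the color classes. The induced substructure of a single gadget has automorphism group isomorphic to $\setcond{\tup{a} \in \ZZ_{2^\pot}^{\neighbors{G}{\bvertA}}}{\sum \tup{a} = 0}$, which is abelian by Lemma~\ref{lem:cfi-autgroup-abelian-two-group}. Since the base graph is totally ordered, $\neighbors{G}{\bvertA}$ inherits a total order, and the resulting lexicographic order on $\ZZ_{2^\pot}^{\neighbors{G}{\bvertA}}$ restricts to a total order on the subgroup. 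Hence $\GraphClass$ consists of structures with ordered abelian colors, and Theorem~\ref{thm:CPT-ordered-abelian-colors} yields $\CPT = \PTime$ on $\GraphClass$.

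Both pieces are short once the heavy lifting of the preceding sections is in place; the main substantive work is the invocation of Theorem~\ref{thm:graph-class-inv-matrix-duplicator-wins} and Lemma~\ref{lem:IFPR-to-IFPR2}. The only subtlety worth double-checking is that the PTime property chosen is genuinely distinguished by the pair produced in Theorem~\ref{thm:graph-class-inv-matrix-duplicator-wins}; using the threshold formulation above (or equivalently isomorphism to a designated reference structure $\CFIgraph{2^\pot}{G}{0}$) makes this transparent.
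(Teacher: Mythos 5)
Your proposal is correct and follows essentially the same route as the paper: take $\GraphClass = \CFITwoClass{\GraphClass'}$ for the base-graph class of Theorem~\ref{thm:graph-class-inv-matrix-duplicator-wins}, get $\IFPR < \PTime$ from Lemma~\ref{lem:IFPR-to-IFPR2} together with Theorem~\ref{thm:graph-class-inv-matrix-duplicator-wins} and Lemma~\ref{lem:invertible-map-refines-rank-logic}, and get $\CPT = \PTime$ from Theorem~\ref{thm:CPT-ordered-abelian-colors} after checking ordered abelian colors.

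One small inaccuracy to flag: solvability of the CFI linear system is \emph{not} equivalent to $\sum g \in \set{0,\dots,2^{\pot-1}-1}$; solvability corresponds to $\sum g = 0$. The two are genuinely different \PTime{} properties, and either one serves here, but your written justification (``they lie on opposite sides of the threshold $2^{\pot-1}$'') is an argument for the threshold property, not for solvability. If you use solvability, as the paper does, you must additionally observe that in Theorem~\ref{thm:graph-class-inv-matrix-duplicator-wins} one may take $f$ with $\sum f = 0$, so exactly one member of the pair is in the query; the threshold formulation avoids this small extra step, which is a minor advantage of your variant once the erroneous ``equivalently'' is dropped.
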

\begin{proof}
	Let~$\GraphClass'$ be the graph class from Theorem~\ref{thm:graph-class-inv-matrix-duplicator-wins},
	and set $\GraphClass := \CFITwoClass{\GraphClass'}$ (recall Definition~\ref{def:cfi-graph-class}).
	We want to show that
	the CFI query for~$\GraphClass$
	is not \IFPR{}-definable.
	This is the task of deciding whether for a given
	$\CFIgraph{2^\pot}{G}{f} \in \GraphClass$
	it holds that $\sum f = 0$.
	By Lemma~\ref{lem:IFPR-to-IFPR2}, it suffices to show
	that the CFI query is not \IFPRP{\set{2}}-definable.
	The claim follows from Lemma~\ref{lem:invertible-map-refines-rank-logic}
	and Theorem~\ref{thm:graph-class-inv-matrix-duplicator-wins}.
	
	We now argue that \CPT{} captures \PTime{} on $\GraphClass$.
	By Theorem~\ref{thm:CPT-ordered-abelian-colors},
	it suffices to show that $\GraphClass$ is a class of
	structures with abelian and ordered colors (cf.~Section~\ref{sec:rank-logc}).
	By Lemma~\ref{lem:cfi-autgroup-abelian-two-group},
	the colors are abelian
	and it remains to define for every color class 
	an ordered and  transitive permutation group in \CPT.
	Every gadget forms a color class.
	As seen in Section~\ref{sec:cfi-structures},
	every gadget forms a $1$-orbit and has a regular, so in particular transitive,
	automorphism group (Lemmas~\ref{lem:cfi-same-type-automorphism} and~\ref{lem:cfi-orbit-auto-group-regular}).
	Consider a degree~$d$ gadget.
	Then every automorphism~$\autoA$ of the gadget can be identified with a tuple $\tup{a} \in \ZZ_{2^\pot}^{d}$ satisfying $\sum \tup{a} = 0$
	such that $\autoA(\vertA) = \vertB$ if and only if $\tup{a}(\vertA) = \vertB$.
	Indeed, every such tuple represents an automorphism.
	So the automorphism group of the gadget can be represented
	by $\setcond{\tup{a} \in \ZZ_{2^\pot}^{d}}{\sum \tup{a} = 0}$,
	which is a \CPT{}-definable set (the tuples, although of unbounded length,
	can be represented by standard set encoding of tuples).
	This set can be ordered using the lexicographical order on tuples.
	Because the base graph is ordered,
	the automorphism corresponding to the tuple~$\tup{a}$
	is \CPT{}-definable: The $i$-th entry defines the action on the $i$-th outgoing edge, which is definable using the relation~$\rel_I$.
\end{proof}

\section{Discussion}
\label{sec:discussion}

We showed that rank logic does not capture \CPT{} and in particular not \PTime{}
on the class of CFI structures over rings~$\ZZ_{2^i}$,
even if the base graph is totally ordered.
To do so, we used combinatorial objects called blurrers
and a recursive approach over the arity.
The non-locality of $k$\nobreakdash-tuples for $k>1$
increased the difficulty of $k$-ary rank operators dramatically
compared to the arity~$1$ case.
It was suggested in~\cite{GradelPakusa19}
that CFI graphs over~$\ZZ_4$ could be a separating example
for rank logic and \PTime{}.
Our concepts for blurrers required rings~$\ZZ_{2^i}$ with $i>2$
for higher arities.
Actually, our computer experiments to check Lemma~\ref{lem:blur-kary}
indicate that CFI graphs over~$\ZZ_4$ could possibly
be distinguished using the $k$-ary invertible map game for $k>1$.
It might also be possible that the CFI query over~$\ZZ_4$ is
definable in rank logic using rank operators of higher arities,
but this remains an open question.

There are various definitions of rank logic,
which slightly differ in the way the matrices in the rank operator are defined.
In particular, there is an extension,
in which rank operators not only bind universe variables,
but also numeric variables~\cite{GradelPakusa19,Laubner2011,pakusa2015}.
It is not clear whether this extension
is more expressive or not.
However, for a suitable adaptation of the invertible-map game,
which also supports numeric variables to construct matrices,
we strongly believe that our arguments work exactly the same.
In fact, we think that at least in the invertible-map game
numeric variables do not increase the expressiveness
and thus our arguments directly apply.

A natural question is how rank logic can be extended
such that it can define the CFI query.
We have shown that it is not sufficient to compute ranks over finite fields only.
Even more, our construction applies to arbitrary linear-algebraic operators over finite fields~\cite{DawarGraedelLichter22}. 
However, it is not clear how rank logic can be extended to rings~$\ZZ_i$.
Over rings, there are several non-equivalent notions of the rank of a matrix.
For a discussion see~\cite{DawarGHKP2013, pakusa2015}.
As opposed to rank logic,
solvability logic can easily be extended to rings
and thus should be able to define the CFI query over all~$\ZZ_i$.
Notably, such an extension would also capture \PTime{} on structures with bounded and abelian colors~\cite{pakusa2015}.

\subsection*{Acknowledgments}
I thank Jendrik Brachter for his help with Lemma~\ref{lem:orbit-diagonal+orbit-invariant+odd-filled-implies-invertible}.
I also thank the anonymous reviewers for their detailed and helpful comments.

\bibliographystyle{plain}
\bibliography{rank_logic_ptime_full}

\end{document}